\documentclass[11pt]{article}


\usepackage{amsthm,amsmath,amssymb}
\usepackage{mathabx}
\usepackage{fullpage}
\usepackage{hyperref}
\usepackage{cleveref}
\usepackage{thmtools}
\usepackage{thm-restate}
\usepackage{authblk}
\usepackage{enumitem}
\usepackage[usenames,dvipsnames,svgnames,table]{xcolor}
\usepackage{graphicx}
\usepackage{caption}
\usepackage{subcaption}
\usepackage{setspace}
\usepackage[ruled]{algorithm2e}
\usepackage[noend]{algorithmic}


\SetAlFnt{\small}
\SetAlCapFnt{\small}
\SetAlCapNameFnt{\small}
\SetAlCapHSkip{0pt}
\IncMargin{-\parindent}


\newtheorem{problem}{Problem}
\newsavebox{\mybox}

\newtheorem{issue}{Obstacle}


\newtheorem{theorem}{Theorem}[section]
\newtheorem{lemma}[theorem]{Lemma}
\newtheorem{proposition}[theorem]{Proposition}

\newtheorem{myclaim}{Claim}[theorem]
\newtheorem{subclaim}{Subclaim}[myclaim]


\newenvironment{proofof}[1]{\medskip\noindent\emph{Proof of #1. }\ignorespaces}{\hfill\qed\medskip\par\noindent\ignorespacesafterend}
\newcommand{\qedclaim}{\hfill $\diamond$ \medskip}
\newenvironment{proofclaim}{\noindent\ignorespaces{\em Proof.}}{\hfill\qedclaim\par\noindent} 
\newcommand{\qedsubclaim}{\hfill $\circ$ \medskip}
\newenvironment{proofsubclaim}{\noindent{\em Proof.}}{\qedsubclaim}


\theoremstyle{definition}
\newtheorem{definition}{Definition}[section]
\newtheorem{remark}{Remark}[section]
\newtheorem{operation}{Operation}




\newcommand{\labeledsubtree}[3][1]{%
\def\treename{%
\if#11\ensuremath{T}%
\else\if#12\ensuremath{T'}%
\else\if#13\ensuremath{T''}%
\else\ensuremath{T^{(#1)}}%
\fi\fi\fi}%
\def\labeledtreename{\ensuremath{\treename_{#2}}}%
\ifx&#3&\labeledtreename%
\else\ensuremath{\labeledtreename \langle #3 \rangle}%
\fi%
}

\newcommand{\unlabeledsubtree}[2][1]{\labeledsubtree[#1]{}{#2}}

\newcommand{\tree}[1][1]{\unlabeledsubtree[#1]{}}

\newcommand{\partialsolution}[1]{\labeledsubtree[1]{#1}{}}

\newcommand{\centre}[1]{\ensuremath{{\cal C}\left(#1\right)}}


\newcommand{\CI}[1]{\ensuremath{{\cal X}\left(#1\right)}}
\newcommand{\MAXK}[1]{\ensuremath{{\cal K}\left(#1\right)}}
\newcommand{\SEP}[1]{\ensuremath{{\cal S}\left(#1\right)}}
\newcommand{\WSEP}[1]{\ensuremath{{\cal W}\left(#1\right)}}

\def\ci{\ensuremath{X}}
\def\maxk{\ensuremath{K}}
\def\sep{\ensuremath{S}}

\newcommand{\cliquetree}[1]{\labeledsubtree{#1}{}}
\newcommand{\edgeset}[2]{\ensuremath{E_{#2}\left(\cliquetree{#1}\right)}}
\newcommand{\superedgeset}[2]{\ensuremath{\bigcup_{#2', #2 \subseteq #2'}\edgeset{#1}{#2'}}}
\newcommand{\superstrictedgeset}[2]{\ensuremath{\bigcup_{#2', #2 \subset #2'}\edgeset{#1}{#2'}}}

\newcommand{\border}[1]{\ensuremath{\Omega\left(#1\right)}}


\newcommand{\induced}[2]{\ensuremath{#1\left[#2\right]}}

\newcommand{\real}[1][1]{%
\if#11\ensuremath{v}%
\else\if#12\ensuremath{u}%
\else\if#13\ensuremath{w}%
\else\if#14\ensuremath{x}%
\else\if#15\ensuremath{y}%
\else\ensuremath{z}%
\fi\fi\fi\fi\fi}

\newcommand{\steiner}[1][1]{%
\if#11\ensuremath{\alpha}%
\else\ensuremath{\beta}%
\fi}

\newcommand{\KLP}[1][k]{\ensuremath{#1}-{\sc Leaf Power}}
\newcommand{\KSP}[1][k]{\ensuremath{#1}-{\sc Steiner Power}}
\newcommand{\DCR}{\textsc{Distance-Constrained Root}}

\begin{document}

\title{Polynomial-time Recognition of $4$-Steiner Powers}
\author[1,2,3]{Guillaume Ducoffe}
\affil[1]{\small National Institute for Research and Development in Informatics, Romania}
\affil[2]{\small The Research Institute of the University of Bucharest ICUB, Romania}
\affil[3]{\small University of Bucharest, Faculty of Mathematics and Computer Science, Romania}
\date{}

\maketitle

\begin{abstract}
The $k^{\text{th}}$-power of a given graph $G=(V,E)$ is obtained from $G$ by adding an edge between every two distinct vertices at a distance $\leq k$ in $G$.
We call $G$ a {\em $k$-Steiner power} if it is an induced subgraph of the $k^{\text{th}}$-power of some tree.
Our main contribution is a polynomial-time recognition algorithm of $4$-Steiner powers, thereby extending the decade-year-old results of (Lin, Kearney and Jiang, {\it ISAAC'00}) for $k=1,2$ and (Chang and Ko, {\it WG'07}) for $k=3$.
Our motivation for studying the $k$-Steiner powers comes from the following open problem in the graph literature.
A graph $G$ is termed {\em $k$-leaf power} if there is some tree \tree~such that: all vertices in $V(G)$ are leaf-nodes of \tree, and $G$ is an induced subgraph of the $k^{\text{th}}$-power of \tree.
As a byproduct of our main result, we give the first known polynomial-time recognition algorithm for $6$-leaf powers.
%
Our work combines several new algorithmic ideas that help us overcome the previous limitations on the usual dynamic programming approach for these problems.
We expect several components of this new framework to be further used in the recognition of $k$-leaf powers, $k$-Steiner powers and related graph classes. 

\end{abstract}

\renewcommand{\thetheorem}{\Alph{theorem}}
\section{Introduction}\label{sec:intro}

A basic problem in computational biology is, given some set of species and a dissimilarity measure in order to compare them, find a {\em phylogenetic tree} that explains their respective evolution.
Namely, such a rooted tree starts from a common ancestor and branches every time there is a separation between at least two of the species we consider.
In the end, the leaves of the phylogenetic tree should exactly represent our given set of species.
%
Standard formulations of this problem are NP-hard to solve~\cite{BFW92,Ste92}.
We here study a related problem whose complexity status remains open.
Specifically, a common assumption in the literature is that our dissimilarity measure can only tell us whether the separation between two given species has occurred quite recently.
Let $G=(V,E)$ be a graph whose vertices are the species we consider and such that an edge represents two species with a quite ``close'' common ancestor according to the dissimilarity measure.
Formally, given some fixed $k \geq 1$, we ask whether there exists some tree \tree~whose leaf-nodes are exactly $V$ and such that there is an edge $uv$ in $E$ if and only if the two corresponding nodes in \tree~are at a distance $\leq k$.
This is called the \KLP~problem~\cite{NRT02}.

The structural properties of {\em $k$-leaf powers} ({\it i.e.}, graphs for which a tree as above exists) have been intensively studied~\cite{BPP10,BrH08,BHMW10,BaBV06,BLS08,BLR09,BrW09,CFM11,DGFN06,DGHN08,DGN05,KLY06,KKLY10,Laf17,NeR16,Rau06}.
From the algorithmic point of view, $k$-leaf powers are a subclass of bounded {\em clique-width} graphs, and many NP-hard problems can be solved efficiently for these graphs~\cite{FMRS+08,GuW07}.
However, the computational complexity of recognizing $k$-leaf powers is a longstanding open problem.
Very recently, parameterized (FPT) algorithms were proposed for every fixed $k$ on the graphs with degeneracy at most $d$, where the parameter is $k+d$~\cite{EpH18}.
Without this additional restriction on the degeneracy of the graphs, polynomial-time recognition algorithms are known only for $k \leq 5$~\cite{BaBV06,BLS08,ChK07}.
It is noteworthy that every algorithmic improvement for this problem, while bringing several new important insights on the structure of leaf powers, has been incredibly hard to generalize to larger values of $k$.
We contribute to this frustrating chain of improvements by providing the first known polynomial-time recognition algorithm for $6$-leaf powers.

\begin{restatable}{theorem}{leafPower}
\label{thm:main-leaf-power}
There is a polynomial-time algorithm that given a graph $G=(V,E)$, correctly decides whether $G$ is a $6$-leaf power (and if so, outputs a corresponding tree \tree).
\end{restatable}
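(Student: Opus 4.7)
The plan is to reduce $6$-leaf power recognition to the paper's main theorem on polynomial-time $4$-Steiner power recognition, via a twin-quotient construction. The reduction exploits the classical observation that in any tree $T$, replacing a leaf of $T$ by a pendant attached to its parent shifts pairwise distances between leaves by exactly $2$.

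First, I would establish the structural bridge: if $G=(V,E)$ is a $6$-leaf power witnessed by a tree $T$, any two leaves of $T$ sharing a common neighbour have identical closed neighbourhoods in $G$, i.e.\ they are \emph{true twins}. Via a short exchange argument---moving a true-twin leaf from its current parent to the parent of another of its true twins never changes the graph, because the resulting distance changes in $T$ all stay on the same side of the threshold $6$ exactly by the true-twin property---one shows that $G$ always admits a $6$-leaf witnessing tree in which every maximal true-twin class of $G$ coincides with the set of leaves hanging from a single internal node.

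Second, I would formalize the reduction. Let $V_1,\dots,V_m$ be the maximal true-twin classes of $G$, pick one representative $r_i\in V_i$, and let $G' := G[\{r_1,\dots,r_m\}]$. The claim is that $G$ is a $6$-leaf power if and only if (i) each $V_i$ induces a clique of $G$, and (ii) $G'$ is a $4$-Steiner power. The forward direction uses the structural step: deleting the leaves of the ``good'' witness tree $T$ yields a tree $T'$ in which the parents of the classes are pairwise distinct nodes at distance $\leq 4$ iff the corresponding representatives are adjacent in $G'$, so $T'$ witnesses $G'$ as a $4$-Steiner power. For the backward direction, starting from a $4$-Steiner tree $T'$ for $G'$, I subdivide the node labelled $r_i$ into a Steiner node with $|V_i|$ new pendant leaves labelled by the vertices of $V_i$; a one-line distance computation confirms that this yields a valid $6$-leaf tree for $G$.

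The algorithm then assembles straightforwardly: compute the true-twin partition in polynomial time (standard, e.g.\ via modular decomposition), test condition (i), form $G'$, invoke the main $4$-Steiner power recognition algorithm on $G'$, and if positive, lift the returned tree as above. The main obstacle is really the exchange argument in the first step: two true twins of $G$ may a priori be placed arbitrarily far apart in a given $6$-leaf witnessing tree, and one must verify that re-attaching them under a common parent preserves \emph{all} pairwise distance constraints simultaneously---the verification succeeds precisely because the true-twin condition states exactly that both endpoints have the same $6$-ball among the leaves of $T$. All other pieces of the plan are essentially book-keeping, and the genuinely non-trivial algorithmic content is concentrated in the $4$-Steiner power recognition result.
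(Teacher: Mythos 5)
Your proposal is correct and takes essentially the same route as the paper: both reduce $6$-leaf power recognition to $4$-Steiner power recognition by passing to the true-twin quotient. The paper invokes this as a black box (Theorem~\ref{thm:reduction}, citing~\cite{BLS08}), whereas you re-derive the reduction from scratch; your reconstruction is sound (and your condition (i) is in fact automatic, since any two true twins are adjacent by definition of $N[\cdot]$, so each class is necessarily a clique).
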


Proving this above Theorem~\ref{thm:main-leaf-power}, while it may look like a modest improvement in our understanding of the \KLP~problem, was technically challenging.
We will further sketch in Section~\ref{sec:dyn-prog-gal} why the previous approach for \KLP, $k \leq 5$, was already showing its limitations with \KLP[6].
Apart from pushing further the tractable cases for an important open problem in the graph literature, we believe that  one of the main merits of our paper is to bring several new ideas in order to tackle with these aforementioned limitations.
As such, we expect further uses of our new framework in the study of $k$-leaf powers and their relatives. 

Several variations of $k$-leaf powers were introduced in the literature~\cite{BLR10,BrW10,ChK07,HaT10,JKL00}.
In this work, we consider {\em $k$-Steiner} powers: a natural relaxation of $k$-leaf powers where the vertices in the graph may also be internal nodes in the tree \tree.
Interestingly, for every $k \geq 3$, the notions of $k$-leaf powers and $(k-2)$-Steiner powers are equivalent for a {\em twin-free} graph.
The latter implies a linear-time reduction from \KLP~to \KSP[(k-2)]~\cite{BLS08}.
Furthermore, there exist polynomial-time recognition algorithms for $k$-Steiner powers, for every $k \leq 3$~\cite{ChK07,JKL00}.
As our main contribution in the paper we obtain the first improvement on the recognition of $k$-Steiner powers in a decade.
Specifically we prove that there is a polynomial-time recognition algorithm for the $4$-Steiner powers.

\begin{restatable}{theorem}{SteinerPower}
\label{thm:main-steiner-power}
There is a polynomial-time algorithm that given a graph $G=(V,E)$, correctly decides whether $G$ is a $4$-Steiner power (and if so, outputs a corresponding tree \tree).
\end{restatable}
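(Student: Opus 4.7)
The plan is to prove Theorem~\ref{thm:main-steiner-power} by combining a structural characterization of $4$-Steiner powers with a dynamic programming algorithm on a clique-tree representation of $G$, in the spirit of (but considerably more involved than) the approach of Chang and Ko for $k=3$. The first step is to establish necessary conditions: every $k$-Steiner power is strongly chordal, so one tests this in polynomial time and, on success, computes a clique tree \cliquetree{1} whose nodes are the maximal cliques $\maxk$ of $G$ and whose edges are labelled by the minimal separators $\sep$. In any realizing tree \tree, a maximal clique of $G$ must correspond to the set of $V$-labelled nodes within distance $2$ of some ``center'' of \tree~(a node, edge or short path), and a minimal separator must correspond to the $V$-labelled nodes around some edge of \tree. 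Characterizing these local centers \centre{\cdot}, and how they are forced by the pairwise adjacencies and non-adjacencies in $G$, is the first combinatorial task.

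Next, I would root \cliquetree{1} at an arbitrary maximal clique and perform a bottom-up dynamic program over the rooted clique tree. At each clique node I would maintain a polynomial-size family of \emph{partial solutions} \partialsolution{\cdot}: candidate Steiner subtrees realizing the portion of $G$ processed so far, indexed by a \emph{boundary} $\border{\cdot}$ that records exactly the information about the top few levels of the partial solution needed to decide compatibility with vertices not yet processed (since interactions span distance at most $4$, the relevant portion is bounded). The central lemma to prove is that the number of distinct boundaries up to an appropriate equivalence is polynomially bounded in $|V|$, and that compatibility between a partial solution and the local structure at the parent clique can be decided locally from the two boundaries.

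The main obstacle — and the reason $k=4$ leaps in difficulty beyond $k=3$ — is controlling the ambiguity of where twin-like vertices and near-twins are placed in \tree. At $k=4$, a single Steiner node of \tree~can host many vertices of $G$, and non-isomorphic subtrees of \tree~can realize the same induced subgraph of $G$, so naive enumeration of partial solutions blows up. I would overcome this in two stages: a preprocessing phase that contracts true twins (and possibly a broader equivalence, along the lines of the classical reduction from \KLP~to \KSP[(k-2)]) into labelled super-vertices; and a canonical-form argument that, among all partial solutions with the same boundary, chooses a unique representative so that the DP state space stays polynomial. This is where the new framework advertised in the introduction will have to do its real work, and where most of the technical effort lies.

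Finally, correctness is established by a double induction on the clique tree: a genuine $4$-Steiner realization of $G$ projects, level by level, to a sequence of partial solutions accepted by the DP; conversely, any accepting run can be stitched together into a global tree \tree~by gluing children's partial solutions at the center forced by the parent clique, and re-expanding the contracted twin classes as pendant leaves or Steiner-attached siblings. Running time is polynomial because each DP cell has polynomially many states and transitions reduce to compatibility tests on boundaries of bounded complexity.
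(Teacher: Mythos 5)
Your proposal reproduces the paper's overall architecture --- test strong chordality, build a clique-tree, run a bottom-up dynamic program storing polynomially many partial solutions per node indexed by a ``boundary,'' and argue correctness by projecting a true root onto accepted states --- but the entire difficulty of the theorem is concentrated in the one sentence you defer: that ``the number of distinct boundaries up to an appropriate equivalence is polynomially bounded.'' For $k=4$ this is false for the naive boundary, and your two proposed remedies do not close the gap. The boundary must at minimum record the subtree $\unlabeledsubtree{\sep}$ spanned by the separator $\sep$ toward the parent, together with a profile of distances from its nodes to the already-processed side. Since $\unlabeledsubtree{\sep}$ can be a diameter-three bistar, there are exponentially many ways to partition $\sep$ between the neighbourhoods of its two centers, and contracting true twins does not help: the vertices causing the explosion need not be twins (in the paper they are the ``$\sep$-free'' vertices --- simplicial vertices and certain cut-vertices --- whose neighbourhoods outside the clique genuinely differ). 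The paper's fix is a structure theorem showing that one may always modify a hypothetical global root so that all free vertices become leaves attached, except possibly one, near a single center; this is a normalization of the \emph{global} solution, not a canonical-form choice among partial solutions sharing a boundary, whereas your ``unique representative per boundary'' presupposes the boundary space is already polynomial, which is exactly what is in question.

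Second, even once the subtrees are controlled, the distance profiles are not polynomially bounded over an arbitrary clique-tree: the paper must first construct a special rooted clique-tree (controlling, via a ``convergence'' property, when a child separator is contained in or contains another separator) before the profile count becomes manageable, and it then inverts the logic --- precomputing a polynomial set of \emph{imposed} encodings and solving a distance-constrained root subproblem for each, rather than enumerating profiles of computed partial solutions. Finally, your ``compatibility tests on boundaries'' gloss over the combination step where several children attach across the same (or nested) separators and must be assigned distinct attachment points near the parent's center; the paper resolves this with a maximum-weight matching subroutine plus greedy dominance arguments. Without concrete substitutes for these three ingredients --- the normalization of free vertices, the tailored clique-tree and encoding scheme, and the matching-based combination --- the proposal is a plan rather than a proof.
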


Note that Theorem~\ref{thm:main-leaf-power} follows from the combination of Theorem~\ref{thm:main-steiner-power} with the aforementioned reduction from \KLP~to \KSP[(k-2)]~\cite{BLS08}.
{\em Hence, the remaining of this paper is devoted to a polynomial-time solution for \KSP[4].}
We think that our general approach (presented next) could be generalized to larger values of $k$. 
However, this would first require to strenghten the structure theorems we use in this paper and probably to find less intricate proofs for some of our intermediate statements.

\medskip
Before we can introduce our technical contributions in this paper, we must explain in Sec.~\ref{sec:dyn-prog-gal} the dynamic programming approach behind the $k$-Steiner power recognition algorithms, for $k \leq 3$, and why it is so hard to apply this approach to \KSP[4].
Doing so, we identify Obstacles~\ref{issue-1} and~\ref{issue-2} as the two main issues to fix in order to prove Theorem~\ref{thm:main-steiner-power}.
Then in Sec.~\ref{sec:org}, we briefly summarize our proposed solutions for Obstacles~\ref{issue-1} and~\ref{issue-2} while presenting the organization of the technical sections of this paper.

\subsection{The difficulties of a dynamic programming approach}\label{sec:dyn-prog-gal}

We refer to Sec.~\ref{sec:prelim} for any undefined graph-theoretic terminology in this introduction.
In what follows we give a high-level overview of our approach, that we compare to prior work on \KSP~and \KLP.
As our starting point we restrict our study to {\em chordal graphs} and {\em strongly chordal graphs}, that are two well-known classes in algorithmic graph theory of which $k$-Steiner powers form a particular subclass~\cite{ABNT16}.
Doing so, we can use various properties of these classes of graphs such as: the existence of a tree-like representation of chordal graphs, that is called a {\em clique-tree}~\cite{BlP93} and is commonly used in the design of dynamic programming algorithms on this class of graphs; and an auxiliary data structure which is called ``{\em clique arrangement}'' and is polynomial-time computable on strongly chordal graphs~\cite{NeR16}.
Roughly, this clique arrangement encodes all possible intersections of a subset of maximal cliques in a graph.
It is worth noticing that clique arrangements were introduced in the same paper as leaf powers, under the different name of ``clique graph''~\cite{NRT02}.

Our first result is that every maximal clique, minimal separator and, more generally, any intersection $\ci$ of maximal cliques in a $k$-Steiner power must be contained in an ${\cal O}(k \cdot |\ci|)$-node subtree with very specific properties -- detailed next in Sec.~\ref{sec:org} -- of the tree \tree~we aim at computing.
This result extends to any $k$ the structural results that were presented in~\cite{ChK07} for $k \leq 3$, and it is a prerequisite for the design of a dynamic programming algorithm on a clique-tree.
Unfortunately, as the value of $k$ increases it becomes more and more difficult to derive from such structural results a polynomial-time recognition algorithm for $k$-Steiner powers.
Our proposed solutions for $k=4$ are quite different from those used in the previous works on $k$-Steiner powers~\cite{ChK07,JKL00} which results in an embarrassingly long and intricate proof.  

\medskip
To give a flavour of the difficulties we met, we consider the following common situation in a dynamic programming algorithm on chordal graphs.
Given a graph $G=(V,E)$, let \sep~be a minimal separator of $G$ and $C$ be a full component of $G \setminus \sep$ ({\it i.e.}, such that every vertex in \sep~has a neighbour in $C$).
If $G$ is a $k$-Steiner power then, so must be the induced subgraph $G[C \cup \sep]$.
We want to store the partial solutions obtained for $G[C \cup \sep]$, as at least one of them should be extendible to all of $G$ (otherwise, $G$ is not a $k$-Steiner power).
However, for doing so efficiently we must overcome the following two obstacles:

\begin{itemize}[label=-]
\item There may be exponentially many partial solutions already when $G[C \cup \sep]$ is a complete subgraph and $k \geq 3$.
Therefore, we cannot afford to store all possible solutions explicitly. 
Nevertheless it seems at the minimum we need to keep the part of these solutions which contains \sep: in order to be able to check later whether the solutions found for $G[C \cup \sep]$ can be extended to all of $G$.
We will prove in this paper that such a part \unlabeledsubtree{\sep} of the partial solutions is a subtree of diameter at most $k-1$, and so, there may be exponentially many possibilities to store whenever $k \geq 4$.

\begin{issue}\label{issue-1}
Decrease the number of possibilities to store for \unlabeledsubtree{\sep} to a polynomial.
\end{issue}

\item Furthermore, since there can be no edge between $C$ and $V \setminus \left( C \cup \sep\right)$, the tree $T$ that we want to compute for $G$ must satisfy that all vertices in $C$ stay at a distance $\geq k+1$ from all vertices in $V \setminus (\sep \cup C)$.
In order to ensure this will be the case, we wish to store a ``distance profile'' $(dist_{\tree}(r,C))_{r \in V(\unlabeledsubtree{S})}$ in the encoding of all the partial solutions found for $G[C \cup \sep]$.
Storing this information would result in a combinatorial explosion of the number of possible encodings, even if there are only a few possibilities for the subtree \unlabeledsubtree{\sep}.
Chang and Ko proposed two nice ``heuristic rules'' in order to overcome this distance issue for $k=3$~\cite{ChK07}.
Unfortunately, these rules do not easily generalize to larger values of $k$.

\begin{issue}\label{issue-2}
For any fixed \unlabeledsubtree{S}, decrease the number of possibilities to store for the ``distance profiles'' to a polynomial.
\end{issue}

\end{itemize}

In order to derive a polynomial-time algorithm for the case $k=4$, we further restrict the structural properties of the ``useful'' partial solutions we need to store.
This is done by carefully analysing the relationships between the structure of these solutions and the intersections between maximal cliques in the graph.
Perhaps surprisingly, we need to combine these stronger properties on the partial solutions with several other tricks so as to bound the number of partial solutions that we need to store by a polynomial ({\it e.g.}, we also impose local properties on the clique-tree we use, and we introduce a new greedy selection procedure based on graph matchings).
%

\subsection{Organization of the paper}\label{sec:org}
We give the required graph-theoretic terminology for this paper in Section~\ref{sec:prelim}.
We emphasize on Section~\ref{sec:algo-highlight}: where we also provide a more detailed presentation of our algorithm, as a guideline for all the other sections.

\medskip
Given a $k$-Steiner power $G$, let us call {\em $k$-Steiner root} a corresponding tree \tree.
In Sections~\ref{sec:representation} and~\ref{sec:restricted-root} we present new results on the structure of $k$-Steiner roots that we use in the analysis of our algorithm.
Specifically, we show in Section~\ref{sec:representation} that any intersection of maximal cliques in a graph $G$ must induce a particular subtree in any of its $k$-Steiner roots \tree~where no other vertex of $G$ can be present.
Furthermore, the inclusion relationships between these ``clique-intersections'' in $G$ are somewhat reflected by the diameter of their corresponding subtrees in \tree.
An intriguing consequence of our results is that, in any $k$-Steiner power, there can be no chain of more than $k$ minimal separators ordered by inclusion.
This slightly generalizes a similar result obtained in~\cite{NRT02} for $k$-leaf powers.

Then, we partly complete this above picture in Section~\ref{sec:restricted-root} for the case $k=4$.
For every clique-intersection \ci~in a chordal graph $G$, we classify the vertices in \ci~into two main categories: ``free'' and ''constrained'', that depend on the other clique-intersections these vertices are contained into.
Our study shows that ``free'' vertices cause a combinatorial explosion of the number of partial solutions we should store in a naive dynamic programming algorithm. 
However, on the positive side we prove that there always exists a ``well-structured'' $4$-Steiner root where such free vertices are leaves with very special properties.
This result will be instrumental in ruling out Obstacle~\ref{issue-1}.

\medskip
Sections~\ref{sec:clique-tree},~\ref{sec:ci-subtrees},~\ref{sec:encoding} and~\ref{sec:greedy} are devoted to the main steps of the algorithm.
We start presenting a constructive proof of a rooted clique-tree with quite constrained properties in Section~\ref{sec:clique-tree}.
Roughly we carefully control the ancestor/descendant relationships between the edges that are labelled by different minimal separators of the graph.
These technicalities are the cornerstone of our approach in Section~\ref{sec:representation} in order to bound the number of ``distance profiles'' which we need to account in our dynamic programming (cf. Obstacle~\ref{issue-2}).

Then in Section~\ref{sec:ci-subtrees}, we completely rule out Obstacle~\ref{issue-1} --- in fact, we solve a more general subproblem.
For that, let \cliquetree{G}~be the rooted clique-tree of Section~\ref{sec:clique-tree}. 
Recall that the maximal cliques and the minimal separators of $G$ can be mapped to the nodes and edges of \cliquetree{G}, respectively.
We precompute by dynamic programming, for every node and edge in \cliquetree{G}, a family of all the potential subtrees to which the corresponding clique-intersection of $G$ could be mapped in some well-structured $4$-Steiner root of $G$.
Of particular importance is Section~\ref{sec:minsep} where for any minimal separator $\sep$, we give a polynomial-time algorithm in order to generate all the candidate smallest subtrees into which $\sep$ can be contained in a $4$-Steiner root of $G$.
The result is then easily extended to the maximal cliques that appear as leaves in our clique-tree (Section~\ref{sec:leaf-case}). 
Correctness of these two first parts follows from Sec.~\ref{sec:restricted-root}.
Finally, in Section~\ref{sec:super-selection} we give a more complicated representation of a family of candidate subtrees $\unlabeledsubtree{\maxk_i}$ for all the other maximal cliques $\maxk_i$.
This part is based on a careful analysis of clique-intersections in $\maxk_i$ and several additional tricks.  
Roughly, our representation in Sec.~\ref{sec:minsep} is composed of partially constructed subtrees and of ``problematic'' subsets that need to be inserted to these subtrees in order to complete the construction.
The exact way these insertions must be done is postponed until the very end of the algorithm (Sec.~\ref{sec:greedy}).

Section~\ref{sec:encoding} is devoted to the ``distance profiles'' of partial solutions and how to overcome Obstacle~\ref{issue-2}.
Specifically, instead of computing partial solutions at each node of the clique-tree and storing their encodings, we rather pre-compute a polynomial-size subset of {\em imposed} encodings for each node.
Then, the problem becomes to decide whether given such an imposed encoding, there exists a corresponding partial solution.
We formalize our approach by introducing an intermediate problem where the goal is to compute a $4$-Steiner root with additional constraints on its structure and the distances between some sets of nodes.

Finally, we detail in Section~\ref{sec:greedy} the resolution of our intermediate problem, thereby completing the presentation of our algorithm.
An all new contribution in this part is a greedy procedure, based on {\sc Maximum-Weight Matching}, in order to ensure some distances' constraints are satisfied by the solutions we generate during the algorithm.
Interestingly, this procedure is very close in spirit to the implementation of the {\tt alldifferent} constraint in constraint programming~\cite{Reg94}.

Due to the intricacy of our proofs we gave up optimizing the runtime of our algorithm.
We will only provide enough arguments in order to show it is polynomial.

\medskip
We end up this paper in Section~\ref{sec:ccl} with some ideas for future work.
\renewcommand{\thetheorem}{\arabic{section}.\arabic{theorem}}

\section{Preliminaries}\label{sec:prelim}

We refer to~\cite{BoM08} for any undefined graph terminology.
All graphs in this study are finite, simple (hence, with neither loops nor multiple edges), unweighted and connected -- unless stated otherwise.
Given a graph $G=(V,E)$, let $n := |V|$ and $m := |E|$.
The neighbourhood of a vertex $v \in V$ is defined as $N_G(v) := \{ u \in V \mid uv \in E \}$.
By extension, we define the neighbourhood of a set $S \subseteq V$ as $N_G(S) := \left( \bigcup_{v \in S} N_G(v) \right) \setminus S$.
The subgraph induced by any subset $U \subseteq V$ is denoted by $G[U]$.

For every $u,v \in V$, we denote by $dist_G(u,v)$ the minimum length (number of edges) of a $uv$-path.
The eccentricity of vertex $v$ is defined as $ecc_G(v) := \max_{u \in V} dist_G(u,v)$.
The radius and the diameter of $G$ are defined, respectively, as $rad(G) := \min_{v \in V} ecc_G(v)$ and $diam(G) := \max_{v \in V} ecc_G(v)$.
We denote by $\centre{G}$ the center of $G$, {\it a.k.a.} the vertices with minimum eccentricity.  

\subsection{Problems considered}

The {\em $k^{\text{th}}$-power} of $G$, denoted $G^k$ has same vertex-set $V$ as $G$ and edge-set $\{ uv \mid 0 < dist_G(u,v) \leq k \}$. 
We call $G$ a {\em $k$-Steiner power} if there is some tree \tree~such that $G$ is an induced subgraph of $\tree^k$.
Conversely, \tree~is called a {\em $k$-Steiner root} of $G$.
If in addition, $G$ has a $k$-Steiner root where all vertices in $V$ are leaves (degree-one nodes) then, we call $G$ a {\em $k$-leaf power}.

\begin{center}
	\fbox{
		\begin{minipage}{.95\linewidth}
			\begin{problem}[\KSP]\
				\label{prob:steiner-root} 
					\begin{description}
					\item[Input:] A graph $G=(V,E)$.
					\item[Output:] Is $G$ a $k$-Steiner power?
				\end{description}
			\end{problem}     
		\end{minipage}
	}
\end{center}

\begin{center}
	\fbox{
		\begin{minipage}{.95\linewidth}
			\begin{problem}[\KLP]\
				\label{prob:leaf-power} 
					\begin{description}
					\item[Input:] A graph $G=(V,E)$.
					\item[Question:] Is $G$ a $k$-leaf power?
				\end{description}
			\end{problem}     
		\end{minipage}
	}
\end{center}

\begin{theorem}[~\cite{BLS08}]\label{thm:reduction}
There is a linear-time reduction from \KLP~to \KSP[(k-2)]~for every $k \geq 3$.
\end{theorem}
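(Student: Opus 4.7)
\medskip
\noindent\textbf{Plan.} The reduction $\rho$ sends a graph $G$ to its \emph{true-twin quotient} $G' := G/\sim$, where $u \sim v$ iff $N_G[u] = N_G[v]$. This quotient can be computed in linear time by partition refinement on closed neighbourhoods (an equivalent, standard ingredient of modular decomposition), so the reduction itself runs in linear time. What requires argument is the equivalence \emph{$G$ is a $k$-leaf power $\iff$ $G'$ is a $(k-2)$-Steiner power}, for every $k \geq 3$.

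\medskip
\noindent\textbf{Forward direction.} Assume $T$ is a $k$-leaf root of $G$. First I would establish the structural normalization: up to a local modification of $T$ that does not change the graph $G^k[V(G)]$, all vertices of any nontrivial true-twin class $C$ of $G$ may be taken to be leaves of $T$ sharing a single internal parent $c_C$. This uses the standard re-rooting argument: if two true twins $u,v \in C$ are attached to distinct parents $c_u \neq c_v$, then because $N_G[u]=N_G[v]$, every leaf of $T$ at distance $\leq k$ from $u$ is at distance $\leq k$ from $v$ and vice-versa; rerouting the leaf $v$ so that it becomes attached to $c_u$ preserves all these distances on $V(G)$. Iterating, we obtain a root where each twin class is a star of leaves sharing one parent. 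Now build $T'$ from $T$ by, for each twin class $C$, deleting all leaves of $C$ but one and identifying the surviving leaf (which becomes the representative of $C$ in $V(G')$) with its parent $c_C$. For distinct classes $C, C'$ with representatives $r,r'$, and $u \in C, u' \in C'$, we get $dist_{T'}(r,r') = dist_T(u,u') - 2$. Hence $rr' \in E(G')$ iff $uu' \in E(G)$ iff $dist_T(u,u') \leq k$ iff $dist_{T'}(r,r') \leq k-2$, so $T'$ witnesses that $G'$ is a $(k-2)$-Steiner power.

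\medskip
\noindent\textbf{Backward direction.} Conversely, assume $T'$ is a $(k-2)$-Steiner root of $G'$. For each $v \in V(G')$ with twin class $C_v \subseteq V(G)$ of size $s_v \geq 1$, attach $s_v$ fresh pendant leaves to $v$ in $T'$, labelled by the elements of $C_v$, and regard $v$ itself as an internal Steiner node of the resulting tree $T$. Then every vertex of $G$ is a leaf of $T$. For two such leaves $u \in C_{\bar u}$ and $w \in C_{\bar w}$: if $\bar u = \bar w$ then $dist_T(u,w) = 2 \leq k$, matching the fact that members of a true-twin class are pairwise adjacent; if $\bar u \neq \bar w$ then $dist_T(u,w) = dist_{T'}(\bar u, \bar w) + 2$, which is $\leq k$ iff $\bar u \bar w \in E(G')$ iff $uw \in E(G)$. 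Thus $T$ is a $k$-leaf root of $G$.

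\medskip
\noindent\textbf{Main obstacle.} The technical heart of the proof is the normalization step in the forward direction, namely showing that true twins can be rerouted to share a common parent of $T$ without changing the $k$-th power restricted to $V(G)$. One must verify that rerouting a pendant leaf $v$ from $c_v$ to $c_u$ preserves, for every other leaf $w \in V(G)$, the predicate $dist_T(v,w) \leq k$; this follows from $N_G[u] = N_G[v]$, but it must be checked uniformly also for leaves $w$ that are themselves twins of $u$ or $v$, and a well-founded induction on the number of ``misplaced'' twins is needed to show that all these local rerouting operations terminate and can be performed simultaneously without conflict.
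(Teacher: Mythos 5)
The paper states this theorem as a citation to~\cite{BLS08} and supplies no proof of its own, so there is no in-paper argument to compare against. Your proposal---pass to the true-twin quotient $G'$ and show that $G$ is a $k$-leaf power iff $G'$ is a $(k{-}2)$-Steiner power, by first normalizing a $k$-leaf root so that each twin class hangs from a single parent, then contracting each class representative into that parent (and, conversely, re-expanding each $v\in V(G')$ into a pendant star)---is correct and is the standard argument behind the cited result. Two small points worth sharpening: the rerouting step preserves the \emph{predicate} $dist_T(v,w)\le k$ only for real leaves $w\in V(G)$, not all leaves of $T$, and it preserves the predicate rather than the distances themselves; and one should observe explicitly that after normalization two distinct twin classes necessarily have distinct parent nodes (any two real leaves sharing a parent in a $k$-leaf root are automatically true twins whenever $k\ge 2$), which guarantees that the per-class contractions are pairwise disjoint and that the representatives of $V(G')$ land on distinct nodes of $T'$. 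With those observations made explicit, the argument is complete.
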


If \tree~is any $k$-Steiner root of $G$ then, nodes in $V(G)$ are called {\em real}, whereas nodes in $V(\tree) \setminus V(G)$ are called {\em Steiner}.
We so define, for any $S \subseteq V(\tree)$ (for any subtree $\tree[2] \subseteq \tree$, resp.): 
$$Real(S) := S \cap V(G) \ \text{and} \ Steiner(S) := S \setminus V(G)$$ 
(we define $Real(\tree[2]) := Real(V(\tree[2]))$ and $Steiner(\tree[2]) = Steiner(V(\tree[2]))$, resp.).

Note that throughout all this paper we consider two (sub)trees being equivalent if they are equal up to an appropriate identification of their Steiner nodes, namely (see also Fig.~\ref{fig:steiner-equiv}):

\begin{definition}\label{def:steiner-equiv}
Given $G=(V,E)$, we call any two trees $\tree,\tree[2]$ {\em Steiner-equivalent}, denoted $\tree \equiv_G \tree[2]$, if and only if $Real(\tree) = Real(\tree[2]) = S$ and there exists an isomorphism $\iota : V(\tree) \to V(\tree[2])$ such that $\iota(v) = v$ for any $v \in S$.
\end{definition}

\begin{figure}
\centering
\includegraphics[width=.5\textwidth]{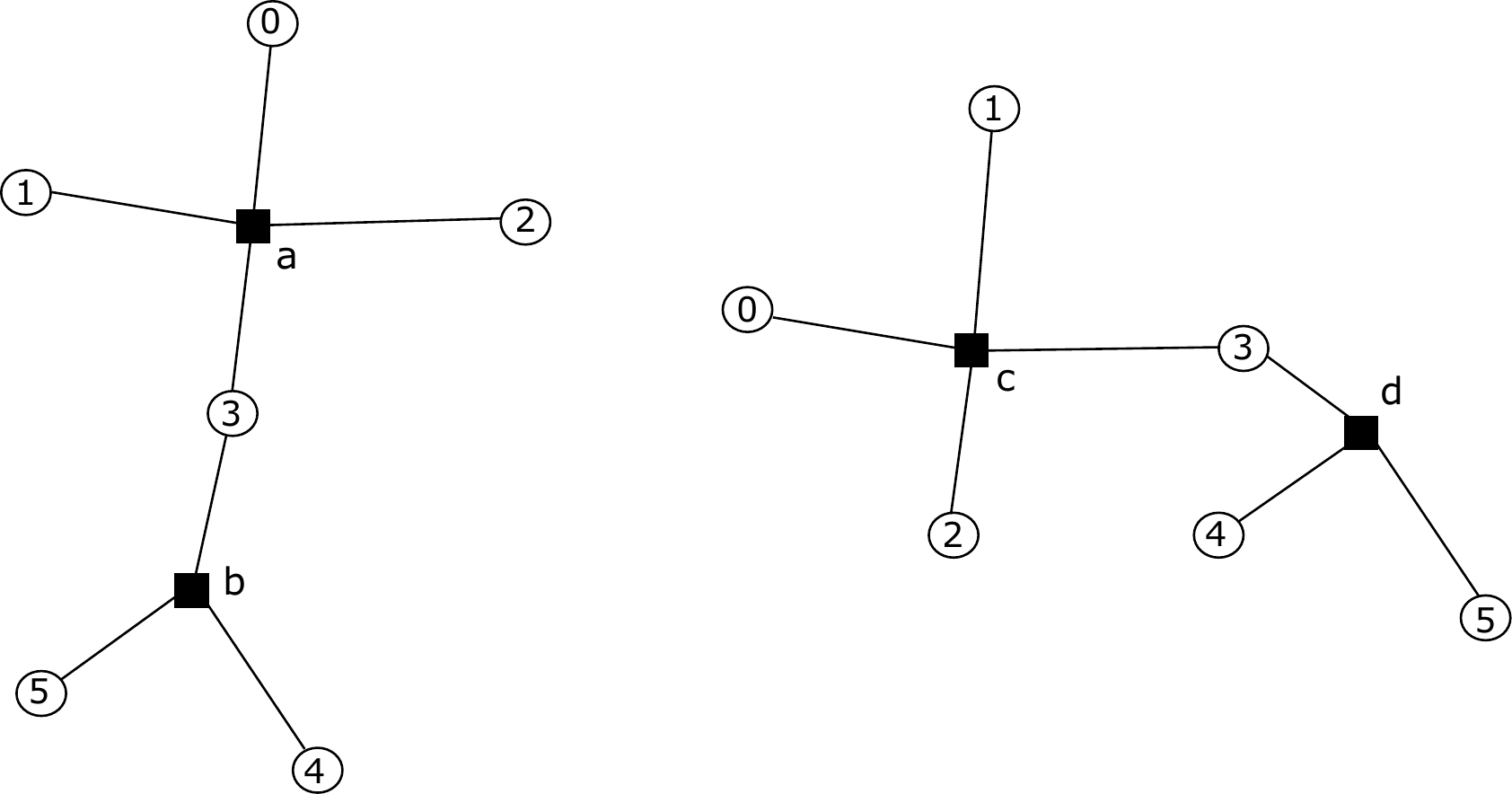}
\caption{Two Steiner-equivalent trees. Cycles and rectangles represent real and Steiner nodes, respectively.}
\label{fig:steiner-equiv}
\end{figure}

\medskip
Finally, given a node-subset $X \subseteq V(\tree)$, \unlabeledsubtree{X} is the smallest subtree of \tree~such that $X \subseteq V(\unlabeledsubtree{X})$.
Note that for a vertex-subset $X \subseteq V$, this is the smallest subtree of \tree~such that $X \subseteq Real(\unlabeledsubtree{X})$.
Furthermore we observe $\induced{\tree}{X} \subseteq \unlabeledsubtree{X}$, with equality if and only if $\induced{\tree}{X}$ is connected.

\subsection{Algorithmic tool-kit: (Strongly) Chordal graphs}

Given $G=(V,E)$, we call it a {\em chordal} graph if every induced cycle in $G$ is a triangle.
If in addition, for every cycle of even length in $G$, there exists a chord between two vertices at an odd distance ($>1$) apart from each other in the cycle then, $G$ is termed {\em strongly chordal}.
Chordal graphs and strongly chordal graphs can be recognized in ${\cal O}(m)$-time and ${\cal O}(m\log{n})$-time, respectively~\cite{PaT87,RTL76}.

\smallskip
The following property is well-known:

\begin{theorem}[~\cite{ABNT16}]\label{thm:partial-characterization}
For every $k \geq 1$, every $k$-Steiner power is a strongly chordal graph.
\end{theorem}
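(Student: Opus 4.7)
The plan is to strengthen the claim to: for any tree $T$ and any $k\geq 1$, the graph $T^k$ is itself strongly chordal. Since strongly chordal graphs are closed under taking induced subgraphs (all standard forbidden-subgraph and ordering characterizations are hereditary), and since by definition a $k$-Steiner power $G$ is an induced subgraph of some $T^k$, this strengthened statement immediately yields Theorem~\ref{thm:partial-characterization}.

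To establish that $T^k$ is strongly chordal I would use the forbidden-subgraph characterization: a graph is strongly chordal iff it contains neither an induced cycle of length $\geq 4$ nor an induced $\ell$-sun for any $\ell \geq 3$. For chordality, suppose $C = v_0 v_1 \cdots v_{m-1} v_0$ is an induced cycle of length $m \geq 4$ in $T^k$, and let $T' \subseteq T$ be the Steiner subtree of $T$ spanning $\{v_0, \ldots, v_{m-1}\}$; every leaf of $T'$ must belong to this set. If $T'$ is a path, then $C$ is induced in $P^k$ for some path $P$ of $T$; but $P^k$ is an interval graph, a contradiction. Otherwise $T'$ has at least three leaves, and picking a leaf $v_i$ with unique neighbor $x$ in $T'$, the identity $dist_T(x, v_j) = dist_T(v_i, v_j) - 1$ for all $j \neq i$, combined with the adjacency pattern around $v_i$ in $C$, forces a chord between two non-consecutive $v_j$'s and again yields a contradiction.

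For the absence of induced $\ell$-suns, consider a hypothetical induced sun with hub $\{w_1, \ldots, w_\ell\}$ and spokes $\{u_1, \ldots, u_\ell\}$. The Helly property for subtrees applied to the pairwise-close $w_j$'s yields a common point $c \in V(T)$ with $dist_T(c, w_j) \leq \lceil k/2 \rceil$ for every $j$; each spoke $u_i$ must simultaneously lie at distance $\leq k$ from both $w_i$ and $w_{i+1}$ yet at distance $> k$ from every other $w_j$, which localizes $u_i$ to a narrow ``corridor'' in $T$ between $w_i$ and $w_{i+1}$. Cyclically tracking these corridors around the hub produces distance inequalities that must violate either a required edge or a required non-edge of the sun.

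The hardest step I expect is this sun-free argument, because the case analysis must hold uniformly for every $\ell \geq 3$ and requires some delicate exploitation of the tree metric. A cleaner alternative would be to construct a strong perfect elimination ordering on $V(T^k)$ directly — for instance, by rooting $T$ arbitrarily and processing vertices in order of non-increasing depth, with a suitable tie-breaking rule — and then verify the defining ordering property through routine tree-distance computations, bypassing the sun analysis altogether.
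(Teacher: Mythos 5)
The paper does not prove Theorem~\ref{thm:partial-characterization}; it is imported as a known result from~\cite{ABNT16}, so there is no internal proof to compare your argument against. Your high-level reduction is nonetheless the standard and correct route: show that $T^k$ is strongly chordal for every tree $T$, then invoke heredity of strong chordality under induced subgraphs. (This also follows at once from the classical theorem that powers of strongly chordal graphs are strongly chordal, since trees are strongly chordal.)

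The detailed forbidden-subgraph argument, however, has real gaps. In the chordality step, from a leaf $v_i$ of the Steiner tree $T'$ with $T'$-neighbour $x$ you only obtain $dist_T(x, v_{i-1}) \le k-1$, $dist_T(x, v_{i+1}) \le k-1$, and $dist_T(x, v_j) \ge k$ for the remaining $v_j$. By the triangle inequality this gives $dist_T(v_{i-1}, v_{i+1}) \le 2k-2$, which is not $\le k$ once $k \ge 3$, and you never say which non-consecutive pair is actually forced to be a chord or why; as written, ``forces a chord'' is an assertion rather than a deduction. The $\ell$-sun step is even more of a sketch --- the cyclic ``corridor'' bookkeeping you wave at is exactly where all the work lies. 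Your own proposed alternative is the right one: root $T$, order $V(T)$ by non-increasing depth, and verify directly that this is a strong elimination ordering of $T^k$. This is essentially how the known proofs proceed, and the key computation is short: for a deepest node $\ell$ and $u,w \in N_{T^k}[\ell]$, writing $a$ and $b$ for the least common ancestors of $\{\ell,u\}$ and $\{\ell,w\}$ respectively and comparing their depths linearly orders the (suitably restricted) closed neighbourhoods of $u$ and $w$ by inclusion, using only the depth formula $dist_T(x,y) = depth(x)+depth(y)-2\,depth(\mathrm{LCA}(x,y))$. I would abandon the forbidden-subgraph route entirely and carry out this elimination-ordering argument; it bypasses both the chordality gap and the sun case analysis.
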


\paragraph{Minimal separators and Clique-tree.}
Our main algorithmic tool in this paper is a {\em clique-tree} of $G$, defined as a tree $\cliquetree{G}$ whose nodes are the maximal cliques of $G$ and such that for every $v \in V$, the maximal cliques containing $v$ induce a subtree of $\cliquetree{G}$.

\begin{theorem}[~\cite{BlP93}]\label{thm:clique-tree-construction}
A graph $G=(V,E)$ is chordal if and only if it has a clique-tree.
Moreover if $G$ is chordal then, we can construct a clique-tree for $G$ in ${\cal O}(m)$-time.
\end{theorem}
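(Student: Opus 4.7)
The plan is to prove both implications separately and then sketch the $\mathcal{O}(m)$-time construction.

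I would prove the forward direction by induction on $|V(G)|$, using the classical theorem of Dirac that every chordal graph contains a \emph{simplicial vertex} $v$, that is, a vertex whose neighbourhood $N_G(v)$ induces a clique. Then $K := N_G[v]$ is a maximal clique of $G$ and $G - v$ is still chordal as an induced subgraph, so by the induction hypothesis it admits a clique-tree $T'$. If $N_G(v)$ is itself a maximal clique of $G - v$, I relabel the corresponding node of $T'$ as $K$; otherwise $N_G(v)$ is strictly contained in some maximal clique $K'$ of $G - v$, and I append $K$ as a new pendant neighbour of $K'$. Checking the subtree-property in the resulting tree is a routine case distinction: it is immediate for $v$, and for every $u \in N_G(v)$ the new leaf $K$ is attached to a clique already containing $u$, so the subtree of maximal cliques containing $u$ stays connected.

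For the reverse direction, suppose $G$ admits a clique-tree $\cliquetree{G}$ and, for contradiction, that $G$ contains an induced cycle $C = v_0 v_1 \ldots v_{\ell-1} v_0$ with $\ell \geq 4$. For each $i$, let $S_i$ denote the subtree of $\cliquetree{G}$ formed by the maximal cliques containing $v_i$; by the defining property of a clique-tree, $S_i \cap S_j \neq \emptyset$ if and only if $v_i v_j \in E$ or $i = j$. For each edge $v_i v_{i+1}$ of $C$ (indices modulo $\ell$) pick some $K_i \in S_i \cap S_{i+1}$; the unique tree-path $P_i$ in $\cliquetree{G}$ from $K_{i-1}$ to $K_i$ lies entirely in $S_i$, since $S_i$ is a subtree. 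Concatenating $P_0, P_1, \ldots, P_{\ell-1}$ produces a closed walk in the tree $\cliquetree{G}$, which forces some equality $K_i = K_j$ with $i \neq j$. Then $v_i, v_{i+1}, v_j, v_{j+1}$ all sit in a common maximal clique, and a short case analysis on how far apart $i$ and $j$ are along $C$ yields a chord of $C$, contradicting its chordlessness. I expect this last case analysis to be the main subtlety: one must choose the $K_i$'s so that the enforced coincidence actually produces a new chord rather than merely re-tracing an already-existing edge of the cycle, and this is cleanest if one picks the $K_i$'s minimizing the total length of the resulting closed walk.

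Finally, for the $\mathcal{O}(m)$-time construction I would compute a perfect elimination ordering $v_1, \ldots, v_n$ of $G$ in $\mathcal{O}(n+m)$ time using Lex-BFS or Maximum Cardinality Search. For each $i$, define $K_i := \{v_i\} \cup \bigl(N_G(v_i) \cap \{v_{i+1}, \ldots, v_n\}\bigr)$; the $K_i$ that are inclusion-wise maximal among all such sets are exactly the maximal cliques of $G$. Linking each non-maximal $K_i$ to the $K_j$ of smallest index $j > i$ with $K_i \subseteq K_j$ yields the desired clique-tree, and with standard parent-pointer bookkeeping along the ordering this whole step runs in $\mathcal{O}(n+m) = \mathcal{O}(m)$ time (recalling that $G$ is assumed connected).
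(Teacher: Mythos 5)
The paper does not reprove this statement; Theorem~\ref{thm:clique-tree-construction} is a classical result cited from~\cite{BlP93}, so there is no in-paper argument against which to compare your proposal. Your proof is the standard textbook argument (in the lineage of Gavril, Buneman, Walter, and Blair--Peyton), and both directions are essentially correct. The forward direction via a simplicial vertex (Dirac) and the two-case splice into the inductive clique-tree is clean, and your bookkeeping of which maximal cliques of $G-v$ survive in $G$ is accurate in both cases. The $\mathcal{O}(m)$ construction via a perfect elimination ordering, the candidate sets $K_i = \{v_i\} \cup (N_G(v_i) \cap \{v_{i+1},\dots,v_n\})$, and parent pointers is also the standard Blair--Peyton construction.

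One point worth tightening in the reverse direction: a closed walk of positive length in a tree does not, by itself, force two of the distinguished nodes $K_i$ to coincide; what it forces is that some edge is retraced, equivalently that the subtree spanned by the walk has a leaf $L$, and $L$ must be one of the $K_i$'s (interior vertices of a path $P_j$ have degree $\geq 2$ in the span). From there, your minimality trick does close the gap: if both $P_i$ and $P_{i+1}$ emanating from $L=K_i$ were nontrivial, they would both use the unique edge $\{L,M\}$, and replacing $K_i$ by $M \in S_i \cap S_{i+1}$ would strictly shorten the walk. So under minimality one of $P_i,P_{i+1}$ is trivial, i.e.\ two \emph{consecutive} $K$'s coincide, giving three consecutive cycle vertices in one maximal clique and hence a chord $v_{i-1}v_{i+1}$ (or $v_iv_{i+2}$) whenever $\ell \geq 4$. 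Phrasing the conclusion as ``some $K_{i-1}=K_i$'' rather than ``some $K_i=K_j$'' would make the final case analysis one line instead of a case split on $|i-j|$.
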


An $uv$-separator is a subset $\sep \subseteq V \setminus \{u,v\}$ such that $u$ and $v$ are disconnected in $G \setminus \sep$.
If in addition, no strict subset of $\sep$ is an $uv$-separator then, $\sep$ is a {\em minimal} $uv$-separator.
A {\em minimal separator} of $G$ is a minimal $uv$-separator for some $u,v \in V$.
It is known that any minimal separator in a chordal graph $G$ is the intersection of two distinct maximal cliques of $G$.
Specifically, the following stronger relationship holds between minimal separators and clique-trees:

\begin{theorem}[~\cite{BlP93}]\label{thm:clique-tree-pties}
Given $G=(V,E)$ chordal, any of its clique-trees $\cliquetree{G}$ satisfies the following properties:
\begin{itemize}
\item For every edge $\maxk_i\maxk_j \in E(\cliquetree{G})$, $\maxk_i \cap \maxk_j$ is a minimal separator;
\item Conversely, for every minimal separator $\sep$ of $G$, there exist two maximal cliques $\maxk_i,\maxk_j$ such that $\maxk_i\maxk_j \in E(\cliquetree{G})$ and $\maxk_i \cap \maxk_j = \sep$.
\end{itemize} 
\end{theorem}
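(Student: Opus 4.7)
The plan is to exploit the defining property of a clique-tree $\cliquetree{G}$: for every vertex $v \in V$, the set of maximal cliques containing $v$ induces a (connected) subtree of $\cliquetree{G}$. I will treat the two directions separately, using this property as the main lever in both cases.

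For the first direction, fix an edge $\maxk_i\maxk_j \in E(\cliquetree{G})$ and consider the two subtrees $T_i, T_j$ obtained by deleting it, with $\maxk_i \in T_i$ and $\maxk_j \in T_j$. Let $V_i$ and $V_j$ be the respective unions of their node-sets. First I would observe that by the subtree property, every vertex of $V \setminus (\maxk_i \cap \maxk_j)$ lies in exactly one of $V_i, V_j$ (otherwise its associated subtree would meet both sides and hence the edge $\maxk_i\maxk_j$, forcing the vertex into $\maxk_i \cap \maxk_j$), and moreover any edge between $V_i \setminus (\maxk_i \cap \maxk_j)$ and $V_j \setminus (\maxk_i \cap \maxk_j)$ would force both endpoints into a common maximal clique, contradicting the separation of the two subtrees. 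By maximality of $\maxk_i$ and $\maxk_j$, one can pick $u \in \maxk_i \setminus \maxk_j$ and $v \in \maxk_j \setminus \maxk_i$, so $\maxk_i \cap \maxk_j$ is a $uv$-separator. Minimality follows from the observation that for any $s \in \maxk_i \cap \maxk_j$, the length-two path $u$--$s$--$v$ lives in $G$ (since $us \in \maxk_i$ and $sv \in \maxk_j$) and uses no other element of $\maxk_i \cap \maxk_j$.

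For the converse, let $\sep$ be a minimal $uv$-separator, and let $A, B$ be the connected components of $G \setminus \sep$ containing $u, v$ respectively. I would first invoke two classical facts about chordal graphs: $\sep$ itself is a clique, and each of $A, B$ contains a vertex adjacent to all of $\sep$ (the latter is a standard chord-completion argument on a cycle through two prospective $A$-neighbours of two distinct separator-vertices). Pick such $a \in A$ and $b \in B$, and extend $\sep \cup \{a\}$ and $\sep \cup \{b\}$ into maximal cliques $\maxk_u, \maxk_v$ respectively. Now let $\maxk_u = C_0, C_1, \ldots, C_r = \maxk_v$ be the unique path between them in $\cliquetree{G}$. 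Since $\sep \subseteq \maxk_u \cap \maxk_v$, the subtree property applied separately to each $s \in \sep$ yields $\sep \subseteq C_t$ for every $t$, so every edge label on this path contains $\sep$. It remains to produce an edge whose label equals $\sep$. For that, let $T_A$ (resp. $T_B$) denote the set of maximal cliques meeting $A$ (resp. $B$); these sets are disjoint, since any clique meeting both would yield an $A$--$B$ edge, contradicting that $A, B$ are distinct components of $G \setminus \sep$. Since $\maxk_u \in T_A \setminus T_B$ and $\maxk_v \in T_B \setminus T_A$, there is some edge $C_t C_{t+1}$ on the path with $C_t \in T_A$ but $C_{t+1} \notin T_A$. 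Then $C_{t+1}$ contains no vertex of $A$ and $C_t$ (being in $T_A$, hence not in $T_B$) contains no vertex of $B$, so $C_t \cap C_{t+1} \subseteq V \setminus (A \cup B) = \sep$, giving equality.

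The main obstacle, and the only part that uses more than the subtree property of the clique-tree, is the second direction, specifically the existence of a vertex of $A$ (and of $B$) adjacent to all of $\sep$ so as to produce the anchor cliques $\maxk_u, \maxk_v$. This is where chordality genuinely enters: without it, one cannot guarantee that $\sep$ together with some vertex of a component actually forms a clique, and the entire ``transition edge'' argument collapses. Everything else in the proof is purely tree-combinatorial manipulation of the subtree-intersection structure.
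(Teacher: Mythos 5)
This theorem is cited by the paper from~\cite{BlP93} without proof, so there is no in-paper argument to compare against; your proposal is essentially the standard textbook derivation from the running-intersection (subtree) property, and the overall plan is right. The first direction is sound. However, there is a genuine gap in the closing step of the converse. You argue that $C_{t+1}$ avoids $A$ and $C_t$ avoids $B$, and then conclude $C_t \cap C_{t+1} \subseteq V \setminus (A \cup B) = \sep$. The final equality $V \setminus (A \cup B) = \sep$ is false in general: $G \setminus \sep$ may have further connected components besides $A$ and $B$ (take a star $K_{1,3}$ with center $s$ and leaves $a,b,d$, where $\sep = \{s\}$ is a minimal $ab$-separator but $V \setminus (A \cup B) = \{s,d\} \neq \{s\}$). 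As written, the argument only yields $C_t \cap C_{t+1} \subseteq V \setminus (A \cup B)$, which is strictly weaker than the needed $C_t \cap C_{t+1} \subseteq \sep$.

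The gap is easily closed using a piece of information you have already established but did not exploit: every clique on the $\maxk_u$--$\maxk_v$ path contains $\sep$. Since $C_t$ is a clique, $C_t \setminus \sep$ is a clique of $G \setminus \sep$ and hence lies inside a single component; since $C_t$ meets $A$, that component is $A$, so $C_t \subseteq A \cup \sep$ (in particular $C_t$ cannot touch any third component $D$). Now $C_{t+1} \cap A = \emptyset$ gives $C_t \cap C_{t+1} \subseteq (A \cup \sep) \cap C_{t+1} \subseteq \sep$, and combined with $\sep \subseteq C_t \cap C_{t+1}$ you get the desired equality. Everything else, including the reliance on the folklore fact that a full component of a minimal separator in a chordal graph contains a vertex adjacent to the entire separator, is fine to cite as classical.
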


Based on the above theorem, we can define $\edgeset{G}{\sep} := \{ \maxk_i\maxk_j \in E(\cliquetree{G}) \mid \maxk_i \cap \maxk_j = \sep \}$.
The cardinality $|\edgeset{G}{\sep}|$ of this subset does not depend on $\cliquetree{G}$~\cite{BlP93}.
We sometimes say that edges in $\edgeset{G}{\sep}$ are {\em labeled} by $\sep$.

\medskip
A {\em rooted} clique-tree of $G$ is obtained from any clique-tree $\cliquetree{G}$ by identifying an arbitrary maximal clique $\maxk_0$ as its root.
Let $(\maxk_q,\maxk_{q-1},\ldots,\maxk_1,\maxk_0)$ be a postordering of $\cliquetree{G}$ (obtained by depth-first search).
For any $i > 0$, we define $\maxk_{p(i)}$ as the father node of $\maxk_i$.
The common intersection of $\maxk_i$ with its father node is the minimal separator $\sep_i := \maxk_i \cap \maxk_{p(i)}$.
By convention, we set $\sep_0 := \emptyset$.
We refer to Fig.~\ref{fig:rooted-clique-tree} for an illustration.

We define $\cliquetree{G}^i$ as the subtree rooted at $\maxk_i$, and let $G_i$ be the subgraph induced by all the maximal cliques in $V(\cliquetree{G}^i)$.
In particular, we have $\cliquetree{G}^0 = \cliquetree{G}$ and $G_0 = G$.
We also define $V_i := V(G_i)$ and $W_i := V_i \setminus \sep_i$ as shorthands.
We will use these above notations for rooted clique-trees throughout the remaining of our paper.

\begin{figure}
\centering
\includegraphics[width=.75\textwidth]{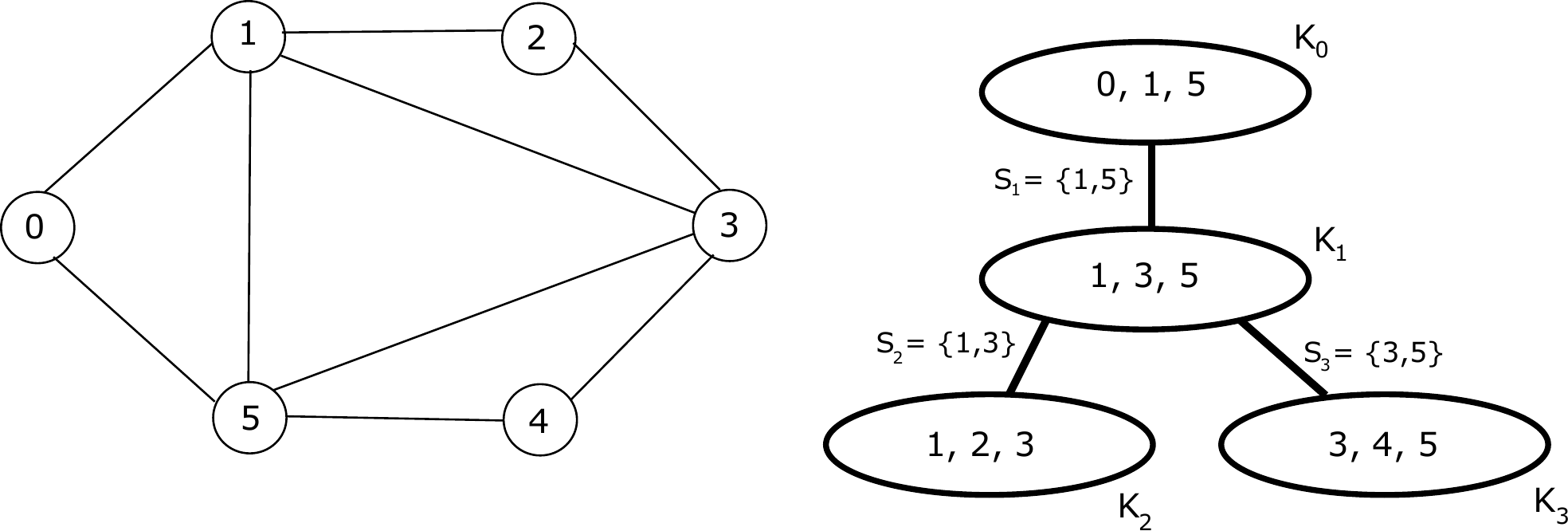}
\caption{A chordal graph $G$ (left) and a rooted clique-tree $\cliquetree{G}$ (right).}
\label{fig:rooted-clique-tree}
\end{figure}

\paragraph{Clique arrangement.}
We introduce a common generalization of both maximal cliques and minimal separators, that will play a key role in our analysis. Specifically, a {\em clique-intersection} in $G$ is the intersection of a subset of maximal cliques in $G$.
The family of all clique-intersections in $G$ is denoted by $\CI{G}$.
For strongly chordal graphs, it is known~\cite{NeR15} that every clique-intersection is the intersection of at most two maximal cliques.
In particular, a (nonempty) clique-intersection of a given strongly chordal $G$ is either: a maximal clique; or a minimal separator; or a {\em weak} minimal separator -- {\it i.e.}, whose removal strictly increases the distance between two vertices that remain in the graph (see~\cite{McK11}).
We denote by $\MAXK{G}, \SEP{G}$ and $\WSEP{G}$ the subfamilies of all the maximal cliques, minimal separators and weak minimal separators of $G$, respectively.

The {\em clique arrangement} of $G$ is the inclusion (directed) graph of the clique-intersections of $G$.
That is, there is a node for every clique-intersection, and there is an arc from $\ci$ to $\ci'$ if and only if we have $\ci \subseteq \ci'$.

\begin{theorem}[~\cite{NeR15}]\label{thm:clique-arrangement}
Given $G=(V,E)$ strongly chordal, the clique arrangement of $G$ can be constructed in ${\cal O}(m\log{n})$-time.
\end{theorem}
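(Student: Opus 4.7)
The plan is to leverage the structural fact, noted immediately before the statement, that in a strongly chordal graph every clique-intersection is the intersection of at most two maximal cliques. Since a chordal graph has at most $n$ maximal cliques, this a priori bounds the node count of the clique arrangement by $\mathcal{O}(n^2)$, and the naive enumeration over all pairs is the central obstacle to meeting the $\mathcal{O}(m\log n)$ budget. My approach would therefore be (i) to handle $\MAXK{G}$ and $\SEP{G}$ ``for free'' from a clique-tree, and (ii) to enumerate $\WSEP{G}$ by a careful incremental traversal, not by brute force.

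First I would compute a clique-tree $\cliquetree{G}$ in $\mathcal{O}(n+m)$ time via Theorem~\ref{thm:clique-tree-construction} and read off the minimal separators $\SEP{G}$ directly from its edge labels using Theorem~\ref{thm:clique-tree-pties}; this gives $\MAXK{G}$ and $\SEP{G}$ in near-linear time. For the weak minimal separators, the key observation is that for two maximal cliques $\maxk_i, \maxk_j$ that are non-adjacent in $\cliquetree{G}$, the intersection $\maxk_i \cap \maxk_j$ equals the intersection of all minimal separators labelling the unique $\maxk_i\maxk_j$-path. I would then compute a strong perfect elimination ordering in $\mathcal{O}(m\log n)$ time (by partition-refinement methods underlying the strongly chordal recognition cited in Section~\ref{sec:prelim}), represent each maximal clique as a sorted sequence in that ordering, and sweep the clique-tree with a small-to-large merging strategy that generates the nested family of path-intersections incrementally, de-duplicating with a hash table keyed on canonical signatures. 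The inclusion arcs are finally reconstructed by sorting the arrangement by cardinality and, for each $\ci$, scanning the at most two maximal cliques containing $\ci$ to locate its immediate super-intersections, so the number of arcs is no larger than (twice) the number of nodes plus the cover relations between nodes of $\MAXK{G}$.

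The hard part is clearly the enumeration of $\WSEP{G}$ within the stated time bound: one must show both that the total size $\sum_{\ci \in \CI{G}} |\ci|$ is $\mathcal{O}(m\log n)$ on strongly chordal graphs, and that the small-to-large merge along the clique-tree correctly visits every distinct weak minimal separator exactly once. I expect these two ingredients to rest on a non-trivial combinatorial lemma of~\cite{NeR15} reflecting the hierarchical behaviour of weak minimal separators under a strong elimination order; once that lemma is in hand, both the enumeration phase and the final arc-construction phase are dominated by the aggregate size of the arrangement, yielding the claimed $\mathcal{O}(m\log n)$ running time.
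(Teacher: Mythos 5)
The paper does not prove this statement at all --- it is imported verbatim as a black-box result from Nevries and Rosenke~\cite{NeR15}, so there is no ``paper's own proof'' to compare yours against. Your proposal is therefore a reconstruction from scratch of the cited algorithm, and as written it is an outline rather than a proof: you explicitly defer the crux --- showing that $\sum_{\ci\in\CI{G}}|\ci| = \mathcal{O}(m\log n)$ on strongly chordal graphs, and that a small-to-large merge along the clique-tree visits every distinct weak minimal separator exactly once --- to ``a non-trivial combinatorial lemma of~\cite{NeR15}'' which you neither state nor establish. Without that lemma nothing in your sketch actually certifies the claimed running time: the pairwise-intersection bound of $\mathcal{O}(n^2)$ candidate nodes is not by itself compatible with the $\mathcal{O}(m\log n)$ budget, the de-duplication by hashing carries no worst-case guarantee, and you have not argued that the number of inclusion arcs in the arrangement is near-linear. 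In the context of this paper, which uses the theorem only as a preprocessing tool in Step~\ref{step-0}, citing~\cite{NeR15} and not re-deriving the algorithm is the appropriate choice; but as a freestanding argument your proposal is incomplete exactly at the point that makes the statement non-trivial.
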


\subsection{Highlights of the algorithm}\label{sec:algo-highlight}

The remaining of the paper is devoted to the proof of Theorem~\ref{thm:main-steiner-power}.
By Theorem~\ref{thm:reduction}, this will also imply Theorem~\ref{thm:main-leaf-power}.
We start sketching our algorithm below in order to guide the readers throughout the next sections.
Its analysis is based on the structure theorems in Sections~\ref{sec:representation} and~\ref{sec:restricted-root}.
Perhaps surprisingly, we need several tricks in order to keep the running time of this algorithm polynomial.

\begin{enumerate}[label=\textbf{S.\theenumi},ref=S.\theenumi]
\addtocounter{enumi}{-1}
\item\label{step-0}({\it Initialization.})
Given $G=(V,E)$, we check whether $G$ is strongly chordal.
If this is not the case then, by Theorem~\ref{thm:partial-characterization}, $G$ cannot be a $k$-Steiner power for any $k \geq 1$, and we stop.
Otherwise by Theorem~\ref{thm:clique-arrangement} we can compute the clique-arrangement of $G$ in polynomial time.
Throughout all the remaining sections, we implicitly use the fact that we can access in polynomial time to the clique-arrangement of $G$.
We will also assume in what follows that $G$ is not a complete graph (otherwise, $G$ is trivially a $k$-Steiner power for any $k$, and so we also stop in this case).

\item\label{step-1}({\it Construction of the rooted clique-tree.})
We construct a clique-tree $\cliquetree{G}$ of $G$ that we root in some $\maxk_0 \in \MAXK{G}$.
This clique tree must satisfy very specific properties of which we postpone the precise statement until Section~\ref{sec:clique-tree}. 
In order to give the main intuition behind this construction, let us consider an arbitrary maximal clique $\maxk_i$ that is not the root ({\it i.e.}, $i > 0$).
Roughly, we would like to minimize the number of minimal separators in $G_i$ to which the vertices in $\sep_i := \maxk_i \cap \maxk_{p(i)}$ belong to.
Indeed, doing so will help in bounding the number of possible partial solutions that we will need to consider for processing the father node $\maxk_{p(i)}$ of $\maxk_i$.
More specifically, for every descendant $\maxk_j$ of $\maxk_i$ in $\cliquetree{G}$ we would like to impose $\sep_j \not\subseteq \sep_i$ and $\sep_i \not\subseteq \sep_j$.
However, both objectives are conflicting and so, we need to find a trade-off.
The technical motivations behind our choices will be further explained in Sections~\ref{sec:encoding} and~\ref{sec:greedy}.

\item\label{step-2}({\it Candidate set generation.})
This step exploits a result of Section~\ref{sec:representation} which states that, for any $4$-Steiner root \tree~of $G$ and for any clique-intersection \ci, we have $Real(\unlabeledsubtree{\ci}) = \ci$. Equivalently, this result says that in the smallest subtree containing \ci~there can be no other real nodes than those in \ci.
Then, our goal is, for every $\ci \in \CI{G}$, to compute a polynomial-size family ${\cal T}_{\ci}$ of ``candidate subtrees'' whose real nodes are exactly \ci.
Intuitively, ${\cal T}_{\ci}$ should contain all possibilities for \unlabeledsubtree{\ci}~in a ``well-structured'' $4$-Steiner root \tree~(such a root must satisfy additional properties given in Sec.~\ref{sec:restricted-root}).
Note that in practice, we only need to compute this above family for {\em minimal separators} and {\em maximal cliques}.
\begin{itemize}
\item In Section~\ref{sec:minsep} we present an algorithm for computing the collection $({\cal T}_{\sep})_{\sep \in \SEP{G}}$ for the minimal separators. This algorithm serves as a brick-basis construction for computing all the other families.
\item Then, we consider in Section~\ref{sec:leaf-case} the maximal cliques $\maxk_i$ that are leaf-nodes of \cliquetree{G}. We use the well-known property that all vertices in $\maxk_i \setminus \sep_i$ are simplicial in order to generalize the algorithm of the previous section to this new case.
\item Finally, we consider the maximal cliques $\maxk_i$ that are internal nodes of \cliquetree{G}~(Section~\ref{sec:super-selection}). Unsurprisingly, several new difficulties arise in the construction of ${\cal T}_{\maxk_i}$. Our bottleneck is solving the following subproblem: compute (up to Steiner equivalence) all possible central nodes and their neighbourhood in any subtree $\unlabeledsubtree{\maxk_i}$ of diameter four.
We solved this subproblem in most situations, {\it e.g.}, when there is a minimal separator $\sep \subseteq \maxk_i$ such that $\unlabeledsubtree{\sep}$ must be a bistar (diameter-three subtree). However in some other situations we failed to do so. That left us with some ``problematic subsets'' called {\em thin branches}: with exponentially many possible ways to include them in candidate subtrees. As a way to circumvent this combinatorial explosion, we also include in ${\cal T}_{\maxk_i}$ some partially constructed subtrees where the thin branches are omitted. We will greedily decide how to include the thin branches in these subtrees ({\it i.e.}, how to complete the construction) at Step~\ref{step-4}. 
\end{itemize}
Correctness of this part mostly follows from our structure theorem of Section~\ref{sec:restricted-root}.

\item\label{step-3}({\it Selection of the encodings.})
For the remaining of the algorithm, let $(\maxk_q,\maxk_{q-1},\ldots,\maxk_0)$ be a post-ordering of the maximal cliques ({\it i.e.}, obtained by depth-first-search traversal of \cliquetree{G}).
We consider the maximal cliques $\maxk_i \in \MAXK{G}$ sequentially, from $i = q$ downto $i=0$.
Indeed if $G$ is a $4$-Steiner power then (by hereditarity), so is the subgraph $G_{i} = (V_{i},E_{i})$ that is induced by all the maximal cliques in the subtree $\cliquetree{G}^{i}$ rooted at $\maxk_{i}$.
Steps~\ref{step-3} and~\ref{step-4} are devoted to the computation of a subset ${\cal T}_{i}$ of $4$-Steiner roots for $G_{i}$.
Specifically, for any $4$-Steiner root $\partialsolution{i}$ of $G_i$ we define the following encoding:
$$\texttt{encode}(\partialsolution{i}) := \left[ \ \labeledsubtree[1]{i}{\sep_i} \ \mid \ \left( dist_{\partialsolution{i}}(r,W_i) \right)_{r \in V(\labeledsubtree[1]{i}{\sep_i})} \ \right].$$
During Step~\ref{step-3} we compute a polynomial-size subset of {\em allowed} encodings for the partial solutions in ${\cal T}_{i}$.
That is, we only want to add in ${\cal T}_{i}$ some partial solutions for which the encoding is in the list.
Formally, we define an auxiliary problem called \DCR, where given an encoding as input, we ask whether there exists a corresponding $4$-Steiner root of $G_i$.
Our set of allowed encodings for ${\cal T}_{i}$ can be seen as a set of inputs for which we need to solve \DCR.

We stress that in order to compute these encodings, we use: the families computed at Step~\ref{step-2}, some properties of the rooted clique-tree \cliquetree{G}, {\em and} some pre-computed subsets ${\cal T}_j$ of partial solutions for the siblings $\maxk_j$ of $\maxk_i$. 
Specifically, if $\maxk_{p(i)} \in \MAXK{G}$ has children nodes $\maxk_{i_1}, \maxk_{i_2}, \ldots, \maxk_{i_p}$, where $p(i) < i_1 < i_2 < \ldots < i_p$ then, we impose that $\sep_{i_1}, \sep_{i_2}, \ldots, \sep_{i_p}$ are ordered by decreasing size.
Step~\ref{step-3} can start for $\maxk_{i_j}$ only after that Steps~\ref{step-3} and~\ref{step-4} are completed for all of $\maxk_{i_{j+1}}, \maxk_{i_{j+2}},\ldots,\maxk_{i_p}$.
This means in particular that executions of Steps~\ref{step-3} and~\ref{step-4} (for different maximal cliques) are intertwined.
%

\item\label{step-4}({\it Greedy strategy.})
After Step~\ref{step-3} is completed, $\maxk_i$ received a polynomial-size subset of constraints -- {\it a.k.a.}, encodings -- for the $4$-Steiner roots of $G_i$ we want to compute.
For every such constraints, we are left to decide whether there exists a $4$-Steiner root of $G_i$ which satisfies all of them ({\it i.e.}, we must solve \DCR). 
\begin{itemize}
\item {\bf Case $\maxk_i$ is a leaf-node.}
In this situation, $V_i = \maxk_i$.
After Step~\ref{step-2} is completed, we are given a family of all possible subtrees $\unlabeledsubtree{\maxk_i}$.
We are left verifying whether there exists a solution in this family which satisfies all of the constraints.
\item {\bf Case $\maxk_i$ is an internal node.}
Let $\maxk_{i_1}, \maxk_{i_2},\ldots, \maxk_{i_p}$ be the children nodes of $\maxk_i$ in $\cliquetree{G}$.
We will construct ${\cal T}_i$ from the partial solutions in ${\cal T}_{i_1}, {\cal T}_{i_2},\ldots, {\cal T}_{i_p}$.
For that, we try to combine all the possible subtrees $\unlabeledsubtree{\maxk_i}$ (computed during Step~\ref{step-2}) with the partial solutions stored in the sets ${\cal T}_{i_j}$ by using a series of tests based on a maximum-weight matching algorithm (Section~\ref{sec:greedy}).
-- We use the same strategy in order to incorporate first the so called thin branches, so as to complete the construction of the subtrees $\unlabeledsubtree{\maxk_i}$.--
We stress the intriguing relationship between our approach and the implementation of the {\tt alldifferent} constraint in constraint programming~\cite{Reg94}.
\end{itemize}

\item\label{step-5}({\it Output.}) Overall since $G_0 = G$, we have $G$ is a $4$-Steiner power if and only if ${\cal T}_0 \neq \emptyset$.
Furthermore, any tree $T \in {\cal T}_0$ is a $4$-Steiner root of $G$.
\end{enumerate}

\section{Playing with the root}\label{sec:representation}

Some general relationships between $k$-Steiner roots and clique-intersections are proved in Section~\ref{sec:structure}, for any $k$.
These structural results will be the cornerstone of our algorithm and its analysis.
Before presenting all these properties, we establish several useful facts on trees in Section~\ref{sec:tree} (most of them being likely to be known).

\subsection{General results on trees}\label{sec:tree}

We first recall the {\em unimodality} property for the eccentricity function on trees (\ref{pty-lem:tree:1} below), as well as some other related properties.
They mostly follow from a seminal paper of Jordan~\cite{Jor69}.
-- See also~\cite{CDHVA18+} for a recent example of their applications to other graph problems. --

\begin{lemma}[folklore]\label{lem:tree}
The following hold for any tree $T$:
\begin{enumerate}[label=\textbullet,ref=P-\ref{lem:tree}.\theenumi]
\item\label{pty-lem:tree:1}({\small\ref{pty-lem:tree:1}.}) For every node $v \in V(T)$ we have $ecc_T(v) = dist_T(v, {\cal C}(T)) + rad(T)$;
\item\label{pty-lem:tree:2}({\small\ref{pty-lem:tree:2}.}) Every diametral path in $T$ contains all the nodes in ${\cal C}(T)$ (as its middle nodes);
\item\label{pty-lem:tree:3}({\small\ref{pty-lem:tree:3}.}) ${\cal C}(T)$ is reduced to a node if $diam(T)$ is even, and to an edge if $diam(T)$ is odd;
\item\label{pty-lem:tree:4}({\small\ref{pty-lem:tree:4}.}) $rad(T) = \left\lceil diam(T)/2 \right\rceil$.
\end{enumerate}
\end{lemma}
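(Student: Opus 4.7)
The plan is to prove the four properties simultaneously by induction on $|V(T)|$, exploiting a classical leaf-pruning argument due to Jordan. The base cases (a single node, or a single edge) are immediate since there ${\cal C}(T) = V(T)$, $rad(T) \leq 1$ and $diam(T) \leq 1$, so all four properties can be verified directly. For the inductive step, assume $|V(T)| \geq 3$ and let $T'$ be the tree obtained from $T$ by removing all of its leaves; note that $T'$ is nonempty and connected since every tree on at least three nodes has at least two leaves, none of which can be cut vertices.

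The technical heart of the argument is the following eccentricity-preservation lemma: for every $v \in V(T')$, we have $ecc_{T'}(v) = ecc_T(v) - 1$. The key observation is that in any tree, a node farthest from $v$ must be a leaf (otherwise one could extend the path further and strictly increase the distance); hence removing the leaves decreases the eccentricity of any surviving node by exactly one. This lemma immediately gives $rad(T) = rad(T') + 1$ and, crucially, ${\cal C}(T) = {\cal C}(T')$, so the center is preserved by pruning. It also gives $diam(T) = diam(T') + 2$ by observing that any diametral path of $T$ has both endpoints as leaves, and its truncation by two is a diametral path of $T'$. The induction hypothesis then yields property \ref{pty-lem:tree:4}, since $rad(T) = rad(T')+1 = \lceil diam(T')/2 \rceil + 1 = \lceil diam(T)/2 \rceil$.

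For property \ref{pty-lem:tree:1}, we proceed again by induction: if $v \in V(T')$ then $dist_T(v,{\cal C}(T)) = dist_{T'}(v,{\cal C}(T'))$ and $ecc_T(v) = ecc_{T'}(v)+1$, so the induction hypothesis combined with $rad(T)=rad(T')+1$ gives the formula; if $v$ is a leaf of $T$ with unique neighbour $v'$, then $dist_T(v,{\cal C}(T)) = dist_T(v',{\cal C}(T))+1$ and $ecc_T(v) = ecc_T(v')+1$ (since the farthest node from $v'$ is a leaf different from $v$, whose distance from $v$ is one more than from $v'$), so the formula for $v$ reduces to the formula for $v'$. Property \ref{pty-lem:tree:3} follows by iterating the pruning process until we reach a tree of diameter at most one: since each iteration decreases the diameter by exactly two and preserves the center, the final center is a single node when $diam(T)$ is even and a single edge when $diam(T)$ is odd.

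Finally, for property \ref{pty-lem:tree:2}, I would argue directly rather than inductively. Let $P$ be any diametral path, of length $diam(T)$, and consider its middle node(s)---the unique midpoint $m$ if $diam(T)$ is even, or the two middle nodes $m,m'$ if $diam(T)$ is odd. Each such middle node lies at distance exactly $\lceil diam(T)/2 \rceil = rad(T)$ from both endpoints of $P$, hence has eccentricity at least $rad(T)$. Using property \ref{pty-lem:tree:1} and the fact that the endpoints of $P$ realize the diameter, a short calculation forces equality, so the middle node(s) belong to ${\cal C}(T)$; combined with property \ref{pty-lem:tree:3}, they are exactly ${\cal C}(T)$. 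The main obstacle is the careful bookkeeping in the eccentricity-preservation lemma, especially in borderline cases where $T'$ is a single node (so $rad(T')=0$), which must be treated as an additional base case to launch the induction cleanly.
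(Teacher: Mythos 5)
The paper does not prove this lemma; it is stated as folklore with a citation to Jordan and no proof is given, so there is no in-paper argument to compare against. Your Jordan-style leaf-pruning induction is the standard approach and is correct in its main lines: the eccentricity-preservation lemma $ecc_{T'}(v)=ecc_T(v)-1$ for surviving nodes is right (any node farthest from $v$ is necessarily a leaf), and from it you correctly derive $rad(T)=rad(T')+1$, ${\cal C}(T)={\cal C}(T')$ and $diam(T)=diam(T')+2$, which carry the induction for \ref{pty-lem:tree:1}, \ref{pty-lem:tree:3} and \ref{pty-lem:tree:4}.

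Two spots deserve a little more care. First, $rad(T)=rad(T')+1$ and ${\cal C}(T)\subseteq V(T')$ implicitly use that every leaf $u$ with neighbour $u'$ satisfies $ecc_T(u)=ecc_T(u')+1 > ecc_T(u')$ when $|V(T)|\ge 3$; you use this fact later for leaves in the proof of \ref{pty-lem:tree:1}, but it is already needed to justify that the minimum eccentricity and hence the centre live in $T'$. Likewise, $diam(T)=diam(T')+2$ needs the upper bound $diam(T')\le diam(T)-2$, not only the truncation argument giving $\ge$; this follows because a non-leaf cannot realise the diameter, but it is worth saying. Second, for \ref{pty-lem:tree:2} the phrase "a short calculation forces equality" is hiding the only nontrivial estimate of the whole lemma, namely that a middle node $m$ of a diametral path with endpoints $a,b$ satisfies $dist_T(m,w)\le rad(T)$ for \emph{every} $w$; the clean way is to take the median $z$ of $\{a,b,w\}$, note $dist_T(m,w)=dist_T(m,z)+dist_T(z,w)$, and bound the sum via $dist_T(w,a)\le diam(T)$ or $dist_T(w,b)\le diam(T)$ depending on which side of $m$ the node $z$ sits. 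With these bookkeeping steps filled in, the proof is complete and is essentially the canonical Jordan argument.
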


Based on the above, the following properties on subtree intersections can be derived:


\begin{lemma}\label{lem:tree-intersection}
Given a tree $T$ let $T_1,T_2$ be two subtrees such that $diam(T_1 \cap T_2) = diam(T_1)$.
Then, we have $diam(T_1 \cup T_2) = diam(T_2)$.
\end{lemma}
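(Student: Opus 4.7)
The plan is to establish $diam(T_1 \cup T_2) \le diam(T_2)$, the reverse inequality being immediate from $T_2 \subseteq T_1 \cup T_2$ (and $T_1 \cup T_2$ being a subtree since $T_1 \cap T_2$ is nonempty, which follows from $diam(T_1 \cap T_2)$ being defined in the hypothesis). First I would observe that the inclusion $T_1 \cap T_2 \subseteq T_2$ yields $diam(T_1) = diam(T_1 \cap T_2) \le diam(T_2)$; by convexity of subtrees inside $T$, any pair of points lying both in $T_1$ (resp.\ both in $T_2$) has distance at most $diam(T_1) \le diam(T_2)$ (resp.\ at most $diam(T_2)$). Therefore the only case requiring work is that of a pair $u \in T_1 \setminus T_2$, $v \in T_2 \setminus T_1$, which is the case I would focus on.

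For this mixed case, I would fix a diametral path of $T_1 \cap T_2$ with endpoints $p, q$. Since its length equals $diam(T_1)$, it is also a diametral path of $T_1$, which gives the uniform bound $dist_T(u, p), dist_T(u, q) \le diam(T_1) = dist_T(p, q)$ for every $u \in T_1$. The crux of the argument is then the four-point condition for tree metrics applied to the quadruple $\{p, q, u, v\}$: among the three sums
\[S_1 := dist_T(p,q) + dist_T(u,v),\quad S_2 := dist_T(p,u) + dist_T(q,v),\quad S_3 := dist_T(p,v) + dist_T(q,u),\]
the two largest coincide, so $S_1 \le \max(S_2, S_3)$. If $S_1 \le S_2$, then
$dist_T(u,v) \le dist_T(q,v) + dist_T(p,u) - dist_T(p,q) \le dist_T(q,v)$
using the bound on $dist_T(p,u)$; the case $S_1 \le S_3$ is symmetric and yields $dist_T(u,v) \le dist_T(p,v)$. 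In either subcase the right-hand side is the distance between two points of $T_2$, hence at most $diam(T_2)$, and we are done.

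The main obstacle I anticipate is simply identifying this application of the four-point condition---in particular, choosing the quadruple $\{p, q, u, v\}$, where $p, q$ play simultaneously the roles of an \emph{eccentricity witness} for $u$ inside $T_1$ and of a \emph{substitute endpoint} inside $T_2$. Everything else reduces to routine invocations of subtree-convexity of $T_1, T_2, T_1 \cap T_2$ inside $T$ and of the tree four-point inequality, which I would cite as a standard fact (it follows in one line by inspecting the Steiner topology of any four points in a tree).
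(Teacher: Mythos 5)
Your proof is correct and takes a genuinely different route from the paper. Both arguments start from a diametral pair $p, q$ of $T_1 \cap T_2$, which by the hypothesis $diam(T_1 \cap T_2) = diam(T_1)$ is also a diametral pair of $T_1$. The paper then invokes the center/unimodality machinery of Lemma~\ref{lem:tree}: it first deduces ${\cal C}(T_1 \cap T_2) = {\cal C}(T_1)$, then argues via a two-case analysis (on whether ${\cal C}(T_1) \subseteq {\cal C}(T_2)$ or not) that some endpoint $z \in \{p, q\}$ maximizes $dist_T(\cdot, {\cal C}(T_2))$ over $V(T_1)$, and finally bounds $ecc_{T_2}(z) = dist_T(z, {\cal C}(T_2)) + rad(T_2) \leq diam(T_2)$ by unimodality. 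You bypass all of this with a single application of the four-point condition to $\{p, q, u, v\}$: since $dist_T(u, p)$ and $dist_T(u, q)$ are both at most $dist_T(p, q)$ for any $u \in V(T_1)$, the inequality $S_1 \leq \max(S_2, S_3)$ directly gives $dist_T(u, v) \leq \max\bigl(dist_T(p, v), dist_T(q, v)\bigr) \leq diam(T_2)$ for any $v \in V(T_2)$. Your version is shorter and more self-contained, avoiding both the eccentricity apparatus and the case analysis on the relative positions of the centers; the paper's version is phrased so as to mesh with the center-based toolkit that it sets up and uses for the companion Lemma~\ref{lem:center-inclusion} and elsewhere in Section~\ref{sec:tree}.
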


\begin{proof}
First we claim that ${\cal C}(T_1 \cap T_2) = {\cal C}(T_1)$.
Indeed, since $T_1 \cap T_2$ and $T_1$ are trees with equal diameter, and we have $T_1 \cap T_2 \subseteq T_1$, every diametral path for $T_1 \cap T_2$ is also a diametral path for $T_1$.
Furthermore, since on every diametral path in a tree, the middle nodes are exactly the center nodes (Prop.~\ref{pty-lem:tree:2}), we obtain as claimed that ${\cal C}(T_1 \cap T_2) = {\cal C}(T_1)$.

\begin{figure}[h!]
\centering
\includegraphics[width=.35\textwidth]{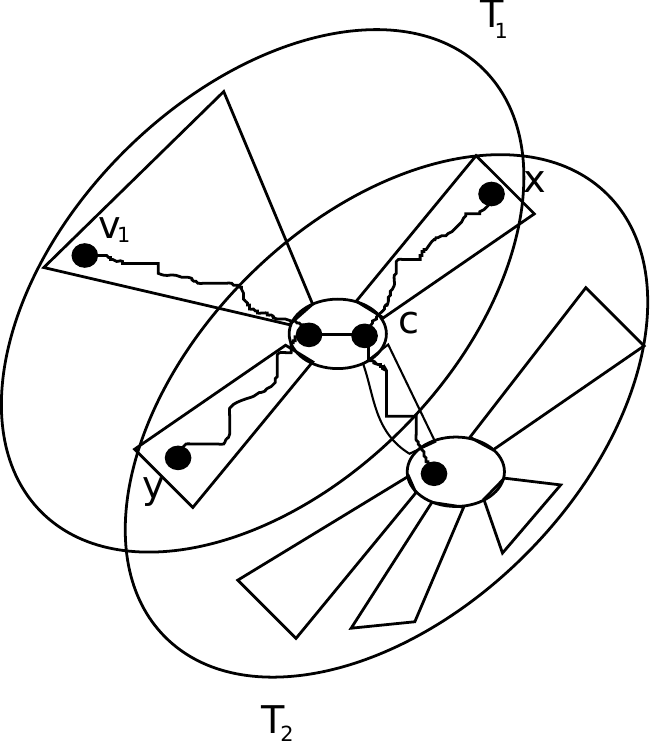}
\caption{To the proof of Lemma~\ref{lem:tree-intersection}.}
\label{fig:tree-intersections}
\end{figure}

Then, let $x,y \in V(T_1 \cap T_2)$ be the two ends of a diametral path in the subtree $T_1 \cap T_2$.
We set $z \in \{x,y\}$ maximizing $dist_T(z,{\cal C}(T_2))$ and we claim that, for every $v_1 \in V(T_1)$, $dist_T(v_1,{\cal C}(T_2)) \leq dist_T(z,{\cal C}(T_2))$.
Before we prove this claim, let us explain why this proves the lemma.
Every node of $V(T_1)$ is at a distance $\leq dist_T(z,{\cal C}(T_2)) + rad(T_2)$ from any node in $V(T_2)$.
By unimodality (Prop.~\ref{pty-lem:tree:1}), $ecc_{T_2}(z) = dist_T(z,{\cal C}(T_2)) + rad(T_2) \leq diam(T_2)$, and so, $diam(T_1 \cup T_2) = diam(T_2)$.

Finally, in order to prove this above claim there are two cases.
\begin{itemize}
\item
First assume ${\cal C}(T_1) \subseteq {\cal C}(T_2)$.
We recall that since the unique $xy$-path in $T$ must contain all of ${\cal C}(T_1)$ (Prop.~\ref{pty-lem:tree:2}), there can be no component of $T \setminus {\cal C}(T_1)$ that contains both $x,y$.
In particular, there exists $z \in \{x,y\}$ such that no component of $T \setminus {\cal C}(T_1)$ can both contain $z$ and intersect ${\cal C}(T_2)\setminus {\cal C}(T_1)$.
Then, $dist_T(z,{\cal C}(T_2)) = dist_T(z,{\cal C}(T_1))$.
Furthermore by unimodality (Prop.~\ref{pty-lem:tree:1}) every node $v_1 \in V(T_1)$ has eccentricity $dist_T(v_1,{\cal C}(T_1)) + rad(T_1)$.
Since $z$ is an end in a diametral path of $T_1$ it maximizes $dist_T(z,{\cal C}(T_1))$, and so, for every $v_1 \in V(T_1)$ we have $dist_T(v_1,{\cal C}(T_2)) \leq dist_T(v_1,{\cal C}(T_1)) \leq dist_T(z,{\cal C}(T_1)) = dist_T(z,{\cal C}(T_2))$.
\item Otherwise, let $c \in {\cal C}(T_1)$ minimize $dist_T(c,{\cal C}(T_2))$.
Note that since we have ${\cal C}(T_1) \not\subseteq {\cal C}(T_2)$, there is a unique possible choice for $c$.
Furthermore, every $v_1 \in V(T_1)$ satisfies $dist_T(v_1,{\cal C}(T_2)) \leq dist_T(v_1,c) +  dist_T(c,{\cal C}(T_2)) \leq rad(T_1) +  dist_T(c,{\cal C}(T_2))$, and we will show this upper-bound is reached for at least one of $x$ or $y$.
Specifically, we can refine one observation from the previous case as follows: there exists $z \in \{x,y\}$ such that no component of $T \setminus {\cal C}(T_1)$ can both contain $z$ and intersect ${\cal C}(T_2)\setminus {\cal C}(T_1)$; and in the special case where ${\cal C}(T_1)$ is an edge, $c$ is not the closest central node to $z$.
In this situation, $dist_T(z,c) = rad(T_1)$ and the path between $z$ and ${\cal C}(T_2)$ goes by $c$.
See Fig.~\ref{fig:tree-intersections} for an illustration.
\end{itemize}
In both cases we obtain, as claimed, $dist_T(v_1,{\cal C}(T_2)) \leq dist_T(z,{\cal C}(T_2))$ for every $v_1 \in V(T_1)$. 
\end{proof}

\begin{lemma}\label{lem:center-inclusion}
Given a tree $T$ let $T_1,T_2$ be two subtrees such that ${\cal C}(T_1) \subseteq {\cal C}(T_2)$.
Then, we have that $diam(T_1 \cup T_2) = \max\{diam(T_1),diam(T_2)\}$.
\end{lemma}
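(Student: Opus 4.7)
The plan is to show both inequalities separately. One direction, $\operatorname{diam}(T_1 \cup T_2) \ge \max\{\operatorname{diam}(T_1),\operatorname{diam}(T_2)\}$, is immediate since both $T_1$ and $T_2$ are subtrees of $T_1 \cup T_2$. So the work is entirely in the upper bound.

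For the upper bound, I would pick arbitrary $u,v \in V(T_1 \cup T_2)$ and dispose of the easy cases first: if $u,v$ both lie in $V(T_1)$ or both in $V(T_2)$, the bound is trivial. The interesting case is $u \in V(T_1)$, $v \in V(T_2)$ (without loss of generality). Here the strategy is to route a path from $u$ to $v$ through a node of $\mathcal{C}(T_1)$; the containment hypothesis $\mathcal{C}(T_1) \subseteq \mathcal{C}(T_2)$ is exactly what lets this single node play a double role as a "near-center" for $u$ (in $T_1$) and as a center for $v$ (in $T_2$).

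Concretely, pick $c \in \mathcal{C}(T_1)$ closest to $u$ inside $T_1$. Applying unimodality (\ref{pty-lem:tree:1}) together with the identity $\operatorname{rad}(T_1) = \lceil \operatorname{diam}(T_1)/2\rceil$ (\ref{pty-lem:tree:4}) yields $dist_T(u,c) = \operatorname{ecc}_{T_1}(u) - \operatorname{rad}(T_1) \leq \lfloor \operatorname{diam}(T_1)/2 \rfloor$. Since $c \in \mathcal{C}(T_1) \subseteq \mathcal{C}(T_2) \subseteq V(T_2)$, we also get $dist_T(c,v) \leq \operatorname{ecc}_{T_2}(c) = \operatorname{rad}(T_2) = \lceil \operatorname{diam}(T_2)/2 \rceil$ (using \ref{pty-lem:tree:4} again). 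The triangle inequality then gives
\[
dist_T(u,v) \;\leq\; \left\lfloor \tfrac{\operatorname{diam}(T_1)}{2} \right\rfloor + \left\lceil \tfrac{\operatorname{diam}(T_2)}{2} \right\rceil.
\]

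The final step is a short case analysis to bound this sum by $\max\{\operatorname{diam}(T_1),\operatorname{diam}(T_2)\}$: if $\operatorname{diam}(T_1) \le \operatorname{diam}(T_2)$, then $\lfloor \operatorname{diam}(T_1)/2 \rfloor \le \lfloor \operatorname{diam}(T_2)/2 \rfloor$ and the sum is at most $\operatorname{diam}(T_2)$; if instead $\operatorname{diam}(T_2) < \operatorname{diam}(T_1)$, then $\lceil \operatorname{diam}(T_2)/2 \rceil \le \lceil \operatorname{diam}(T_1)/2 \rceil$ and the sum is at most $\operatorname{diam}(T_1)$. I do not expect any serious obstacle here: the only subtlety is remembering that the role of $T_1$ is asymmetric to that of $T_2$ (we bound $dist(u,c)$ by a "floor" using $u \in V(T_1)$, and $dist(c,v)$ by a "ceiling" using $c \in \mathcal{C}(T_2)$), which is exactly what makes the floor/ceiling sum collapse to one of the two diameters in both cases.
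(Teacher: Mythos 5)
Your proof is correct, and its engine is the same as the paper's: route a mixed pair $u\in V(T_1)$, $v\in V(T_2)$ through a node $c\in\centre{T_1}\subseteq\centre{T_2}$, bound $dist_T(u,c)$ by $\lfloor diam(T_1)/2\rfloor$ via unimodality (Prop.~\ref{pty-lem:tree:1} and~\ref{pty-lem:tree:4}) and $dist_T(c,v)$ by $rad(T_2)=\lceil diam(T_2)/2\rceil$ using that $c$ is also a center of $T_2$, then collapse the floor-plus-ceiling sum by comparing the two diameters. This is exactly the first half of the paper's argument (the bound on $ecc_{T_1\cup T_2}(v_1)$ for $v_1\in V(T_1)$). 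Where you genuinely diverge is in the organization: the paper bounds the eccentricity in $T_1\cup T_2$ of \emph{every} node, so it must separately treat $v_2\in V(T_2)$, and that second half is considerably more delicate (it routes through $\centre{T_2}$, pays an extra $diam(\centre{T_2})\leq 1$, and needs a parity case analysis via Prop.~\ref{pty-lem:tree:3} to absorb it). By working pair by pair instead, you observe that any pair with both endpoints in $T_2$ is trivially within $diam(T_2)$, so the only nontrivial case is the mixed one — which the asymmetric floor/ceiling bound handles in both diameter orderings. The result is a strictly shorter proof of the same statement; the only (harmless) omission is the remark that $\centre{T_1}\subseteq\centre{T_2}$ forces $T_1$ and $T_2$ to share a node, so that $T_1\cup T_2$ is indeed a subtree.
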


\begin{proof}
Since ${\cal C}(T_1) \subseteq {\cal C}(T_2)$ we have for every $v_1 \in V(T_1)$: 
$$ecc_{T_1 \cup T_2}(v_1) \leq dist_{T}(v_1,{\cal C}(T_1)) + \max\{rad(T_1),rad(T_2)\}.$$
By the unimodality property (Property~\ref{pty-lem:tree:1}) we have that $dist_T(v_1,{\cal C}(T_1)) = ecc_{T_1}(v_1) - rad(T_1) \leq diam(T_1) - rad(T_1)$, and so by Property~\ref{pty-lem:tree:4}: 
$$dist_T(v_1,{\cal C}(T_1)) \leq \left\lfloor diam(T_1)/2 \right\rfloor \leq \max\{ \left\lfloor diam(T_1)/2 \right\rfloor, \left\lfloor diam(T_2)/2 \right\rfloor \}.$$ 
In the same way, Property~\ref{pty-lem:tree:4} implies that:
$$\max\{rad(T_1),rad(T_2)\} = \max\{ \left\lceil diam(T_1)/2 \right\rceil, \left\lceil diam(T_2)/2 \right\rceil \}.$$
We so obtain that $ecc_{T_1 \cup T_2}(v_1) \leq \max\{ diam(T_1), diam(T_2) \}$.

\medskip
Then, for every $v_2 \in V(T_2)$: 
\begin{align*}
ecc_{T_1 \cup T_2}(v_2) &\leq dist_T(v_2,{\cal C}(T_2)) + \max\{ rad(T_2), diam({\cal C}(T_2)) + rad(T_1) \} \\
&\leq dist_T(v_2,{\cal C}(T_2)) + \max\{ rad(T_2), 1 + rad(T_1) \}.
\end{align*}
We may assume $rad(T_1) \geq rad(T_2)$ since otherwise, $ecc_{T_1 \cup T_2}(v_2) \leq dist_T(v_2,{\cal C}(T_2)) + rad(T_2) = ecc_{T_2}(v_2) \leq diam(T_2)$ by unimodality (Property~\ref{pty-lem:tree:1}).
In particular we claim that it implies $diam(T_1) \geq diam(T_2)$.
Indeed, by Property~\ref{pty-lem:tree:4} we must have $\left\lceil diam(T_1)/2\right\rceil \geq \left\lceil diam(T_2)/2 \right\rceil$, and so $diam(T_1) \geq diam(T_2) - 1$.
Suppose for the sake of contradiction $diam(T_1) = diam(T_2) - 1$.
By the hypothesis we also have ${\cal C}(T_1) \subseteq {\cal C}(T_2)$.
Therefore, by Property~\ref{pty-lem:tree:3} we obtain that $diam(T_1)$ and $diam(T_2)$ must be even and odd, respectively.
But then, we cannot have $\left\lceil diam(T_1)/2\right\rceil \geq \left\lceil diam(T_2)/2 \right\rceil$, a contradiction.
So, we proved as claimed $diam(T_1) \geq diam(T_2)$.
There are now two cases to consider:
\begin{itemize}
\item Case $diam(T_1) = diam(T_2)$. Then, ${\cal C}(T_1) = {\cal C}(T_2)$ and we can strengthen our previous inequality as follows:  $ecc_{T_1 \cup T_2}(v_2) \leq dist_T(v_2,{\cal C}(T_2)) + \max\{ rad(T_2), rad(T_1) \} \leq diam(T_2)$.
\item Case $diam(T_1) > diam(T_2)$. Recall that $dist_T(v_2,{\cal C}(T_2)) \leq \left\lfloor diam(T_2)/2 \right\rfloor$. In particular, either $diam(T_1) \geq diam(T_2) + 2$, and so, $dist_T(v_2,{\cal C}(T_2)) \leq \left\lfloor diam(T_1)/2 \right\rfloor - 1$; or $diam(T_1) = diam(T_2) + 1$ but then, since by the hypothesis we have ${\cal C}(T_1) \subseteq {\cal C}(T_2)$, by Property~\ref{pty-lem:tree:3} $diam(T_1)$ must be even, and so, $dist_T(v_2,{\cal C}(T_2)) \leq \left\lfloor diam(T_1)/2 \right\rfloor - 1$ also in this case.
Overall: 
\begin{align*}
ecc_{T_1 \cup T_2}(v_2) &\leq dist_T(v_2,{\cal C}(T_2)) + \max\{ rad(T_2), diam({\cal C}(T_2)) + rad(T_1) \} \\
&\leq \left\lfloor diam(T_1)/2 \right\rfloor - 1 + rad(T_1) + 1 = diam(T_1).
\end{align*}
\end{itemize}
Therefore, in both cases we obtain $diam(T_1 \cup T_2) \leq \max\{diam(T_1),diam(T_2)\}$.
\end{proof}

\subsection{A structure theorem}\label{sec:structure}

We are now ready to state the main result in this section:

\begin{theorem}\label{thm:clique-intersection}
Given $G=(V,E)$ and $\tree$ any $k$-Steiner root of $G$, the following properties hold for any clique-intersection $\ci \in \CI{G}$:
\begin{enumerate}[label=\textbullet,ref=P-\ref{thm:clique-intersection}.\theenumi]
\item\label{pty-ci:1}({\small\ref{pty-ci:1}.}) We have $Real(\unlabeledsubtree{\ci}) = \ci$ and $diam(\unlabeledsubtree{\ci}) \leq k$;
\item\label{pty-ci:2}({\small\ref{pty-ci:2}.}) If $\labeledsubtree[2]{\ci}{} \supset \unlabeledsubtree{\ci}$ then, either $\ci = Real(\labeledsubtree[2]{\ci}{})$ or $diam(\labeledsubtree[2]{\ci}{}) > diam(\unlabeledsubtree{\ci})$;
\item\label{pty-ci:3}({\small\ref{pty-ci:3}.}) If $\centre{\unlabeledsubtree{\ci}} \subseteq \centre{\unlabeledsubtree{\ci'}}$ then, $\ci \cup \ci'$ is a clique of $G$.
\end{enumerate}
\end{theorem}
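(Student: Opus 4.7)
The plan is threefold, with part~(1) as the foundation, part~(3) as a quick application of Lemma~\ref{lem:center-inclusion}, and part~(2) as the technical core that calls for a three-term distance telescoping inside $\unlabeledsubtree{\ci}$.

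For part~(1), I will fix any maximal clique $K \supseteq \ci$, which exists since $\ci \in \CI{G}$. Because $\unlabeledsubtree{\ci} \subseteq \unlabeledsubtree{K}$ and the leaves of the minimal subtree $\unlabeledsubtree{K}$ must themselves be real and hence lie in $K$, pairwise $G$-adjacency within $K$ gives $diam(\unlabeledsubtree{\ci}) \leq diam(\unlabeledsubtree{K}) \leq k$. For the identity $Real(\unlabeledsubtree{\ci}) = \ci$, suppose a real $v \in V(\unlabeledsubtree{\ci}) \setminus \ci$ exists. Writing $\ci = \bigcap_{i \in I} K_i$, some $K_{i_0}$ omits $v$; but $v \in V(\unlabeledsubtree{\ci}) \subseteq V(\unlabeledsubtree{K_{i_0}})$ combined with $diam(\unlabeledsubtree{K_{i_0}}) \leq k$ places $v$ at $T$-distance $\leq k$ from every vertex of $K_{i_0}$, so $K_{i_0} \cup \{v\}$ would be a clique, contradicting maximality.

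For part~(3), the hypothesis $\centre{\unlabeledsubtree{\ci}} \subseteq \centre{\unlabeledsubtree{\ci'}}$ forces $\unlabeledsubtree{\ci}$ and $\unlabeledsubtree{\ci'}$ to share a node, so their union is already a subtree, namely $\unlabeledsubtree{\ci \cup \ci'}$. Lemma~\ref{lem:center-inclusion} then yields $diam(\unlabeledsubtree{\ci \cup \ci'}) = \max\{diam(\unlabeledsubtree{\ci}), diam(\unlabeledsubtree{\ci'})\}$, which is $\leq k$ by part~(1); hence every pair $(u, u') \in \ci \times \ci'$ sits at $T$-distance $\leq k$, and is therefore $G$-adjacent. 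Together with $\ci$ and $\ci'$ being cliques themselves, this shows $\ci \cup \ci'$ is a clique of $G$.

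For part~(2), pick a real $v \in Real(\labeledsubtree[2]{\ci}{}) \setminus \ci$ (if none exists the first alternative of the dichotomy already holds). If $\ci = K$ is itself a maximal clique, maximality of $K$ supplies some $u \in K = \ci$ with $uv \notin E$ and thus $dist_T(u, v) > k$; since $u, v \in V(\labeledsubtree[2]{\ci}{})$ this directly gives $diam(\labeledsubtree[2]{\ci}{}) > k \geq diam(\unlabeledsubtree{\ci})$. Otherwise $\ci \subsetneq K_{i_0}$ for some maximal $K_{i_0}$ missing $v$. Assume for contradiction $diam(\labeledsubtree[2]{\ci}{}) = diam(\unlabeledsubtree{\ci})$: every real pair in $\labeledsubtree[2]{\ci}{}$ is then $G$-adjacent, so $v$ is adjacent to every vertex of $\ci$, and maximality of $K_{i_0}$ produces $u \in K_{i_0} \setminus \ci$ with $dist_T(u, v) > k$. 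Part~(1) forces $u, v \notin V(\unlabeledsubtree{\ci})$; let $p_u, p_v$ be their closest projections onto $\unlabeledsubtree{\ci}$. Three bounds then combine: $dist_T(u, p_u) \leq k - ecc_{\unlabeledsubtree{\ci}}(p_u)$ from $u$ being $G$-adjacent to every vertex of $\ci$; $dist_T(v, p_v) \leq diam(\unlabeledsubtree{\ci}) - ecc_{\unlabeledsubtree{\ci}}(p_v)$ from the diameter assumption on $\labeledsubtree[2]{\ci}{}$; and the tree identity $dist_T(p_u, p_v) \leq ecc_{\unlabeledsubtree{\ci}}(p_u) + ecc_{\unlabeledsubtree{\ci}}(p_v) - diam(\unlabeledsubtree{\ci})$, a direct consequence of the unimodality property~\ref{pty-lem:tree:1} applied to $\unlabeledsubtree{\ci}$. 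Telescoping along the decomposition $u \to p_u \to p_v \to v$ yields $dist_T(u, v) \leq k$, contradicting $dist_T(u, v) > k$.

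The main obstacle is the degenerate subcase $p_u = p_v = p$ in part~(2), where the $T$-path from $u$ to $v$ may not traverse $\unlabeledsubtree{\ci}$ at all (when $u$ and $v$ share the same outside branch at $p$). In that situation the crude inequality $dist_T(u, v) \leq dist_T(u, p) + dist_T(v, p)$ together with only the first two ingredients above and the radius bound $ecc_{\unlabeledsubtree{\ci}}(p) \geq \lceil diam(\unlabeledsubtree{\ci})/2 \rceil$ still delivers $dist_T(u, v) \leq k + diam(\unlabeledsubtree{\ci}) - 2\,ecc_{\unlabeledsubtree{\ci}}(p) \leq k$ and the same contradiction, closing the argument.
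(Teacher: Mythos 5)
Your arguments for Properties~\ref{pty-ci:1} and~\ref{pty-ci:3} follow the paper's approach closely, but Property~\ref{pty-ci:2} is where you genuinely depart, and your alternative is correct. The paper proves it in one stroke via Lemma~\ref{lem:tree-intersection}: for each maximal clique $\maxk_j \supseteq \ci$, the intersection $\unlabeledsubtree{\maxk_j} \cap \labeledsubtree[2]{\ci}{}$ contains $\unlabeledsubtree{\ci}$ and hence already achieves diameter $diam(\labeledsubtree[2]{\ci}{})$, so the lemma forces $diam(\unlabeledsubtree{\maxk_j} \cup \labeledsubtree[2]{\ci}{}) = diam(\unlabeledsubtree{\maxk_j}) \leq k$, whence $Real(\labeledsubtree[2]{\ci}{}) \subseteq \maxk_j$; intersecting over all such $\maxk_j$ yields $Real(\labeledsubtree[2]{\ci}{}) = \ci$. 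You instead exhibit an offending pair $(u,v)$ and bound $dist_T(u,v) \leq dist_T(u,p_u)+dist_T(p_u,p_v)+dist_T(p_v,v) \leq k$ by telescoping three eccentricity inequalities through the projections $p_u,p_v$ onto $\unlabeledsubtree{\ci}$, contradicting $dist_T(u,v)>k$. This is a legitimate, self-contained metric argument that makes the underlying geometry explicit; the paper's route is more modular, since Lemma~\ref{lem:tree-intersection} is reused elsewhere (notably in the proof of Theorem~\ref{thm:x-free}). Two small polish points: your ``tree identity'' $dist_T(p_u,p_v) \leq ecc_{\unlabeledsubtree{\ci}}(p_u) + ecc_{\unlabeledsubtree{\ci}}(p_v) - diam(\unlabeledsubtree{\ci})$ requires Properties~\ref{pty-lem:tree:3} and~\ref{pty-lem:tree:4} (that the center is a node or edge and $rad = \lceil diam/2 \rceil$) in addition to unimodality~\ref{pty-lem:tree:1}, so cite all three; and your closing paragraph on the ``degenerate subcase $p_u=p_v$'' is superfluous, since the chain $dist_T(u,v) \leq dist_T(u,p_u)+dist_T(p_u,p_v)+dist_T(p_v,v)$ is an unconditional triangle inequality holding whether or not the actual $uv$-path in $T$ visits $p_u$ or $p_v$, so the main telescoping already covers every configuration.
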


\begin{proof}
We prove each property separately.

({\it Proof of Property~\ref{pty-ci:1}.}) First assume $\ci \in \MAXK{G}$ to be a maximal clique.
Since all leaves of $\unlabeledsubtree{\ci}$ are in $\ci$, $diam(\unlabeledsubtree{\ci}) = \max_{u,v \in \ci} dist_{\tree}(u,v)$.
By the hypothesis $\tree$ is a $k$-Steiner root of $G$, and so, since $\ci$ is a clique of $G$, $\max_{u,v \in \ci} dist_{\tree}(u,v) \leq k$.
In particular, $diam(\unlabeledsubtree{\ci}) \leq k$, that implies in turn the vertices of $Real(\unlabeledsubtree{\ci})$ must induce a clique of $G$.
We can conclude that $Real(\unlabeledsubtree{\ci}) = \ci$ by maximality of $\ci$.
More generally, let $\ci = \bigcap_{i=1}^{\ell} \maxk_i$, for some family $\maxk_1,\maxk_2,\ldots,\maxk_{\ell} \in \MAXK{G}$.
Clearly, $\unlabeledsubtree{\ci} \subseteq \bigcap_{i=1}^{\ell} \unlabeledsubtree{\maxk_i}$, and so, $\ci \subseteq Real(\unlabeledsubtree{\ci}) \subseteq \bigcap_{i=1}^{\ell} Real(\unlabeledsubtree{\maxk_i})$.
As we proved before, $Real(\unlabeledsubtree{\maxk_i}) = \maxk_i$ for every $1 \leq i \leq \ell$, and so, $Real(\unlabeledsubtree{\ci}) \subseteq \bigcap_{i=1}^{\ell} \maxk_i = \ci$.
Altogether combined, we obtain that $Real(\unlabeledsubtree{\ci}) = \ci$.

({\it Proof of Property~\ref{pty-ci:2}.}) Let $\labeledsubtree[2]{\ci}{} \supset \unlabeledsubtree{\ci}$ be such that $diam(\labeledsubtree[2]{\ci}{}) = diam(\unlabeledsubtree{\ci})$.
We claim $Real(\labeledsubtree[2]{\ci}{}) = \ci$, that will prove the second part of the theorem.
Indeed, for any maximal clique $\maxk_j$ that contains $\ci$, we have $diam(\unlabeledsubtree{\maxk_j} \cap \labeledsubtree[2]{\ci}{}) \geq diam(\unlabeledsubtree{\ci}) = diam(\labeledsubtree[2]{\ci}{})$, and so, $diam(\unlabeledsubtree{\maxk_j} \cup \labeledsubtree[2]{\ci}{}) = diam(\unlabeledsubtree{\maxk_j}) \leq k$ by Lemma~\ref{lem:tree-intersection}.
It implies $Real(\labeledsubtree[2]{\ci}{}) \subseteq \maxk_j$.
Furthermore, since $\ci \in \CI{G}$, \ci~is exactly the intersection of all the maximal cliques $\maxk_j$ that contains it, thereby proving the claim.
%
%

({\it Proof of Property~\ref{pty-ci:3}.}) Finally, assume $\centre{\unlabeledsubtree{\ci}} \subseteq \centre{\unlabeledsubtree{\ci'}}$.
By Lemma~\ref{lem:center-inclusion} we obtain that $diam(\unlabeledsubtree{\ci} \cup \unlabeledsubtree{\ci'}) = \max\{diam(\unlabeledsubtree{\ci}),diam(\unlabeledsubtree{\ci'})\} \leq k$.
In particular, $\ci \cup \ci'$ must be a clique of $G$.
\end{proof}

\begin{remark}\label{rk:min-sep-chain}
Property~\ref{pty-ci:2} implies that in any $k$-Steiner power $G$, there can be no chain of more than $k$ minimal separators ordered by inclusion.
Indeed, let $\sep_1 \subset \sep_2 \subset \ldots \subset \sep_{\ell}$ be such a chain.
Then, in any $k$-Steiner root $\tree$ of $G$, $0 \leq diam(\unlabeledsubtree{\sep_1}) < diam(\unlabeledsubtree{\sep_2}) < \ldots < diam(\unlabeledsubtree{\sep_{\ell}}) < k$.
\end{remark}

\begin{remark}\label{rk:size-subtrees}
Given $G = (V,E)$ and a $k$-Steiner root $\tree$ of $G$, we can also easily derive from Property~\ref{pty-ci:1} that we have $|V(\unlabeledsubtree{\ci})| = {\cal O}(k \cdot |\ci|)$ for every clique-intersection $\ci \in \CI{G}$. 
To see that, fix any $r \in V(\unlabeledsubtree{\ci})$.
For every $x \in \ci$ that is a leaf of $\unlabeledsubtree{\ci}$, the unique $xr$-path has length $\leq k$.
Overall, these above paths cover all of $\unlabeledsubtree{\ci}$, thereby proving as desired $|V(\unlabeledsubtree{\ci})| = {\cal O}(k \cdot |\ci|)$.
This fact is often used implicitly in our analysis.
\end{remark}

Before ending this section, we slightly strenghten Property~\ref{pty-ci:3} of Theorem~\ref{thm:clique-intersection}, as follows:

\begin{lemma}\label{lem:no-center-intersect}
Given $G=(V,E)$ and \tree~any $2k$-Steiner root of $G$, we have $\centre{\unlabeledsubtree{\maxk_i}} \cap \centre{\unlabeledsubtree{\maxk_j}} = \emptyset$ for any two different maximal cliques $\maxk_i,\maxk_j \in \MAXK{G}$.
\end{lemma}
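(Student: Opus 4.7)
The plan is to derive a contradiction from the existence of a common centre node. Suppose for contradiction that there is some $c \in \centre{\unlabeledsubtree{\maxk_i}} \cap \centre{\unlabeledsubtree{\maxk_j}}$. The goal is to show that under this assumption, every real node of $\unlabeledsubtree{\maxk_i}$ is within distance $k$ from $c$ in \tree, and similarly for $\unlabeledsubtree{\maxk_j}$, so that $\maxk_i \cup \maxk_j$ must induce a clique. This would contradict maximality.

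To carry this out, first I would invoke Property~\ref{pty-ci:1} of Theorem~\ref{thm:clique-intersection} to get $diam(\unlabeledsubtree{\maxk_i}) \leq 2k$ and $diam(\unlabeledsubtree{\maxk_j}) \leq 2k$. Combined with Property~\ref{pty-lem:tree:4} of Lemma~\ref{lem:tree}, this yields $rad(\unlabeledsubtree{\maxk_i}) \leq k$ and $rad(\unlabeledsubtree{\maxk_j}) \leq k$. Since each $\unlabeledsubtree{\maxk_\ell}$ is a subtree of \tree, the distance between any two of its nodes in \tree~coincides with their distance inside the subtree. Therefore, because $c$ belongs to the centre of $\unlabeledsubtree{\maxk_i}$, every $u \in V(\unlabeledsubtree{\maxk_i})$ satisfies $dist_{\tree}(u,c) \leq ecc_{\unlabeledsubtree{\maxk_i}}(c) = rad(\unlabeledsubtree{\maxk_i}) \leq k$; symmetrically, every $v \in V(\unlabeledsubtree{\maxk_j})$ satisfies $dist_{\tree}(v,c) \leq k$.

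Now fix any $u \in \maxk_i$ and $v \in \maxk_j$ with $u \neq v$. Since $\maxk_i \subseteq V(\unlabeledsubtree{\maxk_i})$ and $\maxk_j \subseteq V(\unlabeledsubtree{\maxk_j})$, the triangle inequality gives $dist_{\tree}(u,v) \leq dist_{\tree}(u,c) + dist_{\tree}(c,v) \leq 2k$. Because \tree~is a $2k$-Steiner root of $G$, this forces $uv \in E$. Hence $\maxk_i \cup \maxk_j$ is a clique of $G$, and the maximality of both $\maxk_i$ and $\maxk_j$ then gives $\maxk_i = \maxk_j$, contradicting the hypothesis that they are distinct. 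I do not expect any serious obstacle here: the proof is essentially a one-line consequence of the unimodality property together with the fact that a $2k$-Steiner root places all of $\maxk_\ell$ within radius $k$ of any centre node of $\unlabeledsubtree{\maxk_\ell}$; the only mild subtlety is to note that distances in the subtrees agree with distances in \tree~itself.
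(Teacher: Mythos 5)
Your proposal is correct and follows essentially the same route as the paper: both invoke Property~\ref{pty-ci:1} to bound the diameters by $2k$, use Property~\ref{pty-lem:tree:4} to place every node of both subtrees within distance $k$ of the common centre node, and conclude that $\maxk_i \cup \maxk_j$ is a clique, contradicting maximality. No gaps.
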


\begin{proof}
Suppose for the sake of contradiction $\centre{\unlabeledsubtree{\maxk_i}} \cap \centre{\unlabeledsubtree{\maxk_j}} \neq \emptyset$, and let $r \in \centre{\unlabeledsubtree{\maxk_i}} \cap \centre{\unlabeledsubtree{\maxk_j}}$.
By Theorem~\ref{thm:clique-intersection} (Prop.~\ref{pty-ci:1}), $\max\{diam(\unlabeledsubtree{\maxk_i}),diam(\unlabeledsubtree{\maxk_j}) \leq 2k$. Then, it follows from Prop.~\ref{pty-lem:tree:4} of Lemma~\ref{lem:tree} that any vertex of $\unlabeledsubtree{\maxk_i} \cup \unlabeledsubtree{\maxk_j}$ is at a distance $\leq k$ from $r$ in $\tree$.
In particular, $diam(\unlabeledsubtree{\maxk_i}\cup\unlabeledsubtree{\maxk_j}) \leq 2k$, and so, $\maxk_i \cup \maxk_j$ is a clique of $G$.
The latter contradicts the fact that $\maxk_i,\maxk_j$ are maximal cliques.
\end{proof}

\section{Well-structured $4$-Steiner roots}\label{sec:restricted-root}

We refine our results in the previous Section when $k=4$.
Let us start motivating our approach.
Given $G=(V,E)$ we recall that one of our intermediate goals is to compute, by dynamic programming on a clique-tree, subsets ${\cal T}_i$ of $4$-Steiner roots for some collection of subgraphs $G_i$, with the following additional property: assuming $G$ is a $4$-Steiner power, there must be a partial solution in ${\cal T}_i$ which can be extended to a $4$-Steiner root for $G$ (cf. Sec.~\ref{sec:algo-highlight}, Step~\ref{step-3}).
Ideally, one should store {\em all} the possible $4$-Steiner roots for $G_i$, however this leads to a combinatorial explosion.
In order to (partly) overcome this issue, we introduce the following important notion:

\begin{definition}\label{def:x-free}
Given $G=(V,E)$ and $\ci \in \CI{G}$, a vertex $v \in \ci$ is called {\em \ci-constrained}
if it satisfies one of the following two conditions:
\begin{enumerate}
\item either there is another clique-intersection $\ci' \subset \ci$ such that $v \in \ci'$ and $|\ci'| \geq 2$ (we call $v$ {\em internally $\ci$-constrained});
\item or there exist $\ci_1,\ci_2 \in \CI{G}$ such that $\ci \subset \ci_1$ and $\ci \cap \ci_2 = \{v\} \subset \ci_1 \cap \ci_2$ (we call $v$ {\em $(\ci,\ci_1,\ci_2)$-sandwiched}). 
\end{enumerate} 
A vertex $v \in \ci$ that is not $\ci$-constrained is called {\em $\ci$-free}.
\end{definition}

\begin{figure}[h!]
\centering
\includegraphics[width=.25\textwidth]{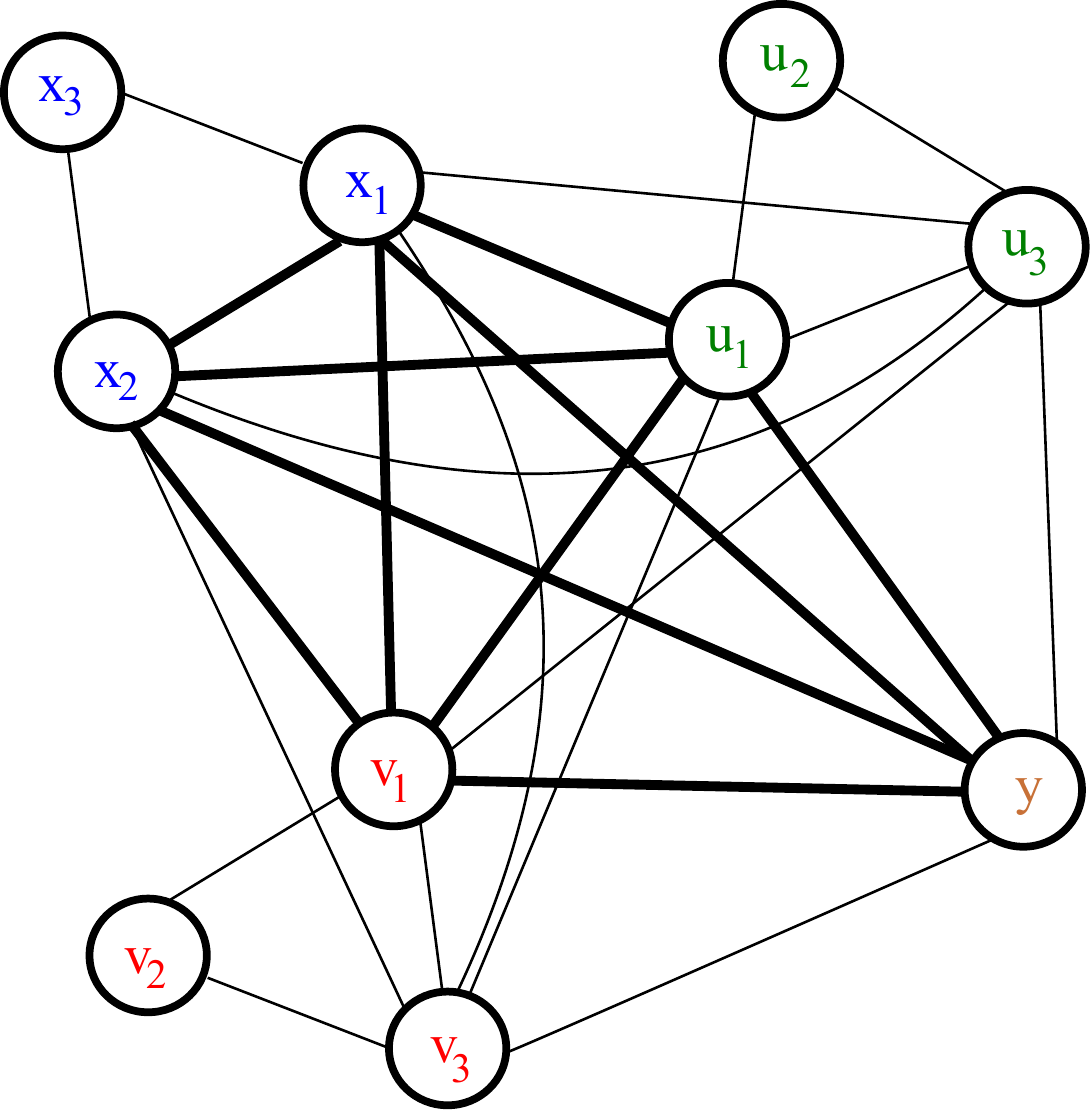}
\caption{Five maximal cliques $\maxk_1 = \{x_1,x_2,x_3\}, \ \maxk_2 = \{u_1,u_2,u_3\}, \ \maxk_3 = \{v_1,v_2,v_3\}, \ \maxk_4 = \{y,x_1,x_2,u_1,v_1,u_3\} \ \text{and} \ \maxk_5 = \{y,x_1,x_2,u_1,v_1,v_3\}$. In the clique-intersection $\ci = \{y,x_1,x_2,u_1,v_1\}$ both vertices $x_1,x_2$ are internally $\ci$-constrained (for $\ci' = \ci \cap \maxk_1$), vertex $u_1$ is $(\ci,\maxk_4,\maxk_2)$-sandwiched, vertex $v_1$ is $(\ci,\maxk_5,\maxk_3)$-sandwiched and vertex $y$ is $\ci$-free.}
\label{fig:all-categories}
\end{figure}

We refer to Fig.~\ref{fig:all-categories} for an illustration of all possibilities.
Our study reveals on the one hand that \ci-constrained vertices have a very rigid structure. 
It seems on the other hand that \ci-free vertices are completely unstructured and mostly responsible for the combinatorial explosion of possibilities for $\unlabeledsubtree{\ci}$.
As our main contribution in this section we prove that we can always force the \ci-free vertices to be {\em leaves} of this subtree $\unlabeledsubtree{\ci}$, thereby considerably reducing the number of possibilities for the latter.
Specifically:

\begin{theorem}\label{thm:x-free}
Let $G=(V,E)$ be a $4$-Steiner power.
There always exists a $4$-Steiner root \tree~of $G$ where, for any clique-intersection $\ci \in \CI{G}$: 
\begin{enumerate}[label=\textbullet,ref=P-\ref{thm:x-free}.\theenumi]
\item\label{pty-x-free:1}({\small\ref{pty-x-free:1}.}) all the \ci-free vertices are leaves of $\unlabeledsubtree{\ci}$ with maximum eccentricity $diam(\unlabeledsubtree{\ci})$;
\item\label{pty-x-free:2}({\small\ref{pty-x-free:2}.}) there is a node $c \in \centre{\unlabeledsubtree{\ci}}$ such that for every \ci-free vertex $v$, except maybe one, $dist_{\tree}(v,c) = dist_{\tree}(v,\centre{\unlabeledsubtree{\ci}}) $;
\item\label{pty-x-free:3}({\small\ref{pty-x-free:3}.}) all the internal nodes on a path between $\centre{\unlabeledsubtree{\ci}}$ and a \ci-free vertex are Steiner nodes of degree two;
\item\label{pty-x-free:4}({\small\ref{pty-x-free:4}.}) and if $\ci \in \MAXK{G}$ and it has a \ci-free vertex then, $diam(\unlabeledsubtree{\ci}) = 4$.
\end{enumerate}
\end{theorem}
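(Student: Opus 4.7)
My plan is to start from an arbitrary 4-Steiner root \tree~of $G$ and iteratively modify it into a root satisfying all four properties, applying for each pair $(\ci,v)$ with $v$ an \ci-free vertex a localized surgery on $\unlabeledsubtree{\ci}$. By Property~\ref{pty-ci:1} of Theorem~\ref{thm:clique-intersection} we already have $Real(\unlabeledsubtree{\ci}) = \ci$ and $diam(\unlabeledsubtree{\ci}) \leq 4$ for every $\ci \in \CI{G}$; in particular $v$ already lies in $\unlabeledsubtree{\ci}$, and my goal is to push it to a canonical leaf position at distance $\lceil diam(\unlabeledsubtree{\ci})/2\rceil$ from a designated central node $c$ of $\unlabeledsubtree{\ci}$, along an all-Steiner path of degree-two nodes, while any subtree of \tree~hanging off $v$ outside $\unlabeledsubtree{\ci}$ is transported with $v$.

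The core local operation is as follows: inside $\unlabeledsubtree{\ci}$, remove the path from $v$ to $c$ together with any real vertices or side-branches of $\unlabeledsubtree{\ci}$ hanging strictly between $v$ and $c$, and reinstall those branches elsewhere; then reintroduce $v$ as a pendant leaf of $\unlabeledsubtree{\ci}$ attached to $c$ via a fresh Steiner path of length $\lceil diam(\unlabeledsubtree{\ci})/2\rceil$. To certify that the resulting tree is still a 4-Steiner root I would exploit the definition of \ci-free. Since $v$ lies in no clique-intersection $\ci' \subsetneq \ci$ of size at least two, every maximal clique $K$ containing $v$ either contains all of $\ci$ (so preserving $diam(\unlabeledsubtree{K}) \leq 4$ is automatic once the internal diameter of $\ci$ is preserved) or meets $\ci$ only at $\{v\}$ (so its subtree lies entirely outside $\unlabeledsubtree{\ci}$ and may be transported with $v$). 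The non-sandwiched condition rules out the only remaining danger, namely that the distance from $v$ to some vertex of a $\ci_1 \supsetneq \ci$ would force a routing through the interior of $\unlabeledsubtree{\ci}$ that my surgery destroys.

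I would then process the pairs $(\ci,v)$ in decreasing order of $|\ci|$, and within a fixed size by decreasing $diam(\unlabeledsubtree{\ci})$, so that later surgeries on smaller clique-intersections cannot disturb the canonical positions already installed for larger ones. Property~\ref{pty-x-free:2} is handled by choosing $c$ consistently: when $\centre{\unlabeledsubtree{\ci}}$ is an edge $\{c_1,c_2\}$, I commit to the $c_1$-side for all \ci-free vertices, except that if no \ci-constrained vertex already realises the diameter from the $c_2$-side, I permit at most one \ci-free vertex to stay there. Property~\ref{pty-x-free:4} follows by a final stretching step: if $\ci \in \MAXK{G}$ has an \ci-free vertex $v$ but $diam(\unlabeledsubtree{\ci}) < 4$, I may lengthen $v$'s pendant Steiner path until it reaches length $2$; this is safe because $v$ is \ci-free and $\ci$ is maximal, so no vertex $w \notin \ci$ at distance $\leq 4$ from $v$ can be forced by the graph structure to route its $vw$-path through $\unlabeledsubtree{\ci}$.

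The hardest step will be verifying that these surgeries compose consistently when a single vertex $v$ is \ci-free for several clique-intersections at the same time. A priori, the canonical position prescribed by the surgery for some \ci~might conflict with that prescribed for another $\ci'$. I plan to handle this via Property~\ref{pty-ci:3} of Theorem~\ref{thm:clique-intersection} combined with Lemma~\ref{lem:center-inclusion}: these control how the centres $\centre{\unlabeledsubtree{\ci}}$ nest as \ci~varies over clique-intersections containing $v$ in which $v$ is free, and should let me conclude that the innermost such centre is the unique meaningful anchor for $v$, so that a single pendant Steiner path rooted there simultaneously realises all the required distance profiles.
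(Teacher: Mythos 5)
Your high-level plan—repeatedly applying a local surgery that pulls a \ci-free vertex $v$ into a canonical pendant position, and using the \ci-free definition to certify the surgery is safe—is essentially the same strategy as the paper, which handles maximal cliques first by Lemma~\ref{lem:almost-simplicial-placement} and then the remaining clique-intersections through a single carefully defined transplant (Operation~\ref{op:move}) together with a preservation claim. However, several of your steps are either incorrect as stated or conceal the hardest part of the argument.

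First, you do not exploit the key structural fact that drives the entire proof: for $\ci \in \CI{G} \setminus \MAXK{G}$ with $|\ci| \geq 3$, Property~\ref{pty-ci:2} of Theorem~\ref{thm:clique-intersection} forces $diam(\unlabeledsubtree{\ci}) \leq 3$, so $\unlabeledsubtree{\ci}$ is a star or a bistar where every non-central node is a leaf, every leaf is at distance exactly one from the nearest centre, and Property~\ref{pty-x-free:3} is vacuous. Your pendant-path length $\lceil diam(\unlabeledsubtree{\ci})/2\rceil$ is therefore wrong for bistars: a leaf of a bistar is at distance one, not two, from the nearest centre, and attaching $v$ to the far centre via a fresh length-two Steiner path would bypass the near centre and distort the tree. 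Likewise, there are never ``branches hanging strictly between $v$ and $c$'' to reinstall in the non-maximal case. Second, your claim that processing pairs $(\ci,v)$ in decreasing order of $|\ci|$ and diameter makes later surgeries harmless to earlier ones is an assertion, not an argument. The paper does not rely on any ordering: Claim~\ref{claim:pties-op} proves directly, through a rather long case analysis on the $\ci' \in \CI{G} \setminus \{\ci\}$, that a single application of Operation~\ref{op:move} cannot break Properties~\ref{pty-x-free:1}--\ref{pty-x-free:4} for any other $\ci'$. Some comparable invariant-preservation proof is unavoidable, and your sketch does not supply it.

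Third, and most seriously, your plan for resolving conflicting canonical positions via centre-nesting does not work for the case that actually produces the conflict. If $v$ is simultaneously $\ci$-free and $\ci'$-free with $|\ci|,|\ci'| \geq 3$, then necessarily $\ci \cap \ci' = \{v\}$ (any larger intersection would be a clique-intersection of size $\geq 2$ witnessing that $v$ is internally constrained). The archetypal instance is a cut-vertex $v$ in two maximal cliques $\maxk_i,\maxk_j$ with $\maxk_i \cap \maxk_j = \{v\}$; there $\maxk_i \cup \maxk_j$ is not a clique, so by Property~\ref{pty-ci:3} of Theorem~\ref{thm:clique-intersection} (and Lemma~\ref{lem:no-center-intersect}) the centres $\centre{\unlabeledsubtree{\maxk_i}}$, $\centre{\unlabeledsubtree{\maxk_j}}$ are disjoint, and there is no ``innermost anchor''. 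The paper avoids this entirely by decomposing $G$ at the cut-vertex $v$ and inducting on the number of free cut-vertices (Lemma~\ref{lem:almost-simplicial-placement}). Finally, your stretching step for Property~\ref{pty-x-free:4} (``lengthen $v$'s pendant until it reaches length $2$'') need not produce $diam(\unlabeledsubtree{\maxk_i}) = 4$: if $\maxk_i$ is a triangle of simplicial vertices initially realised as a radius-one star, stretching one pendant to length two only yields diameter three while shifting the centre. The paper's construction instead connects $v$ via a path of length $4 - \max_{u \in \maxk_i \setminus \{v\}} dist(c_i',u)$, which is length-adaptive precisely so that the diameter lands on four.
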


Theorem~\ref{thm:x-free} is proved by carefully applying a set of operations on an arbitrary $4$-Steiner root until it satisfies all of the desired properties.
We give two examples of such operations in Fig.~\ref{fig:s-free} and~\ref{fig:s-free-bis} (see the proof of Theorem~\ref{thm:x-free} in order to better understand these two examples).
It is crucial for the proof that in any $4$-Steiner root of $G$ all the minimal separators yield subtrees of diameter at most three.

In the remaining of the paper, we call a $4$-Steiner root with the above properties {\em well-structured}.
It will appear in Sec.~\ref{sec:ci-subtrees},~\ref{sec:encoding}, and~\ref{sec:greedy} that we only add partial solutions in the subsets ${\cal T}_i$ which are in some sense ``close'' to the subset of well-structured partial solutions.
We first prove Theorem~\ref{thm:x-free} for maximal cliques (Section~\ref{sec:simplicial}).
This first part of the proof looks easier to generalize to larger values of $k$.
Then, we prove the result in its full generality in Section~\ref{sec:x-free}.

\subsection{The case of (Almost) Simplicial vertices}\label{sec:simplicial}

Let $\maxk_i \in \MAXK{G}$ be fixed.
We start giving a simple characterization of $\maxk_i$-free vertices in terms of simplicial vertices and cut-vertices.
Then, we prove Theorem~\ref{thm:x-free} in the special case when $\ci$ is a maximal clique.

\begin{lemma}\label{lem:almost-simplicial-characterization}
Given $G=(V,E)$ and $\maxk_i \in \MAXK{G}$, a vertex $v \in \maxk_i$ is $\maxk_i$-free if and only if:
\begin{itemize}
\item either it is simplicial;
\item or it is a cut-vertex, and there is no other minimal separator of $G$ contained into $\maxk_i$ that can also contain $v$.
\end{itemize}
\end{lemma}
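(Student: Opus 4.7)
The plan is to reduce both implications to a cleaner statement about the minimal separators of $G$ that are contained in $\maxk_i$, and then invoke the clique-tree dictionary of chordal graphs.

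The first (very short) step is to discard the sandwiched alternative of Definition~\ref{def:x-free} when the ambient clique-intersection is a maximal clique. Since every $\ci_1 \in \CI{G}$ is itself a clique, the requirement $\maxk_i \subsetneq \ci_1$ contradicts the maximality of $\maxk_i$; so the sandwiched clause is vacuous for $\ci = \maxk_i$, and $v \in \maxk_i$ is $\maxk_i$-free if and only if no $\ci' \in \CI{G}$ satisfies $\ci' \subsetneq \maxk_i$, $v \in \ci'$ and $|\ci'| \geq 2$.

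The central step will be the following reformulation: $v$ is $\maxk_i$-free if and only if every minimal separator $\sep \in \SEP{G}$ with $\sep \subseteq \maxk_i$ and $v \in \sep$ satisfies $\sep = \{v\}$. For the ``only if'' direction, note that every minimal separator is a clique-intersection, and that any inclusion $\sep \subseteq \maxk_i$ is automatically strict: by Theorem~\ref{thm:clique-tree-pties}, $\sep$ is the intersection of two \emph{distinct} maximal cliques, and if $\sep = \maxk_i$ both of them would have to contain $\maxk_i$ and hence coincide with $\maxk_i$ by maximality, a contradiction; so whenever $|\sep| \geq 2$, the separator $\sep$ itself witnesses an internal $\maxk_i$-constraint on $v$. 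For the ``if'' direction, I will suppose some offending $\ci' \subsetneq \maxk_i$ with $v \in \ci'$ and $|\ci'| \geq 2$ exists, pick $u \in \ci' \setminus \{v\}$, write $\ci' = \maxk_{j_1} \cap \cdots \cap \maxk_{j_\ell}$, and use the strict inclusion $\ci' \subsetneq \maxk_i$ to locate an index $k$ with $\maxk_{j_k} \neq \maxk_i$. Fixing any clique-tree $\cliquetree{G}$ of $G$ (Theorem~\ref{thm:clique-tree-construction}), the induced-subtree property forces both $u$ and $v$ to belong to every node along the $\maxk_i$--$\maxk_{j_k}$ path, hence to the minimal separator labelling the first edge of that path (Theorem~\ref{thm:clique-tree-pties}); this produces a minimal separator $\sep \subseteq \maxk_i$ with $|\sep| \geq 2$ and $v \in \sep$, contradicting the assumption that every such $\sep$ equals $\{v\}$.

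A short case split will then conclude the argument. If $v$ lies in no minimal separator of $G$, then the maximal cliques containing $v$ induce a single node of $\cliquetree{G}$, namely $\maxk_i$, so $v$ is simplicial. Otherwise $v$ lies in at least two maximal cliques, and the same clique-tree construction applied to $\maxk_i$ and any other maximal clique through $v$ produces a minimal separator $\sep \subseteq \maxk_i$ with $v \in \sep$; by the reformulation this $\sep$ must equal $\{v\}$, so $\{v\}$ is itself a minimal separator (equivalently, $v$ is a cut-vertex), and no other minimal separator of $G$ contained in $\maxk_i$ may contain $v$. The step I expect to be the main obstacle is precisely the ``if'' direction of the reformulation: passing from a potentially arbitrary clique-intersection of size $\geq 2$ through $v$ to an honest minimal separator of size $\geq 2$ through $v$. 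The clique-tree structure of chordal graphs, together with the fact that any two common members of two distinct maximal cliques must appear in every node and edge-label along the connecting path, is exactly what makes this extraction possible.
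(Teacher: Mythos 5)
Your proposal is correct and follows essentially the same route as the paper's proof: both discard the sandwiched clause by maximality of $\maxk_i$, and both use the clique-tree path argument (every vertex of $\maxk_i \cap \maxk_j$ lies in every separator labelling an edge of the $\maxk_i\maxk_j$-path) to convert an offending clique-intersection of size at least two into a minimal separator of size at least two contained in $\maxk_i$, before concluding with the simplicial/cut-vertex dichotomy.
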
	

\begin{proof}
By maximality of $\maxk_i$, a vertex $v \in \maxk_i$ can only be either $\maxk_i$-free or {\em internally} $\maxk_i$-constrained.
In particular, $v$ is $\maxk_i$-free if and only if for any other $\maxk_j \in \MAXK{G}$ we have either $v \notin\maxk_j$ or $\maxk_i \cap \maxk_j = \{v\}$.
As an extremal case, a vertex $v \in \maxk_i$ is $\maxk_i$-free if is not contained into any other maximal clique, and that is the case if and only if $v$ is simplicial.
Thus, from now on assume $v$ is not simplicial.
If $v \in \maxk_i \cap \maxk_j$ then, in any clique-tree $\cliquetree{G}$ of $G$, the vertex $v$ and more generally, all of $\maxk_i \cap \maxk_j$, is contained into all the minimal separators that label an edge of the $\maxk_i\maxk_j$-path in $\cliquetree{G}$. 
This implies that there is always a largest clique-intersection $\ci \subset \maxk_i$ containing $v$ that is a minimal separator.
Hence a non simplicial $v \in \maxk_i$ is $\maxk_i$-free if and only if it is a cut-vertex and there is no other minimal separator in $\maxk_i$ that contains this vertex.
\end{proof}

\begin{lemma}\label{lem:almost-simplicial-placement}
Let $G=(V,E)$ be a $4$-Steiner power.
There exists a $4$-Steiner root $\tree$ of $G$ such that the following hold for any maximal clique $\maxk_i$:
\begin{itemize}
\item If there is at least one $\maxk_i$-free vertex then, $diam(\unlabeledsubtree{\maxk_i}) = 4$;
\item every $\maxk_i$-free vertex $v$ is a leaf of $\unlabeledsubtree{\maxk_i}$. Moreover $dist_{\tree}(v,\centre{\unlabeledsubtree{\maxk_i}}) = 2$ and the internal node onto the unique $v\centre{\unlabeledsubtree{\maxk_i}}$-path is a degree-two Steiner node.
\end{itemize}
\end{lemma}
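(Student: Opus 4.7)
The plan is to start from an arbitrary $4$-Steiner root $\tree$ of $G$ and to apply a sequence of local modifications --- one per $\maxk_i$-free vertex $v$, across all maximal cliques $\maxk_i$ --- until the resulting tree $\tree[2]$ enjoys both properties of the statement. The crucial enabling observation is a ``distance lemma'' that makes every relocation safe: for any maximal clique $\maxk_i$, any $4$-Steiner root $\tree$ of $G$, and any $c \in \centre{\unlabeledsubtree{\maxk_i}}$, every vertex $u \in V \setminus \maxk_i$ satisfies $dist_{\tree}(u,c) \geq 3$. Indeed, maximality of $\maxk_i$ provides some $w \in \maxk_i$ with $uw \notin E$, so $dist_{\tree}(u,w) > 4$; combined with $dist_{\tree}(c,w) \leq rad(\unlabeledsubtree{\maxk_i}) \leq 2$ coming from Property~\ref{pty-ci:1} and Property~\ref{pty-lem:tree:4}, the triangle inequality yields the claim. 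I will use this estimate systematically to check that relocating $v$ near $\centre{\unlabeledsubtree{\maxk_i}}$ never creates a spurious adjacency.

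Given a $\maxk_i$-free vertex $v$, Lemma~\ref{lem:almost-simplicial-characterization} splits the analysis into two cases: either $v$ is simplicial (so $N_G[v] = \maxk_i$ and $v$ has no external neighbour to preserve), or $v$ is a cut-vertex whose only minimal separator of $G$ inside $\maxk_i$ is $\{v\}$ (so $v$ has external neighbours in some maximal cliques $\maxk_j$ with $\maxk_i \cap \maxk_j = \{v\}$). The unified relocation operation runs as follows: relabel $v$'s original location as a Steiner node $s'$; create a fresh Steiner node $s$ adjacent to some chosen $c \in \centre{\unlabeledsubtree{\maxk_i}}$; attach a fresh real vertex labelled $v$ as a leaf of $s$; and finally transfer every edge from $s'$ to a vertex outside $V(\unlabeledsubtree{\maxk_i})$ into an edge from the new $v$ to the same vertex. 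Three distance checks then certify that $\tree[2]$ is still a $4$-Steiner root: (i) for $u \in \maxk_i \setminus \{v\}$, $dist_{\tree[2]}(v,u) = 2 + dist_{\tree[2]}(c,u) \leq 4$ since $rad(\unlabeledsubtree{\maxk_i}) \leq 2$; (ii) for $u$ lying on a branch outside $V(\unlabeledsubtree{\maxk_i})$ previously attached to $v$, $dist_{\tree[2]}(v,u) = dist_{\tree}(v,u)$ because the entire branch is transported with $v$; (iii) for every other $u \notin \maxk_i$, $dist_{\tree[2]}(v,u) = 2 + dist_{\tree[2]}(c,u) \geq 5$ by the distance lemma. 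By construction $s$ has degree exactly two, and the new $v$ is a leaf of $\unlabeledsubtree[2]{\maxk_i}$.

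For the diameter claim, after the relocation every $\maxk_i$-free vertex sits at distance $2$ from $\centre{\unlabeledsubtree[2]{\maxk_i}}$ on its own branch, so if two or more $\maxk_i$-free vertices are present then $diam(\unlabeledsubtree[2]{\maxk_i}) = 4$ at once. In the corner case of a unique $\maxk_i$-free vertex $v$, the remaining vertices of $\maxk_i$ are all internally $\maxk_i$-constrained by Definition~\ref{def:x-free}, so each of them lies in some clique-intersection $\ci' \subset \maxk_i$ of size at least two and hence in some maximal clique $\maxk_l$ sharing a minimal separator with $\maxk_i$. Applying the distance lemma to a well-chosen witness $u \in \maxk_l \setminus \maxk_i$ --- which must remain at distance $> 4$ from $v$ while being adjacent in $G$ to every vertex of $\maxk_l \cap \maxk_i$ --- forces at least one of those constrained vertices to lie at distance exactly $2$ from $c$ on a branch distinct from $v$'s, delivering the diameter equality. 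I expect the cut-vertex case of the relocation to be the main obstacle, in particular the verification of check (ii) and the consistent orchestration of the operation when the same $v$ is $\maxk_l$-free for several distinct $l$ (so that $v$ ends up as a ``hub'' with one degree-two Steiner branch towards each corresponding centre); the diameter argument in the unique-$\maxk_i$-free-vertex situation is a secondary source of delicacy, since existing edges cannot be subdivided freely without breaking pairwise distances elsewhere.
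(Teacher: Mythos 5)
Your ``distance lemma'' --- that every $u \notin \maxk_i$ sits at distance at least $3$ from any $c \in \centre{\unlabeledsubtree{\maxk_i}}$ in a $4$-Steiner root --- is correct and is a handy observation, but the single-pass relocation built on it has two genuine gaps, and the paper's proof is specifically structured to avoid them: it separates the simplicial case (local relocation) from the cut-vertex case (induction on the number of free cut-vertices, splitting $G$ at $v$ into the subgraphs $G[C_\ell \cup \{v\}]$, where $v$ becomes simplicial, and re-gluing at $v$).

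The first gap is the fixed length of the new branch. Placing the relocated $v$ at distance \emph{exactly two} from a chosen centre node $c$ only yields $diam(\unlabeledsubtree[2]{\maxk_i}) \leq 4$, not $=4$. If $\unlabeledsubtree{\maxk_i}$ was small to begin with --- say diameter $2$ with every $u \in \maxk_i \setminus \{v\}$ adjacent to $c$ --- your operation produces a subtree of diameter $3$, whose centre is then an \emph{edge} one of whose ends is at distance $1$ (not $2$) from $v$, so both bullets of the statement fail. Your witness-$u$ argument for the unique-free-vertex corner case does not close this: from $dist_\tree(u,c)\geq 3$ and $dist_\tree(u,w)\leq 4$ one cannot conclude that some constrained vertex of $\maxk_i$ lies at distance two from $c$. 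The paper's base case instead attaches $v$ to $c_i'$ via a Steiner path of length \emph{exactly} $4 - \max_{u \in \maxk_i \setminus \{v\}} dist_{\tree[2]}(c_i',u)$; this tailored length, not a fixed $2$, is what simultaneously delivers diameter four, the distance-two position of $v$ from the \emph{new} centre, and the monotonicity property $dist_{\tree[3]}(\cdot,\cdot) \geq dist_{\tree}(\cdot,\cdot)$ on real pairs.

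The second gap is in the cut-vertex orchestration. You transfer only the branches of $\tree \setminus V(\unlabeledsubtree{\maxk_i})$ attached at $v$'s old location, and check (ii) silently assumes the whole of $N_G(v)\setminus\maxk_i$ lives in those branches. That is false in general: for a cut-vertex $v$ free in two cliques $\maxk_i$ and $\maxk_l$, only $Real(\unlabeledsubtree{\maxk_i} \cap \unlabeledsubtree{\maxk_l})=\{v\}$ is forced, and $\unlabeledsubtree{\maxk_l}$ can exit $V(\unlabeledsubtree{\maxk_i})$ through a Steiner node $\alpha\neq v$. Concretely, take $G$ with maximal cliques $\{v,w_1,w_2\}$ and $\{v,w\}$, and a root $v-\alpha-\beta-w_1$, $\beta-w_2$, $\alpha-\gamma-\delta-w$; the branch carrying $w$ hangs off $\alpha$, not off $v$, so after your operation $dist(v,w)$ grows from $4$ to at least $5$, breaking an adjacency. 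The paper's inductive decomposition at the cut-vertex sidesteps this entirely: it never tries to move $v$ relative to a fixed $\unlabeledsubtree{\maxk_i}$, but instead fixes each component $G[C_\ell \cup\{v\}]$ (where $v$ is simplicial and the base case applies) and re-glues at $v$; the distance-monotonicity of the base case then guarantees that cross-component pairs can only move farther apart, so no spurious short pair arises. To salvage a one-shot relocation you would need at minimum to (a) adopt the variable path length $4-\max$ and (b) actually prove the transported side carries all of $N_G(v)\setminus\maxk_i$; the latter fails in the example above, which is why the paper's detour through induction is not optional.
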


\begin{figure}
\centering
\includegraphics[width=\textwidth]{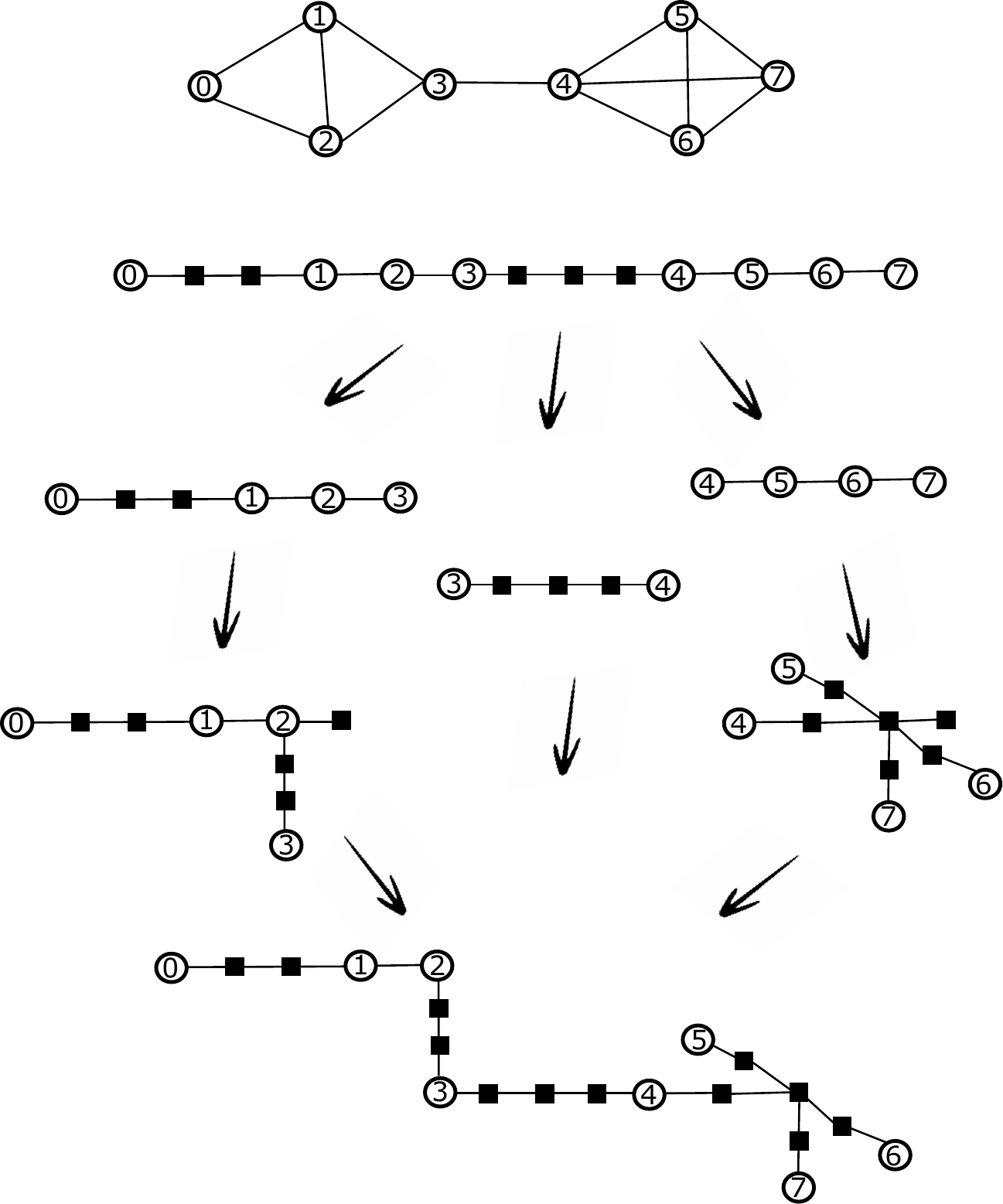}
\caption{The transformation of Lemma~\ref{lem:almost-simplicial-placement} applied to an arbitrary $4$-Steiner root.}
\label{fig:almost-simplicial}
\end{figure}

\begin{proof}
We give an illustration of the proof in Fig.~\ref{fig:almost-simplicial}.
First we pick an arbitrary $4$-Steiner root $\tree$ of $G$, that exists by the hypothesis.
Define ${\cal F}ree_1$ to be the set of all the cut-vertices in $G$ that are $\maxk_i$-free for some $\maxk_i \in \MAXK{G}$.
We now proceed by induction on $|{\cal F}ree_1|$.

Assume ${\cal F}ree_1 = \emptyset$ for the base case.
For every maximal clique $\maxk_i$, by Lemma~\ref{lem:almost-simplicial-characterization}, the $\maxk_i$-free vertices are exactly the simplicial vertices in $\maxk_i$.
We consider all the simplicial vertices $v \in \maxk_i$ sequentially, and we proceed as follows.
Let $c_i \in \centre{\unlabeledsubtree{\maxk_i}}$ minimize $dist_{\tree}(v,c_i)$ (possibly, $v = c_i$).
We first replace $v$ by a Steiner node $\alpha$.
In doing so, we get a $4$-Steiner root $\tree[2]$ for $G \setminus v$.
Then, let $c_i'$ be either $c_i$ (if $c_i \neq v$) or $\alpha$ (if $c_i = v$).
We connect $v$ to $c_i'$ via a path of length exactly $4 - \max_{u \in \maxk_i \setminus \{v\}}dist_{\tree[2]}(c_i',u)$ of which all internal nodes are Steiner.
In doing so, we obtain a tree $\tree[3]$ such that $Real(\tree[3]) = V$.
By construction, $\max_{u \in \maxk_i \setminus \{v\}}dist_{\tree[2]}(c_i',u) \leq ecc_{\unlabeledsubtree{\maxk_i}}(c_i)$.
Since in addition $diam(\unlabeledsubtree{\maxk_i}) \leq 4$, we have from Prop.~\ref{pty-lem:tree:4} of Lemma~\ref{lem:tree} $ecc_{\unlabeledsubtree{\maxk_i}}(c_i) \leq 2$. 
It implies: $$dist_{\tree}(v,c_i) \leq ecc_{\unlabeledsubtree{\maxk_i}}(c_i) \leq 4 - \max_{u \in \maxk_i \setminus \{v\}}dist_{\tree[2]}(c_i',u) = dist_{\tree[3]}(v,c_i').$$
As a result, the distances between real nodes can only increase compared to $\tree$, and this new tree $\tree[3]$ we get keeps the property of being a $4$-Steiner root of $G$.
Furthermore, $diam(\labeledsubtree[3]{}{\maxk_i}) = 4$ and the unique central node in $\centre{\labeledsubtree[3]{}{\maxk_i}}$ is onto the $vc_i'$-path by construction.
We observe that $\maxk_i$ and $v$ cannot falsify the properties of the lemma at any further loop.
Overall, after this first phase is done we may assume that all the simplicial vertices $v$ are contained into some maximal clique $\maxk_i$ such that: $diam(\unlabeledsubtree{\maxk_i}) = 4$, $v$ is a leaf of $\unlabeledsubtree{\maxk_i}$ such that $dist_{\tree}(v,\centre{\unlabeledsubtree{\maxk_i}}) = 2$, and the internal node onto the $v\centre{\unlabeledsubtree{\maxk_i}}$-path is Steiner and has degree two.

From now on we assume ${\cal F}ree_1 \neq \emptyset$.
Let $v \in {\cal F}ree_1$ and let $C_1,C_2,\ldots,C_{\ell}$ be the connected components of $G \setminus v$.
For every $i \in \{1,2,\ldots,\ell\}$, the graph $G_i := G[C_i \cup \{v\}]$ is a $4$-Steiner power as this is a hereditary property.
Specifically, given a fixed $4$-Steiner root $\tree$ for $G$, we obtain a $4$-Steiner root $\tree[i]$ for $G_i$ by replacing every vertex in $V(G) \setminus V(G_i)$ by a Steiner node.
By induction, we can modify all the subtrees $\tree[i]$ into some new subtrees $\tree[i,2]$ that satisfy the properties of the lemma w.r.t. $G_i$.
Overall, by identifying all the $\tree[i,2]$'s at $v$, one obtains a tree $\tree[2]$.
We claim that $\tree[2]$ satisfies the two properties of the lemma.
Indeed, it follows from the characterization of Lemma~\ref{lem:almost-simplicial-characterization} that, for any maximal clique $\maxk_j \subseteq C_i \cup \{v\}$, the $\maxk_j$-free vertices in $G$ are still $\maxk_j$-free vertices in $G_i$.
-- Note that in particular, if $v$ is $\maxk_j$-free in $G$ then, $v$ is simplicial in $G_i$. -- 
Therefore, the claim is proved.
It remains now to show that $\tree[2]$ is indeed a $4$-Steiner root of $G$.
This is not the case only if there exist $x \in C_p, \ y \in C_q$ such that $p \neq q$ and $dist_{\tree[2]}(x,v) + dist_{\tree[2]}(v,y) \leq 4$.
Our construction implies $dist_{\tree[2]}(x,v) \geq dist_{\tree}(x,v)$ and $dist_{\tree[2]}(y,v) \geq dist_{\tree}(y,v)$.
But then, we should already have $dist_{\tree}(x,y) \leq 4$ in the original root $\tree$.
Thus, since $\tree$ is a $4$-Steiner root of $G$, this case cannot happen and $\tree[2]$ is also a $4$-Steiner root of the graph $G$.
\end{proof}

\subsection{The general case}\label{sec:x-free}

We are now ready to prove Theorem~\ref{thm:x-free} in its full generality:

\begin{proofof}{Theorem~\ref{thm:x-free}}
Let $\tree$ be such that the result holds for maximal cliques (such a $\tree$ exists by Lemma~\ref{lem:almost-simplicial-placement}).
For any $\ci \in \CI{G} \setminus \MAXK{G}$ with at most two elements, the properties of the theorem always hold (for any $\tree$).
We so only consider the clique-intersections $\ci \in \CI{G} \setminus \MAXK{G}$ with at least three elements.
Then, $diam(\unlabeledsubtree{\ci}) \leq 3$ by Theorem~\ref{thm:clique-intersection} -- Prop~\ref{pty-ci:2} ({\it i.e.}, because $\ci$ is strictly contained into some maximal clique $\maxk$ and so, $diam(\unlabeledsubtree{\ci}) < diam(\unlabeledsubtree{\maxk})$). 
In what follows, we will often use properties of the subtree $\unlabeledsubtree{\ci}$ that only hold if $diam(\unlabeledsubtree{\ci}) \leq 3$; namely:
\begin{itemize}
\item $\unlabeledsubtree{\ci} \setminus \centre{\unlabeledsubtree{\ci}}$ is a collection of leaf-nodes;
\item there are at least two leaf-nodes ({\it i.e.}, because $|\ci| \geq 3$), and there is at least one leaf-node adjacent to every central node in \centre{\unlabeledsubtree{\ci}};
\item (as a direct consequence of the previous property) every central node in \centre{\unlabeledsubtree{\ci}} has at least two neighbours in \unlabeledsubtree{\ci}.
\end{itemize}
In particular, Property~\ref{pty-x-free:3} is now implied by Property~\ref{pty-x-free:1} as there can be no internal node on the path between a leaf and the closest central node.
In the same way, Property~\ref{pty-x-free:1} can be slightly simplified as {\em every} leaf has maximum eccentricity; hence, we only need to ensure that \ci-free vertices are leaves.
Finally, this also implies that $Steiner(\unlabeledsubtree{\ci}) \subseteq \centre{\unlabeledsubtree{\ci}}$, as all leaves of \unlabeledsubtree{\ci}~must be in \ci ({\it i.e.}, by the very definition of \unlabeledsubtree{\ci}).

The proof follows from different uses of a special operation on the tree $\tree$ that we now introduce:

\begin{operation}\label{op:move}
Let $\ci \in \CI{G} \setminus \MAXK{G}$ have size at least three and let $v \in \ci$.
We define $R_v$ to be the forest of all the subtrees in $\tree \setminus \unlabeledsubtree{\ci}$ that contain one node adjacent to $v$ and have {\em no} real node at a distance $\leq 4$ from $\ci \setminus \{v\}$ in $\tree$.
Let $Q_v$ be the subtree of $\tree$ that is induced by $R_v \cup \{v\}$.

We construct a new tree $\tree[2]$ from $\tree$ in two steps:
\begin{enumerate}
\item We remove $R_v$ and we replace $v$ by a Steiner node $\alpha_v$. In doing so, we obtain an intermediate tree denoted by $\labeledsubtree{v}{}$. Note that $\labeledsubtree{v}{}$ contains the subtree $\labeledsubtree{v}{}^{\ci}$: that is obtained from $\unlabeledsubtree{\ci}$ by replacing $v$ with $\alpha_v$ (and so, is isomorphic to $\unlabeledsubtree{\ci}$);
\item In order to obtain $\tree[2]$ from $\labeledsubtree{v}{}$, we add a copy of $Q_v$ and an edge $vc$ between $v$ and a center node of $\labeledsubtree{v}{}^{\ci}$
(possibly, $c = \alpha_v$).
\end{enumerate}
\end{operation}

We refer to Fig.~\ref{fig:s-free} and~\ref{fig:s-free-bis} for some particular applications of Operation~\ref{op:move}.
Furthermore in what follows we prove that under some conditions of use, this above Operation~\ref{op:move} always outputs a $4$-Steiner root $\tree[2]$ that is closer to satisfy all the properties of the theorem than $\tree$.
Specifically:

\begin{myclaim}\label{claim:pties-op}
Assume that $v$ is \ci-free and that every center node of $\unlabeledsubtree{\ci}$ is adjacent to a real node in $\ci \setminus \{v\}$.
Then, $\tree[2]$ keeps the property of being a $4$-Steiner root of $G$ if and only if the following conditions hold:
\begin{enumerate}[label=\textbullet,ref=\ref{claim:pties-op}.\alph{enumi}]
\item\label{cond-a}(Condition~\ref{cond-a}) either $dist_{\tree}(Real(R_v),v) \geq 4$ or $c$ is Steiner;
\item\label{cond-b}(Condition~\ref{cond-b}) if $c \neq \alpha_v$ then, $dist_{\labeledsubtree{v}{}}(c,V \setminus N_G[v]) > 3$.
\end{enumerate}
Moreover, for any $\ci' \in \CI{G} \setminus \{\ci\}$, if any of Properties~\ref{pty-x-free:1},~\ref{pty-x-free:2},~\ref{pty-x-free:3} or~\ref{pty-x-free:4} is satisfied for $\ci'$ in $\tree$ then, this stays so in $\tree'$.
\end{myclaim}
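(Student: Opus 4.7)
The plan is to prove the claim in two independent parts: first the biconditional for the $4$-Steiner-root property, then the preservation of each of Properties~\ref{pty-x-free:1}--\ref{pty-x-free:4} for any $\ci' \neq \ci$. The key observation underpinning both parts is that $\tree[2]$ decomposes, along the single new edge $vc$, into the unchanged subtree $\labeledsubtree{v}{}$ (with $\alpha_v$ in place of $v$) and the unchanged subtree $Q_v$ (which still contains $v$). Consequently, every real-pair distance is preserved in passing from $\tree$ to $\tree[2]$, except possibly for pairs with exactly one endpoint in $Real(R_v) \cup \{v\}$; for these, the unique path in $\tree[2]$ crosses $vc$, yielding
\[
dist_{\tree[2]}(v,w) \;=\; dist_{\labeledsubtree{v}{}}(c,w) + 1, \qquad dist_{\tree[2]}(u,w) \;=\; d(u) + dist_{\labeledsubtree{v}{}}(c,w) + 1
\]
for $u \in Real(R_v)$ and $w \in V \setminus (Real(R_v) \cup \{v\})$, while $dist_{\tree[2]}(u,v) = d(u)$ is preserved.

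For the necessity of Condition~\ref{cond-a}, suppose $c$ is real, hence $c \in \ci \setminus \{v\}$. By the very definition of $R_v$, every $u \in Real(R_v)$ has $dist_{\tree}(u,c) > 4$, so $uc \notin E$. If some $u^\star \in Real(R_v)$ had $d(u^\star) \leq 3$, then $dist_{\tree[2]}(u^\star, c) \leq 4$ would spuriously force $u^\star c$ into $\tree[2]^4$, contradicting the $4$-Steiner-root property. Necessity of Condition~\ref{cond-b} is entirely analogous: if $c \neq \alpha_v$ and some $w \in V \setminus N_G[v]$ had $dist_{\labeledsubtree{v}{}}(c,w) \leq 3$, then $dist_{\tree[2]}(v,w) \leq 4$ would spuriously force $vw$ into $\tree[2]^4$.

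For sufficiency, assume Conditions~\ref{cond-a} and~\ref{cond-b}. The ``no spurious edge'' direction follows by reversing the necessity arguments, noting that when $c = \alpha_v$ Condition~\ref{cond-b} is automatic since $dist_{\labeledsubtree{v}{}}(\alpha_v, V \setminus N_G[v]) = dist_{\tree}(v, V \setminus N_G[v]) > 4$. What remains is to show that no edge of $G$ is lost. When $c$ is real, Condition~\ref{cond-a} forces $d(u) \geq 4$ for every $u \in Real(R_v)$, and then the triangle inequality in $\tree$ shows that such $u$ has no $G$-neighbour outside of $Real(R_v) \cup \{v\}$, leaving only edges of the form $(v,w)$ to verify. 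For these I plan to use the hypothesis that $c$ is adjacent in $\labeledsubtree{v}{}$ to some real vertex $x \in \ci \setminus \{v\}$ (which is precisely the ``every centre of $\unlabeledsubtree{\ci}$ is adjacent to a real vertex of $\ci \setminus \{v\}$'' assumption, transported along the isomorphism $\unlabeledsubtree{\ci} \cong \labeledsubtree{v}{}^{\ci}$), combined with the $\ci$-freeness of $v$ -- which, since $v$ is not internally $\ci$-constrained, guarantees that every maximal clique through both $v$ and $w$ meets $\ci$ only at $\{v\}$ -- to upper-bound $dist_{\labeledsubtree{v}{}}(c, w) \leq 3$. The case where $c$ is Steiner is then handled by an essentially identical but slightly finer case split on the position of $w$.

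For Part~2, when $\ci' \cap (Real(R_v) \cup \{v\}) = \emptyset$ the subtree $\unlabeledsubtree{\ci'}$ lies inside the unchanged $\labeledsubtree{v}{}$, so it equals $\labeledsubtree[2]{}{\ci'}$ verbatim and all four properties transfer. When $\ci'$ meets $Real(R_v) \cup \{v\}$, the new subtree is obtained from $\unlabeledsubtree{\ci'}$ by relocating the portion on the ``$R_v$-side'' of $v$ along the edge $vc$; a short case analysis on whether this relocation alters the centre, the set of leaves, the degrees of internal Steiner nodes, or the diameter will complete the argument. The main obstacle I anticipate is the ``no edge lost'' direction of sufficiency in Part~1: bounding $dist_{\labeledsubtree{v}{}}(c,w) \leq 3$ for an arbitrary $G$-neighbour $w$ of $v$ requires carefully distinguishing whether $w$ lies inside $\unlabeledsubtree{\ci}$ (easy, by the diameter bound), in a near branch attached at $\alpha_v$ (where $\ci$-freeness is essential), or in a branch of $\labeledsubtree{v}{}$ attached to $\unlabeledsubtree{\ci}$ away from $\alpha_v$, and it is the unique step where the centre-adjacency hypothesis is substantively used.
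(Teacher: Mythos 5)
Your decomposition along the new edge $vc$ and the resulting distance formulas are correct, and your necessity arguments for Conditions~\ref{cond-a} and~\ref{cond-b} are sound. However, the remainder of the proposal has a genuine gap, and the stated plan for filling it would not succeed as described.

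The load-bearing fact you are missing is that, already in the \emph{original} tree $\tree$, one has $dist_{\tree}(Real(R_v), V(G) \setminus V(Q_v)) > 4$ -- this is Subclaim~\ref{subclaim:pties-op-a} in the paper. Your argument for ``no edge lost'' when $c$ is real correctly derives this from $d(u) \geq 4$, but when $c$ is Steiner, Condition~\ref{cond-a} permits $d(u) = 3$, in which case a vertex $u \in Real(R_v)$ could a priori be $G$-adjacent to some $w$ with $dist_{\tree}(v,w)=1$ lying \emph{outside} $V(Q_v)$; the edge $uw$ would then be lost in $\tree[2]$, since $dist_{\tree[2]}(u,w) = d(u) + 1 + dist_{\labeledsubtree{v}{}}(c,w) \geq 5$. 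Ruling this out is not ``an essentially identical but slightly finer case split'': it requires exhibiting clique-intersections $\ci_1 \supset \ci$ and $\ci_2$ with $\ci \cap \ci_2 = \{v\} \subsetneq \ci_1 \cap \ci_2$ and invoking the sandwiching clause of $\ci$-freeness, a mechanism entirely absent from your proposal. That same fact is also what justifies the dichotomy in Part~2 (``either $\unlabeledsubtree{\ci'}$ sits entirely in $Q_v$, or it avoids $R_v$''), which you implicitly need.

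Two further points. First, your auxiliary claim that ``every maximal clique through both $v$ and $w$ meets $\ci$ only at $\{v\}$'' is simply false when the maximal clique contains all of $\ci$; this is precisely the case $\ci \subset \ci'$ that must be treated separately, and which requires a diameter-preservation argument for $\labeledsubtree[2]{}{\ci'}$ (in the paper, via Lemma~\ref{lem:tree-intersection} applied to a star pasted at $c$). Second, your Part~2 is not ``a short case analysis'': for $\ci' \not\supseteq \ci$ with $v \in \ci'$ one must prove that $\labeledsubtree[2]{}{\ci'}$ is \emph{unchanged}, and the key step -- showing $\unlabeledsubtree{\ci'}$ cannot escape into a branch of $\tree \setminus \unlabeledsubtree{\ci}$ hanging off $v$ but outside $R_v$ -- again requires a sandwich argument, together with the observation that any Steiner node of $\unlabeledsubtree{\ci'}$ landing inside $\unlabeledsubtree{\ci} \setminus \{v\}$ would force two vertices of $\ci$ into $\ci'$, contradicting $\ci$-freeness. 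Neither of these mechanisms appears in your plan; they are not small details but the technical core of the claim.
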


\begin{proofclaim}
First we prove that all the real vertices in $R_v$ are at a distance $> 4$ from $V(G) \setminus V(Q_v)$ in the original tree $\tree$ (Subclaim~\ref{subclaim:pties-op-a}).
This will help us in better understanding the structure of \tree~in the remaining of the proof.

\begin{subclaim}\label{subclaim:pties-op-a}
$dist_{\tree}(Real(R_v), V(G) \setminus V(Q_v)) > 4$.
\end{subclaim}

\begin{proofsubclaim}
Suppose for the sake of contradiction that there exist $x \in Real(R_v), y \notin V(Q_v)$ such that $dist_{\tree}(x,y) \leq 4$. 
Then, $v$ is onto the unique $xy$-path in $\tree$.
Furthermore, recall that by definition of $R_v$, $dist_{\tree}(x,\ci \setminus \{v\}) > 4$.
By the hypothesis every central node of $\centre{\unlabeledsubtree{\ci}}$ is adjacent to a vertex of $\ci \setminus v$, and so we have $dist_{\tree}(v,\ci \setminus \{v\}) \leq 2$.
As a result, we obtain $dist_{\tree}(v,x) = 3$ and $dist_{\tree}(v,y) = 1$.
But then, $y$ is at a distance $\leq diam(\unlabeledsubtree{\ci})+1 \leq 4$ from any vertex in $\ci$, that implies the existence of a clique-intersection $\ci_1 \supseteq \ci \cup \{y\}$.
Let $\ci_2$ be any clique-intersection that contains all of $x,y,v$.
Since we assume $dist_{\tree}(x,\ci \setminus \{v\}) > 4$, $\ci \cap \ci_2 = \{v\}$ and in particular, $\ci_1 \neq \ci_2$.
It implies that $v$ is $(\ci,\ci_1,\ci_2)$-sandwiched, a contradiction.
\end{proofsubclaim}

It follows from Subclaim~\ref{subclaim:pties-op-a} that in order for $\tree[2]$ to be a $4$-Steiner root for $G$, one must ensure $dist_{\tree[2]}(Real(R_v), V \setminus V(Q_v)) > 4$.
We then prove that this above condition is equivalent to Condition~\ref{cond-a}.
Indeed, note that we always have $dist_{\tree[2]}(Real(R_v), V \setminus V(Q_v)) > 4$ if $dist_{\tree}(v,Real(R_v)) \geq 4$.
Otherwise, by the hypothesis every center node of $\centre{\unlabeledsubtree{\ci}}$ is adjacent to a real node in $\ci \setminus \{v\}$, thereby implying $dist_{\tree}(v, V \setminus V(Q_v)) \leq dist_{\tree}(v, \ci \setminus \{v\}) \leq 2$, and so, $dist_{\tree}(v,Real(R_v)) = 3$.
Then, a necessary and sufficient condition for having that \\ $dist_{\tree[2]}(Real(R_v), V \setminus V(Q_v)) > 4$ is that $c$ is Steiner.

In the same way, Condition~\ref{cond-b} implies the necessary condition $dist_{\tree[2]}(u,v) > 4$ for every $u \notin N_G(v)$.
However, the above does not prove that $\tree[2]$ is a $4$-Steiner root of $G$ yet, as we also need to ensure $dist_{\tree[2]}(u,v) \leq 4$ for every $u \in N_G(v)$.
In order to prove this is the case, and to also prove the second part of the claim, we now consider the clique-intersections $\ci' \in \CI{G} \setminus \{\ci\}$ such that $v \in \ci'$.
(Note that if $v \notin \ci'$ then, $\unlabeledsubtree{\ci'} = \labeledsubtree[2]{}{\ci'}$ and so, the result of our claim trivially holds for such a $\ci'$).
Since we have $dist_{\tree}(Real(R_v), V \setminus V(Q_v)) > 4$ (Subclaim~\ref{subclaim:pties-op-a}), there are only two possibilities: either $\unlabeledsubtree{\ci'}$ is fully contained into $Q_v$ -- in which case it is not modified --; or it does not intersect $R_v$. 
We then consider two different cases:

\begin{itemize}
\item Assume $\ci \subset \ci'$. 
In particular, $\unlabeledsubtree{\ci'} \cap R_v = \emptyset$.
It implies that $\labeledsubtree[2]{}{\ci'}$ is obtained from $\unlabeledsubtree{\ci'}$ by replacing $v$ by a Steiner node (only if it were an internal node of $\unlabeledsubtree{\ci'}$) then, making of $v$ a leaf.

\begin{subclaim}\label{subclaim:pties-op-b}
$diam(\labeledsubtree[2]{}{\ci'}) = diam(\unlabeledsubtree{\ci'})$.
\end{subclaim}

\begin{proofsubclaim}
We divide the proof into two parts:
\begin{itemize}
\item In order to prove $diam(\labeledsubtree[2]{}{\ci'}) \leq diam(\unlabeledsubtree{\ci'})$, we consider an auxiliary subtree $\labeledsubtree[3]{}{\ci'}$: obtained from the original \unlabeledsubtree{\ci'}~by adding a leaf $v'$ to some arbitrary central node $c'$ in \centre{\unlabeledsubtree{\ci}}.
Note that \labeledsubtree[2]{}{\ci'}~is isomorphic to a subtree of $\labeledsubtree[3]{}{\ci'}$ for the choice of $c' = c$.
Furthermore since $c'$ has at least two neighbours in $\unlabeledsubtree{\ci} \subseteq \unlabeledsubtree{\ci'}$ we have: 
$$diam\left(\unlabeledsubtree{\ci'} \cap N_{\labeledsubtree[3]{}{\ci'}}[c'] \right) = diam(N_{\unlabeledsubtree{\ci'}}[c']) = 2 = diam(N_{\labeledsubtree[3]{}{\ci'}}[c']).$$
We so deduce from Lemma~\ref{lem:tree-intersection} that: $$diam(\labeledsubtree[3]{}{\ci'}) = diam(\unlabeledsubtree{\ci'} \cup N_{\labeledsubtree[3]{}{\ci'}}[c']) = diam(\unlabeledsubtree{\ci'}).$$
In particular, $diam(\labeledsubtree[2]{}{\ci'}) \leq diam(\labeledsubtree[3]{}{\ci'}) = diam(\unlabeledsubtree{\ci'})$. 

\item For the converse direction, it suffices to prove, if $\unlabeledsubtree{\ci'} \neq \labeledsubtree[2]{}{\ci'}$, the existence of a diametral path in $\unlabeledsubtree{\ci'}$ of which $v$ is not an end.
Since the ends of a diametral path must be leaves, we can restrict our study to the case where: $v$ is a leaf of $\unlabeledsubtree{\ci}$, and there is no subtree of $\unlabeledsubtree{\ci'} \setminus \unlabeledsubtree{\ci}$ that contains a neighbour of $v$ in $\tree$.
In particular, $\unlabeledsubtree{\ci} \neq \labeledsubtree[2]{}{\ci}$ (otherwise $\unlabeledsubtree{\ci'} = \labeledsubtree[2]{}{\ci'}$, and so we are done).
We deduce from the above that $v$ is adjacent to a central node $c' \in \centre{\unlabeledsubtree{\ci}}$, and $c' \neq c$ (otherwise $\unlabeledsubtree{\ci} = \labeledsubtree[2]{}{\ci}$, a contradiction).
Then, $\centre{\unlabeledsubtree{\ci}} = \{c,c'\}$ and so, $\unlabeledsubtree{\ci}$ is a bistar (diameter-three subtree).
Recall that both $c,c'$ have a neighbour in $\ci \setminus \{v\}$ by the hypothesis.
This implies that node $c'$ must have at least two neighbours in $\labeledsubtree{v}{}^{\ci} \setminus \{\alpha_v\}$; moreover each such neighbour must be on the path between $c'$ and a real node in $\ci \setminus \{v\}$.
But then, since there are at least three branches in $\unlabeledsubtree{\ci'} \setminus \{c'\}$, we can obtain as follows the existence of a diametral path in $\unlabeledsubtree{\ci'}$ of which $v$ is not an end: let us fix a longest path in $\unlabeledsubtree{\ci'}$ of which $v$ is an end; then, such a path goes by $c'$ and we can simply replace $v$ by the path between $c'$ and any real node in one of the two other branches of $\unlabeledsubtree{\ci'} \setminus \{c'\}$ by which this path does not go.
As a result, $diam(\labeledsubtree[2]{}{\ci'}) \geq diam(\unlabeledsubtree{\ci'})$.  
\end{itemize}
Altogether combined, the above proves the subclaim that $diam(\labeledsubtree[2]{}{\ci'}) = diam(\unlabeledsubtree{\ci'})$.
\end{proofsubclaim}
By Subclaim~\ref{subclaim:pties-op-b}, we cannot change $diam(\unlabeledsubtree{\ci'})$ and so, it immediately implies that we cannot break Property~\ref{pty-x-free:4}.
The proof of this subclaim is actually more precise as it also shows that if $\labeledsubtree[2]{}{\ci'} \neq \unlabeledsubtree{\ci'}$ then, these two subtrees have a common diametral pair $(x,y)$, for some $x,y \neq v$.
In particular by considering the middle nodes on the unique $xy$-path in $T$ and $T'$, respectively, it follows from Prop.~\ref{pty-lem:tree:2} of Lemma~\ref{lem:tree} that we also have: $$\centre{\labeledsubtree[2]{}{\ci'}} = \begin{cases} 
\centre{T\langle \ci'\rangle} \ \text{if} \ v \notin \centre{T\langle \ci'\rangle} \\ 
\left( \centre{T\langle \ci'\rangle} \setminus \{v\} \right) \cup \{\alpha_v\} \ \text{otherwise.} \end{cases}$$
Furthermore, for any $u \in \ci' \setminus \{v\}$, if $P_u$ denotes the $u\centre{\unlabeledsubtree{\ci'}}$-path in $\tree$ then, the $u\centre{\labeledsubtree[2]{}{\ci'}}$-path in $\tree[2]$ can be obtained from $P_u$ by replacing $v$ with $\alpha_v$.
-- This above analysis excludes $v$; however, $v$ cannot be $\ci'$-free because $v \in \ci \subset \ci'$ and $|\ci| \geq 3$. -- Hence, this above mapping implies that we cannot break Properties~\ref{pty-x-free:2} and~\ref{pty-x-free:3}.
It also implies $dist_{\tree[2]}(u,\centre{\labeledsubtree[2]{}{\ci'}}) = dist_{\tree}(u,\centre{\unlabeledsubtree{\ci'}})$ and so, by unimodality (Prop.~\ref{pty-lem:tree:1} of Lemma~\ref{lem:tree}) we did not change the eccentricity of any vertex $u \in \ci' \setminus \{v\}$.
Altogether combined, every $\ci'$-free vertex that was a leaf of $\unlabeledsubtree{\ci'}$ is also a leaf of \labeledsubtree[2]{}{\ci'} with same eccentricity (Property~\ref{pty-x-free:1} cannot be broken).

Finally, we also obtain $dist_{\tree[2]}(u,v) \leq 4$ for every $u \in \ci'$ ({\it i.e.}, $\tree[2]$ is a $4$-Steiner root of $G$).

\item Otherwise, $\ci \not\subseteq \ci'$ and we prove $\unlabeledsubtree{\ci'} = \labeledsubtree[2]{}{\ci'}$.
To see that, suppose for the sake of contradiction $\unlabeledsubtree{\ci'} \neq \labeledsubtree[2]{}{\ci'}$.
We first note this may be the case only if $\unlabeledsubtree{\ci'}$ is not fully contained into $Q_v$.
\begin{subclaim}\label{subclaim:deux-ex-machina}
$\unlabeledsubtree{\ci'}$ must intersect $\unlabeledsubtree{\ci} \setminus \{v\}$.
\end{subclaim}
\begin{proofclaim}
Suppose by contradiction that $\unlabeledsubtree{\ci'} \cap \unlabeledsubtree{\ci} = \{v\}$.
As we also assume $\unlabeledsubtree{\ci'}$ is not fully contained into $Q_v$, we have $\unlabeledsubtree{\ci'} \cap R_v = \emptyset$.
Then, $\unlabeledsubtree{\ci'}$ must intersect some subtree $\tree[sub]$ of $\tree \setminus \unlabeledsubtree{\ci}$ that is adjacent to $v$ but {\em not} in $R_v$.
Let $x \in \ci' \cap \tree[sub]$.
By the definition of $R_v$, there must exist $x' \in Real(\tree[sub])$ such that: $dist_{\tree}(x', \ci \setminus \{v\}) \leq 4$ (possibly, $x=x'$).
On one hand as we assume $v$ to be $\ci$-free, there must exist a clique-intersection $\ci_1 \supseteq \ci \cup \{x'\}$.
It implies $dist_{\tree}(x',v) \leq 2$ because $ecc_{\unlabeledsubtree{\ci}}(v) \geq 2$.
On the other hand, we also know that $dist_{\tree}(x,v) \leq 4$. 
Furthermore note that there exists a $xx'$-path that does not go by $v$ in $\tree$ (otherwise, $x,x'$ would be in different subtrees of $\tree \setminus \unlabeledsubtree{\ci}$).
By considering the median node $r$ of the triple $x,x',v$ we so obtain: $dist_{\tree}(x',r) \leq 1$, $dist_{\tree}(x,r) \leq 3$, and so $dist_{\tree}(x,x') \leq 4$.
Let $\ci_2$ be a clique-intersection that contains all of $x,x',v$ (possibly, $\ci_2 = \ci'$).
There are two cases:
\begin{itemize}
\item If $\ci \not\subseteq \ci_2$ then, $\ci \cap \ci_2 = \{v\}$ (otherwise, $v$ would be internally $\ci$-constrained). But in this case, $v$ is $(\ci,\ci_1,\ci_2)$-sandwiched, a contradiction.
\item Otherwise, $\ci \subseteq \ci_2$. However by the hypothesis $\ci \not\subseteq \ci'$, and so $\ci \cap \ci' = \{v\}$ (otherwise, $v$ would be internally $\ci$-constrained). It implies $v$ is $(\ci,\ci_2,\ci')$-sandwiched, a contradiction. 
\end{itemize}
As a result, $\unlabeledsubtree{\ci'}$ must intersect $\unlabeledsubtree{\ci} \setminus \{v\}$.
\end{proofclaim}
Finally, since $v$ is $\ci$-free any node $\beta \in V(\unlabeledsubtree{\ci'}) \cap \left( V(\unlabeledsubtree{\ci}) \setminus \{v\}\right)$ must be Steiner.
This leaves $\beta \in \centre{\unlabeledsubtree{\ci'}} \setminus \{v\}$ as the only possibility. 
Furthermore, since $\beta$ is Steiner there must exist $y \in \ci'$ such that the unique $vy$-path in $\tree$ goes by $\beta$.
However, this implies by Lemma~\ref{lem:tree-intersection} $diam(\unlabeledsubtree{\ci' \cup N_{\tree}[\beta]}) = diam(\unlabeledsubtree{\ci'})$. 
We recall that there exists at least one leaf node $u \in Real(N_{\tree}(\beta)) \setminus \{v\}$ by the hypothesis.
Thus, by Property~\ref{pty-ci:2} of Theorem~\ref{thm:clique-intersection} we have $u,v \in \ci \cap \ci'$, thereby contradicting that $v$ is $\ci$-free.
\end{itemize}
The claim directly follows from this above case analysis.
\end{proofclaim}

The proof is now divided into two main phases.

\begin{figure}[h!]
\centering
\includegraphics[width=.55\textwidth]{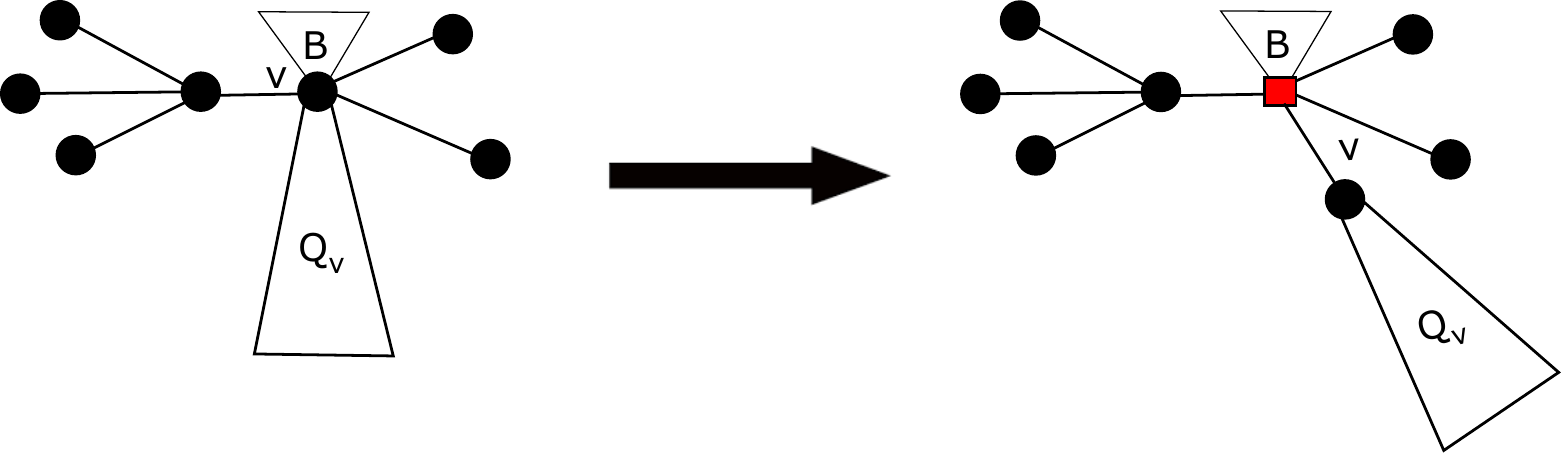}
\caption{Forcing the $\ci$-free vertices as leaves.}
\label{fig:s-free}
\end{figure}

\paragraph{Phase 1: Transformation into leaves} (see Fig.~\ref{fig:s-free} for an illustration).
Let $\ci \in \CI{G} \setminus \MAXK{G}, \ |\ci| \geq 3$ be fixed.
We first transform $\tree$ so that all the $\ci$-free vertices are leaves in $\unlabeledsubtree{\ci}$.
Assume the existence of a $\ci$-free vertex $v \in \ci$ that is not a leaf.
Note that we have $v \in \centre{\unlabeledsubtree{\ci}}$ because $diam(\unlabeledsubtree{\ci}) \leq 3$.
In particular, every node in $\centre{\unlabeledsubtree{\ci}}$ is adjacent to a leaf in $\ci \setminus \{v\}$.
We apply Operation~\ref{op:move} with $c=\alpha_v$ ({\it i.e.}, the Steiner node substituting $v$ in the intermediate tree $\labeledsubtree{v}{}$).
Since $c=\alpha_v$ and so, in particular $c$ is Steiner, both Conditions~\ref{cond-a} and~\ref{cond-b} of Claim~\ref{claim:pties-op} must hold.
Overall, by Claim~\ref{claim:pties-op} we can repeat the above transformation until all the $\ci$-free vertices are leaves of $\unlabeledsubtree{\ci}$. 

\begin{figure}[h!]
\centering
\includegraphics[width=.55\textwidth]{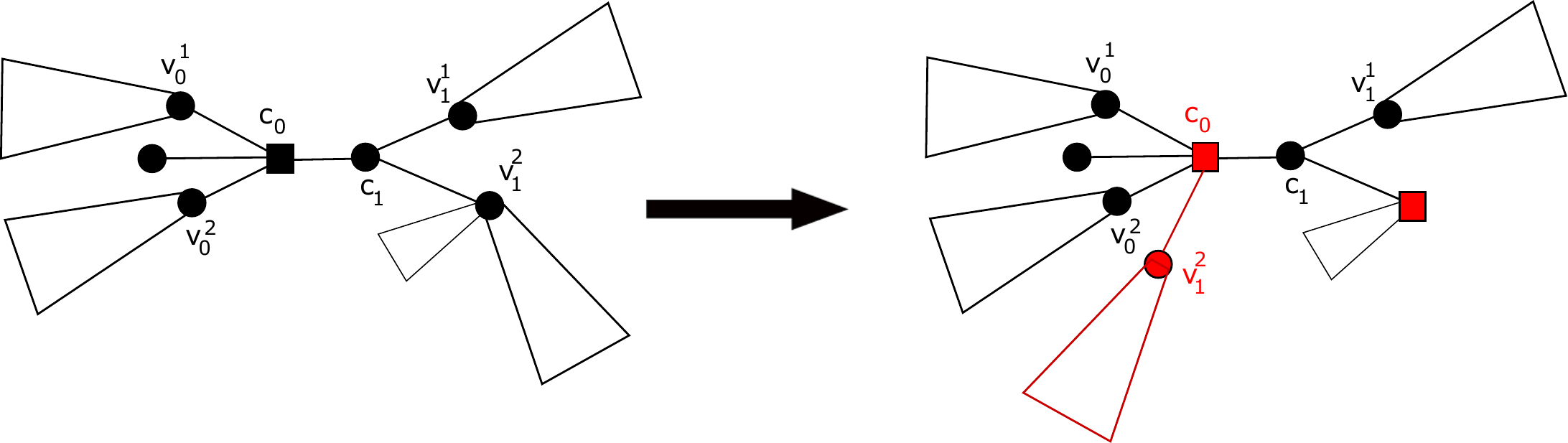}
\caption{Grouping the $\ci$-free vertices on a same side.}
\label{fig:s-free-bis}
\end{figure}

\paragraph{Phase 2: Grouping the $\ci$-free vertices} (see Fig.~\ref{fig:s-free-bis} for an illustration).
After Phase 1 is completed, the properties of the theorem are true for any $\ci \in \CI{G}$ such that $\unlabeledsubtree{\ci}$ is a star.
Thus, from now on assume $\unlabeledsubtree{\ci}$ is a bistar (diameter-three subtree).
Write $\centre{\unlabeledsubtree{\ci}} = \{c_0,c_1\}$ and assume that each $c_j$ is adjacent to two $\ci$-free vertices, denoted $v_j^1,v_j^2$.
\begin{itemize}
\item
For any $i \in \{1,2\}$, there is no vertex $x_i \in Real(R_{v_j^i})$ such that $dist_{\tree}(x_i,v_j^i) \leq 2$ 
(otherwise, $dist_{\tree}(x_i, \ci \setminus \{v_j^i\}) \leq dist_{\tree}(x_i,v_j^{3-i}) \leq 4$, a contradiction).
More specifically, either $dist_{\tree}(v_j^i, Real(R_{v_j^i})) \geq 4$, or $dist_{\tree}(v_j^i, Real(R_{v_j^i})) = 3$ but then $c_j$ must be Steiner (otherwise, $dist_{\tree}(x_i, \ci \setminus \{v_j^i\}) \leq dist_{\tree}(x_i,c_j) \leq 4$, a contradiction).
\item
Furthermore, we claim that we can have $dist_{\tree}(u,c_j) \leq 3$ only if $\ci \subseteq N_G[u]$. 
Indeed, suppose for the sake of contradiction $dist_{\tree}(u,c_j) \leq 3$ and $\ci \not\subseteq N_G[u]$. 
Let $\maxk_j$ be a maximal clique which contains $\{u\} \cup (\ci \cap N_G(u))$.
Then, $\ci' = \maxk_j \cap \ci = \ci \cap N_G(u)$. 
Moreover, $\ci'$ contains $v_j^1,v_j^2$, thereby contradicting our assumption that these two vertices are $\ci$-free. 
\end{itemize}

W.l.o.g., assume either $dist_{\tree}(v_j^i, Real(R_{v_j^i})) \geq 4$ for any $i,j$ or $c_0$ is Steiner.
If in addition both $c_0,c_1$ are Steiner nodes (real nodes, resp.) then, we further assume w.l.o.g. $c_0$ is adjacent to more $\ci$-free vertices than $c_1$.
We apply Operation~\ref{op:move} for $v = v_1^1$ and $c = c_0$.
Note that by construction, $c_0$ always satisfies Condition~\ref{cond-a} of Claim~\ref{claim:pties-op}.
Since we also have $dist_{\tree}(u,c_0) \leq 3 \Longrightarrow \ci \subseteq N_G[u]$, Condition~\ref{cond-b} of Claim~\ref{claim:pties-op} is satisfied.
Overall, we can repeat this transformation until $\ci$ satisfies all the properties stated in the theorem, that does not impact the properties of the other clique-intersections $\ci'$ by Claim~\ref{claim:pties-op}.
\end{proofof}

\section{Step~\ref{step-1}: Construction of the rooted clique-tree}\label{sec:clique-tree}

Our main result in this section is that every chordal graph has a polynomial-time computable rooted clique-tree where some technical conditions must hold on the minimal separators that can label an edge (Theorem~\ref{thm:final-clique-tree}). 
We will use these special properties of our rooted clique-tree in order to ensure that our main algorithm runs in polynomial time.

\smallskip
We start digressing on the motivations behind our construction.
Given a rooted clique-tree \cliquetree{G}~of $G$ and an arbitrary maximal clique $\maxk_i$, recall that $G_i$ is the subgraph induced by all the maximal cliques in the subtree $\cliquetree{G}^i$ rooted at $\maxk_i$.
Assume that $\unlabeledsubtree{\sep_i}$: the subtree induced by the minimal separator $\sep_i := \maxk_i \cap \maxk_{p(i)}$, is fixed.
We ask for the number of possible distance profiles $\left( dist_{\labeledsubtree{i}{}}(r, V_i \setminus \sep_i) \right)_{r \in V(\unlabeledsubtree{\sep_i})}$ over all possible {\em well-structured} $4$-Steiner roots of $G_i$ that contain $\unlabeledsubtree{\sep_i}$.
By Theorem~\ref{thm:x-free}, one way to bound this number would be to force most vertices in $\sep_i$ to be {\em $\maxk_i$-free} in the subgraph $G_i$.
Guided by this intuition, our construction aims at preventing $\sep_i$ to contain, or to be contained, in a minimal separator of $G_i$.
Since both objectives are conflicting, this results in a technical compromise. 

\medskip
We present a first, simpler construction in Section~\ref{sec:flat-clique-tree} that only reaches half of the goal.
Then, we introduce the new notion of (weak) convergence, and we show its relationship with $4$-Steiner powers (Section~\ref{sec:weak-convergence}).
We end up proving the main result of this part in Section~\ref{sec:final-clique-tree}.


\subsection{A flat clique-tree}\label{sec:flat-clique-tree}

We start with an intermediate construction.

\begin{theorem}\label{thm:flat-clique-tree}
Given $G=(V,E)$ chordal, we can compute in polynomial time a rooted clique-tree $\cliquetree{G}$ such that, for any $\sep_i = \maxk_i \cap \maxk_{p(i)}$ and for any child $\maxk_j$ of $\maxk_{p(i)}$, there is no minimal separator of $G_j$ that is contained into $\sep_i$.
\end{theorem}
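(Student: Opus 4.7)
The plan is to start from an arbitrary rooted clique-tree of $G$ -- which exists in polynomial time by Theorem~\ref{thm:clique-tree-construction} after fixing any root $\maxk_0 \in \MAXK{G}$ -- and to iteratively modify it via local swaps until the property of the theorem holds.

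I would first formalise a \emph{violation} as a pair $(i,k)$ such that $\maxk_k \in V(\cliquetree{G}^j)$ for some child $\maxk_j$ of $\maxk_{p(i)}$ (possibly $j=i$), $k \neq j$, and $\sep_k \subseteq \sep_i$. The theorem amounts to producing a rooted clique-tree with no violation. The key preliminary observation is that in any violation $(i,k)$, the unique $\maxk_k\maxk_i$-path in the clique-tree passes through $\maxk_{p(i)}$; the clique-tree (Helly) property then forces every clique on this path to contain $\sep_k$, and so in particular $\maxk_k \cap \maxk_i = \sep_k$.

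I would then introduce the following \emph{swap} operation on a violation $(i,k)$: delete the edge $(\maxk_k, \maxk_{p(k)})$ and insert the edge $(\maxk_k, \maxk_i)$, whose endpoints have intersection exactly $\sep_k$ by the preceding observation. A case analysis on membership of $v$ in $\sep_k$ then shows that the clique-tree property is preserved for every $v \in V$: when $v \in \sep_k$, the two components of $v$'s subtree created by the deletion (one containing $\maxk_k$, the other $\maxk_i$) are reconnected by the new edge, valid because $v \in \sep_k \subseteq \sep_i \subseteq \maxk_i$; when $v \notin \sep_k$, the identity $\maxk_k \cap \maxk_i = \sep_k$ ensures the new edge lies outside $v$'s subtree, and the clique-tree connectedness of $v$'s subtree forces the same for the deleted edge. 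In particular, the swap is a legal transformation on clique-trees of $G$.

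The main obstacle will be proving that iteratively applying swaps terminates in polynomially many steps. The swap preserves the multiset of separator sizes (the relocated edge still carries label $\sep_k$), so the naive potential $\sum_\ell |\sep_\ell|$ is invariant. I would therefore use a lexicographic monovariant: list the edges by decreasing separator size with ties broken by increasing depth, and compare two such lists lexicographically. The swap moves the edge labelled $\sep_k$ from depth $d_k$ to depth $d_i + 1 \leq d_k$, which makes this list strictly smaller unless $d_k = d_i + 1$. The delicate case -- when $\maxk_k$ is a direct child of the sibling $\maxk_j$ of $\maxk_i$ (or, when $j=i$, a direct child of $\maxk_i$ itself), so the primary swap fails to decrease depth -- is where the hard part of the argument lies. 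A likely remedy is either to process violations in a specific order (for instance, smallest $|\sep_k|$ last), or to combine the primary swap with a complementary move that promotes $\maxk_k$ to be a child of $\maxk_{p(i)}$ directly (valid because $\maxk_k \cap \maxk_{p(i)} = \sep_k$ again by the clique-tree property on the $\maxk_k\maxk_{p(i)}$-path): this latter move converts $\maxk_k$ from descendant of $\maxk_i$ into a sibling labelled $j=k$, which by definition is not a violation. Once no violation remains, the polynomial bounds on depths and separator sizes yield the desired polynomial running time, and the resulting rooted clique-tree satisfies the required property.
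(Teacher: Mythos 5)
Your approach --- iteratively eliminating violations by local edge swaps until a fixed point is reached --- is genuinely different from the paper's. The paper's construction is a single pass over the minimal separators in \emph{decreasing} order of size: for each $\sep$ it identifies the least common ancestor $\maxk_\sep$ of all cliques incident to an edge of $\superedgeset{G}{\sep}$ and re-wires every edge of $\edgeset{G}{\sep}$ to become incident to $\maxk_\sep$. Correctness is then a short contradiction: if a violation $\sep_k \subset \sep_i$ survived, then since $\sep_k$ is processed \emph{after} $\sep_i$, the edges labeled $\sep_k$ would already have been moved to an ancestor of $\maxk_{p(i)}$, and could not sit inside a subtree hanging below it. Your route avoids this one-pass invariant but requires a termination argument that is more delicate than what you have sketched.

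Two concrete gaps. First, your key observation that the $\maxk_k\maxk_i$-path goes through $\maxk_{p(i)}$ is false in the case $j=i$, where $\maxk_k$ is a strict descendant of $\maxk_i$; the conclusion $\maxk_k \cap \maxk_i = \sep_k$ still holds, but only because the edge $(\maxk_k,\maxk_{p(k)})$ is the first edge on the $\maxk_k\maxk_i$-path in \emph{every} case, not because the path visits $\maxk_{p(i)}$. Second, the lexicographic monovariant does not give a polynomial step bound: there are exponentially many sorted lists of (size, depth) pairs, so ``strictly decreasing'' only proves termination, not efficiency. The fix is to use the \emph{sum of depths of all cliques} as potential: it is $O(n^2)$, and each of your moves strictly decreases it, since $\maxk_k$ together with its whole subtree moves strictly closer to the root.

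Related to this, you should notice that the primary swap never removes the violation $(i,k)$ at all: once $\maxk_k$ becomes a child of $\maxk_i$, the pair $(i,k)$ is still a violation, now with $j=i$. Only the complementary move --- re-attaching $\maxk_k$ directly as a child of $\maxk_{p(i)}$, valid because $\maxk_k\cap\maxk_{p(i)} = \sep_k$ --- turns $\maxk_k$ into a sibling of $\maxk_i$ and kills $(i,k)$. Since the complementary move is always applicable and always lifts $\maxk_k$ strictly toward the root (from depth $d_k \geq d_i+1$ to depth $d_i$), you can in fact discard the primary swap and the ``process smaller separators last'' idea entirely, and run only the complementary move with the depth-sum potential; that yields a clean $O(n^2)$ bound on the number of moves.
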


\begin{proof}
We modify an arbitrary clique-tree $\cliquetree{G}$ of $G$ until the property of the theorem is satisfied.
Specifically, root $\cliquetree{G}$ at some arbitrary maximal clique $\maxk_0$.
We consider all the minimal separators $\sep \in \SEP{G}$ by decreasing size.
Let $\maxk_{\sep}$ be incident to an edge in $\superedgeset{G}{\sep}$ and the closest possible to the root.
We observe that $\maxk_{\sep}$ is the least common ancestor of all maximal cliques that are incident to an edge in $\superedgeset{G}{\sep}$.
Furthermore all edges in $\edgeset{G}{\sep}$ can be made incident to $\maxk_{\sep}$, as follows.
Assume there exists $\maxk_i\maxk_{p(i)} \in \edgeset{G}{\sep}$ such that $\maxk_{\sep} \notin \{\maxk_i,\maxk_{p(i)}\}$.
By the above observation, $\maxk_i,\maxk_{p(i)}$ are into the subtree rooted at $\maxk_{\sep}$.
Since $\sep \subseteq \maxk_{\sep} \cap \maxk_i$, $\sep$ is contained into all the maximal cliques on the $\maxk_{\sep}\maxk_i$-path.
In particular, we still obtain a clique-tree of $G$ if we replace $\maxk_i\maxk_{p(i)}$ by $\maxk_i\maxk_{\sep}$ and in doing so, $\sep \subseteq \maxk_{\sep} \cap \maxk_i \subseteq \maxk_{p(i)} \cap \maxk_i = \sep$.
Furthermore after this transformation, $\maxk_{\sep}$ became the new father node of the maximal clique $\maxk_i$ in $\cliquetree{G}$.

It now remains to prove that the gotten clique-tree $\cliquetree{G}$ satisfies the conditions of the theorem.
Suppose for the sake of contradiction that there exist $i > 0 $ and $\sep_k \subseteq \sep_i$ a minimal separator of $G_j$, where $\maxk_j$ is a child of $\maxk_{p(i)}$ (possibly, $i=j$).
We observe that after we processed $\sep_i$, all edges in $\edgeset{G}{\sep_i}$ must label an edge between $\maxk_{\sep_i} = \maxk_{p(i)}$ and its children nodes.
Therefore, our transformation ensures $\sep_k \neq \sep_i$, {\it i.e.}, $\sep_k \subset \sep_i$.
Since the subtree rooted at $\maxk_j$ is a rooted clique-tree of $G_j$, there must exist some edge $\maxk_k\maxk_{p(k)}$ in this subtree such that $\maxk_k \cap \maxk_{p(k)} = \sep_k$.
However, since we consider minimal separators by increasing size, the edge $\maxk_i\maxk_{p(i)}$ should already exist when we process $\sep_k$.
It implies that the maximal clique $\maxk_{\sep_k}$ to which we connected all edges in $\edgeset{G}{\sep_k}$ should be an ancestor of $\maxk_{p(i)}$, that is a contradiction.
\end{proof}

\subsection{Weak convergence}\label{sec:weak-convergence}

Unfortunately, the ``flat'' clique-tree of Theorem~\ref{thm:flat-clique-tree} does not prevent the case when a minimal separator $\sep_i$ {\em is contained} into a minimal separator of $G_i$.
As a new step toward our final construction, we now introduce the following notions:


\begin{definition}\label{def:cvgt}
Given a clique-tree $\cliquetree{G}$ of $G = (V,E)$, we say that a minimal separator $\sep$ is {\em weakly $\cliquetree{G}$-convergent} if there exists some maximal clique $\maxk_{\sep}$ that is incident to all edges in $\superstrictedgeset{G}{\sep}$.
%
$\sep$ is termed {\em $\cliquetree{G}$-convergent} if it is weakly $\cliquetree{G}$-convergent and the maximal clique $\maxk_{\sep}$ is also incident to all edges in $\edgeset{G}{\sep}$.
\end{definition}

In order to motivate Definition~\ref{def:cvgt}, in what follows are two observations on the relationships between clique-trees, minimal separators and $4$-Steiner roots:

\begin{lemma}\label{lem:bistar-center}
Given $G=(V,E)$ and $\tree$ any $4$-Steiner root of $G$, let $\maxk_i \in \MAXK{G}$ and let $\ci \subset \maxk_i$ be a clique-intersection.
If $\unlabeledsubtree{\ci}$ is a bistar then, $\centre{\unlabeledsubtree{\maxk_i}} \subset \centre{\unlabeledsubtree{\ci}}$.

In particular, there are exactly two maximal cliques that contain $\ci$.
\end{lemma}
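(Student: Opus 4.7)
The plan is to first pin down the diameter and center of $\unlabeledsubtree{\maxk_i}$. Since $\ci \subsetneq \maxk_i$ we have $\unlabeledsubtree{\ci} \subsetneq \unlabeledsubtree{\maxk_i}$, and moreover $\ci \neq Real(\unlabeledsubtree{\maxk_i}) = \maxk_i$, so Property~\ref{pty-ci:2} of Theorem~\ref{thm:clique-intersection} forces $diam(\unlabeledsubtree{\maxk_i}) > diam(\unlabeledsubtree{\ci}) = 3$. Combined with Property~\ref{pty-ci:1} this gives $diam(\unlabeledsubtree{\maxk_i}) = 4$, and then by Property~\ref{pty-lem:tree:3}, $\centre{\unlabeledsubtree{\maxk_i}}$ reduces to a single node $c$. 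Write $\centre{\unlabeledsubtree{\ci}} = \{c_0,c_1\}$, which is an edge by the odd-diameter case of the same property.

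Next I will show that $c \in \{c_0,c_1\}$. Pick a diametral pair $x,y$ of $\unlabeledsubtree{\ci}$ along the path $x-c_0-c_1-y$. Since $x,y \in \ci \subseteq \maxk_i$ and by Property~\ref{pty-lem:tree:4} we have $rad(\unlabeledsubtree{\maxk_i}) = 2$, both $dist_{\tree}(c,x)$ and $dist_{\tree}(c,y)$ are at most $2$. Together with $dist_{\tree}(x,y) = 3$ and the standard tree identity $dist_{\tree}(c,x)+dist_{\tree}(c,y)-dist_{\tree}(x,y) = 2\cdot dist_{\tree}(c,m)$ (where $m$ is the median of $c,x,y$), the parity obligation rules out the case $dist_{\tree}(c,x) = dist_{\tree}(c,y) = 2$ and forces $c$ to lie on the unique $xy$-path with $\{dist_{\tree}(c,x), dist_{\tree}(c,y)\} = \{1,2\}$. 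Hence $c \in \{c_0,c_1\} = \centre{\unlabeledsubtree{\ci}}$, and since $|\centre{\unlabeledsubtree{\maxk_i}}| = 1 < 2 = |\centre{\unlabeledsubtree{\ci}}|$ the inclusion is strict.

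For the second part, I will apply the same reasoning to any maximal clique $\maxk$ containing $\ci$. Since $\ci$ is a clique-intersection strictly smaller than some maximal clique (namely $\maxk_i$), $\ci$ itself cannot be a maximal clique, so $\ci \subsetneq \maxk$ and the argument above yields $diam(\unlabeledsubtree{\maxk}) = 4$ with $\centre{\unlabeledsubtree{\maxk}} \subseteq \{c_0,c_1\}$. By Lemma~\ref{lem:no-center-intersect} (applied with $k=2$), distinct maximal cliques have disjoint center sets in $\tree$, so at most two maximal cliques can contain $\ci$; and since $\ci$ is a nontrivial clique-intersection it must be contained in at least two, giving exactly two.

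The only delicate step is the tree-metric argument in the middle paragraph: one needs the parity identity to exclude the $(2,2)$ case, but this is a routine consequence of the uniqueness of paths in a tree, so no real obstacle is expected.
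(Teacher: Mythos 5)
Your proof is correct and follows essentially the same structure as the paper's: establish $diam(\unlabeledsubtree{\maxk_i})=4$ via Property~\ref{pty-ci:2}, show the unique center of $\unlabeledsubtree{\maxk_i}$ lies in $\centre{\unlabeledsubtree{\ci}}$, and then invoke Lemma~\ref{lem:no-center-intersect} for the second part. The only variation is the middle step: the paper argues that every component of $\unlabeledsubtree{\maxk_i}\setminus\{c_i\}$ has diameter at most $2$ so the diameter-$3$ subtree $\unlabeledsubtree{\ci}$ must contain $c_i$ as a non-leaf, whereas you use the median/parity identity to place $c$ on the diametral path of $\unlabeledsubtree{\ci}$ — both are short, valid instances of the same underlying radius-vs-diameter tension.
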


\begin{proof}
We have by Theorem~\ref{thm:clique-intersection} $diam(\unlabeledsubtree{\maxk_i}) > diam(\unlabeledsubtree{\ci})$, and so, $diam(\unlabeledsubtree{\maxk_i}) = 4$.
In particular, write $\centre{\unlabeledsubtree{\maxk_i}} = \{c_i\}$.
Every component in $\unlabeledsubtree{\maxk_i} \setminus \{c_i\}$ has diameter at most two, thereby implying $c_i \in V(\unlabeledsubtree{\ci})$.
Furthermore since $ecc_{\unlabeledsubtree{\ci}}(c_i) \leq rad(\unlabeledsubtree{\maxk_i}) = 2$, $c_i$ cannot be a leaf of $\unlabeledsubtree{\ci}$.
Equivalently, $c_i \in \centre{\unlabeledsubtree{\ci}}$.
By Lemma~\ref{lem:no-center-intersect}, there can be no two maximal cliques $\maxk_i,\maxk_j \in \MAXK{G}$ such that $\centre{\unlabeledsubtree{\maxk_i}} = \centre{\unlabeledsubtree{\maxk_j}}$.
Therefore, the above implies that $\ci$ can only be contained in at most two maximal cliques.
Finally, since $\ci$ is not a maximal clique, it is contained into exactly two maximal cliques.
\end{proof}

\begin{lemma}\label{lem:star-center}
Given $G=(V,E)$ and $\tree$ any $4$-Steiner root of $G$, let $\sep \in \SEP{G}$.
If $\unlabeledsubtree{\sep}$ is a non-edge star then, $\sep$ is weakly $\cliquetree{G}$-convergent for {\em any} clique-tree $\cliquetree{G}$ of $G$.
\end{lemma}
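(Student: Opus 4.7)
The plan is to identify a canonical maximal clique $\maxk_{\sep}$ using the geometry of $\tree$, then show that every edge of $\superstrictedgeset{G}{\sep}$ in $\cliquetree{G}$ must be incident to it. Let $c$ denote the unique central node of $\unlabeledsubtree{\sep}$: it is unique by Property~\ref{pty-lem:tree:3} since $diam(\unlabeledsubtree{\sep})=2$. Because $\unlabeledsubtree{\sep}$ is a non-edge star, $c$ has at least two neighbours inside $\unlabeledsubtree{\sep}$, and hence at least two neighbours in every subtree of $\tree$ containing $\unlabeledsubtree{\sep}$.

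Next, I will bound the diameter of $\unlabeledsubtree{\sep'}$ for any minimal separator $\sep' \supsetneq \sep$ labeling an edge of $\superstrictedgeset{G}{\sep}$. Since $\sep \subsetneq \sep' = Real(\unlabeledsubtree{\sep'})$, we have $\unlabeledsubtree{\sep} \subsetneq \unlabeledsubtree{\sep'}$, so Property~\ref{pty-ci:2} applied to $\ci = \sep$ yields $diam(\unlabeledsubtree{\sep'}) \geq 3$. On the other hand, $\sep'$ sits strictly inside some maximal clique $\maxk$, and Property~\ref{pty-ci:2} applied to $\ci = \sep'$ together with Property~\ref{pty-ci:1} gives $diam(\unlabeledsubtree{\sep'}) < diam(\unlabeledsubtree{\maxk}) \leq 4$. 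Hence $\unlabeledsubtree{\sep'}$ is a bistar, and since $c$ retains at least two neighbours in it, $c$ is a (non-leaf) central node of that bistar.

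I then invoke Lemma~\ref{lem:bistar-center} for $\ci = \sep'$: exactly two maximal cliques $\maxk, \maxk'$ contain $\sep'$, and each of $\centre{\unlabeledsubtree{\maxk}}$, $\centre{\unlabeledsubtree{\maxk'}}$ is a singleton strictly inside the two-element set $\centre{\unlabeledsubtree{\sep'}}$. By Lemma~\ref{lem:no-center-intersect}, these two singletons are disjoint, so they partition $\centre{\unlabeledsubtree{\sep'}}$. In particular exactly one of them, say $\centre{\unlabeledsubtree{\maxk}}$, equals $\{c\}$. A second application of Lemma~\ref{lem:no-center-intersect} shows that at most one maximal clique of $G$ has center equal to $\{c\}$, so we may define $\maxk_{\sep}$ to be this clique; crucially, $\maxk_{\sep}$ depends only on $c$ (hence only on $\sep$), not on the choice of $\sep'$.

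Finally, the unique edge of $\cliquetree{G}$ labelled $\sep'$ must join the only two maximal cliques containing $\sep'$, namely $\maxk_{\sep}$ and $\maxk'$, so it is incident to $\maxk_{\sep}$. Since every edge in $\superstrictedgeset{G}{\sep}$ is labelled by some $\sep' \supsetneq \sep$, all such edges are incident to $\maxk_{\sep}$, giving weak $\cliquetree{G}$-convergence. The case $\superstrictedgeset{G}{\sep} = \emptyset$ is vacuous. There is no serious obstacle here; the only subtle point is ensuring that $\maxk_{\sep}$ does not depend on the particular $\sep'$ chosen, which is precisely what Lemma~\ref{lem:no-center-intersect} secures.
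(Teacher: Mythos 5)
Your proof is correct and follows essentially the same route as the paper's: pin down the unique center $c$ of $\unlabeledsubtree{\sep}$, observe that any $\sep'\supsetneq\sep$ must induce a bistar with $c$ as one of its two central nodes, use Lemma~\ref{lem:bistar-center} to see $\sep'$ lies in exactly two maximal cliques whose singleton centers partition $\centre{\unlabeledsubtree{\sep'}}$, and conclude via Lemma~\ref{lem:no-center-intersect} that one of them is the fixed clique $\maxk_{\sep}$ with center $\{c\}$. You are merely a bit more explicit than the paper in invoking Lemma~\ref{lem:no-center-intersect} to justify both the disjointness of the two centers and the independence of $\maxk_{\sep}$ from the choice of $\sep'$ — steps the paper leaves implicit.
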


\begin{proof}
We may assume that $\sep$ is strictly contained into at least one minimal separator $\sep'$ for otherwise there is nothing to prove.
By Theorem~\ref{thm:clique-intersection}, $\unlabeledsubtree{\sep'}$ is a bistar and $\sep'$ must be inclusion wise maximal in $\SEP{G}$. 
This implies $\centre{\unlabeledsubtree{\sep}} \subset \centre{\unlabeledsubtree{\sep'}}$.
Furthermore, it follows from Lemma~\ref{lem:bistar-center} that $\sep'$ must be contained into exactly two maximal cliques $\maxk_i,\maxk_j$ and $\centre{\unlabeledsubtree{\maxk_i}} \cup \centre{\unlabeledsubtree{\maxk_j}} = \centre{\unlabeledsubtree{\sep'}}$.
In particular, we may assume w.l.o.g. that $\centre{\unlabeledsubtree{\sep}} = \centre{\unlabeledsubtree{\maxk_i}}$.
But then, still by Lemma~\ref{lem:bistar-center}, any minimal separator $\sep''$ that strictly contains $\sep$ must be contained into $\maxk_i$ and exactly one other maximal clique $\maxk_{\sep''}$.
Let $\cliquetree{G}$ be an arbitrary clique-tree of $G$.
By Theorem~\ref{thm:clique-tree-pties}, the above implies $\maxk_i\maxk_{\sep''} \in E(\cliquetree{G})$ and $\maxk_i \cap \maxk_{\sep''} = \sep''$. 
In particular, one obtains by setting $\maxk_{\sep} := \maxk_i$ that $\sep$ is weakly $\cliquetree{G}$-convergent.
\end{proof}

Roughly, in order to prove Theorem~\ref{thm:final-clique-tree} we will slightly modify the construction of Theorem~\ref{thm:flat-clique-tree} so as to force weak convergence to imply convergence.
In doing so we will obtain that for a fixed minimal separator $\sep$, we may encounter some ``inclusion issue'' between \sep~and another minimal separator \sep'~at most {\em once}.
We will show in Sec.~\ref{sec:encoding} that this above ``local'' property of our clique-tree is enough in order to bound the number of possible distance profiles at each node of the clique-tree by a polynomial.

\subsection{The final construction}\label{sec:final-clique-tree}

The remaining of this section is now devoted to prove the following technical result:

\begin{theorem}\label{thm:final-clique-tree}
Given $G=(V,E)$ chordal, we can compute in polynomial time a rooted clique-tree $\cliquetree{G}$ where the following conditions are true for any $\sep_i := \maxk_i \cap \maxk_{p(i)}, i > 0$:
\begin{enumerate}[label=\textbullet,ref=P-\ref{thm:final-clique-tree}.\theenumi]
\item\label{pty-fct:1}({\small Prop.~\ref{pty-fct:1}.}) If $\sep_i$ is weakly $\cliquetree{G}$-convergent and $|\sep_i| \geq 3$ then, $\sep_i$ is $\cliquetree{G}$-convergent;
\item\label{pty-fct:2}({\small Prop.~\ref{pty-fct:2}.}) Any minimal separator of $G_i$ that is contained into $\sep_i$ is $\cliquetree{G}$-convergent, it has at least three vertices and it is strictly contained into a minimal separator of $G_i$.
\end{enumerate}
\end{theorem}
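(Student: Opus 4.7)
The plan is to build on the construction from Theorem~\ref{thm:flat-clique-tree}, but with a more careful selection of the witness clique $\maxk_{\sep}$ to which edges labeled by $\sep$ are reparented. I would process minimal separators in decreasing order of size, as before, but when processing $\sep$ I would first check whether $\sep$ is already weakly $\cliquetree{G}$-convergent under the current clique-tree. Observe that this property depends only on edges labeled by strict supersets of $\sep$, all of which have been fixed at an earlier step and are no longer affected by what we do now. If weak convergence holds (and $|\sep|\geq 3$), then there is a single maximal clique $\maxk_{\sep}^*$ incident to every edge of $\superstrictedgeset{G}{\sep}$. Since any maximal clique containing some edge labeled by a superset of $\sep$ contains $\sep$ itself, $\maxk_{\sep}^*$ lies in the subtree of maximal cliques containing $\sep$, and the same argument as in Theorem~\ref{thm:flat-clique-tree} shows we may reparent all edges of $\edgeset{G}{\sep}$ to $\maxk_{\sep}^*$ while keeping a clique-tree. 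By taking $\maxk_{\sep}=\maxk_{\sep}^*$, Property~\ref{pty-fct:1} is enforced. For separators that are not weakly convergent I would fall back on the original choice from Theorem~\ref{thm:flat-clique-tree} (closest ancestor incident to a superset-labeled edge), so that the sibling guarantee of that theorem is preserved.

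For Property~\ref{pty-fct:2}, I would fix $\maxk_i$ and an arbitrary minimal separator $\sep_k$ of $G_i$ with $\sep_k\subseteq\sep_i$. The first observation is that, in the constructed clique-tree, edges labeled $\sep_k$ can only occur in the subtree rooted at $\maxk_i$, and every maximal clique on the $\maxk_i\maxk_k$-path contains $\sep_k$. In particular $\maxk_i$ itself contains $\sep_k$ and is incident to the edge of $\edgeset{G}{\sep_k}$ lying closest to it in the subtree. From this, combined with the sibling-exclusion guarantee of Theorem~\ref{thm:flat-clique-tree} applied inside $G_i$, one derives that no edge of $\superstrictedgeset{G}{\sep_k}$ can escape to a sibling branch, so all such edges concentrate around a single ancestor clique; in other words, $\sep_k$ is weakly convergent. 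Once $|\sep_k|\geq 3$ is known, Property~\ref{pty-fct:1} just established gives convergence. The strict-containment condition then follows because, by Remark~\ref{rk:min-sep-chain}-type reasoning (or simply: in a chordal graph, a minimal separator strictly contained in $\sep_i$ must come with a ``chain witness'' supersep in $G_i$), we can exhibit a minimal separator of $G_i$ strictly containing $\sep_k$.

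The main obstacle will be the cardinality condition $|\sep_k|\geq 3$. For this, the plan is to interleave a preliminary reorganization step before invoking the flat construction: whenever a minimal separator $\sep$ of $G$ with $|\sep|\leq 2$ is contained in a strictly larger separator $\sep'$, I would re-root the maximal cliques containing $\sep$ locally so that their common ancestor in the clique-tree lies above $\maxk_{\sep'}$, i.e.\ so that on every root-to-leaf path of the clique-tree the edges labeled $\sep$ appear strictly \emph{above} those labeled $\sep'$. This rearrangement, which can be implemented by a swap operation on the subtree of cliques containing $\sep$, would guarantee that no separator of size $\leq 2$ sits below a larger separator in the tree. Combined with the fact that any $\sep_k\subseteq\sep_i$ inside the subtree of $\maxk_i$ must be no larger than $\sep_i$, this forces $|\sep_k|\geq 3$ (otherwise the swap would have moved $\sep_k$ above $\sep_i$, contradicting $\sep_k\subseteq\sep_i$ lying in $G_i$). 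Compatibility of this extra step with the subsequent decreasing-size processing would be verified by checking that reparenting edges labeled by a larger separator never creates a new small separator below it. Once this invariant is in place, Property~\ref{pty-fct:2} follows cleanly from the analysis above, and the whole construction runs in polynomial time since each of the $\mathcal{O}(n)$ minimal separators is processed by a constant number of tree edits.
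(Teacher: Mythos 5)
Your high-level plan for Property~\ref{pty-fct:1} --- process separators by decreasing size and, when $\sep$ is weakly convergent with $|\sep|\geq 3$, reparent all of $\edgeset{G}{\sep}$ onto a weak-convergence witness --- is in the right spirit, and your observation that $\superstrictedgeset{G}{\sep}$ is frozen by the time $\sep$ is processed is correct. But the proposal breaks down on Property~\ref{pty-fct:2}, and the two properties interact more than you acknowledge. First, the witness is not unique (when $|\superstrictedgeset{G}{\sep}|\leq 1$ both endpoints of the relevant edge, or every clique, qualify), and the choice matters: if $\superstrictedgeset{G}{\sep}=\{\maxk_i\maxk_{p(i)}\}$ and you happen to pick $\maxk_i$ as the witness, then $\sep$ becomes a minimal separator of $G_i$ contained in $\sep_i$ that is \emph{not} strictly contained in any minimal separator of $G_i$ --- falsifying the last clause of Property~\ref{pty-fct:2}. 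The paper avoids this precisely by prioritizing the clique \emph{closest to the root} subject to the convergence condition (Phase~2), and by only pushing edges down to a child when convergence genuinely requires it (Phase~3); the strict-containment clause is then proved by exhibiting an edge of $\superstrictedgeset{G}{\sep}$ incident to $\maxk_{\sep}$ but not to $\maxk_{p(i)}$, which exists exactly because the closest-to-root choice was blocked. Your derivation of weak convergence of $\sep_k$ ("all such edges concentrate around a single ancestor clique" from the sibling-exclusion guarantee) is a non-argument: supersets of $\sep_k$ can label edges attached to many different cliques, and nothing in Theorem~\ref{thm:flat-clique-tree} forces them to meet. The correct route is the contrapositive of your own fallback: if $\sep_k$ had been handled by the LCA rule, its edges would sit at or above $\maxk_{p(i)}$ and hence outside $G_i$; so the convergence-enforcing branch must have fired, which is what yields convergence \emph{and} $|\sep_k|\geq 3$ simultaneously. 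Likewise the "chain witness" justification of strict containment is false for general chordal graphs; it only follows from the specific blocked-choice structure above.

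Consequently, your "preliminary reorganization step" is both unnecessary and unsupported. The cardinality condition does not need a separate swapping pass --- it falls out of conditioning the convergence-enforcing branch on $|\sep|\geq 3$ and using the LCA fallback otherwise --- and the pass as described is not shown to preserve the clique-tree property, to terminate, or to coexist with the decreasing-size processing (you only assert compatibility "would be verified"). Finally, the claim that mixing witness-based and LCA-based reparenting "preserves the sibling guarantee of Theorem~\ref{thm:flat-clique-tree}" needs proof: that theorem's argument relies on \emph{always} taking the closest-to-root clique, and the paper replaces it with a three-phase construction (an unrooted lexicographic phase, a rooted closest-to-root phase with a convergence side-condition, and a push-down phase) together with two invariants and a six-claim contradiction argument for Property~\ref{pty-fct:1}. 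That machinery is there exactly to reconcile the tensions your one-pass scheme leaves unresolved.
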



We stress that in the above two statements, $3$ can be replaced by any positive integer $q$.
However, please note that such a change would affect both properties of the theorem.

\begin{proof}
We start describing the algorithm before proving its correctness.
In what follows, let $\SEP{G} = ( \sep_1, \sep_2, \ldots, \sep_{\ell} )$ be totally ordered in such a way that $|\sep_i| < |\sep_j| \Longrightarrow i < j$. 
For every phase of the algorithm, we consider all the minimal separators of $G$ by decreasing order, {\it i.e.}, from $\sep_{\ell}$ to $\sep_1$.
\begin{enumerate}[label={\bf Phase~\theenumi}]
\item Let $\cliquetree{G}$ be an arbitrary {\em unrooted} clique-tree.
We consider all the minimal separators $\sep \in \SEP{G}$ by decreasing order.
Let $\sep^1, \sep^2, \ldots, \sep^p$ be the list of all minimal separators containing $\sep$ by decreasing order.
-- In particular, $\sep^p = \sep$, and $\sep^1$ is a largest minimal separator containing $\sep$. --
For every maximal clique $\maxk$, we can define a binary vector $\overrightarrow{v_{\sep}}(\maxk) = (\delta_{\sep^1}^{\maxk},\delta_{\sep^2}^{\maxk},\ldots,\delta_{\sep^p}^{\maxk})$: where $\delta_{\sep^i}^{\maxk} = 1$ if and only if $\maxk$ is incident to an edge in $\edgeset{G}{\sep^i}$.
Then, we choose $\maxk_{\sep}^1 \in \MAXK{G}$ such that $\overrightarrow{v_{\sep}}(\maxk_{\sep}^1)$ is lexicographically maximal.
While there exists an edge $\maxk\maxk' \in \edgeset{G}{\sep}$ such that $\maxk_{\sep}^1 \notin \{\maxk,\maxk'\}$, we do as follows.
W.l.o.g., $\maxk$ is on the $\maxk_{\sep}^1\maxk'$-path.
We replace the edge $\maxk\maxk'$ by $\maxk_{\sep}^1\maxk'$.
Doing so, all edges in $\edgeset{G}{\sep}$ are now incident to $\maxk_{\sep}^1$.

\item We root $\cliquetree{G}$ at some arbitrary maximal clique $\maxk_0$.
Then, we consider all the minimal separators $\sep \in \SEP{G}$ by decreasing order.
Let $\maxk_{\sep}^2$ be, under the following conditions, the closest possible to the root:
\begin{itemize}
\item $\maxk_{\sep}^2$ is incident to an edge in $\superedgeset{G}{\sep}$; 
\item and if $|\sep| \geq 3$ and $\sep$ is $\cliquetree{G}$-convergent then, $\maxk_{\sep}^2$ is incident to all edges in $\superstrictedgeset{G}{\sep}$.
\end{itemize}
Note that $\maxk_{\sep}^2$ is always an ancestor of $\maxk_{\sep}^1$.
In particular if $|\sep| \geq 3$ and $\sep$ is $\cliquetree{G}$-convergent then, $\maxk_{\sep}^2$ is either $\maxk_{\sep}^1$ or its parent node (with the latter being possible only if $|\superstrictedgeset{G}{\sep}|\leq 1$).
Then, we use the same operation as in the proof of Theorem~\ref{thm:flat-clique-tree} in order to make all edges in $\edgeset{G}{\sep}$ incident to $\maxk_{\sep}^2$.

\item We end up considering all the minimal separators $\sep \in \SEP{G}$ by decreasing order.
If $|\sep| \geq 3$ and $\sep$ is {\em not} $\cliquetree{G}$-convergent then, we search for a child $\maxk_{\sep}^3$ of $\maxk_{\sep}^2$ that is incident to all edges in $\superstrictedgeset{G}{\sep}$.
For simplicity of our analysis, we will also set $\maxk_{\sep}^3 = \maxk_{\sep}^2$ when $|\sep| \leq 2$, or $\sep$ is already $\cliquetree{G}$-convergent, or there is no child node of $\maxk_{\sep}^2$ which satisfies the desired property.
Then, we consider all the children $\maxk$ of $\maxk_{\sep}^2$, $\maxk \neq \maxk_{\sep}^3$, such that $\maxk \cap \maxk_{\sep}^2 = \sep$.
We replace the edge $\maxk\maxk_{\sep}^2$ by $\maxk\maxk_{\sep}^3$.
\end{enumerate}
Before proving correctness of this above algorithm, we want to emphasize two of its main invariants.
In what follows, let $\sep \in \SEP{G}$ be of size at least three.
\begin{enumerate}[label=\textbullet,ref=I\theenumi]
\item\label{invariant-1}({\it Inv.~\ref{invariant-1}}) Assume that at the time we considered $\sep$ during Phase 2, $\sep$ was $\cliquetree{G}$-convergent. 
We observe that our choice for $\maxk_{\sep}^2$ preserved this property. In particular, $\sep$ is $\cliquetree{G}$-convergent in the final clique-tree $\cliquetree{G}$ that we output.
\item\label{invariant-2}({\it Inv.~\ref{invariant-2}}) If $\maxk_{\sep}^3 \neq \maxk_{\sep}^2$ (or equivalently, we modify $\edgeset{G}{\sep}$ during Phase 3) then, we also claim that $\sep$ is $\cliquetree{G}$-convergent in the final clique-tree $\cliquetree{G}$ that we output. For that, by the definition of Phase 3 it suffices to prove that the edge between $\maxk_{\sep}^2$ and its parent node (if any) cannot be labelled by $\sep$. According to Invariant~\ref{invariant-1}, $\sep$ was not $\cliquetree{G}$-convergent when it was considered during Phase 2 (otherwise, it should have stayed so until we considered $\sep$ during Phase 3, and so we could have not modified $\edgeset{G}{\sep}$). Therefore by the definition of Phase 2, $\maxk_{\sep}^2$ is the least common ancestor in $\cliquetree{G}$ of all the maximal cliques that are incident to an edge in $\superedgeset{G}{\sep}$, thereby proving the claim.
\end{enumerate}
We will often use these above two invariants in the remaining of the proof.
We finally prove that both properties of the theorem are true for $\cliquetree{G}$.
\paragraph{Proof of Property~\ref{pty-fct:1}.}
Suppose for the sake of contradiction $\sep \in \SEP{G}$ is weakly $\cliquetree{G}$-convergent but not $\cliquetree{G}$-convergent, and $|\sep| \geq 3$.
For clarity, we divide the proof into small claims.

\begin{myclaim}\label{claim:final-1}
After Phase 2 was completed, $\sep$ was not $\cliquetree{G}$-convergent. 
\end{myclaim}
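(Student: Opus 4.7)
I would argue by contradiction: assume that $\sep$ is $\cliquetree{G}$-convergent at the end of Phase 2, and show this property must survive Phase 3, contradicting the standing hypothesis that $\sep$ is not $\cliquetree{G}$-convergent in the final clique-tree.

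The central step is an induction on the Phase 3 step count, establishing the following invariant: \emph{at every point during Phase 3, every minimal separator that was $\cliquetree{G}$-convergent at the end of Phase 2 remains $\cliquetree{G}$-convergent}. For the inductive step at step $k+1$, which processes some $\sep^*$, there are two cases. If $\sep^*$ was convergent at the end of Phase 2, the induction hypothesis gives that $\sep^*$ is still convergent just before step $k+1$, so the algorithm performs no rerouting. If instead $\sep^*$ was not convergent at the end of Phase 2, I would observe that then no separator $\sep_i \subsetneq \sep^*$ can have been convergent at the end of Phase 2 either: any such $\sep_i$, convergent with witness $\maxk_{\sep_i}$, would force $\superedgeset{G}{\sep^*}$---a subset of the star $\superedgeset{G}{\sep_i}$ around $\maxk_{\sep_i}$---to also form a star around $\maxk_{\sep_i}$, making $\sep^*$ convergent. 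Consequently, the only convergent-at-end-of-Phase-2 separators $\sep_i$ that could be affected satisfy $\sep_i \not\subsetneq \sep^*$, so $\sep^*$-edges lie outside $\superedgeset{G}{\sep_i}$, and the rerouting (which relabels edges $\maxk\maxk_{\sep^*}^2$ to $\maxk\maxk_{\sep^*}^3$ and, by the running-intersection argument from the proof of Theorem~\ref{thm:flat-clique-tree}, preserves the label $\sep^*$) does not disturb any star witnessing convergence of an $\sep_i$.

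Applying the invariant after all of Phase 3 yields that $\sep$ remains $\cliquetree{G}$-convergent in the output clique-tree, contradicting the standing hypothesis. The main technical point is the implication ``$\sep_i \subsetneq \sep^*$ convergent at end of Phase 2 $\Longrightarrow$ $\sep^*$ convergent at end of Phase 2,'' which rests on the set inclusion $\superedgeset{G}{\sep^*} \subseteq \superstrictedgeset{G}{\sep_i} \subseteq \superedgeset{G}{\sep_i}$ combined with the star structure already ensured by convergence of $\sep_i$; verifying that Phase 3 relabels edges to $\sep^*$ rather than a strictly larger separator is the only geometric subtlety, and follows from the same path-contraction argument used earlier in the paper.
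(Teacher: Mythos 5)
Your argument is correct and matches the paper's: the paper's compressed proof relies on exactly the observation you isolate --- if $\sep$ were $\cliquetree{G}$-convergent at the end of Phase~2, then so would every $\sep' \supseteq \sep$, hence no edge of $\superedgeset{G}{\sep}$ can be rerouted during Phase~3 --- together with the implicit induction over Phase~3 steps that you spell out. The only cosmetic difference is that you establish the slightly more general invariant that \emph{all} separators that were convergent at the end of Phase~2 remain convergent throughout Phase~3, rather than restricting attention to the supersets of $\sep$; the mechanism (the inclusion $\superedgeset{G}{\sep^*}\subseteq\superstrictedgeset{G}{\sep_i}$ for $\sep_i\subsetneq\sep^*$, plus label preservation under Phase~3 rerouting) is identical.
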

\begin{proofclaim}
Otherwise, every minimal separator $\sep' \supset \sep$ should also be $\cliquetree{G}$-convergent after Phase 2.
But then, the edge-set $\superedgeset{G}{\sep}$ could not be modified during Phase 3.
This would imply $\sep$ is $\cliquetree{G}$-convergent, a contradiction.
\end{proofclaim}
\begin{myclaim}\label{claim:final-2}
From the time we considered $\sep$ during Phase 2 until the end of the algorithm, $\maxk_{\sep}^2$ is the least common ancestor of all maximal cliques incident to an edge in $\superedgeset{G}{\sep}$.
\end{myclaim}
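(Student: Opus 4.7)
The plan is to proceed by induction on the sequence of operations performed by the algorithm from the moment $\maxk_\sep^2$ is fixed during Phase 2 until the end. For the base case I would appeal to the standard rooted-tree observation that the closest-to-root node having a given incidence property is automatically the least common ancestor of all nodes with that property. Since $\maxk_\sep^2$ is chosen as the closest-to-root maximal clique incident to an edge in $\superedgeset{G}{\sep}$, the claim holds right after its selection.

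For the inductive step I would distinguish three families of subsequent operations: the remaining Phase 2 reattachments for separators $\sep' \ne \sep$ with $|\sep'| \le |\sep|$ (so $\sep' \not\supseteq \sep$), Phase 3 on $\sep$ itself, and Phase 3 on any other separator. The pivotal structural tool will be the running intersection property of clique-trees: for a Phase 2 move that detaches the edge $\maxk\maxk_p$ labelled by $\sep' \not\supseteq \sep$ and reattaches its subtree to $\maxk_{\sep'}^2$, no $\superedgeset{G}{\sep}$-edge can bridge the moved subtree and its complement, since such a bridge would force its label to simultaneously contain $\sep$ and be contained in $\sep'$. The $\superedgeset{G}{\sep}$-incident cliques therefore split cleanly into an internal part (inside the moved subtree) and an external part. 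Combined with the current LCA hypothesis, all such cliques are descendants of $\maxk_\sep^2$, and I would then argue: if $\maxk_\sep^2$ lies inside the moved subtree then so do all its descendants, hence all incident cliques move together and the LCA is preserved in the new tree; if $\maxk_\sep^2$ lies outside it then the external cliques are unaffected and it remains only to check that $\maxk_{\sep'}^2$ does not end up strictly above $\maxk_\sep^2$, which follows from the definition of $\maxk_{\sep'}^2$ as closest-to-root incident to $\superedgeset{G}{\sep'}$ together with the decreasing-size processing order. For Phase 3 on $\sep$ itself, the running hypothesis for the proof of Property~\ref{pty-fct:1} (combined with Claim~\ref{claim:final-1}) forces this operation to be vacuous: a nontrivial move would make $\maxk_\sep^3$ incident to every edge of $\superedgeset{G}{\sep}$, and hence make $\sep$ $\cliquetree{G}$-convergent, contradicting the standing hypothesis. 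For Phase 3 on any other separator, the same boundary argument from the inductive step ensures the downward reattachment remains inside the subtree rooted at $\maxk_\sep^2$.

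The hard part will be the case where $\maxk_\sep^2$ lies outside the moved Phase 2 subtree yet some $\superedgeset{G}{\sep}$-incident cliques happen to be internal to it: the proof must then rule out that $\maxk_{\sep'}^2$ ends up strictly above $\maxk_\sep^2$, for otherwise the internal cliques would be dragged out of the subtree rooted at $\maxk_\sep^2$ and the LCA would shift. Resolving this will rely on the fact that Phase 2 has already processed every $\sep'' \supseteq \sep$ before reaching $\sep'$, so any $\superedgeset{G}{\sep}$-edge inside the moved subtree is incident to some $\maxk_{\sep''}^2$ already relocated to lie within the subtree rooted at $\maxk_\sep^2$, which in turn forces $\maxk_{\sep'}^2$ to lie in the same subtree.
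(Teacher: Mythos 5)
Your base case has a genuine gap. You assert that $\maxk_\sep^2$ is chosen as the closest-to-root maximal clique incident to an edge in $\superedgeset{G}{\sep}$, but that is the Phase~2 rule only when $\sep$ is \emph{not} $\cliquetree{G}$-convergent at the time it is processed. Under the standing hypothesis $|\sep|\ge 3$, if $\sep$ were $\cliquetree{G}$-convergent at that time then the second bullet in the definition of $\maxk_\sep^2$ becomes active, constraining $\maxk_\sep^2$ to also be incident to all edges in $\superstrictedgeset{G}{\sep}$, which can push it strictly below the LCA (the paper even remarks it is then $\maxk_\sep^1$ or its parent). So before the ``standard rooted-tree observation'' can be applied, you must argue that $\sep$ was \emph{not} $\cliquetree{G}$-convergent when Phase~2 considered it. The paper does exactly this as the first step of its proof, by combining Claim~\ref{claim:final-1} (after Phase~2, $\sep$ is not $\cliquetree{G}$-convergent) with Invariant~\ref{invariant-1} (processing a size-$\ge 3$ separator in Phase~2 preserves $\cliquetree{G}$-convergence), so that had $\sep$ been convergent when processed it would have remained so, contradicting Claim~\ref{claim:final-1}. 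You invoke Claim~\ref{claim:final-1} only in your Phase~3 step; it is needed first in the base case.

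Two secondary points. The ``hard part'' you flag --- $\maxk_\sep^2$ outside the moved Phase~2 subtree yet some $\superedgeset{G}{\sep}$-incident clique inside it --- cannot occur: the sole bridging edge is $\maxk\maxk_p$ labelled $\sep'$, and by the running intersection property any $\superedgeset{G}{\sep}$-incident clique inside the moved subtree together with $\maxk_\sep^2$ outside would force $\sep\subseteq\maxk\cap\maxk_p=\sep'$, contradicting $\sep'\not\supseteq\sep$ under the decreasing-size order. So your auxiliary argument there is addressing a vacuous case, and the phrase ``split cleanly into an internal part and an external part'' is misleading --- one side is always empty. Also, ``closest-to-root node with a property is automatically the LCA'' is false for arbitrary properties; here it holds because the $\superedgeset{G}{\sep}$-incident cliques form a connected subtree of $\cliquetree{G}$ (all contain $\sep$). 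Apart from these, your inductive sweep over the remaining Phase~2 and Phase~3 moves is a heavier but valid alternative to the paper, which, after the base case, simply invokes Invariant~\ref{invariant-2} (edges in $\edgeset{G}{\sep}$ are untouched during Phase~3) plus the observation that edges of $\superstrictedgeset{G}{\sep}$ can only migrate to descendants of $\maxk_\sep^2$.
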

\begin{proofclaim}
We recall that whenever we consider a minimal separator of size at least three during Phase 2, we preserve the property of being $\cliquetree{G}$-convergent (Invariant~\ref{invariant-1}). 
Therefore, Claim~\ref{claim:final-1} implies that at the time we considered $\sep$ during Phase 2, $\sep$ was not $\cliquetree{G}$-convergent.
In particular, we chose $\maxk_{\sep}^2$ to be the least common ancestor of all maximal cliques incident to an edge in $\superedgeset{G}{\sep}$.
This must stay so until the end of the algorithm because during Phase 3, the edges in $\edgeset{G}{\sep}$ were not modified (Invariant~\ref{invariant-2}) and all the edges in $\superstrictedgeset{G}{\sep}$ could only be made incident to a descendant of $\maxk_{\sep}^2$.
\end{proofclaim}
\begin{myclaim}\label{claim:final-3}
After Phase 2 was completed, $\maxk_{\sep}^2$ was not incident to any edge in $\superstrictedgeset{G}{\sep}$.
\end{myclaim}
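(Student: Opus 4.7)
The plan is to argue by contradiction: suppose after Phase~2 that $\maxk_{\sep}^2$ is incident to some edge $e \in \edgeset{G}{\sep'}$ with $\sep' \supsetneq \sep$. Because $|\sep'| > |\sep|$, Phase~2 has processed $\sep'$ before $\sep$, so after Phase~2 all edges in $\edgeset{G}{\sep'}$ share $\maxk_{\sep'}^2$ as a common endpoint; in particular both $\maxk_{\sep'}^2$ and $\maxk_{\sep}^2$ are endpoints of $e$.

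The first reduction is to identify $\maxk_{\sep'}^2$ with $\maxk_{\sep}^2$. Indeed, if $\maxk_{\sep'}^2 \neq \maxk_{\sep}^2$ then $e$ is the edge $\maxk_{\sep}^2 \maxk_{\sep'}^2$, and by Claim~\ref{claim:final-2} applied to $e \in \edgeset{G}{\sep'} \subseteq \superedgeset{G}{\sep}$ the node $\maxk_{\sep'}^2$ must be a descendant of $\maxk_{\sep}^2$; adjacency then forces it to be a child, so $e$ is the parent edge of $\maxk_{\sep'}^2$. But the parent of $\maxk_{\sep'}^2$ would then be incident to an edge of $\superedgeset{G}{\sep'}$, contradicting the ``closest-to-the-root'' choice of $\maxk_{\sep'}^2$ made during Phase~2.

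From now on $\maxk_{\sep'}^2 = \maxk_{\sep}^2$, so $\maxk_{\sep}^2$ is incident to every edge of $\edgeset{G}{\sep'}$ after Phase~2. I would then invoke the weak $\cliquetree{G}$-convergence of $\sep$ to fix a maximal clique $\maxk^{\star}$ incident to every edge of $\superstrictedgeset{G}{\sep}$ in the final tree. Since $\maxk^{\star}$ is an endpoint of at least one edge of $\superedgeset{G}{\sep}$, Claim~\ref{claim:final-2} forces $\maxk^{\star}$ to be a (possibly improper) descendant of $\maxk_{\sep}^2$. If $\maxk^{\star} = \maxk_{\sep}^2$, then a second application of Claim~\ref{claim:final-2} rules out the possibility that any edge of $\superstrictedgeset{G}{\sep}$ was displaced by a Phase~3 operation while remaining incident to $\maxk_{\sep}^2$, so $\maxk_{\sep}^2$ is already incident to every edge of $\superstrictedgeset{G}{\sep}$ right after Phase~2; combined with the Phase-2 property that $\maxk_{\sep}^2$ is incident to every edge of $\edgeset{G}{\sep}$, this yields $\cliquetree{G}$-convergence of $\sep$ after Phase~2, contradicting Claim~\ref{claim:final-1}. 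Otherwise $\maxk^{\star}$ is a strict descendant of $\maxk_{\sep}^2$; by tracking how Phase~3 for $\sep'$ (and possibly for intermediate separators of size in $(|\sep|, |\sep'|)$) can re-attach $e$, and using that $\maxk^{\star}$ must remain an endpoint of the corresponding edge in the final tree, I would show that some child of $\maxk_{\sep}^2$ is incident to every edge of $\superstrictedgeset{G}{\sep}$ at the moment Phase~3 processes $\sep$. But then Phase~3 would pick such a child as $\maxk_{\sep}^3$, making $\sep$ $\cliquetree{G}$-convergent in the final tree and contradicting the standing assumption of the proof of Property~\ref{pty-fct:1}.

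The main obstacle I anticipate is this last implication: guaranteeing that the weak-convergence witness $\maxk^{\star}$ -- or some close surrogate -- sits at depth exactly one below $\maxk_{\sep}^2$ when Phase~3 processes $\sep$, rather than at a deeper level. Phase~3 operations for separators of size strictly between $|\sep|$ and $|\sep'|$ can rewire parent/child relationships among descendants of $\maxk_{\sep}^2$, and I expect the step to hinge on a careful bookkeeping of the labels of the edges incident to $\maxk_{\sep}^2$ together with repeated uses of Claim~\ref{claim:final-2}.
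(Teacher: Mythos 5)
Your proposal reconstructs the paper's argument correctly, including some work the paper leaves implicit, and you have put your finger on exactly the step where the paper's own proof is terse. To recap where you stand: the paper's four-sentence proof says (i) $\maxk_{\sep}^2$ is the LCA of all cliques incident to $\superedgeset{G}{\sep}$ (Claim~\ref{claim:final-2}), (ii) hence the only way for $\sep$ to stay weakly but not fully $\cliquetree{G}$-convergent is that every edge of $\superstrictedgeset{G}{\sep}$ that was at $\maxk_{\sep}^2$ after Phase~2 was pushed by Phase~3 onto one and the same child $\maxk$, and (iii) Phase~3 should then have set $\maxk_{\sep}^3 = \maxk$, making $\sep$ $\cliquetree{G}$-convergent. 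Your ``first reduction'' ($\maxk_{\sep'}^2 = \maxk_{\sep}^2$) is a correct auxiliary and is implicitly used by the paper as well, since a Phase~3 operation on $\sep'$ detaches $e$ from $\maxk_{\sep}^2$ only if $\maxk_{\sep}^2$ plays the role of $\maxk_{\sep'}^2$. Your Case~$\maxk^{\star} = \maxk_{\sep}^2$ is also fine (one small nit: it is the mechanics of Phase~3 -- it replaces the endpoint $\maxk_{\sep'}^2$ by a strictly different node $\maxk_{\sep'}^3$ -- rather than Claim~\ref{claim:final-2} per se that rules out an edge being ``displaced while staying incident'').

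The obstacle you anticipate is real, but it can be closed and in fact the deeper-descendant scenario you worry about never occurs. The key is that $\superstrictedgeset{G}{\sep}$ is frozen before Phase~3 reaches $\sep$, and each edge in it is a fixed pair of cliques that Phase~3 touches at most once (when its own label is processed). Take an offending edge $e = \maxk\,\maxk_{\sep}^2 \in \edgeset{G}{\sep'}$. Either Phase~3 never rewires $e$, in which case the tree edge $\maxk\,\maxk_{\sep}^2$ persists to the end and $\maxk^{\star}$ must be one of its two endpoints; or Phase~3 rewires $e$ to $\maxk\,\maxk_{\sep'}^3$, and then the tree edge $\maxk_{\sep}^2\maxk_{\sep'}^3$ -- which exists at that moment, has label $\maxk_{\sep}^2 \cap \maxk_{\sep'}^3 \supseteq \sep' \supsetneq \sep$ (so it lies in $\superstrictedgeset{G}{\sep}$), and is never touched again because Phase~3 has already finished with all labels of that size -- persists to the end, forcing $\maxk^{\star} \in \{\maxk_{\sep}^2, \maxk_{\sep'}^3\}$. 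Since $\maxk^{\star} = \maxk_{\sep}^2$ is already excluded, in every case $\maxk^{\star}$ equals some node ($\maxk$ or $\maxk_{\sep'}^3$) which, by the very persistence of the corresponding tree edge to $\maxk_{\sep}^2$, remains a \emph{child} of $\maxk_{\sep}^2$ for the rest of the algorithm. At the moment Phase~3 reaches $\sep$ this child is thus a valid candidate for $\maxk_{\sep}^3$, and since $\sep$ is not $\cliquetree{G}$-convergent at that moment (else it would stay so, contradicting the standing assumption) and $|\sep|\ge 3$, Phase~3 chooses some such child and moves $\edgeset{G}{\sep}$ onto it, making $\sep$ $\cliquetree{G}$-convergent -- the contradiction you were after.
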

\begin{proofclaim}
Suppose for the sake of contradiction that $\maxk_{\sep}^2$ was incident to such an edge.
We recall that $\maxk_{\sep}^2$ is the least common ancestor of all maximal cliques incident to an edge in $\superedgeset{G}{\sep}$ (Claim~\ref{claim:final-2}).
Therefore, the only possibility left for $\sep$ being weakly $\cliquetree{G}$-convergent but not $\cliquetree{G}$-convergent is that, during Phase $3$, all edges in $\superstrictedgeset{G}{\sep}$ that were incident to $\maxk_{\sep}^2$ were made incident to one of its children $\maxk$.
But then, during Phase 3 all edges in $\edgeset{G}{\sep}$ should have been made incident to $\maxk$.
This contradicts our assumption that $\sep$ is not $\cliquetree{G}$-convergent.
\end{proofclaim}
In what follows, recall that $\sep^1$ is a largest minimal separator containing $\sep$.
Since $\sep$ is weakly $\cliquetree{G}$-convergent but not $\cliquetree{G}$-convergent, we have $\sep \neq \sep^1$.
\begin{myclaim}\label{claim:final-4}
$\maxk_{\sep^1}^3 = \maxk_{\sep^1}^2 = \maxk_{\sep}^1$.
\end{myclaim}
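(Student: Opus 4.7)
I would split the chain of equalities $\maxk_{\sep^1}^3 = \maxk_{\sep^1}^2 = \maxk_{\sep}^1$ into the two independent assertions (i)~$\maxk_{\sep^1}^3 = \maxk_{\sep^1}^2$ and (ii)~$\maxk_{\sep^1}^2 = \maxk_{\sep}^1$, and treat them in order.

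For (i), the first step is to observe that $\sep^1$ is inclusion-wise maximal in $\SEP{G}$. Indeed, any strict superset $\sep' \in \SEP{G}$ of $\sep^1$ would still contain $\sep$ and satisfy $|\sep'| > |\sep^1|$, contradicting the choice of $\sep^1$ as a largest minimal separator containing $\sep$. Hence $\superstrictedgeset{G}{\sep^1} = \emptyset$, so Phase~1 processing of $\sep^1$ already makes $\sep^1$ $\cliquetree{G}$-convergent (with centre $\maxk_{\sep^1}^1$), and by Invariant~\ref{invariant-1} this convergence is preserved throughout Phase~2. Since $|\sep^1| \geq |\sep| \geq 3$, when Phase~3 considers $\sep^1$ the separator is already $\cliquetree{G}$-convergent, so the algorithm sets $\maxk_{\sep^1}^3 = \maxk_{\sep^1}^2$ by convention.

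For (ii), I would trace the lex-max choice made during Phase~1 processing of $\sep$. At that moment, every separator $\sep^i$ with $i < p$ has already been processed by Phase~1, so all edges of $\edgeset{G}{\sep^i}$ are incident to $\maxk_{\sep^i}^1$. In particular the first coordinate of $\overrightarrow{v_{\sep}}(\maxk_{\sep^1}^1)$ equals~$1$, and therefore any lex-max candidate $\maxk_{\sep}^1$ must also have first coordinate~$1$: this restricts it to either $\maxk_{\sep^1}^1$ itself or the ``other endpoint'' of some edge in $\edgeset{G}{\sep^1}$. I expect to argue, via a structural analysis of the subtrees induced by the cliques containing each $\sep^i$ in the current clique-tree, that $\maxk_{\sep^1}^1$ dominates in the lex order because it is incident to at least as many $\edgeset{G}{\sep^i}$'s as any competing candidate, forcing $\maxk_{\sep}^1 = \maxk_{\sep^1}^1$. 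The final identification $\maxk_{\sep^1}^2 = \maxk_{\sep^1}^1$ should then follow from matching the Phase~2 ``closest-to-root'' criterion with the position of $\maxk_{\sep^1}^1$ in the rooted tree, possibly combined with the hypothesis that $\sep$ is weakly $\cliquetree{G}$-convergent in the final clique-tree.

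The hard part is the lex-max dominance argument in (ii): one must carefully book-keep how the Phase~1 centres $\maxk_{\sep^i}^1$ for the larger separators relate to $\maxk_{\sep^1}^1$, exploiting the fact that edges of $\edgeset{G}{\sep^1}$ are never relabeled after $\sep^1$ has been processed and the nesting structure of the subtrees of cliques containing the various $\sep^i$'s (which is controlled by Theorem~\ref{thm:clique-tree-pties}). The boundary case $|\edgeset{G}{\sep^1}| = 1$ should be handled separately, since both endpoints of that unique edge are then candidates of equal lex-rank and the choice is effectively pinned down only by the Phase~2 root-closeness rule.
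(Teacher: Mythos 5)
Your part (i) is correct and matches the paper: $\sep^1$ is inclusion-wise maximal, so $\superstrictedgeset{G}{\sep^1}=\emptyset$ and Phase~3 takes the default $\maxk_{\sep^1}^3 = \maxk_{\sep^1}^2$.

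Part (ii) is where the proposal breaks. You correctly deduce that $\maxk_{\sep}^1$ must have first coordinate $1$ in $\overrightarrow{v_{\sep}}$, i.e.\ it is incident to some edge of $\edgeset{G}{\sep^1}$ after Phase~1; the paper says exactly this. But you then try to strengthen this to the identity $\maxk_{\sep}^1 = \maxk_{\sep^1}^1$ via a ``lex-dominance'' argument, namely that $\maxk_{\sep^1}^1$ is incident to at least as many of the sets $\edgeset{G}{\sep^i}$ as any competitor. There is no basis for this: $\maxk_{\sep^1}^1$ is selected during Phase~1 for $\sep^1$ by maximizing $\overrightarrow{v_{\sep^1}}$, which is a one-coordinate vector because $\sep^1$ is inclusion-wise maximal, so the choice is effectively arbitrary among the cliques incident to a $\sep^1$-labelled edge and carries no information about incidences with $\edgeset{G}{\sep^2},\ldots,\edgeset{G}{\sep^{p-1}}$. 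A leaf of the $\sep^1$-star can beat the star's centre in the lex order of $\overrightarrow{v_{\sep}}$, and then $\maxk_{\sep}^1\neq \maxk_{\sep^1}^1$; indeed the claim as stated does not assert $\maxk_{\sep}^1 = \maxk_{\sep^1}^1$ and the proof should not rely on it. You flag the lex-dominance step yourself as ``the hard part'' and leave it unargued, so on top of being on a questionable track the proposal is genuinely incomplete at its crux.

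What the paper actually does avoids $\maxk_{\sep^1}^1$ altogether in part (ii). Because $\maxk_{\sep}^1$ lies in the $\sep^1$-star, $\maxk_{\sep^1}^2$ (the topmost clique incident to a $\sep^1$-edge at Phase~2 time) is an ancestor of $\maxk_{\sep}^1$ or equal to it. If the ancestry were strict, then Claim~\ref{claim:final-2} forces $\maxk_{\sep}^2$ to be an ancestor of $\maxk_{\sep^1}^2$, and one deduces that $\maxk_{\sep}^2$ is incident to an edge of $\superstrictedgeset{G}{\sep}$ after Phase~2, contradicting Claim~\ref{claim:final-3}. These two earlier claims --- themselves proved under the standing contradiction hypothesis that $\sep$ is weakly but not $\cliquetree{G}$-convergent --- are the ingredients your proposal never invokes, and they are what make the argument close. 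The intended proof is a short contradiction via the tree's ancestor structure, not a direct computation of which clique wins the lexicographic tie-break.
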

\begin{proofclaim}
By maximality of $\sep^1$ there is no minimal separator of $G$ which strictly contains $\sep^1$.
In particular, it easily follows from this observation that we have $\maxk_{\sep^1}^3 = \maxk_{\sep^1}^2$.
So, we will only prove $\maxk_{\sep^1}^2 = \maxk_{\sep}^1$.
By maximality of $\overrightarrow{v_{\sep}}(\maxk_{\sep}^1)$, $\maxk_{\sep}^1$ was incident to an edge in $\edgeset{G}{\sep^1}$ after Phase 1 was completed.
Therefore, at the time we considered $\sep^1$ during Phase 2, we chose a maximal clique $\maxk_{\sep^1}^2$ such that: either $\maxk_{\sep^1}^2 = \maxk_{\sep}^1$ or $\maxk_{\sep^1}^2$ was a strict ancestor of $\maxk_{\sep}^1$.
Suppose by contradiction $\maxk_{\sep^1}^2$ was a strict ancestor of $\maxk_{\sep}^1$ (otherwise, we are done).
By Claim~\ref{claim:final-2}, $\maxk_{\sep}^2$ is an ancestor of $\maxk_{\sep^1}^2$.
But then, $\maxk_{\sep}^2$ should have been incident to an edge in $\superstrictedgeset{G}{\sep}$ after Phase 2 was completed.
The latter would contradict Claim~\ref{claim:final-3}.
\end{proofclaim}
\begin{myclaim}\label{claim:final-5}
From the time we considered $\sep$ during Phase 2 until the end of the algorithm, $\maxk_{\sep}^2$  is the father node of $\maxk_{\sep}^1$.
Moreover, $\maxk_{\sep}^1 \cap \maxk_{\sep}^2 = \sep$.
\end{myclaim}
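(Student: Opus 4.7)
The plan is to show that, in the final rooted clique-tree, $\maxk_{\sep}^1$ is a child of $\maxk_{\sep}^2$ connected by an edge of label $\sep$. First I would combine Claims~\ref{claim:final-2},~\ref{claim:final-3}, and~\ref{claim:final-4} to place $\maxk_{\sep}^1$ as a \emph{strict} descendant of $\maxk_{\sep}^2$: Claim~\ref{claim:final-2} puts $\maxk_{\sep}^1$ in the subtree rooted at $\maxk_{\sep}^2$, while Claim~\ref{claim:final-4} makes $\maxk_{\sep}^1$ incident to every edge of $\edgeset{G}{\sep^1} \subseteq \superstrictedgeset{G}{\sep}$, an incidence that Claim~\ref{claim:final-3} forbids for $\maxk_{\sep}^2$ itself. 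I would then let $\maxk_{\sep}^2 = A_0, A_1, \ldots, A_k = \maxk_{\sep}^1$ be the downward path in the final tree; since cliques containing $\sep$ induce a subtree of $\cliquetree{G}$, every $A_j$ contains $\sep$ and each edge $A_{j-1}A_j$ is labeled by some minimal separator in $\{\sep^1,\ldots,\sep^p = \sep\}$.

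Two facts then drive the argument. (i)~By Phase~2's processing of $\sep$ together with Invariant~\ref{invariant-2} (using that $\sep$ is not $\cliquetree{G}$-convergent), every edge of $\edgeset{G}{\sep}$ is still incident to $A_0$ at the end of the algorithm. (ii)~By weak convergence of $\sep$, every edge of $\superstrictedgeset{G}{\sep}$ shares a common endpoint $\maxk_{\sep}$. Combining (i) with Claim~\ref{claim:final-3} immediately forces $A_0 A_1$ to be labeled $\sep$: otherwise its label would be some $\sep^{i_1} \neq \sep$, placing $A_0$ incident to an edge of $\superstrictedgeset{G}{\sep}$ and contradicting Claim~\ref{claim:final-3}. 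This delivers the identity $\maxk_{\sep}^1 \cap \maxk_{\sep}^2 = \sep$ as soon as $k = 1$ is proved. For $j \geq 2$, fact~(i) also excludes $A_{j-1}A_j$ being labeled $\sep$ (its endpoints $\{A_{j-1},A_j\}$ avoid $A_0$), so fact~(ii) forces $\maxk_{\sep} \in \{A_{j-1},A_j\}$. An endpoint-pigeonhole then rules out $k \geq 4$, since $\{A_1,A_2\}$ and $\{A_3,A_4\}$ would both have to contain $\maxk_{\sep}$ despite being disjoint. The cases $k = 3$ and $k = 2$ with $\maxk_{\sep} = A_1$ fall to one uniform argument: weak convergence together with Claim~\ref{claim:final-4} pin both endpoints of every edge of $\edgeset{G}{\sep^1}$ inside a fixed two-element set, collapsing $\edgeset{G}{\sep^1}$ to a unique edge whose closer-to-root endpoint then contradicts the Phase~2 selection $\maxk_{\sep^1}^2 = \maxk_{\sep}^1$ given by Claim~\ref{claim:final-4}.

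I expect the main obstacle to be the residual sub-case $k = 2$ with $\maxk_{\sep} = A_2 = \maxk_{\sep}^1$: the path is then $A_0, A_1, A_2$ with $A_0 A_1$ labeled $\sep$ and $A_1 A_2$ labeled $\sep^i$ for some $1 < i < p$ (the value $i = 1$ being excluded by the closest-to-root argument applied to $\sep^1$ exactly as in the preceding cases). To rule this configuration out, I would trace the Phase~2 and Phase~3 processing of $\sep^i$: using $\superedgeset{G}{\sep^i} \subseteq \superedgeset{G}{\sep}$ together with Claims~\ref{claim:final-2} and~\ref{claim:final-3} (applied in the final tree, after Phase~2 has processed $\sep$), $\maxk_{\sep^i}^2$ must be a strict descendant of $A_0$; carefully bookkeeping how $\edgeset{G}{\sep^i}$-edges are potentially relocated during Phase~3 for $\sep^i$, I would force $\maxk_{\sep^i}^2 = A_1$ as the last candidate closer to the root than $A_2$ compatible with the final incidence of $A_1 A_2$, and then derive a direct contradiction with the Phase~2 closest-to-root selection rule for $\sep^i$. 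Combined with the labeling argument for $A_0 A_1$ from the previous paragraph, this yields $k = 1$ and $\maxk_{\sep}^1 \cap \maxk_{\sep}^2 = \sep$, completing the proof of the claim.
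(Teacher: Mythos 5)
Your proposal diverges substantially from the paper's argument, and the divergence is exactly what creates the gap you yourself flag. The paper's proof is three lines: by definition of Phase~2, $\maxk_{\sep}^2$ is incident to some edge of $\superedgeset{G}{\sep}$; by Claim~\ref{claim:final-3} it is incident to no edge of $\superstrictedgeset{G}{\sep}$; hence it is incident to an edge of $\edgeset{G}{\sep}$. At the moment $\sep$ is considered in Phase~2, Phase~1's invariant that \emph{every} edge of $\edgeset{G}{\sep}$ is incident to $\maxk_{\sep}^1$ is still intact (no later operation has touched $\edgeset{G}{\sep}$), so that edge has $\maxk_{\sep}^1$ as its other endpoint; equality of the two cliques is excluded by Claims~\ref{claim:final-3} and~\ref{claim:final-4}, so they are adjacent via a $\sep$-labeled edge, which then persists by Invariant~\ref{invariant-2}. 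You never invoke the Phase~1 invariant tying $\edgeset{G}{\sep}$ to $\maxk_{\sep}^1$; your fact~(i) only ties $\edgeset{G}{\sep}$ to $A_0=\maxk_{\sep}^2$ in the \emph{final} tree. That is why you are forced into a case analysis on the length $k$ of the $A_0$--$A_k$ path instead of getting $k=1$ immediately.

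The concrete gap is the residual case $k=2$ with $\maxk_{\sep}=A_2=\maxk_{\sep}^1$, which you acknowledge but do not close. Your proposed route (forcing $\maxk_{\sep^i}^2=A_1$ and contradicting the closest-to-root rule for $\sep^i$) does not obviously yield a contradiction: since $\sep^i\not\subseteq A_0$, all cliques incident to edges of $\superedgeset{G}{\sep^i}$ lie in the subtree below $A_1$, and choosing $\maxk_{\sep^i}^2$ there is perfectly consistent with the selection rule; nothing in your final-tree facts prevents this configuration. The information that kills it is precisely the one you dropped: the $\sep$-labeled edge incident to $\maxk_{\sep}^2$ at the time $\sep$ is processed already has $\maxk_{\sep}^1$ as its other end. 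A secondary, more minor point: the claim asserts the father relation \emph{from the time $\sep$ is considered in Phase~2 onward}, so an argument conducted purely in the final tree would in any case need to be supplemented by a persistence argument (which the paper gets for free from Invariant~\ref{invariant-2}). Your steps ruling out $k\geq 4$ and the $k=3$ / $k=2$-with-$\maxk_{\sep}=A_1$ cases are essentially sound (modulo spelling out that the parent/child orientation of the surviving $\edgeset{G}{\sep^1}$-edge cannot flip between Phase~2's processing of $\sep^1$ and the end), but the proof as a whole is incomplete.
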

\begin{proofclaim}
By Claim~\ref{claim:final-3}, $\maxk_{\sep}^2$ cannot be incident to any edge in $\superstrictedgeset{G}{\sep}$.
We also know that all edges in $\edgeset{G}{\sep}$ are incident to $\maxk_{\sep}^1$.
Thus, either $\maxk_{\sep}^1 = \maxk_{\sep}^2$ or $\maxk_{\sep}^2$ is the father node of $\maxk_{\sep}^1$.
But then in the former case, $\maxk_{\sep}^2$ would be incident to an edge in $\superstrictedgeset{G}{\sep}$ (Claim~\ref{claim:final-4}), a contradiction.
As a result, $\maxk_{\sep}^2$ is the father node of $\maxk_{\sep}^1$, and then we must have $\maxk_{\sep}^1 \cap \maxk_{\sep}^2 = \sep$.
\end{proofclaim}
Finally, since we did not modify $\edgeset{G}{\sep}$ during Phase 3 (cf. Invariant~\ref{invariant-2}), not all edges in $\superstrictedgeset{G}{\sep}$ can be incident to the same child node of $\maxk_{\sep}^2$.
Then, by Claim~\ref{claim:final-5}, not all edges in $\superstrictedgeset{G}{\sep}$ can be incident to $\maxk_{\sep}^1$.
By Claim~\ref{claim:final-4} we get that all the edges in $\superstrictedgeset{G}{\sep}$ must be incident to some child $\maxk$ of $\maxk_{\sep}^1$.
\begin{myclaim}\label{claim:final-6}
There must be a second largest minimal separator $\sep^2$ that strictly contains $\sep$.
\end{myclaim}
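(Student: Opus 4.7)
The plan is to derive a contradiction by combining several of the preceding claims. Suppose toward contradiction that $\sep^1$ is the only minimal separator of $G$ that strictly contains $\sep$, i.e. $p = 2$ in the listing $\sep^1, \sep^2, \ldots, \sep^p$. Then we have the identity $\superstrictedgeset{G}{\sep} = \edgeset{G}{\sep^1}$, and the goal is to locate every edge in this set at a common vertex which is already a child of $\maxk_{\sep}^2$, thereby conflicting with what was just established in the lines above the claim.

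Concretely, first I would invoke Claim~\ref{claim:final-4}, which says $\maxk_{\sep^1}^3 = \maxk_{\sep^1}^2 = \maxk_{\sep}^1$; in particular the final clique-tree satisfies the stronger property that all edges in $\edgeset{G}{\sep^1}$ are incident to $\maxk_{\sep}^1$. Combined with our assumption $\superstrictedgeset{G}{\sep} = \edgeset{G}{\sep^1}$, this means every edge of $\superstrictedgeset{G}{\sep}$ is incident to $\maxk_{\sep}^1$. Next, by Claim~\ref{claim:final-5}, $\maxk_{\sep}^1$ is a child of $\maxk_{\sep}^2$. So every edge in $\superstrictedgeset{G}{\sep}$ is incident to the single child $\maxk_{\sep}^1$ of $\maxk_{\sep}^2$.

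This directly contradicts the assertion established just before the statement of Claim~\ref{claim:final-6}, namely that not all edges in $\superstrictedgeset{G}{\sep}$ can be incident to the same child of $\maxk_{\sep}^2$ (an assertion itself deduced from Invariant~\ref{invariant-2} and the fact that $\edgeset{G}{\sep}$ was not touched during Phase~3). Hence $p \geq 3$, and a second largest minimal separator $\sep^2 \supsetneq \sep$ must exist.

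I do not expect a serious obstacle here: all of the work has already been done in Claims~\ref{claim:final-1}--\ref{claim:final-5}, and the present claim is essentially a bookkeeping step reconciling the chain of inclusions of minimal separators above $\sep$ with the child-structure imposed by Phase~3. The only thing to be careful about is not to confuse ``$\maxk_{\sep}^1$ is incident to all edges of $\edgeset{G}{\sep^1}$'' (which we have) with ``$\maxk_{\sep}^1$ is incident to all edges of $\superstrictedgeset{G}{\sep}$'' (which we do not have in general, and which fails precisely when a further separator $\sep^2$ intervenes).
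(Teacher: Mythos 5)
Your proof is correct and follows essentially the same route as the paper: both argue by contradiction that if $\sep^1$ were the only minimal separator strictly containing $\sep$, then $\superstrictedgeset{G}{\sep} = \edgeset{G}{\sep^1}$ would be entirely incident to $\maxk_{\sep}^1 = \maxk_{\sep^1}^2$ by Claim~\ref{claim:final-4}. The only cosmetic difference is the contradiction target: the paper concludes that $\sep$ would then have been $\cliquetree{G}$-convergent when considered during Phase~2 (contradicting Claim~\ref{claim:final-1} via Invariant~\ref{invariant-1}), whereas you contradict the equivalent assertion, derived just before the claim from Invariant~\ref{invariant-2} and Claim~\ref{claim:final-5}, that not all edges of $\superstrictedgeset{G}{\sep}$ can be incident to the child $\maxk_{\sep}^1$ of $\maxk_{\sep}^2$.
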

\begin{proofclaim}
Otherwise it would follow from Claim~\ref{claim:final-4} that at the time we considered $\sep$ during Phase 2, $\sep$ was $\cliquetree{G}$-convergent, thereby contradicting Claim~\ref{claim:final-1}.
\end{proofclaim}
Let $\sep^2$ be as in Claim~\ref{claim:final-6}.
Suppose for the sake of contradiction that we modified the edges in $\edgeset{G}{\sep^2}$ during Phase 3.
As we have $\maxk_{\sep^2}^3 = \maxk$, this implies that $\maxk_{\sep^2}^2$ was the father node of $\maxk$.
Then, since $\sep^1$ was considered before $\sep^2$, and that we have $\maxk \cap \maxk_{\sep}^1 = \sep^1$, we would get $\maxk_{\sep^2}^2 = \maxk_{\sep}^1$.
Recall that all edges in $\edgeset{G}{\sep^1}$ are incident to $\maxk_{\sep}^1$.
However by maximality of $|\sep^2|$, $\sep^2$ can only be contained into $\sep^1$. 
Hence, we should have not modified the edges in $\edgeset{G}{\sep^2}$ during Phase 3, that is a contradiction.
As a result, $\maxk_{\sep^2}^2 = \maxk$.

We now prove that at the time we considered $\sep^2$ during Phase 2, $\maxk$ was already incident to an edge in $\edgeset{G}{\sep^2}$.
Indeed, suppose for the sake of contradiction that this was not the case.
Then, there would exist $\sep' \supset \sep^2$ such that $\maxk$ was incident to an edge in $\edgeset{G}{\sep'}$.
Since we chose $\maxk$ instead of $\maxk_{\sep}^1$, by Claim~\ref{claim:final-4} we can always assume $\sep' \neq \sep^1$, a contradiction.

Overall, after Phase 1 was completed, $\maxk$ was incident to an edge in $\edgeset{G}{\sep^2}$.
We claim that more generally, $\maxk$ was incident to an edge in $\edgeset{G}{\sep^1}$ and $\edgeset{G}{\sep^2}$ after Phase 1 was completed.
Indeed, by Claim~\ref{claim:final-4}, we have $\maxk \neq \maxk_{\sep^1}^2$.
In this situation, since $\maxk$ was incident to an edge in $\edgeset{G}{\sep^1}$ after Phase 2, then it must be the case that $\maxk$ was already incident to such an edge after Phase 1 was completed. 

We observe that at the time that we considered $\sep$ during Phase 1, we already considered both $\sep^1$ and $\sep^2$.
So, it was already the case that $\maxk$ was incident to an edge in $\edgeset{G}{\sep^1}$ and $\edgeset{G}{\sep^2}$.
Furthermore we claim that after Phase 1, $\maxk_{\sep}^1$ was not incident to any edge in $\edgeset{G}{\sep^2}$. 
Indeed, this is because $\maxk_{\sep^2}^2$, and so $\maxk_{\sep^2}^1$, is a strict descendant of $\maxk_{\sep}^1$, and by Claim~\ref{claim:final-4} we have $\maxk_{\sep^2}^2 \cap \maxk_{\sep}^1 = \sep^1 \neq \sep^2$.
However, the latter contradicts the maximality of $\overrightarrow{v_{\sep}}(\maxk_{\sep}^1)$. 
\paragraph{Proof of Property~\ref{pty-fct:2}.}
Finally, for any $i > 0$, let us assume the existence of a minimal separator $\sep$ of $G_i$ that is contained into $\sep_i$.
In particular, $\maxk_{\sep}^3$ is a descendant of $\maxk_i$.
We only need to consider the following two cases:
\begin{itemize}
\item Case $\maxk_{\sep}^3 \neq \maxk_{\sep}^2$.
Then, as we modified $\edgeset{G}{\sep}$ during Phase 3, we must have $|\sep| \geq 3$ and $\sep$ is $\cliquetree{G}$-convergent (cf. Invariant~\ref{invariant-2}).
It implies $\maxk_{\sep}^3 = \maxk_i$ because we assume $\sep \subseteq \sep_i$.
In this situation, $\maxk_{\sep}^2 = \maxk_{p(i)}$.
Furthermore, we recall that at the time we considered $\sep$ during Phase 3, $\sep$ was not $\cliquetree{G}$-convergent. 
So there is at least one edge in $\superstrictedgeset{G}{\sep}$ to which $\maxk_{p(i)}$ is not incident.
In particular, such an edge is incident to $\maxk_i$, and so, it is labeled by a minimal separator of $G_i$.
Overall, we obtain as desired that $\sep$ is strictly contained into a minimal separator of $G_i$.
\item Case $\maxk_{\sep}^3 = \maxk_{\sep}^2$.
Note that $\maxk_{p(i)}$ is incident to an edge in $\edgeset{G}{\sep_i}$, and so to an edge in $\superedgeset{G}{\sep}$.
However, at the time we considered $\sep$ during Phase 2, we chose a $\maxk_{\sep}^2$ that was {\em not} an ancestor of $\maxk_{p(i)}$ (otherwise, this should have stayed so during Phase 3).
In particular, we chose a $\maxk_{\sep}^2$ that was not the least common ancestor of all the maximal cliques incident to an edge in $\superedgeset{G}{\sep}$.
As a result, $|\sep|\geq 3$ and $\sep$ was $\cliquetree{G}$-convergent.
We observe that in this situation, after Phase 2 was completed all the minimal separators $\sep'$ containing $\sep$ were $\cliquetree{G}$-convergent.
Therefore the set $\superedgeset{G}{\sep}$ was not modified during Phase 3.
Since $\sep \subseteq \sep_i$ and $\maxk_{\sep}^2 \neq \maxk_{p(i)}$, we so obtain $\maxk_{\sep}^2 = \maxk_{i}$.
Furthermore, $\maxk_{p(i)}$ is not incident to all edges in $\superstrictedgeset{G}{\sep}$ (otherwise, during Phase 2 we could have chosen $\maxk_{\sep}^2 = \maxk_{p(i)}$).
Overall, it implies that $\sep$ is $\cliquetree{G}$-convergent and, since $\maxk_{p(i)}$ is not incident to all edges in $\superstrictedgeset{G}{\sep}$, $\sep$ is strictly contained into a minimal separator of $G_i$.
\end{itemize}
\end{proof}

\section{Step~\ref{step-2}: A family of subtrees for the Clique-Intersections}\label{sec:ci-subtrees}

For every clique-intersection $\ci \in \MAXK{G} \cup \SEP{G}$, we aim at computing a polynomial-size representation of the family of all possible subtrees $\unlabeledsubtree{\ci}$ that we could encounter in a well-structured $4$-Steiner root of $G$. -- Our approach also works for weak minimal separators, however this is not needed for our algorithm. -- Correctness of this part mostly follows from Theorem~\ref{thm:x-free}.

In what follows, let $\cliquetree{G}$ be an arbitrary rooted clique-tree of $G$ ({\it i.e.}, not necessarily the one computed in Sec.~\ref{sec:clique-tree}).
We first focus on the case of minimal separators in Section~\ref{sec:minsep}, before extending our results to the maximal cliques that are either leaf-nodes or internal nodes of $\cliquetree{G}$ (Sections~\ref{sec:leaf-case} and~\ref{sec:super-selection}, respectively).

\subsection{Case of Minimal Separators}\label{sec:minsep}

We first prove that for any minimal separator $\sep$, the family ${\cal T}_{\sep}$ of all the potential subtrees $\unlabeledsubtree{\sep}$ has polynomial size.
Moreover, we can enumerate all possible subtrees $\unlabeledsubtree{\sep}$ in polynomial time. 
This result will be the cornerstone of all our other constructions in this part.

\begin{theorem}\label{thm:minsep}
Let $\SEP{G}$ be the set of all minimal separators in $G=(V,E)$. 
In ${\cal O}(n^5m)$-time we can construct a collection $({\cal T}_{\sep})_{\sep \in \SEP{G}}$ such that, for any well-structured $4$-Steiner root $\tree$ of $G$, and for any $\sep \in \SEP{G}$, $\unlabeledsubtree{\sep}$ is Steiner-equivalent to some subtree in ${\cal T}_{\sep}$.
\end{theorem}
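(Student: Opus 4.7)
The plan is to enumerate, for each minimal separator $\sep \in \SEP{G}$, all Steiner-equivalence classes of candidate subtrees whose real-node set is exactly $\sep$ and that satisfy the conclusions of Theorem~\ref{thm:x-free}. The starting point is that by Property~\ref{pty-ci:2} of Theorem~\ref{thm:clique-intersection} applied to a maximal clique containing $\sep$, we have $diam(\unlabeledsubtree{\sep}) \leq 3$, so up to isomorphism $\unlabeledsubtree{\sep}$ is one of four shapes: a single node, an edge, a star, or a bistar. In every shape, any non-central node is a leaf and therefore must lie in $Real(\unlabeledsubtree{\sep}) = \sep$ by Property~\ref{pty-ci:1}; hence Steiner nodes can appear \emph{only} at $\centre{\unlabeledsubtree{\sep}}$, which has size $1$ or $2$. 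This reduces each candidate subtree to the combinatorial data (i) the diameter $d$, (ii) for each of the (at most two) central positions, whether it is Steiner or occupied by a designated vertex of $\sep$, and (iii) when $d=3$, a bipartition of the remaining vertices of $\sep$ into the two branches.

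For $d \leq 2$, the enumeration is immediate: at most one Steiner center, and at most $|\sep|$ ways to pick which vertex of $\sep$ occupies the center (if any), producing $O(n)$ candidates per $\sep$. The substantive case is $d=3$, where the branch bipartition threatens a combinatorial explosion. Here I would exploit Theorem~\ref{thm:x-free} and the precomputed clique-arrangement (available by Step~\ref{step-0}) as follows. First, any internally $\sep$-constrained vertex $v \in \sep$ belongs to some clique-intersection $\sep' \subset \sep$ with $|\sep'| \geq 2$; since $diam(\unlabeledsubtree{\sep'}) < diam(\unlabeledsubtree{\sep})$ the whole set $\sep'$ is confined to a single branch (possibly passing through one center). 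Taking the transitive closure of these ``must-stay-together'' constraints partitions the internally constrained vertices into groups, each of which has to be assigned as a single block to one of the two branches. Second, by Properties~\ref{pty-x-free:1}--\ref{pty-x-free:2} applied to $\sep$ itself, after choosing which central node plays the role of the distinguished $c$, all but at most one $\sep$-free vertex lies on the $c$-side, and the single exception (if any) is picked from $O(n)$ possibilities. The candidate list for $d=3$ is then the product of (a) $O(n^2)$ choices for the two central positions (either Steiner, or a particular vertex of $\sep$), (b) $O(n)$ choices governing the $\sep$-free vertices via Theorem~\ref{thm:x-free}, and (c) an assignment of the internally constrained blocks to branches, which I would argue is polynomially many by using Lemma~\ref{lem:bistar-center} (only two maximal cliques contain $\sep$ in the bistar case, so each block can be identified with the ``side'' of the maximal clique it actually fits into).

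For correctness, completeness is the main obstacle: I must show that every well-structured $4$-Steiner root's actual subtree $\unlabeledsubtree{\sep}$ is Steiner-equivalent to one of the enumerated candidates. This follows directly from Theorem~\ref{thm:x-free}: Properties~\ref{pty-x-free:1} and~\ref{pty-x-free:3} force the non-central nodes of $\unlabeledsubtree{\sep}$ to be leaves in $\sep$, Property~\ref{pty-x-free:2} identifies the distinguished center $c$ together with the at-most-one exception, and the discussion above handles the internally constrained blocks. Validity of each candidate can be tested against the clique-arrangement (the candidate must be consistent with every $\sep' \subseteq \sep$ in $\CI{G}$ and with every larger clique-intersection $\ci \supset \sep$ that will later need to contain $\unlabeledsubtree{\sep}$ as a subtree of diameter $\leq 4$) in $O(m)$ time per candidate.

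For the running time, there are $|\SEP{G}| = O(n)$ minimal separators; for each one the candidate list has size $O(n^4)$ (at most $n^2$ for the central pair, $n$ for the $\sep$-free exception, and $n$ for the block-assignment choice after the anchoring provided by Lemma~\ref{lem:bistar-center}); and each candidate is built and verified in $O(m)$ time. This yields an overall $O(n^5 m)$ bound, matching the claim. The delicate part of the write-up will be to show that, once the two maximal cliques containing $\sep$ are fixed (in the bistar case), the seemingly exponential enumeration of block assignments collapses to a polynomial one; I expect this to require a careful case analysis cross-referencing Lemma~\ref{lem:no-center-intersect} to rule out spurious configurations where two blocks could ``swap'' sides without violating any previously recorded constraint.
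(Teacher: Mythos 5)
Your high-level strategy matches the paper's --- enumerate candidate subtrees by diameter, handle stars easily, and use the structure theorem (Theorem~\ref{thm:x-free}) plus the clique arrangement to tame the bistar case. But the bistar analysis has several genuine gaps that the paper resolves with three nontrivial claims you do not reconstruct.

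First, you never handle the \emph{sandwiched} vertices of Definition~\ref{def:x-free} (case 2). Your enumeration covers internally $\sep$-constrained vertices (via ``must-stay-together'' blocks) and $\sep$-free vertices (via Theorem~\ref{thm:x-free}), but a vertex can be $\sep$-constrained without being internally constrained: exactly the $(\sep,\ci_1,\ci_2)$-sandwiched vertices, which appear e.g.\ as $u_1,v_1$ in Figure~\ref{fig:all-categories}. Such a vertex is not assigned a side by any of your mechanisms, so in the worst case each one doubles the count of candidates, and the claimed polynomial bound collapses. The paper's Claim~\ref{claim:exclusive-dependency} is precisely the tool here: it shows that every sandwiched vertex is $\maxk$-dependent or $\maxk'$-dependent but not both, and that its placement (center, or leaf of a specific branch) is essentially forced once the two centers are chosen. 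You cannot finish the bistar count without some version of this.

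Second, your argument that the internally-constrained block assignment is polynomial does not work as stated. You appeal to Lemma~\ref{lem:bistar-center} and say ``each block can be identified with the side of the maximal clique it actually fits into,'' but every clique-intersection $\ci' \subset \sep$ lies in \emph{both} $\maxk$ and $\maxk'$ (since $\sep \subset \maxk \cap \maxk'$), so the containment relation does not break the tie between the two sides. What actually tames the block assignment in the paper is the heavy-part/light-part dichotomy together with Claim~\ref{claim:no-middle-edge} (the center of the bistar cannot itself be a light part without a heavy part above it) and Claim~\ref{claim:igraph-light} (the auxiliary graph $I_2$ on internally constrained vertices has at most two connected components, giving a direct $O(|\sep|^2)$ bound). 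Related, your remark that ``since $diam(\unlabeledsubtree{\sep'}) < diam(\unlabeledsubtree{\sep})$ the whole set $\sep'$ is confined to a single branch (possibly passing through one center)'' fails for a light part that occupies the middle edge, i.e.\ $\sep' = \centre{\unlabeledsubtree{\sep}}$; Claim~\ref{claim:no-middle-edge} is exactly what rules this out when no heavy part exists. Without these ingredients, your completeness and counting arguments for $d=3$ do not close.
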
	

Before proving this above theorem, let us describe the main difficulty we had to face on.
Roughly, given $\sep \in \SEP{G}$ the difficulty in generating ${\cal T}_{\sep}$ comes from the bistars, as a brute-force generation of all possibilities would take time exponential in $|\sep|$.
In order to remedy to that issue we use the fact that in a well-structured $4$-Steiner root of $G$, $\sep$-free vertices are leaves of such a bistar with all of them, except maybe one, adjacent to the same central node.
For a fixed placement of the $\sep$-constrained vertices, this only gives us ${\cal O}(|\sep|)$ possibilities in order to place the $\sep$-free vertices.
Overall, we reduce the number of possible bistars to an ${\cal O}(|\sep|^5)$.

\begin{proof}
Let $\sep \in \SEP{G}$ be fixed. 
Up to some pre-processing we will construct ${\cal T}_{\sep}$ in ${\cal O}(|\sep|^6)$-time. 
Since $\max\{ \ |\sep| \mid \sep \in \SEP{G} \} = {\cal O}(n)$ and $\sum_{\sep \in \SEP{G}}|\sep| = {\cal O}(m)$~\cite{BlP93}, the latter will prove the result.

\medskip
\noindent
{\bf Case $diam(\unlabeledsubtree{\sep}) \leq 2$.}
Let us start with some easy cases. 
If $|\sep| = 1$ then, it suffices to add a single-node tree to ${\cal T}_{\sep}$.
Similarly, if $|\sep|=2$ then, by Theorem~\ref{thm:clique-intersection}, $\sep$ must induce a path of length at most $k-1=3$ in any $4$-Steiner root of $G$ with its two ends being the vertices of $\sep$.
This gives only ${\cal O}(1)$ possibilities to put into ${\cal T}_{\sep}$.
Thus, from now on assume $|\sep| \geq 3$.
Given any $4$-Steiner root $\tree$ of $G$, by Theorem~\ref{thm:clique-intersection} the subtree $\unlabeledsubtree{\sep}$ can only be a star or a bistar (but the latter only if $\sep$ is inclusion wise maximal in $\SEP{G}$).
Furthermore in the former case, all leaves in the star $\unlabeledsubtree{\sep}$ must be in $\sep$, and the center node can either be in $\sep$ or Steiner.
Overall, this gives ${\cal O}(|\sep|)$ possibilities of stars to put into ${\cal T}_{\sep}$, and so, this takes ${\cal O}(|\sep|^2)$-time. 

\medskip
\noindent
{\bf Case $diam(\unlabeledsubtree{\sep}) = 3$.}
We end up focusing on the case where $|\sep| \geq 3$ and $\unlabeledsubtree{\sep}$ may be a bistar.
In what follows, the two central nodes of such a bistar will be always denoted by $\centre{\unlabeledsubtree{\sep}} = \{ c_0, c_1 \}$.
We will introduce the following additional terminology.
A {\em heavy part} of $\sep$ is any clique-intersection $\ci \subset \sep$ such that $|\ci| \geq 3$.
A {\em light part} of $\sep$ is any clique-intersection $\ci \subset S$ such that $|\ci| = 2$.
We prove the following intermediate claim (also used in other parts of the paper):
\begin{myclaim}\label{claim:no-middle-edge}
If $\unlabeledsubtree{\sep}$ is a bistar and $\centre{\unlabeledsubtree{\sep}}$ is a light part then, there is a heavy part that strictly contains $\centre{\unlabeledsubtree{\sep}}$.
\end{myclaim}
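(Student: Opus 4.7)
The plan is to argue by contradiction: assume no clique-intersection $\ci$ satisfies $\{c_0,c_1\} \subsetneq \ci \subsetneq \sep$ with $|\ci| \geq 3$, and derive an inconsistency. Write $L_0$ (resp.\ $L_1$) for the leaves of the bistar $\unlabeledsubtree{\sep}$ adjacent to $c_0$ (resp.\ $c_1$), so that $\sep = L_0 \cup L_1 \cup \{c_0,c_1\}$ with both $L_0, L_1$ non-empty. The light-part hypothesis says $\{c_0,c_1\}$ is itself a clique-intersection, so strong chordality (Theorem~\ref{thm:partial-characterization}) gives $\{c_0,c_1\} = \maxk^* \cap \maxk^{**}$ for two distinct maximal cliques. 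Lemma~\ref{lem:bistar-center} ensures that $\sep$ is contained in exactly two maximal cliques $\maxk_A, \maxk_B$ whose centers in $\tree$ lie in $\{c_0,c_1\}$, and Lemma~\ref{lem:no-center-intersect} forces these centers to differ; without loss of generality $\centre{\unlabeledsubtree{\maxk_A}} = \{c_0\}$ and $\centre{\unlabeledsubtree{\maxk_B}} = \{c_1\}$. Since $\maxk^* \cap \maxk^{**} = \{c_0,c_1\} \neq \sep = \maxk_A \cap \maxk_B$, at least one of $\maxk^*, \maxk^{**}$ lies outside $\{\maxk_A,\maxk_B\}$; after relabeling, assume $\maxk^{**} \notin \{\maxk_A,\maxk_B\}$. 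Then $\sep \not\subseteq \maxk^{**}$, and the standing assumption forces $\maxk^{**} \cap \sep = \{c_0,c_1\}$, since otherwise $\maxk^{**} \cap \sep$ would itself be a forbidden heavy part.

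The next step is to localize $\unlabeledsubtree{\maxk^{**}}$ inside $\tree$. Theorem~\ref{thm:clique-intersection} gives $Real(\unlabeledsubtree{\maxk^{**}}) = \maxk^{**}$ and $diam(\unlabeledsubtree{\maxk^{**}}) \leq 4$, while Lemma~\ref{lem:no-center-intersect} excludes $c_0, c_1$ from $\centre{\unlabeledsubtree{\maxk^{**}}}$. Since $c_0, c_1$ are adjacent in $\tree$ and neither is central, short tree arguments rule out every diameter below $4$ and pin the unique central node to some $c \in (N_{\tree}(c_0) \cup N_{\tree}(c_1)) \setminus \{c_0,c_1\}$; by symmetry take $c \in N_{\tree}(c_0)$. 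For every $x \in \maxk^{**} \setminus \{c_0,c_1\}$, the inequality $dist_{\tree}(x,c) \leq 2$ leaves three possible tree positions: (i) $x \in N_{\tree}(c) \setminus \{c_0\}$; (ii) $x \in N_{\tree}(c_0) \setminus \{c,c_1\}$; or (iii) $x$ is at distance two from $c$ not through $c_0$.

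Distance computations in these cases drive the rest. Case (iii) forces $dist_{\tree}(x,v) = 4$ for all $v \in L_0$ but $dist_{\tree}(x,w) = 5$ for all $w \in L_1$, so $\{x,c_0,c_1\} \cup L_0$ is a clique of $G$ that extends only to a maximal clique avoiding $L_1$; its intersection with $\sep$ equals $L_0 \cup \{c_0,c_1\}$, which is already a forbidden heavy part -- contradiction. Hence every $x$ lies in case (i) or (ii), and in both the tree distances yield $x$ adjacent in $G$ to every vertex of $\sep$. Thus $\{x\} \cup \sep$ is a clique and extends to a maximal clique containing $\sep$, which by Lemma~\ref{lem:bistar-center} must be $\maxk_A$ or $\maxk_B$; since $x \notin \sep$, $x$ belongs to exactly one of them.

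This yields a partition $\maxk^{**} \setminus \{c_0,c_1\} = X_A \sqcup X_B$. If either part is empty then $\maxk^{**}$ is contained in the corresponding $\maxk_A$ or $\maxk_B$ (using $c_0, c_1 \in \maxk_A \cap \maxk_B$), contradicting both the maximality of $\maxk^{**}$ and $\maxk^{**} \notin \{\maxk_A, \maxk_B\}$. Picking $x_A \in X_A, x_B \in X_B$, the set $\{x_A,x_B\} \cup \sep$ is still a clique (both $x_A, x_B$ lie in $\maxk^{**}$, so $x_A x_B$ is an edge, and each is adjacent to all of $\sep$), and it extends to a maximal clique $\maxk' \supseteq \sep$, necessarily $\maxk_A$ or $\maxk_B$. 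But $\maxk' = \maxk_A$ places $x_B$ in $\maxk_A$ and $\maxk' = \maxk_B$ places $x_A$ in $\maxk_B$, each contradicting the partition. The main obstacle is the tree-geometric case analysis around $c$ that reduces $\maxk^{**} \setminus \{c_0,c_1\}$ to the two ``good'' positions (i) and (ii); once that is in place, the $\maxk_A$-versus-$\maxk_B$ partition squeeze closes the contradiction cleanly.
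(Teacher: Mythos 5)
Your argument is correct up to one missed subcase (noted below), but it takes a genuinely different route from the paper's, and a considerably longer one. The paper's proof is essentially a three-step application of Lemma~\ref{lem:tree-intersection}: pick any maximal clique $\maxk_i$ with $\{c_0,c_1\}\subseteq\maxk_i$ and $\sep\not\subseteq\maxk_i$ (such a clique exists because $\{c_0,c_1\}$, being a clique-intersection strictly contained in $\sep$, is the intersection of the maximal cliques containing it); the no-heavy-part hypothesis forces $\maxk_i\cap\sep=\{c_0,c_1\}$, and maximality forces $\maxk_i\not\subseteq\sep$, so some $c_j$ has a neighbour in $\unlabeledsubtree{\maxk_i}$ outside $\sep$; Lemma~\ref{lem:tree-intersection} applied to $N_{\tree}[c_j]$ and $\unlabeledsubtree{\maxk_i}$ then gives $diam(N_{\tree}[c_j]\cup\unlabeledsubtree{\maxk_i})\leq 4$, so $\maxk_i\cup Real(N_{\tree}[c_j])$ is a clique, whence $Real(N_{\tree}[c_j])\subseteq\maxk_i$ by maximality --- but $Real(N_{\tree}[c_j])$ contains a leaf of $\unlabeledsubtree{\sep}$ adjacent to $c_j$, which then lies in $\maxk_i\cap\sep\setminus\{c_0,c_1\}$, a contradiction. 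You instead invoke Lemma~\ref{lem:bistar-center} and Lemma~\ref{lem:no-center-intersect} to identify $\maxk_A,\maxk_B$ and to pin down the center $c$ of $\unlabeledsubtree{\maxk^{**}}$, then run a position-by-position distance analysis followed by an $X_A\sqcup X_B$ partition squeeze. Your route avoids Lemma~\ref{lem:tree-intersection} altogether and works directly with the tree distances, which is instructive, but it pays for this with substantially more bookkeeping (and it also needs the strongly-chordal fact that every clique-intersection is the intersection of at most two maximal cliques, which the paper's proof does not use here).

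The gap: your three positions for $x\in\maxk^{**}\setminus\{c_0,c_1\}$ cover $dist_{\tree}(x,c)\in\{1,2\}$ but omit $dist_{\tree}(x,c)=0$, i.e.\ the case $x=c$ when the central node $c$ is a real vertex --- nothing in your setup rules this out, as $c$ is only known to differ from $c_0$ and $c_1$. The omission is harmless: if $c$ is real then $c\in\maxk^{**}\setminus\{c_0,c_1\}$, and $dist_{\tree}(c,c_0)=1$, $dist_{\tree}(c,c_1)=2$, $dist_{\tree}(c,v)=2$ for $v\in L_0$, $dist_{\tree}(c,w)=3$ for $w\in L_1$, so $c$ is adjacent in $G$ to all of $\sep$ and feeds into the same partition argument as cases (i) and (ii). But the case should be stated explicitly, since you rely on the exhaustiveness of the case split.
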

\begin{proofclaim}
Suppose by contradiction $\ci = \{c_0,c_1\}$ is a light part and no heavy part contains it.
Let $\maxk_i$ be any maximal clique such that $\ci \subseteq \maxk_i$ but $\sep \not\subseteq \maxk_i$.
Such a $\maxk_i$ always exists since otherwise, taking the intersection of $\sep$ with all the maximal cliques that contains $\ci$, one would obtain $\sep = \ci$, a contradiction.
In this situation, $\ci \subseteq \maxk_i \cap \sep$, and so $\ci = \maxk_i \cap \sep$ since there is no heavy part containing $\ci$.
Furthermore we have $\maxk_i \not\subseteq \sep$.
Hence, there exists $j \in \{0,1\}$ such that $c_j$ has a neighbour in $\unlabeledsubtree{\maxk_i} \setminus \sep$ (possibly, a Steiner node).
By applying Lemma~\ref{lem:tree-intersection} to $N_{\tree}[c_j]$ and $\unlabeledsubtree{\maxk_i}$, we obtain:$$diam(N_{\tree}[c_j] \cup \unlabeledsubtree{\maxk_i}) = diam(\unlabeledsubtree{\maxk_i}) \leq 4.$$
In particular, $\maxk_i \cup Real(N_{\tree}[c_j])$ is a clique of $G$.
By maximality of $\maxk_i$, $Real(N_{\tree}[c_j]) \subseteq \maxk_i$.
However, there is at least one leaf in $Real(N_{\tree}[c_j]) \setminus \ci$, that implies $\ci \subset \maxk_i \cap \sep$, a contradiction.
\end{proofclaim}
Then, we divide the proof in two subcases.
We stress that both subcases only depend on the clique-arrangement of $G$, and are independent of any $4$-Steiner root $\tree$ --- we use the existence of such a root only for proving correctness of our construction.

\begin{itemize}
\item
We first consider the particular subcase when there exists a heavy part $\ci \subset \sep$.
In this situation, $\ci \subseteq N_{\tree}[c_0]$ or $\ci \subseteq N_{\tree}[c_1]$ for any well-structured $4$-Steiner root $\tree$ of $G$ where $\unlabeledsubtree{\sep}$ is a bistar.
Moreover, by Property~\ref{pty-ci:2} of Theorem~\ref{thm:clique-intersection} either $Real(N_{\tree}[c_0]) = \ci$ or $Real(N_{\tree}[c_1]) = \ci$.
Therefore, we can start choosing among ${\cal O}(|\ci|)$ possibilities the star induced by $\ci$ in $\tree$.
W.l.o.g., $c_0$ is the center of this star.
The other center $c_1$ must be either a Steiner node adjacent to $c_0$ (in $\tree$) or any vertex in $\ci \setminus \{c_0\}$.
Hence, there are also ${\cal O}(|\ci|)$ possibilities for $c_1$.
Finally, since we have $Real(N_{\tree}[c_0]) = \ci$ all the nodes in $\sep \setminus \ci$ must be leaves adjacent to $c_1$.
Overall, this gives ${\cal O}(|\sep|^2)$ possibilities of bistars to put into ${\cal T}_{\sep}$, and so, this takes ${\cal O}(|\sep|^3)$-time\footnote{
We will actually show in Lemma~\ref{lem:real-center-bistar} this number of potential bistars can be reduced. However, we choose not to include this improvement in this part of the proof in order to keep it as simple as possible.}.

\item
From now on we assume that there is no heavy part.
Following Definition~\ref{def:x-free} we process the vertices in $\sep$ according to the other clique-intersections in which they are contained.

\smallskip
\noindent
{\bf Processing the internally $\sep$-constrained vertices.}
For any well-structured $4$-Steiner root $\tree$ where $\unlabeledsubtree{\sep}$ is a bistar, by Claim~\ref{claim:no-middle-edge}, $\centre{\unlabeledsubtree{\sep}} = \{c_0,c_1\}$ is not a light part.
Furthermore, given any light part $\ci \subset \sep$, we can prove that either $\ci$ induces an edge of $\unlabeledsubtree{\sep}$, or $\unlabeledsubtree{\ci}$ is a non-edge star and $\ci = Real(N_{\tree}[c_j])$ for some $j$ (this also follows from Property~\ref{pty-ci:2} of Theorem~\ref{thm:clique-intersection}).
In this situation, we construct the subgraph $I_2$: of which the vertices are the internally $\sep$-constrained vertices, and the edges are the light parts.

\begin{myclaim}\label{claim:igraph-light}
If $G$ has a $4$-Steiner root $\tree$ where $\unlabeledsubtree{\sep}$ is a bistar then, $I_2$ has at most two connected components.
Specifically:
\begin{enumerate}
\item Either $I_2$ is connected, there is a unique light part $\ci$ such that $\unlabeledsubtree{\ci}$ is a non-edge star, and all other light parts must be edges incident to the unique central node in $\centre{\unlabeledsubtree{\sep}} \setminus \centre{\unlabeledsubtree{\ci}}$. 
\item Or the connected components of $I_2$ induce node-disjoint stars in $\unlabeledsubtree{\sep}$.
\end{enumerate} 
\end{myclaim}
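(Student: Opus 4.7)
The plan is to work inside a fixed $4$-Steiner root $\tree$ realising the bistar $\unlabeledsubtree{\sep}$ with centres $c_0,c_1$, and to classify all possible shapes of a light part $\ci = \{u,v\} \subset \sep$. By Theorem~\ref{thm:clique-intersection} (Prop.~\ref{pty-ci:2}) every such $\ci$ gives rise to a subtree $\unlabeledsubtree{\ci}$ of diameter at most~$2$, and Claim~\ref{claim:no-middle-edge} forbids $\{c_0,c_1\}$ itself from being a light part. Hence every light part falls into exactly one of three configurations: (a)~an \emph{edge light part} $\{c_j,x\}$ where $x$ is a leaf of $\unlabeledsubtree{\sep}$ adjacent to $c_j$; (b)~a \emph{type-(i) non-edge star light part} consisting of two distinct leaves both adjacent to the same centre $c_j$, so that $\centre{\unlabeledsubtree{\ci}} = \{c_j\}$; or (c)~a \emph{type-(ii) non-edge star light part} $\{c_j,x\}$ with $x$ a leaf adjacent to $c_{1-j}$, so that $\centre{\unlabeledsubtree{\ci}} = \{c_{1-j}\}$.

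The second preliminary step is to invoke Theorem~\ref{thm:clique-intersection} (Prop.~\ref{pty-ci:1}): for any non-edge star light part $\ci$ with $\centre{\unlabeledsubtree{\ci}} = \{c^*\}$ one has $Real(\unlabeledsubtree{\ci}) = \ci$, forcing $c^* \notin \sep$; in other words, whenever a non-edge star light part has its central node equal to $c_j$, that centre must be Steiner. In particular, no other light part of any of the three configurations can involve $c^*$ as one of its endpoints.

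The case split is then natural. If no non-edge star light part exists, all light parts are edge light parts $\{c_j,x\}$; each centre incident to some such light part is therefore real, and the connected components of $I_2$ are precisely a real centre together with its leaf-neighbours appearing in some edge light part -- these induce node-disjoint stars of $\unlabeledsubtree{\sep}$, which is Case~2. Otherwise, pick a non-edge star light part $\ci_0$, set $\centre{\unlabeledsubtree{\ci_0}} = \{c^*\}$, and let $c^{**}$ be the remaining centre. If $\ci_0$ is of type~(ii), then $c^{**} \in \ci_0 \subseteq \sep$ is real; by the previous step every other light part must avoid $c^*$, hence must be an edge light part $\{c^{**},y\}$. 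Assuming $\ci_0$ is unique among the non-edge star light parts, $I_2$ is connected and forms a star at $c^{**}$, landing in Case~1. If instead $\ci_0$ is of type~(i), no edge light part can involve $c^*$, so the remaining light parts are only non-edge stars centred at $c^*$ or at $c^{**}$, together with edge light parts incident to $c^{**}$ (only if $c^{**}$ is real); a direct inspection shows the components of $I_2$ still induce node-disjoint star subtrees of $\unlabeledsubtree{\sep}$, yielding Case~2.

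The main obstacle will be proving the uniqueness of $\ci_0$ in the type-(ii) situation of Case~1. My strategy is to assume a second non-edge star light part $\ci_1$ and show that $\ci_0 \cup \ci_1$ is contained in a common maximal clique of $G$, because its elements sit pairwise at tree-distance at most~$4$ in $\tree$; the no-heavy-part assumption then forces $\sep$ to be contained in this maximal clique $\maxk$. Combining Lemma~\ref{lem:tree-intersection} applied to $\unlabeledsubtree{\sep}$ and $\unlabeledsubtree{\maxk}$ with Theorem~\ref{thm:clique-intersection} Prop.~\ref{pty-ci:1} applied to the clique-intersection $\ci_0 \cup \ci_1$, we would contradict either the Steiner status of $c^*$ or Claim~\ref{claim:no-middle-edge}, which closes the argument.
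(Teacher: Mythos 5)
Your classification of light parts, your all-edges case, and the skeleton of your type-(ii) case all track the paper's argument, but there is a genuine gap: from Prop.~\ref{pty-ci:1} you only extract that the centre $c^*$ of a non-edge-star light part $\ci$ is Steiner, hence that no light part contains $c^*$ as an \emph{element}. The paper's proof runs on the strictly stronger consequence of Prop.~\ref{pty-ci:2}: since $N_{\tree}[c^*] \supseteq \unlabeledsubtree{\ci}$ and both subtrees have diameter two, $Real(N_{\tree}[c^*]) = \ci$, i.e.\ $c^*$ has \emph{no} real neighbour outside $\ci$. Without this, your type-(i) branch fails. Take $\ci_0=\{x,y\}$ with $x,y$ leaves at the Steiner centre $c^*$, a second non-edge star $\{c^{**},z\}$ with $z$ another leaf at $c^*$ and $c^{**}$ real, and an edge light part $\{c^{**},w\}$. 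Nothing you have established excludes this configuration, yet the component of $I_2$ containing $c^{**}$ spans the length-three path $w$--$c^{**}$--$c^*$--$z$, not a star, and it meets the star spanned by $\{x,y\}$ in the node $c^*$; so the ``direct inspection'' you appeal to would refute the claimed conclusion rather than confirm it. The identity $Real(N_{\tree}[c^*])=\ci$ kills such configurations at once (it forces any two non-edge-star light parts centred at $c^*$ to coincide, and forbids $c^{**}$ from being real when a type-(i) part is centred at $c^*$), and it is also exactly what turns ``every other light part avoids $c^*$, hence is an edge'' into a valid inference in your type-(ii) branch -- as written, a type-(i) star $\{y,z\}$ at $c^*$ avoids $c^*$ as an element and is not an edge.

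The same gap undermines your uniqueness strategy. The conclusion ``$\sep\subseteq\maxk$ for some maximal clique $\maxk$'' is vacuous (a minimal separator of a chordal graph is always contained in at least two maximal cliques), and $\ci_0\cup\ci_1$ need not be a clique-intersection, so Prop.~\ref{pty-ci:1} does not apply to it; the contradiction you announce with the Steiner status of $c^*$ or with Claim~\ref{claim:no-middle-edge} does not materialise from these ingredients. Once you have the right tool the fix is one line: two distinct non-edge-star light parts sharing the centre $c^*$ would both have to equal $Real(N_{\tree}[c^*])$ by Prop.~\ref{pty-ci:2}.
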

\begin{proofclaim}
The result is a consequence of the following case analysis:
\begin{itemize}
\item Assume there is a light part $\ci \subset \sep$ such that $\unlabeledsubtree{\ci}$ is a non edge star and $Real(N_{\tree}[c_j]) = \ci$.
Two different situations might occur:
\begin{enumerate}
\item \underline{Situation \# 1} (see Fig.~\ref{fig:bistar-situation-1}): There is a light part $\ci' \neq \ci$ intersecting $\ci$. Since the vertex in $\ci' \cap \ci$ cannot be a leaf adjacent to $c_j$ (otherwise, we should have $\ci' \subseteq Real(N_{\tree}[c_j]) = \ci$, a contradiction), this must be $c_{1-j}$. In particular, for {\em any} light part $\ci' \neq \ci$, $\unlabeledsubtree{\ci'}$ must be an edge between $c_{1-j}$ and a leaf (thereby implying $\ci' \cap \ci \neq \emptyset$).

\begin{figure}[h!]
\centering
\includegraphics[width=.3\textwidth]{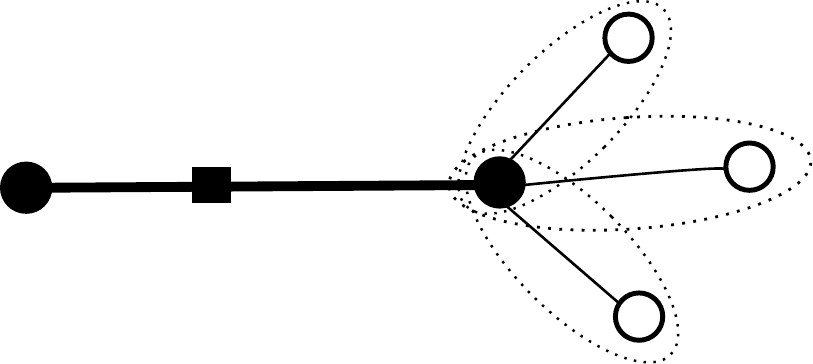}
\caption{Situation 1: the subtree $\unlabeledsubtree{\ci}$ is drawn in bold. There are 3 other light parts represented by dashed ellipses.}
\label{fig:bistar-situation-1}
\end{figure}

\item \underline{Situation \# 2} (see Fig.~\ref{fig:bistar-situation-2}): There is no other light part $\ci' \neq \ci$ intersecting $\ci$.
Since $\unlabeledsubtree{\sep} \setminus N_{\tree}[c_j]$ is an independent set, any light part $\ci' \subset \sep$ that does not intersect $\ci$ cannot be an edge.
We so deduce that if such a $\ci'$ exists then, $Real(N_{\tree}[c_{1-j}]) = \ci'$, and so, there are no other light part in $\sep$ than $\ci$ and $\ci'$.

\begin{figure}[h!]
\centering
\includegraphics[width=.25\textwidth]{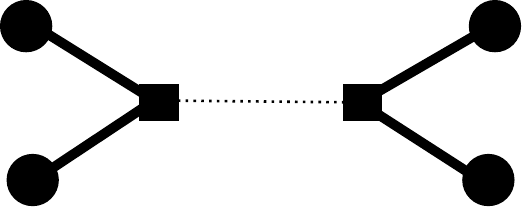}
\caption{Situation 2: the subtrees $\unlabeledsubtree{\ci}$ and $\unlabeledsubtree{\ci'}$ are drawn in bold.}
\label{fig:bistar-situation-2}
\end{figure}

\end{enumerate}

\item Otherwise, each light part is an edge of $\unlabeledsubtree{\sep}$ that contains either $c_0$ or $c_1$, but not both.
Therefore, there is a one-to-one mapping between the connected components of $I_2$ and the nonempty sets among $Real(N_{\tree}[c_0]) \setminus \{c_1\}, Real(N_{\tree}[c_1]) \setminus \{c_0\}$.
See Fig.~\ref{fig:bistar-only-edge} for an illustration.

\begin{figure}[h!]
\centering
\includegraphics[width=.4\textwidth]{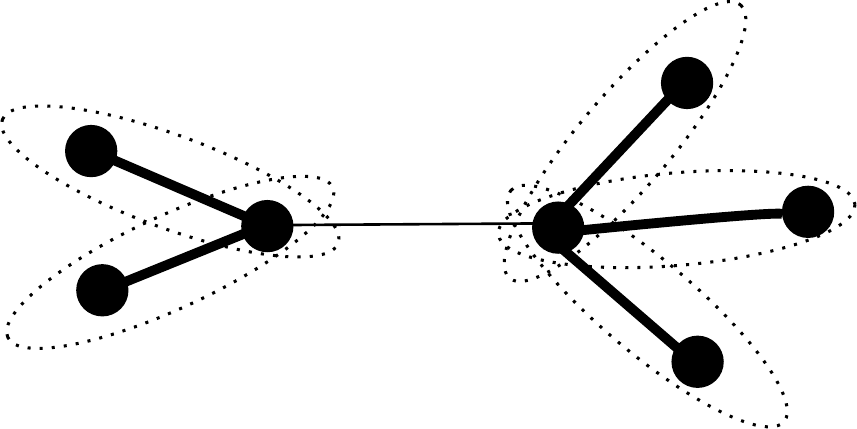}
\caption{A case where all light parts must be edges.}
\label{fig:bistar-only-edge}
\end{figure}

\end{itemize}
\end{proofclaim}
Overall, Claim~\ref{claim:igraph-light} reduces the placement of internally $\sep$-constrained vertices to the construction of two stars in parallel, thereby giving ${\cal O}(|\sep|^2)$ different possibilities.

\smallskip
\noindent
{\bf Processing the $(\sep,\ci_1,\ci_2)$-sandwiched vertices.}
Then, we consider all the other $\sep$-constrained vertices.
By Lemma~\ref{lem:bistar-center} in the previous Section, $\sep$ is strictly contained into exactly two maximal cliques, that we denote $\maxk$ and $\maxk'$.
In particular, if $v \in \sep$ is $(\sep,\ci_1,\ci_2)$-sandwiched then, either $\ci_1 = \maxk$ or $\ci_1 = \maxk'$.
In the former case we call $v$ {\em $\maxk$-dependent}, and in the latter case we call it {\em $\maxk'$-dependent}.

The following Claim~\ref{claim:exclusive-dependency} shows that it defines two equivalence classes:

\begin{myclaim}\label{claim:exclusive-dependency}
Assume that $G$ has a $4$-Steiner root $\tree$ where $\unlabeledsubtree{\sep}$ is a bistar, and let $v \in \sep$ be {\em not} internally $\sep$-constrained. 
If $v$ is $\maxk$-dependent then, either $v \in \centre{\unlabeledsubtree{\sep}} \cap \centre{\unlabeledsubtree{\maxk}}$, or $v$ is a leaf of $\unlabeledsubtree{\sep}$ that is adjacent in $\tree$ to the unique node in $\centre{\unlabeledsubtree{\maxk}}$.

In particular, a vertex $v \in \sep$ cannot be both $\maxk$-dependent and $\maxk'$-dependent (unless it is also internally $\sep$-constrained).
\end{myclaim}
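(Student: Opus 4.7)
The plan is to prove the contrapositive: assuming $v \in \sep$ is $\maxk$-dependent via some triple $(\sep, \maxk, \ci_2)$, we derive that $v$ must lie in $\centre{\unlabeledsubtree{\sep}} \cap \centre{\unlabeledsubtree{\maxk}}$ or be a leaf of $\unlabeledsubtree{\sep}$ adjacent to the unique node in $\centre{\unlabeledsubtree{\maxk}}$. By the sandwiching definition, $\maxk \cap \ci_2$ contains $v$ and some $w \in \maxk \setminus \sep$. First we fix notation: by Lemma~\ref{lem:bistar-center}, $\sep$ is contained in exactly two maximal cliques $\maxk, \maxk'$, and $\centre{\unlabeledsubtree{\maxk}} \subset \centre{\unlabeledsubtree{\sep}} = \{c_0, c_1\}$; w.l.o.g. $\centre{\unlabeledsubtree{\maxk}} = \{c_0\}$, whence by Lemma~\ref{lem:no-center-intersect} also $\centre{\unlabeledsubtree{\maxk'}} = \{c_1\}$. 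Using Theorem~\ref{thm:clique-tree-pties} and the fact that $\maxk, \maxk'$ are the only maximal cliques containing $\sep$ gives $\maxk \cap \maxk' = \sep$; thus $w \notin \maxk'$, and by maximality of $\maxk'$ combined with the radius-$2$ bound in $\unlabeledsubtree{\maxk'}$, we cannot have $dist_\tree(w, c_1) \leq 2$. So $dist_\tree(w, c_1) \geq 3$; combined with $dist_\tree(w, c_0) \leq 2$, the unique $wc_0$-path has length exactly $2$ and avoids $c_1$, via some neighbor $y$ of $c_0$.

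Suppose now for contradiction that $v = c_1$ or $v$ is a leaf of $\unlabeledsubtree{\sep}$ adjacent to $c_1$. Replacing $\ci_2$ with $\ci_2 \cap \maxk$ (still a clique-intersection witnessing the sandwich), we may assume $\ci_2 \subseteq \maxk$; since $\sep \not\subseteq \ci_2$ we get $\ci_2 \subsetneq \maxk$, and Property~\ref{pty-ci:2} of Theorem~\ref{thm:clique-intersection} forces $diam(\unlabeledsubtree{\ci_2}) < 4$. If $v$ is a leaf adjacent to $c_1$, the unique $vw$-path in $\tree$ is $v - c_1 - c_0 - y - w$ of length $4$, contradicting $dist_\tree(v, w) \leq diam(\unlabeledsubtree{\ci_2}) \leq 3$. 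So $v = c_1$; the $vw$-path is then $c_1 - c_0 - y - w$ of length $3$, forcing $diam(\unlabeledsubtree{\ci_2}) = 3$ with central nodes $\{c_0, y\}$ (the two middle nodes of that path). Strong chordality of $G$ lets us write $\ci_2$ as an intersection of at most two maximal cliques, so some $\maxk'' \neq \maxk$ contains $\ci_2$; applying Property~\ref{pty-ci:2} again to $\ci_2 \subsetneq \maxk''$ yields $diam(\unlabeledsubtree{\maxk''}) = 4$ (the alternative $\ci_2 = \maxk''$ is ruled out since $\maxk'' \subsetneq \maxk$ would violate maximality). The unique center $c''$ of $\unlabeledsubtree{\maxk''}$ lies on the $vw$-path (its distances to $v$ and $w$ sum to $3$ with each at most $2$), so $c'' \in \{c_0, y\}$; the case $c'' = c_0$ contradicts Lemma~\ref{lem:no-center-intersect} since $\centre{\unlabeledsubtree{\maxk}} = \{c_0\}$ and $\maxk \neq \maxk''$.

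The remaining subcase $c'' = y$ is the main obstacle. Note first that $c_0$ must be Steiner, since otherwise the real node $c_0$ would lie in $\sep \cap \ci_2 = \{v\}$, forcing $c_0 = v = c_1$, impossible. As a Steiner center of a diameter-$3$ bistar, $c_0$ has at least one real leaf $a \in \sep$ in $\unlabeledsubtree{\sep}$ with $a \sim_\tree c_0$ and $a \neq v$. Then $dist_\tree(a, y) = 2$, and since every vertex of $\maxk''$ lies within tree-distance $2$ of its center $y$, every $x \in \maxk''$ satisfies $dist_\tree(a, x) \leq 4$; hence $a$ is $G$-adjacent to every vertex of $\maxk''$, and by maximality of $\maxk''$ we conclude $a \in \maxk''$. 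But $\maxk'' \cap \sep \subseteq \maxk \cap \maxk'' \cap \sep = \ci_2 \cap \sep = \{v\}$, so $a = v$, the final contradiction. The ``in particular'' statement is then immediate: were $v$ both $\maxk$- and $\maxk'$-dependent, the first part of the claim would place $v$ in the two sets $\{c_0\} \cup (\text{leaves of } \unlabeledsubtree{\sep} \text{ adjacent to } c_0)$ and $\{c_1\} \cup (\text{leaves adjacent to } c_1)$, which are disjoint in a bistar with $c_0 \neq c_1$.
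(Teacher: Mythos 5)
Your proof reaches the correct conclusion but takes a genuinely different route from the paper in the closing argument. Both reduce to the subcase $v = c_1$ and both extract a real leaf $a$ (the paper's $z$) of $c_0$ in $\unlabeledsubtree{\sep}$, but the paper never normalizes $\ci_2$: it keeps some $y \in \ci_2 \setminus \maxk$, proves $y \notin \maxk'$ via the separator property of $\sep = \maxk \cap \maxk'$, and then exhibits a maximal clique $\chi$ containing $x, y, z, v$ that must strictly contain $\sep$ (by the non-internally-constrained hypothesis applied to $\{z,v\}$) yet is distinct from both $\maxk$ and $\maxk'$ (since $y \in \chi$), contradicting Lemma~\ref{lem:bistar-center}. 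You instead replace $\ci_2$ by $\ci_2 \cap \maxk$, bound $diam(\unlabeledsubtree{\ci_2})$ via Property~\ref{pty-ci:2}, produce a second containing maximal clique $\maxk''$, locate its center at $y$ via Lemma~\ref{lem:no-center-intersect}, and push the leaf $a$ into $\maxk''$ by a radius argument. This is a self-contained alternative that substitutes the clique-arrangement machinery for the paper's separator-edge argument; both are of comparable length.

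There is, however, one genuine gap at the final step. You assert $\maxk \cap \maxk'' \cap \sep = \ci_2 \cap \sep$, which tacitly uses $\maxk \cap \maxk'' = \ci_2$. Strong chordality only tells you that $\ci_2$ is the intersection of \emph{some} two maximal cliques, not that $\maxk$ is one of them, and in general one only has $\ci_2 \subseteq \maxk \cap \maxk''$; the reverse inclusion after intersecting with $\sep$ does not follow (the student's own construction of $\maxk''$ is merely ``some maximal clique $\neq \maxk$ containing $\ci_2$''). Fortunately the conclusion $\maxk'' \cap \sep = \{v\}$ is true and can be obtained directly, bypassing $\ci_2$ entirely: $\maxk'' \cap \sep = \maxk'' \cap \maxk \cap \maxk'$ is itself a clique-intersection; it is strictly contained in $\sep$ because by Lemma~\ref{lem:bistar-center} the only maximal cliques containing $\sep$ are $\maxk$ and $\maxk'$, and $\maxk'' \notin \{\maxk, \maxk'\}$ (you already have $\maxk'' \neq \maxk$, and $\maxk'' = \maxk'$ would force $w \in \maxk \cap \maxk' = \sep$); and it contains $v$ since $v \in \ci_2 \subseteq \maxk''$. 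As $v$ is not internally $\sep$-constrained, any clique-intersection strictly inside $\sep$ containing $v$ is a singleton, so $\maxk'' \cap \sep = \{v\}$ and then $a \in \maxk'' \cap \sep$ gives $a = v$, the desired contradiction. With this one-line repair the proof is correct.
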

\begin{proofclaim}
By the proof of Lemma~\ref{lem:bistar-center}, $\centre{\unlabeledsubtree{\sep}} = \{c_0,c_1\}$ where $\centre{\unlabeledsubtree{\maxk}} = \{c_0\}$ and $\centre{\unlabeledsubtree{\maxk'}} = \{c_1\}$.
Now, suppose by contradiction $v$ is either $c_1$ or a leaf adjacent to $c_1$.

Let $\ci_2 \in \CI{G}$ be such that $v$ is $(\sep,\maxk,\ci_2)$-sandwiched. 
As we have $\ci_2 \cap \sep = \{v\}$, $\ci_2$ is contained into at least one maximal clique $\maxk''$ that does not contain $\sep$.
Furthermore, we can assume w.l.o.g. $\maxk'' \cap \sep = \{v\}$ since otherwise $v$ is internally $\sep$-constrained, a contradiction.
Thus from now on we will assume $\ci_2 = \maxk''$ is a maximal clique that does not contain $\sep$.

Let $x \in (\maxk \cap \ci_2) \setminus \{v\}$.
On one hand, $x \notin \maxk'$ (because $x \notin \sep$) and so, $dist_{\tree}(x,c_1) \geq 3$.
On the other hand, $dist_{\tree}(x,c_0) \leq 2$ because $x \in \maxk$.
Altogether combined, $dist_{\tree}(x,c_0) = 2$ and the unique $xc_1$-path in $\tree$ goes by $c_0$.
Note that it also implies that the unique $xv$-path in $\tree$ goes by the edge $c_0c_1$, hence $dist_{\tree}(x,v) \in \{3,4\}$.

Furthermore, let $y \in \ci_2 \setminus \maxk$, that exists since we assume $\ci_2$ to be a maximal clique.
We prove as a subclaim $y \notin \maxk'$.
Indeed, by Theorem~\ref{thm:clique-tree-pties} there is an edge labeled by $\sep$ in any clique-tree of $G$.
This edge must be $\maxk\maxk'$, that implies  there can be no edge in $G$ between $\maxk \setminus \maxk'$ and $\maxk' \setminus \maxk$.
Since we have $xy \in E(G)$, this proves our subclaim that $y \notin \maxk'$.
In particular, we also have $dist_{\tree}(y,c_1) \geq 3$.
But then, since we must also have  $dist_{\tree}(x,y) \leq 4$, the $yv$-path in $\tree$ also goes by the edge $c_0c_1$ (otherwise, $dist_{\tree}(x,y) = dist_{\tree}(x,c_1) + dist_{\tree}(c_1,y) \geq 6$).
Since $dist_{\tree}(y,c_0) \geq 3$ (because $y \notin \maxk$) we so obtain that $dist_{\tree}(y,c_0) = 3$ and $c_1 = v$.

However, there exists $z \in \sep$ a leaf adjacent to $c_0$.
In particular, there exists a clique-intersection $\ci$ that contains all of $x,y,z,v$.
As we assume $v$ is not internally $\sep$-constrained, $\sep \subset \ci$.
This is a contradiction because we cannot have $\ci = \maxk$ nor $\ci = \maxk'$.
\end{proofclaim}
Given a fixed placement of internally $\sep$-constrained vertices, by Claim~\ref{claim:exclusive-dependency} we can generate all possible placements of the ``sandwiched'' vertices, as follows.
We choose the central node in $\centre{\unlabeledsubtree{\maxk}}$ and the central node in $\centre{\unlabeledsubtree{\maxk'}}$ (there are ${\cal O}(|\sep|^2)$ possibilities).
Then, all remaining $\maxk$-dependent vertices, resp. $\maxk'$-dependent, must be added as leaves adjacent to the unique node in $\centre{\unlabeledsubtree{\maxk}}$, resp. to the unique central node in $\centre{\unlabeledsubtree{\maxk'}}$.
Overall, we have ${\cal O}(|\sep|^2) \times {\cal O}(|\sep|^2) = {\cal O}(|\sep|^4)$ possibilities for positioning the $\sep$-constrained vertices.  

\smallskip
\noindent
{\bf Processing the $\sep$-free vertices.}
However, each such a possibility does not quite define a potential bistar for $\sep$ as we also need to position the $\sep$-free vertices.
By Theorem~\ref{thm:x-free}, we can always assume the $\sep$-free vertices to be leaf-nodes with all of them except maybe one adjacent to the same central node of $\unlabeledsubtree{\sep}$.
In particular, given a fixed placement of the $\sep$-constrained vertices, there are ${\cal O}(|\sep|)$ possibilities in order to place the $\sep$-free vertices (specifically, we choose among ${\cal O}(|\sep|)$ possibilities the unique $\sep$-free vertex that is not adjacent to the same central node as the others, if any, as well as the central node to which all other $\sep$-free vertices must be adjacent).
\end{itemize}
Summarizing, we only need to add ${\cal O}(|\sep|^5)$ different trees in ${\cal T}_{\sep}$, that takes ${\cal O}(|\sep|^6)$-time.

\smallskip
{\bf Final comments.}
A careful reader maybe observed that in our above analysis we ignored the complexity of several operations such as: computing the heavy parts and the light parts of $\sep$, and in the same way computing the $\maxk$-dependent vertices (resp. computing the $\maxk'$-dependent vertices).
By Theorem~\ref{thm:clique-arrangement} we can first compute the clique-arrangement of $G$ in polynomial time, and then all these above operations can be easily done in polynomial time as well.
However, in order to prove that our algorithm truly runs in ${\cal O}(n^5m)$-time, one needs to show that performing these above operations does not dominate the total running-time.  

\begin{myclaim}\label{claim:compute-parts}
Assume that $G$ has a $4$-Steiner root $\tree$ where $diam(\unlabeledsubtree{\sep}) \leq 3$.
Then, we have that: $$\sum_{\ci \in \CI{G}: \ci \subseteq \sep} |\ci| = {\cal O}(|\sep|).$$
\end{myclaim}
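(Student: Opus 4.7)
The plan is to split the sum according to the size of $\ci$ and the shape of $\unlabeledsubtree{\sep}$, invoking Properties~\ref{pty-ci:1} and~\ref{pty-ci:2} of Theorem~\ref{thm:clique-intersection} to constrain each sub-clique-intersection. By Property~\ref{pty-ci:1}, every $\ci \in \CI{G}$ with $\ci \subseteq \sep$ satisfies $\ci = Real(\unlabeledsubtree{\ci})$ with $\unlabeledsubtree{\ci} \subseteq \unlabeledsubtree{\sep}$, and by Remark~\ref{rk:size-subtrees}, $|V(\unlabeledsubtree{\sep})| = O(|\sep|)$. The contributions from $\ci = \sep$ and from singleton clique-intersections are trivially bounded by $O(|\sep|)$ each, so it suffices to bound the sum over CIs with $2 \leq |\ci|$ and $\ci \subsetneq \sep$. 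For any such $\ci$, Property~\ref{pty-ci:2} forces $diam(\unlabeledsubtree{\ci}) < diam(\unlabeledsubtree{\sep}) \leq 3$, so $\unlabeledsubtree{\ci}$ is a substar of $\unlabeledsubtree{\sep}$ of diameter at most $2$, whose center must coincide with one of the at most two internal nodes of $\unlabeledsubtree{\sep}$.

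The star case ($diam(\unlabeledsubtree{\sep}) \leq 2$) is direct: a further application of Property~\ref{pty-ci:2} actually yields $diam(\unlabeledsubtree{\ci}) \leq 1$, so $\unlabeledsubtree{\ci}$ is either a single node or an edge of $\unlabeledsubtree{\sep}$. Since $\unlabeledsubtree{\sep}$ has $O(|\sep|)$ edges and each such $\ci$ has size at most $2$, the total contribution is $O(|\sep|)$.

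The bistar case ($diam(\unlabeledsubtree{\sep}) = 3$) is the main obstacle. There I would invoke Lemma~\ref{lem:bistar-center}, which ensures that $\sep$ is contained in exactly two maximal cliques $\maxk_a, \maxk_b$ with $\centre{\unlabeledsubtree{\maxk_a}} = \{c_0\}$ and $\centre{\unlabeledsubtree{\maxk_b}} = \{c_1\}$, where $\centre{\unlabeledsubtree{\sep}} = \{c_0, c_1\}$. Using the strongly chordal property that every $\ci \in \CI{G}$ is $\maxk_x \cap \maxk_y$ for some pair of maximal cliques~\cite{NeR15}, I would write each relevant $\ci$ as $\ci = \maxk_x \cap \maxk_y$; since $\sep \not\subseteq \ci$, at least one of $\maxk_x, \maxk_y$ must fail to contain $\sep$, and then $\ci \subseteq \maxk_y \cap \sep \subsetneq \sep$ for some such $\maxk_y$, with $\maxk_y \cap \sep$ itself a clique-intersection. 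I plan to show that these traces $\maxk_y \cap \sep$ have total size $O(|\sep|)$ by a clique-tree counting argument --- each vertex of $\sep$ lies in only a bounded number of maximal cliques adjacent in the clique-tree to $\maxk_a$ or $\maxk_b$ whose trace on $\sep$ contains it --- and that sub-CIs of these traces (the case where neither $\maxk_x$ nor $\maxk_y$ lies in $\{\maxk_a, \maxk_b\}$) inherit $diam(\unlabeledsubtree{\ci}) \leq 1$ by a further application of Property~\ref{pty-ci:2}, thus contributing at most $2$ each. Summing the contributions from the two centers then yields the desired $O(|\sep|)$ bound.
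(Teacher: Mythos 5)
Your setup and the star case are fine, but the bistar case contains a genuine gap, and you miss the single observation the paper's proof rests on. The paper's argument is much shorter: since $\unlabeledsubtree{\sep}$ has $O(|\sep|)$ nodes and edges, and every $\ci \subsetneq \sep$ with $|\ci|\ge 2$ gives $\unlabeledsubtree{\ci}$ of diameter at most $2$ (a node, an edge, or a non-edge star inside $\unlabeledsubtree{\sep}$), the only type that could contribute more than $2$ to the sum is a non-edge star --- and there are at most two such sub-CIs, because the center of such a star must be one of the $\le 2$ internal nodes of $\unlabeledsubtree{\sep}$, and two non-edge-star CIs with the same center would violate Property~\ref{pty-ci:2} (their union is a star of the same diameter, forcing one to equal the real nodes of the union, hence to contain the other, and then the diameter must strictly drop). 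Each non-edge star contributes at most $|\sep|$, so the total is $O(|\sep|)$.

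Your bistar plan instead routes through writing $\ci=\maxk_x\cap\maxk_y$, reducing to traces $\maxk_y\cap\sep$, and then invokes a ``clique-tree counting argument'' that is only sketched. The sub-claim you state --- that each vertex of $\sep$ lies in only boundedly many maximal cliques adjacent in the clique-tree to $\maxk_a$ or $\maxk_b$ whose trace contains it --- is not established and does not appear to be true in general; nor is it the right quantity to bound, since the sum is over \emph{distinct} clique-intersections, not over maximal cliques. Nothing in your outline forces the distinct traces of size $\ge 3$ to be few. If you replace that hand-waving with the ``at most one non-edge star per internal node of $\unlabeledsubtree{\sep}$'' observation (which already applies uniformly to the star and bistar cases, making your case split and the appeal to Lemma~\ref{lem:bistar-center} unnecessary), the proof closes.
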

\begin{proofclaim}
Since $\unlabeledsubtree{\sep}$ has ${\cal O}(|\sep|)$ nodes and ${\cal O}(|\sep|)$ edges, we only need to consider the clique-intersections $\ci \subset \sep$ such that $\unlabeledsubtree{\ci}$ is a non-edge star. We are done as there can only be at most two such clique-intersections.
\end{proofclaim}
This first result above tells us how to compute the light parts and heavy parts of $\sep$.
Specifically, we can simply enumerate all the clique-intersections $\ci$ such that $\ci \subset \sep$, by using the clique-arrangement of $G$.
Furthermore if this enumeration takes more than ${\cal O}(|\sep|)$-time then we can stop as by Claim~\ref{claim:compute-parts}, $\sep$ cannot be mapped to a bistar in any $4$-Steiner root of $G$.
Overall, all the light parts and heavy parts of $\sep$ can be enumerated in total time ${\cal O}(|\sep|)$ if the clique-arrangement of $G$ is given, that is in ${\cal O}(n)$.

\begin{myclaim}\label{claim:compute-k-dependent}
All the $\maxk$-dependent vertices in $\sep$ can be computed in total ${\cal O}(nm\log{n})$-time.
\end{myclaim}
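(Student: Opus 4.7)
The plan is to reduce the $\maxk$-dependency check to an enumeration over the clique-arrangement of $G$. By Definition~\ref{def:x-free}, $v \in \sep$ is $\maxk$-dependent if and only if there exists $\ci \in \CI{G}$ with $v \in \ci$, $\sep \cap \ci = \{v\}$, and $|\maxk \cap \ci| \geq 2$. Since $G$ is strongly chordal, by Theorem~\ref{thm:clique-arrangement} the clique-arrangement of $G$ can be computed in ${\cal O}(m\log n)$ time, and every $\ci \in \CI{G}$ is compactly represented as the intersection of at most two maximal cliques; in particular the total representation size of the clique-arrangement is ${\cal O}(m\log n)$.

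With the clique-arrangement pre-computed, I would proceed as follows for each pair $(\sep,\maxk)$: first, build the characteristic bit-vectors of $\sep$ and $\maxk$ in ${\cal O}(n)$ time; then, scan every node $\ci$ of the clique-arrangement and compute $|\ci \cap \sep|$ and $|\ci \cap \maxk|$ in ${\cal O}(|\ci|)$ time. Whenever $|\ci \cap \sep| = 1$ (with unique vertex $v$) and $|\ci \cap \maxk| \geq 2$, mark $v$ as $\maxk$-dependent. Correctness is immediate from the characterization above.

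For the running time, the work per $(\sep,\maxk)$ pair is ${\cal O}\bigl(\sum_{\ci \in \CI{G}} |\ci|\bigr) = {\cal O}(m\log n)$, which is dominated by the total size of the clique-arrangement in a strongly chordal graph. By Lemma~\ref{lem:bistar-center}, in the regime of interest $\sep$ is strictly contained in at most two maximal cliques, so each $\sep$ participates in at most two pairs $(\sep,\maxk)$; since $|\SEP{G}| = {\cal O}(n)$, the total number of pairs is ${\cal O}(n)$, yielding the claimed overall bound ${\cal O}(nm\log n)$.

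The main obstacle is the per-pair bound on $\sum_\ci |\ci|$: a naive enumeration over pairs of maximal cliques would cost $\Theta(n^2)$ per pair, exceeding our budget. The improvement relies on the compact representation and near-linear total size of the clique-arrangement in strongly chordal graphs, as guaranteed by Theorem~\ref{thm:clique-arrangement}; without this structural fact, the same algorithm would only yield a ${\cal O}(n^2 m)$ bound.
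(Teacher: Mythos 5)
Your algorithm — enumerate the nodes of the clique-arrangement, compute $|\ci\cap\sep|$ and $|\ci\cap\maxk|$ by bit-vector lookup, and flag $v$ when $1=|\ci\cap\sep|<|\ci\cap\maxk|$ — is exactly the paper's, and the characterization of $\maxk$-dependency is right. The problem is the per-pair time bound.

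You assert $\sum_{\ci\in\CI{G}}|\ci|={\cal O}(m\log n)$, citing the compact representation of the clique-arrangement. That does not follow: the clique-arrangement can represent each $\ci$ with $O(1)$ pointers to two maximal cliques, so the arrangement's representation size being ${\cal O}(m\log n)$ only bounds $|\CI{G}|$, not $\sum_{\ci}|\ci|$. Your scan, however, must touch all of $\ci$'s vertices to compute $|\ci\cap\sep|$ and $|\ci\cap\maxk|$, which truly costs $\Theta(|\ci|)$ per node. The only safe bound available is $\sum_{\ci}|\ci|\leq n\cdot|\CI{G}|={\cal O}(nm\log n)$, which is what the paper uses, and that already exhausts the claimed ${\cal O}(nm\log n)$ budget for a \emph{single} pair $(\sep,\maxk)$. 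The claim is in fact per-pair: the aggregation over all ${\cal O}(n)$ separators (each in at most two maximal cliques, as you correctly note via Lemma~\ref{lem:bistar-center}) is handled separately in the paper's ``Final comments'' and yields ${\cal O}(n^2m\log n)$ total, not ${\cal O}(nm\log n)$. So either establish the stronger bound on $\sum_{\ci}|\ci|$ — which the paper does not claim and which does not obviously hold — or fall back to the weaker bound and treat the claim as per-pair, dropping the ${\cal O}(n)$-pairs multiplication.
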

\begin{proofclaim}
For every clique-intersection $\ci \in \CI{G}$, we compute $\ci \cap \sep$ and $\ci \cap \maxk$.
This can be done in ${\cal O}(|\ci|)$-time assuming a trivial pre-processing in ${\cal O}(|\sep|+|\maxk|)$-time for marking all the vertices in $\sep$ and in $\maxk$, respectively.
Then, if $ 1 = |\ci \cap \sep| < |\ci \cap \maxk|$ the unique vertex of $\ci \cap \sep$ is $\maxk$-dependent.
Conversely, it follows from the definition that any $\maxk$-dependent vertex in $\sep$ can be computed this way.
The total running time is in ${\cal O}(\sum_{\ci \in \CI{G}}|\ci|) = {\cal O}(n |\CI{G}|)$.
Since $\CI{G}$ is exactly the vertex-set of the clique-arrangement of $G$, by Theorem~\ref{thm:clique-arrangement} we get ${\cal O}(n |\CI{G}|) = {\cal O}(nm\log{n})$.
\end{proofclaim}
We recall that there are ${\cal O}(n)$ minimal separators in $G$.
Therefore our above approach only requires a pre-processing in total ${\cal O}(n^2m\log{n})$-time, that is in $o(n^5m)$.
\end{proof}

\begin{remark}\label{rk:extension-minsep}
We only use the fact that $\sep$ is a minimal separator when we process the so-called ``sandwiched vertices''.
All other arguments in our proof stay valid for clique-intersections. 
Furthermore, as already observed for Lemma~\ref{lem:almost-simplicial-characterization}, such ``sandwiched vertices'' do not exist in maximal cliques.  
Therefore, we can also use the algorithm of Theorem~\ref{thm:minsep} in order to generate, for any maximal clique $\maxk_i$ without a $\maxk_i$-free vertex, all possible subtrees $\unlabeledsubtree{\maxk_i}$ of diameter at most $3$ in any well-structured $4$-Steiner root of $G$.
\end{remark}

\subsection{Case of a Leaf Node}\label{sec:leaf-case}

In this section, we generalize the construction of Theorem~\ref{thm:minsep} to the maximal cliques that can be leaves in a rooted clique-tree.
For that, we use a well-known decomposition of these maximal cliques into a unique minimal separator and a set of simplicial vertices.

\begin{theorem}\label{thm:leaf}
Given $G=(V,E)$ and a rooted clique-tree $\cliquetree{G}$ of $G$, let $\maxk_i \in \MAXK{G}$ be a leaf.
We can construct, in time polynomial in $|\maxk_i|$, a set ${\cal T}_i$ of $4$-Steiner roots for $G_i := G[\maxk_i]$ with the following additional property: In any {\em well-structured} $4$-Steiner root $\tree$ of $G$, there exists $\labeledsubtree[2]{i}{} \in {\cal T}_i$ Steiner-equivalent to $\unlabeledsubtree{\maxk_i}$.
\end{theorem}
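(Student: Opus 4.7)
The plan is to piggy-back on Theorem~\ref{thm:minsep}: once candidates for $\unlabeledsubtree{\sep_i}$ are available, it should suffice to graft the simplicial vertices of $\maxk_i \setminus \sep_i$ onto each such candidate in every admissible way.

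To set the stage, note that since $G$ is not complete (resolved at Step~\ref{step-0}), the rooted clique-tree has at least two nodes, so $\sep_i \neq \emptyset$ and $\maxk_i \setminus \sep_i \neq \emptyset$ by maximality of $\maxk_i$. Because $\maxk_i$ is a leaf of $\cliquetree{G}$, every vertex of $\maxk_i \setminus \sep_i$ belongs to no other maximal clique, hence is simplicial, and by Lemma~\ref{lem:almost-simplicial-characterization} is $\maxk_i$-free. Therefore in any well-structured $4$-Steiner root $\tree$ of $G$, Property~\ref{pty-x-free:4} of Theorem~\ref{thm:x-free} together with Lemma~\ref{lem:almost-simplicial-placement} force $diam(\unlabeledsubtree{\maxk_i}) = 4$, and each simplicial $v \in \maxk_i \setminus \sep_i$ is a leaf of $\unlabeledsubtree{\maxk_i}$ at distance exactly $2$ from the unique central node $c$, with a degree-two Steiner node on the connecting path. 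The remaining vertices of $\maxk_i$ are exactly those in $\sep_i$, and they sit inside the subtree $\unlabeledsubtree{\sep_i} \subseteq \unlabeledsubtree{\maxk_i}$.

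I would proceed in three steps. First, invoke Theorem~\ref{thm:minsep} on $\sep_i$ to obtain the polynomial-size family ${\cal T}_{\sep_i}$ of candidates for $\unlabeledsubtree{\sep_i}$. Second, for each $T_{\sep} \in {\cal T}_{\sep_i}$ enumerate the admissible positions of the center $c$ of $\unlabeledsubtree{\maxk_i}$: since $T_{\sep} \subseteq \unlabeledsubtree{\maxk_i}$ and $c$ must have eccentricity $2$ in the latter, either $T_{\sep}$ is a bistar and Lemma~\ref{lem:bistar-center} forces $c \in \centre{T_{\sep}}$ (two candidates), or $T_{\sep}$ has diameter at most $2$ and $c$ lies within distance $1$ of the unique $c_{\sep} \in \centre{T_{\sep}}$, giving $c_{\sep}$ itself, any neighbour of $c_{\sep}$ in $T_{\sep}$, and a fresh Steiner node attached to $c_{\sep}$ by a new edge---altogether $O(|\sep_i|)$ candidates. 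Third, for every pair $(T_{\sep},c)$ I would build $\unlabeledsubtree{\maxk_i}$ by augmenting $T_{\sep}$ with the new node $c$ and the edge $c\,c_{\sep}$ (when $c \notin V(T_{\sep})$), then hanging each $v \in \maxk_i \setminus \sep_i$ off $c$ through a fresh degree-two Steiner node, and place the resulting tree in ${\cal T}_i$.

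Every tree so produced will have diameter at most $4$ and real-node set equal to $\maxk_i$, hence will be a $4$-Steiner root of the complete graph $G_i = G[\maxk_i]$. Correctness---that every well-structured $\unlabeledsubtree{\maxk_i}$ is captured up to Steiner equivalence---will reduce to the correctness of Theorem~\ref{thm:minsep} applied to $\unlabeledsubtree{\sep_i}$, together with the rigid shape of the simplicial branches given by Lemma~\ref{lem:almost-simplicial-placement} and the forced location of $c$ given by Lemma~\ref{lem:bistar-center}. With $|{\cal T}_{\sep_i}| = O(|\sep_i|^5)$ and $O(|\sep_i|)$ center choices per candidate, each assembled in linear time, the whole procedure will run in time polynomial in $|\maxk_i|$. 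The only real subtlety is the diameter-$\leq 2$ case: one must not forget the ``fresh Steiner'' position for $c$, which lies in $V(\unlabeledsubtree{\maxk_i})$ but outside $V(T_{\sep})$ and yields genuinely new candidates not Steiner-equivalent to any obtained with $c \in V(T_{\sep})$.
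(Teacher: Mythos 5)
Your approach is essentially the paper's: invoke Theorem~\ref{thm:minsep} to get ${\cal T}_{\sep_i}$, then for each candidate $T_{\sep}$ enumerate a small set of positions for the center $c$ of $\unlabeledsubtree{\maxk_i}$, and graft the simplicial vertices of $\maxk_i \setminus \sep_i$ onto $c$ via Lemma~\ref{lem:almost-simplicial-placement}. The structural observations (simplicial vertices are $\maxk_i$-free, Property~\ref{pty-x-free:4} forces $diam(\unlabeledsubtree{\maxk_i})=4$, Lemma~\ref{lem:bistar-center} pins the center when $T_{\sep}$ is a bistar) all match.

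There is, however, a gap in your enumeration of center positions for small-diameter $T_{\sep}$. You claim that when $diam(T_{\sep}) \leq 2$ the center $c$ lies within distance $1$ of ``the unique $c_{\sep} \in \centre{T_{\sep}}$,'' and you accordingly allow only $c_{\sep}$ itself, its neighbours in $T_{\sep}$, or a single fresh Steiner node pendant to $c_{\sep}$. This is justified when $T_{\sep}$ is a non-edge star (by Lemma~\ref{lem:center-in-star}), but fails in the degenerate case $|\sep_i| = 1$, say $\sep_i = \{v\}$. There $T_{\sep}$ is the single node $v$, and $v$ is a cut-vertex with no other minimal separator of $\maxk_i$ containing it, hence $\maxk_i$-free by Lemma~\ref{lem:almost-simplicial-characterization}. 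In any well-structured root, Lemma~\ref{lem:almost-simplicial-placement} then forces $dist_{\tree}(v, c) = 2$ with a degree-two Steiner node in between --- exactly the position your enumeration omits, so ${\cal T}_i$ as you build it would fail to contain a tree Steiner-equivalent to $\unlabeledsubtree{\maxk_i}$. (Also, for $diam(T_{\sep}) = 1$ the center $\centre{T_{\sep}}$ is an edge, not a single node, so ``the unique $c_{\sep}$'' is ill-posed, though this turns out not to lose solutions.) The paper sidesteps both issues with a slightly coarser but safer enumeration: pick \emph{any} node $r \in V(T_{\sep})$, posit that $r$ is the node of $T_{\sep}$ closest to $c$, and let $dist(r,c) \in \{0,1,2\}$, connecting by a Steiner path; any resulting trees of the wrong diameter are discarded afterwards. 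Replacing your distance-$\leq 1$ bound by distance-$\leq 2$ (and allowing the attachment point to range over all of $V(T_{\sep})$, not just its center) repairs the argument.
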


\begin{proof}
Let $\maxk_{p(i)}$ be the father node of $\maxk_i$.
By Theorem~\ref{thm:clique-tree-pties}, $\sep_i := \maxk_i \cap \maxk_{p(i)}$ is a minimal separator.
We compute the family ${\cal T}_{\sep_i}$ given by Theorem~\ref{thm:minsep}.
Then, in order to compute a candidate subtree $\labeledsubtree{i}{}$, to be added into ${\cal T}_i$, we consider all the subtrees $\labeledsubtree{\sep_i}{} \in {\cal T}_{\sep_i}$ and we proceed as follows.
We select a node in $\labeledsubtree{\sep_i}{}$ that we assume to be closest to $\centre{\labeledsubtree{i}{}}$ (hence, ${\cal O}(|V(\labeledsubtree{\sep_i}{})|) = {\cal O}(|\sep_i|)$ possibilities), and we set its distance to the center (this can only be $0,1$ or $2$). 
In doing so, we can assume $\centre{\labeledsubtree{i}{}}$ to be added into $\labeledsubtree{\sep_i}{}$.
Finally, the vertices in $\maxk_i \setminus \sep_i$ are all simplicial, and so, we can connect them to $\centre{\labeledsubtree{i}{}}$ as explained in Lemma~\ref{lem:almost-simplicial-placement} (one possibility up to Steiner equivalence).
Note that in doing so, we may also obtain solutions $\labeledsubtree{i}{}$ such that $diam(\labeledsubtree{i}{}) \neq 4$, that we will need to discard. See Fig.~\ref{fig:keaf-construction} for an illustration. 
Overall, $|{\cal T}_i| = {\cal O}(|\sep_i||{\cal T}_{\sep_i}|) = {\cal O}(|\sep_i|^6)$.
\end{proof}

\begin{figure}
\centering
\includegraphics[width=.4\textwidth]{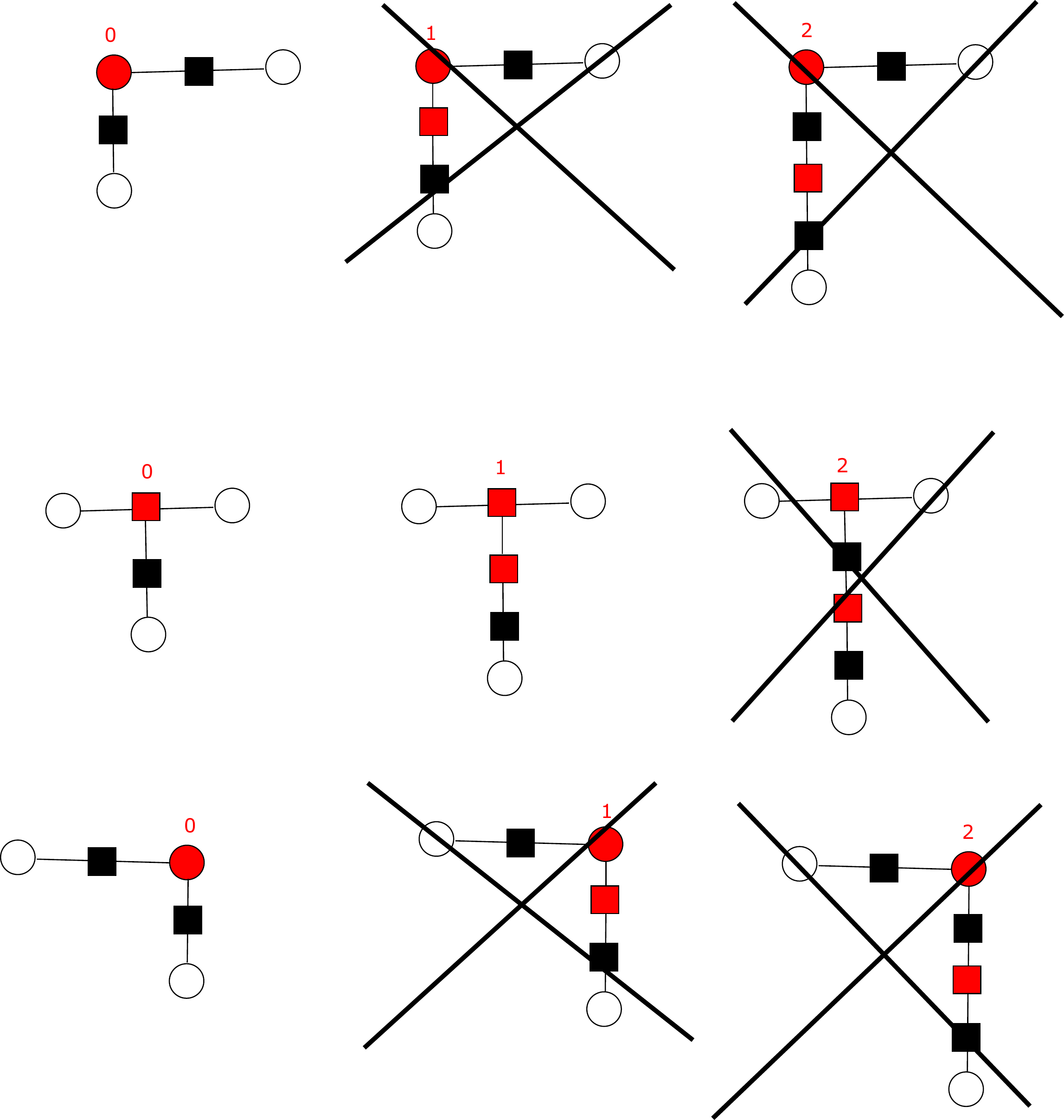}
\caption{An execution of the algorithm of Theorem~\ref{thm:leaf}. The minimal separator has size two and induces a star. There is one simplicial vertex to add.}
\label{fig:keaf-construction}
\end{figure}

\subsection{Case of an internal node}\label{sec:super-selection}

Given $G=(V,E)$ chordal, let $\cliquetree{G}$ be an arbitrary rooted clique-tree of $G$ and let $\maxk_i \in \MAXK{G}$ be an internal node of $\cliquetree{G}$.
We want to compute a polynomial-size representation for the family ${\cal F}_i$ of all possible subtrees $\unlabeledsubtree{\maxk_i}$ in a well-structured $4$-Steiner root of $G$.

\smallskip
\noindent
{\bf Some preliminary observations.}
By Lemma~\ref{lem:almost-simplicial-placement} there always exists a root $\tree$ where all $\maxk_i$-free vertices are leaves of $\unlabeledsubtree{\maxk_i}$ and connected to $\centre{\unlabeledsubtree{\maxk_i}}$ by a path of length two whose internal node is Steiner.
Thus, we can first assume for simplicity $\maxk_i$ does not contain any $\maxk_i$-free vertex --- such vertices, if they exist, will be added at the end of the construction.  
Furthermore, as noticed earlier (Remark~\ref{rk:extension-minsep}) we can use the algorithmic proof of Theorem~\ref{thm:minsep} in order to generate all the subtrees $\unlabeledsubtree{\maxk_i}$ of diameter at most three to be added in ${\cal F}_i$.
Summarizing, as a consequence of other results in this paper we are only interested in maximal cliques $\maxk_i$ with no $\maxk_i$-free vertex and in generating subtrees of diameter {\em exactly} four.

\smallskip
\noindent
{\bf Organization of this part.}
Our main tool for this task is a careful analysis of the intersections between the minimal separators in $\maxk_i$ (Section~\ref{sec:minsep-intersections}).
Unfortunately, sometimes we cannot derive from this information a polynomial bound on the number of possible subtrees.
We identify the only degenerate case when this cannot be done, and show how to handle with it, in Section~\ref{sec:degenerate}.
Proposition~\ref{prop:internal} in Section~\ref{sec:compute-fi} will summarize our results for this part.

\subsubsection{Getting more from clique-intersections}\label{sec:minsep-intersections}

We will use the following lemma in order to prove our first result in this section:

\begin{lemma}\label{lem:real-center-bistar}
Let $G = (V,E)$ be strongly chordal, let $\sep \in \SEP{G}$, let $\maxk$ be a maximal clique containing $\sep$ and let $R,c$ be such that $R \subset S$ and either $c \in R$ or $c$ is Steiner.
We can compute in ${\cal O}(nm\log{n})$-time a node $c'$ with the following properties:
\begin{itemize}
\item For any well-structured $4$-Steiner root $\tree$ of $G$ such that: $\unlabeledsubtree{\sep}$ is a bistar, $c \in \centre{\unlabeledsubtree{\sep}} \setminus \centre{\unlabeledsubtree{\maxk}}$, and $Real(N_{\tree}[c]) = R$, there exists a well-structured root $\tree[2]$ with the same properties such that: $\centre{\unlabeledsubtree[2]{\maxk}} = \{c'\}$, and $dist_{\tree[2]}(u,v) \geq dist_{\tree}(u,v)$ for every $u,v \in V$. Moreover, either $\tree \equiv_G \tree[2]$, or $\sum_{u,v \in V} dist_{\tree[2]}(u,v) > \sum_{u,v \in V} dist_{\tree}(u,v)$. 
\end{itemize}
\end{lemma}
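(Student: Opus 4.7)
The plan is to determine $c'$ canonically from $(G,\sep,\maxk,R,c)$ ---taking a Steiner node as the default and a real vertex of $R$ only as a rigid exception--- and then show by a local surgery that any valid $\tree$ can be modified into a root $\tree[2]$ centered at $c'$ without ever decreasing a real-to-real distance.

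Given a valid $\tree$, Lemma~\ref{lem:bistar-center} forces $\centre{\unlabeledsubtree{\maxk}}=\{c_1\}$ with $c_1$ the unique node of $\centre{\unlabeledsubtree{\sep}}\setminus\{c\}$. In particular $c_1\in N_{\tree}(c)$, and since $Real(N_{\tree}[c])=R$ a real $c_1$ must lie in $R$. Intuitively a Steiner center introduces extra slack between $c$ and the opposite side of the bistar, so every real-to-real distance in a Steiner-centered $\tree$ weakly dominates the corresponding distance in a real-centered one. I would therefore set $c'$ to be a fresh Steiner node by default. The exception occurs when the structure of $G$ is so rigid that every valid $\tree$ must have $c_1=v$ for some specific $v\in R$; by Theorems~\ref{thm:clique-intersection} and~\ref{thm:x-free} this rigidity arises exactly when a clique-intersection $\ci$ with $\sep\subsetneq\ci\subseteq\maxk$ pins the center, which can be read off the clique-arrangement. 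Testing rigidity at a single vertex costs $O(m\log n)$ by Theorem~\ref{thm:clique-arrangement}, and with $|R|\leq|\sep|=O(n)$ candidates the total cost is $O(nm\log n)$.

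To build $\tree[2]$ from $\tree$ I would distinguish three cases: $c_1=c'$ (take $\tree[2]:=\tree$); both Steiner (rename and obtain $\tree\equiv_G\tree[2]$); and $c_1$ real with $c'$ Steiner. In the last case I would perform the following surgery: remove $c_1$ from the center position, reinsert it as a new leaf adjacent to $c$, and place $c'$ as a fresh Steiner node in the vacated center position, inheriting all edges that $c_1$ had to the opposite side of the bistar and to $\maxk\setminus\sep$. Every real-to-real path previously routed through $c_1$ either is unchanged or picks up exactly one extra edge (a real internal node has been replaced by a Steiner node together with the newly added leaf-edge for $c_1$); thus the sum of real-to-real distances strictly increases, yielding the last clause of the lemma. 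One then verifies using Theorem~\ref{thm:x-free} that $\tree[2]$ is still a well-structured $4$-Steiner root of $G$, that $\unlabeledsubtree[2]{\sep}$ is still a bistar with $c$ as one of its centers, that $Real(N_{\tree[2]}[c])=R$ is preserved (since the real neighbours of $c$ in $\tree[2]$ are the previous $R\setminus\{c\}$ together with $c_1$, hence the same set as before), and that $diam(\unlabeledsubtree[2]{\maxk})=4$ with $c'\in\centre{\unlabeledsubtree[2]{\maxk}}$.

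The hardest part will be pinning down the rigidity condition: it must be both necessary (if it fails then some Steiner-centered $\tree$ exists, so the default Steiner choice is safe) and sufficient (if it holds then the forced real $v$ really does admit some valid well-structured root), and it must be computable from the clique-arrangement within the stated time budget. A secondary subtlety is checking that the surgery does not shorten any distance from a real vertex outside $V(\unlabeledsubtree{\maxk})$ to a vertex it reached through the former $c_1$: this is handled by observing that such external vertices lie in subtrees hanging off $\unlabeledsubtree{\maxk}$ whose attachment points are not modified, so all incoming paths traverse the replaced region by at least one extra edge.
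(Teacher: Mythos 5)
Your high-level plan (default to a Steiner $c'$, fall back to a forced real vertex only when the structure is rigid, then perform a local surgery that can only lengthen real-to-real distances) matches the paper's strategy, but both of your two key ingredients contain genuine gaps.

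First, the rigidity condition is wrong in a way that can't be patched along the lines you sketch. You characterize it as ``a clique-intersection $\ci$ with $\sep\subsetneq\ci\subseteq\maxk$ pins the center.'' But by Lemma~\ref{lem:bistar-center}, $\sep$ is contained in exactly two maximal cliques when $\unlabeledsubtree{\sep}$ is a bistar, so the only clique-intersections $\ci$ with $\sep\subsetneq\ci\subseteq\maxk$ are $\maxk$ itself --- which pins nothing. The actual obstructions live \emph{inside} $\sep$, not above it: the paper's rule picks $c'\in R$ if there is a clique-intersection $\ci\subset\sep$ with $\ci\not\subseteq R$ and $\ci\cap(R\setminus\{c\})\neq\emptyset$ (a heavy/light part of $\sep$ straddling the two centers of the bistar), or if some $c'\in R\setminus\{c\}$ is $\maxk$-dependent (i.e., $(\sep,\maxk,\ci_2)$-sandwiched for some $\ci_2$). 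Your condition simply does not detect either of these, and neither of them is of the form ``$\sep\subsetneq\ci\subseteq\maxk$.'' You yourself flagged pinning down rigidity as the hardest part; it is, and your proposed test would miss it entirely.

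Second, your surgery is incomplete and breaks the $4$-Steiner root property. You detach $c_1$ from the center and reinsert it as a bare leaf at $c$, while $c'$ inherits \emph{all} of $c_1$'s incident subtrees. But $c_1$ is a real vertex, and if it has a real neighbour (in $G$) $w$ sitting in a pendant component of $\tree\setminus V(\unlabeledsubtree{\sep})$ attached at $c_1$, then after the move $dist(c_1,w)$ grows by $2$ (the route now goes $c_1\to c\to c'\to\cdots\to w$), which can push it past $4$. The paper's Operation~\ref{op:move}/$Q_{c_2}$ machinery exists precisely to avoid this: it carries along with $c_2=c_1$ all pendant components that contain no real node within distance $4$ of $\sep\setminus\{c_2\}$, so that $c_2$'s distances to those ``private'' real neighbours are preserved, and then one proves (Claim~\ref{claim:op-2}) that the remaining attachments left at $c'$ cannot harbor any constraint-violating pair, using the clique-intersection conditions of Claims~\ref{claim:op2-before} and \ref{claim:op2-before-b}. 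Your ``secondary subtlety'' paragraph addresses external-to-external distances, which were never at risk; it does not address the $c_1$-to-$w$ distances that actually can break. Without a $Q_{c_2}$-like splitting, the surgery is unsound.
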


In order to better understand the significance of Lemma~\ref{lem:real-center-bistar}, assume that $\unlabeledsubtree{\sep}$ should be a bistar in the final solution we want to compute, and that we already identified one of its center node $c$ and the set of real nodes $R$ to which this node must be adjacent.
What this above property says is that there is essentially one canonical way to compute the bistar given $R$ and $c$.
The more technical condition $dist_{\tree[2]}(u,v) \geq dist_{\tree}(u,v)$ is simply there in order to ensure that by doing so, we cannot miss a solution of an intermediate problem we call {\sc Distance-Constrained Root} ({\it i.e.}, see Section~\ref{sec:encoding}).
Finally, our condition on the potential function $\sum_{u,v \in V} dist_{\tree[2]}(u,v)$ increasing ensures that we can repeatedly apply our ``canonical completion'' method for arbitrarily many minimal separators $\sep$.

\begin{proof}
%
We define a node $c' \notin V \setminus R$ (either in $R$ or Steiner), as follows:
\begin{itemize}
\item If there exists a clique-intersection $\ci \subset \sep$ such that $\ci \not\subseteq R$ and $\ci \cap (R \setminus \{c\}) \neq \emptyset$ then, we pick $c' \in (\ci \cap R) \setminus \{c\}$;
\item Else, if there exists $c' \in R \setminus c$ that is $\maxk$-dependent ({\it i.e.}, $c'$ is $(\sep,\maxk,\ci_2)$-sandwiched for some $\ci_2 \in \CI{G}$) then, we output $c'$.
\item Otherwise, $c'$ is Steiner.
\end{itemize}
By using our previous Claims~\ref{claim:compute-parts} and~\ref{claim:compute-k-dependent}, we can compute the above $c'$ in ${\cal O}(nm\log{n})$-time.

\medskip
\noindent
{\bf Correctness.}
In what follows, let $\tree$ be any well-structured $4$-Steiner root of $G$ such that $\unlabeledsubtree{\sep}$ is a bistar, $c \in \centre{\unlabeledsubtree{\sep}} \setminus \centre{\unlabeledsubtree{\maxk}}$, and $Real(N_{\tree}[c]) = R$.
Furthermore, let $\centre{\unlabeledsubtree{\sep}} = \{c,c_2\}$.
We divide the proof into the following claims:
\begin{myclaim}\label{claim:op2-before}
If there exists a clique-intersection $\ci \subset \sep$ such that $\ci \not\subseteq R$ and $\ci \cap (R \setminus \{c\}) \neq \emptyset$ then, $c' = c_2$.
\end{myclaim}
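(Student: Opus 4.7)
The plan is to exploit the bistar structure of $\unlabeledsubtree{\sep}$ together with the diameter bound forced on $\unlabeledsubtree{\ci}$ by Theorem~\ref{thm:clique-intersection}. Since $\unlabeledsubtree{\sep}$ is a bistar with centers $\{c,c_2\}$ and $c \notin \centre{\unlabeledsubtree{\maxk}}$, Lemma~\ref{lem:bistar-center} immediately gives $\centre{\unlabeledsubtree{\maxk}} = \{c_2\}$. Moreover, since $\ci \subsetneq \sep$, applying Property~\ref{pty-ci:2} of Theorem~\ref{thm:clique-intersection} to $\unlabeledsubtree{\ci} \subsetneq \unlabeledsubtree{\sep}$ yields $diam(\unlabeledsubtree{\ci}) < diam(\unlabeledsubtree{\sep}) = 3$, hence $diam(\unlabeledsubtree{\ci}) \leq 2$. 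This bound is what will do all the work.

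Next I would locate the two witnesses provided by the hypotheses inside the bistar. Any $u \in \ci \setminus R$ is a real node of $\unlabeledsubtree{\sep}$ that does not belong to $N_{\tree}[c]$; the only such nodes in $\unlabeledsubtree{\sep}$ are the leaves attached to $c_2$, so $u$ must be one of them (in particular $u \neq c_2$). Dually, any $v \in \ci \cap (R \setminus \{c\})$ is a real neighbor of $c$ in $\tree$ distinct from $c$ and lying in $\sep$, so either $v = c_2$ (only possible when $c_2$ is real) or $v$ is a leaf of $\unlabeledsubtree{\sep}$ attached to $c$.

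The central step is to rule out the second alternative for $v$: were $v$ a leaf attached to $c$, the unique $vu$-path in $\tree$ would be $v - c - c_2 - u$ of length $3$, contradicting $diam(\unlabeledsubtree{\ci}) \leq 2$. Hence $v = c_2$, which shows both that $c_2$ is real and that $c_2 \in \ci$; consequently $c_2 \in (\ci \cap R) \setminus \{c\}$ (since any real neighbor of $c$ in $\sep$ lies in $R$). The very same diameter obstruction, applied to a hypothetical $w \in (\ci \cap R) \setminus \{c,c_2\}$, rules it out: such a $w$ would be a leaf attached to $c$ and the pair $w,u$ would again be at distance $3$ in $\unlabeledsubtree{\ci}$. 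Therefore $(\ci \cap R) \setminus \{c\} = \{c_2\}$, so the deterministic construction of $c'$ in the first bullet is forced to output $c' = c_2$.

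I do not anticipate any serious obstacle here. Once the diameter bound $diam(\unlabeledsubtree{\ci}) \leq 2$ is in place, the claim reduces to a short distance count in a diameter-three tree; the only mild subtlety is keeping track of whether $c_2$ is real or Steiner, but both cases are handled uniformly by the observation that $c_2 \in N_{\tree}(c)$ and therefore $c_2$ automatically belongs to $R$ whenever it is real.
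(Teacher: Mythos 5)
Your proof is correct and follows essentially the same route as the paper's (the paper's one-line argument that $\ci$ cannot contain a leaf adjacent to $c$ is precisely the diameter obstruction you spell out via Property~\ref{pty-ci:2}, which forces $diam(\unlabeledsubtree{\ci}) \leq 2$ and hence excludes a pair consisting of a leaf at $c$ and a node of $\ci \setminus R$ at $c_2$). The extra observations about $\centre{\unlabeledsubtree{\maxk}} = \{c_2\}$ are harmless but not needed here.
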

\begin{proofclaim}
As we have $\ci \not\subseteq R$ and $Real(N_T[c]) = R$, $\ci$ cannot contain any leaf node of $\unlabeledsubtree{\sep}$ adjacent to $c$.
Thus, the only possible node in $(\ci \cap R) \setminus \{c\}$ must be $c_2$.
\end{proofclaim}
\begin{myclaim}\label{claim:op2-before-b}
If there exists $c' \in R \setminus c$ that is $\maxk$-dependent then, $c' = c_2$.
\end{myclaim}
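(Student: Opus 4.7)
The plan is to argue by contradiction: suppose $c' \neq c_2$, so that in the bistar $\unlabeledsubtree{\sep}$, $c'$ is a leaf adjacent to $c$. By Lemma~\ref{lem:bistar-center} and Lemma~\ref{lem:no-center-intersect}, $\sep$ is contained in exactly two maximal cliques $\maxk$ and $\maxk'$, with $\centre{\unlabeledsubtree{\maxk}} = \{c_2\}$ and $\centre{\unlabeledsubtree{\maxk'}} = \{c\}$; moreover by Property~\ref{pty-ci:2} of Theorem~\ref{thm:clique-intersection} we have $diam(\unlabeledsubtree{\maxk}) = 4$, hence $rad(\unlabeledsubtree{\maxk}) = 2$.

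I would next observe that $c'$ has $c$ as its unique neighbour inside $\unlabeledsubtree{\maxk}$: any other neighbour of $c'$ in $\tree$ would lie at distance $3$ from $c_2$, hence outside $\unlabeledsubtree{\maxk}$. Since $c'$ is $\maxk$-dependent via some $\ci_2$, I pick $u \in (\maxk \cap \ci_2) \setminus \{c'\} \subseteq \maxk \setminus \sep$. The unique $c'u$-path in $\tree$ lies in $\unlabeledsubtree{\maxk}$ and has the form $c' - c - c_2 - \cdots - u$ of length $3$ or $4$; hence both $c$ and $c_2$ belong to $V(\unlabeledsubtree{\ci_2})$. Since $Real(\unlabeledsubtree{\ci_2}) = \ci_2$ by Property~\ref{pty-ci:1} of Theorem~\ref{thm:clique-intersection} while $\ci_2 \cap \sep = \{c'\}$, neither $c$ nor $c_2$ can be real, so both are Steiner.

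For the crux, I would fix any maximal clique $\maxk^\star \supseteq \ci_2$ and prove $c_2 \in \centre{\unlabeledsubtree{\maxk^\star}}$. If $dist_{\tree}(c', u) = 4$, the $c'u$-path is diametral in $\unlabeledsubtree{\maxk^\star}$ with middle vertex $c_2$, forcing $\centre{\unlabeledsubtree{\maxk^\star}} = \{c_2\}$ by Property~\ref{pty-lem:tree:3}. If $dist_{\tree}(c', u) = 3$, then either $diam(\unlabeledsubtree{\maxk^\star}) = 3$ (so the centre is the middle edge $\{c, c_2\}$ of the $c'u$-path) or $diam(\unlabeledsubtree{\maxk^\star}) = 4$; in the latter situation a short three-point median computation (using the two constraints $dist_{\tree}(c', \centre{\unlabeledsubtree{\maxk^\star}}), dist_{\tree}(u, \centre{\unlabeledsubtree{\maxk^\star}}) \leq 2$ together with $dist_{\tree}(c', u) = 3$) forces the unique centre to sit on the $c'u$-path, and since $u \notin \maxk'$ gives $\maxk^\star \neq \maxk'$, Lemma~\ref{lem:no-center-intersect} rules out the centre being $c$ and leaves $c_2$. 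In every case $c_2 \in \centre{\unlabeledsubtree{\maxk^\star}}$, and Lemma~\ref{lem:no-center-intersect} then yields $\maxk^\star = \maxk$. Thus every maximal clique containing $\ci_2$ equals $\maxk$, so $\ci_2 = \maxk$, giving $\{c'\} = \sep \cap \ci_2 = \sep \cap \maxk = \sep$, contradicting $|\sep| \geq 3$.

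The main obstacle will be the sub-case $dist_{\tree}(c', u) = 3$ with $diam(\unlabeledsubtree{\maxk^\star}) = 4$, where the location of the unique centre of $\unlabeledsubtree{\maxk^\star}$ is not immediate; pinning it down via the median argument and then using the dual asymmetry $\centre{\unlabeledsubtree{\maxk}} = \{c_2\}$ versus $\centre{\unlabeledsubtree{\maxk'}} = \{c\}$ to exclude $c$ as a possibility is the only delicate step.
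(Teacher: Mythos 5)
Your proof is correct and rests on the same skeleton as the paper's: make $c'$ a leaf adjacent to $c$ (forced by $Real(N_{\tree}[c])=R$), pick a witness $u\in(\maxk\cap\ci_2)\setminus\{c'\}$, show the $c'u$-path runs $c'-c-c_2-\cdots-u$, and obtain a contradiction from Lemma~\ref{lem:no-center-intersect} via a centre collision at $c_2$. The two arguments differ only in the endgame, and yours is longer than it needs to be. The paper (re-using the computation from Claim~\ref{claim:exclusive-dependency}) first notes that $u\in\maxk\setminus\sep=\maxk\setminus\maxk'$ forces $dist_{\tree}(u,c)\geq 3$ (otherwise $u$ would be within distance $4$ of all of $\maxk'$, contradicting its maximality); combined with $dist_{\tree}(u,c_2)\leq 2$ and the adjacency of $c$ and $c_2$ this pins $dist_{\tree}(u,c_2)=2$, hence $dist_{\tree}(c',u)=4$ exactly. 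Consequently $diam(\unlabeledsubtree{\ci_2})=4$ with $\centre{\unlabeledsubtree{\ci_2}}=\{c_2\}=\centre{\unlabeledsubtree{\maxk}}$, and since a clique-intersection whose subtree has diameter $4$ cannot be strictly contained in any maximal clique (Property~\ref{pty-ci:2}), $\ci_2$ is itself a maximal clique distinct from $\maxk$, so Lemma~\ref{lem:no-center-intersect} finishes immediately. Your detour through arbitrary maximal cliques $\maxk^\star\supseteq\ci_2$, the conclusion $\ci_2=\maxk$, and then $\sep=\{c'\}$ is valid, but the subcase you flag as the delicate one, $dist_{\tree}(c',u)=3$, is in fact vacuous for the reason just given, so the median computation can be dropped entirely. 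Two cosmetic remarks: the final contradiction only needs $|\sep|\geq 2$, which is automatic since the bistar $\unlabeledsubtree{\sep}$ has two real leaves at distance three (so you need not invoke $|\sep|\geq 3$); and your observation that $c$ and $c_2$ are Steiner is never used afterwards.
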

\begin{proofclaim}
Suppose for the sake of contradiction $c' \neq c_2$.
In particular, since $Real(N_{\tree}[c]) = R$, $c'$ is a leaf of $\unlabeledsubtree{\sep}$ that is adjacent to $c$.
Let $\ci_2 \in \CI{G}$ be such that $c'$ is $(\sep,\maxk,\ci_2)$-sandwiched and let $x \in (\maxk \cap \ci_2) \setminus \{c'\}$.
As already proved for Claim~\ref{claim:exclusive-dependency}, $dist_{\tree}(x,c_2) = 2$ and $dist_{\tree}(x,c) = 3$.
In particular we have $dist_{\tree}(c',x) = 4$.
Since we have $c',x \in \ci_2$, we must have $diam(\unlabeledsubtree{\ci_2}) = 4$ and the central node of $\unlabeledsubtree{\ci_2}$ must be the central node onto the $c'x$-path in $\tree$, that is $c_2$.
However, since $\maxk \neq \ci_2$ (because $\sep \not\subseteq \ci_2$) the latter contradicts Lemma~\ref{lem:no-center-intersect}.
\end{proofclaim}
By Claims~\ref{claim:op2-before} and~\ref{claim:op2-before-b} we may assume from now on $c'$ to be Steiner.
Assume $c_2 \in R$, {\it i.e.}, $c_2$ is not Steiner (otherwise, we are done).
We now reuse a transformation that we introduced in the proof of Theorem~\ref{thm:x-free} ({\it i.e.}, Operation~\ref{op:move} for $v = c_2$ and $c$).
Specifically, we define $Q_{c_2}$ as the subtree of $\tree$ induced by the union of $c_2$ with all the components of $\tree \setminus V(\unlabeledsubtree{\sep})$ that are adjacent to $c_2$ {\em and} do not contain any real node at a distance $\leq 4$ from $\sep \setminus \{c_2\}$ in $\tree$.   
We create a new tree $\tree[2]$ by first removing $V(Q_{c_2}) \setminus \{c_2\}$, then replacing $c_2$ by the Steiner node $c'$, and finally adding a copy of $Q_{c_2}$ and the edge $cc_2$.
In doing so, we can only increase the distances in $\tree[2]$ compared to $\tree$, and these distances strictly increase at least between $c_2$ and all the leaves of $\unlabeledsubtree{\sep}$ to which it was previously adjacent.

\begin{figure}[h!]
\centering
\includegraphics[width=.5\textwidth]{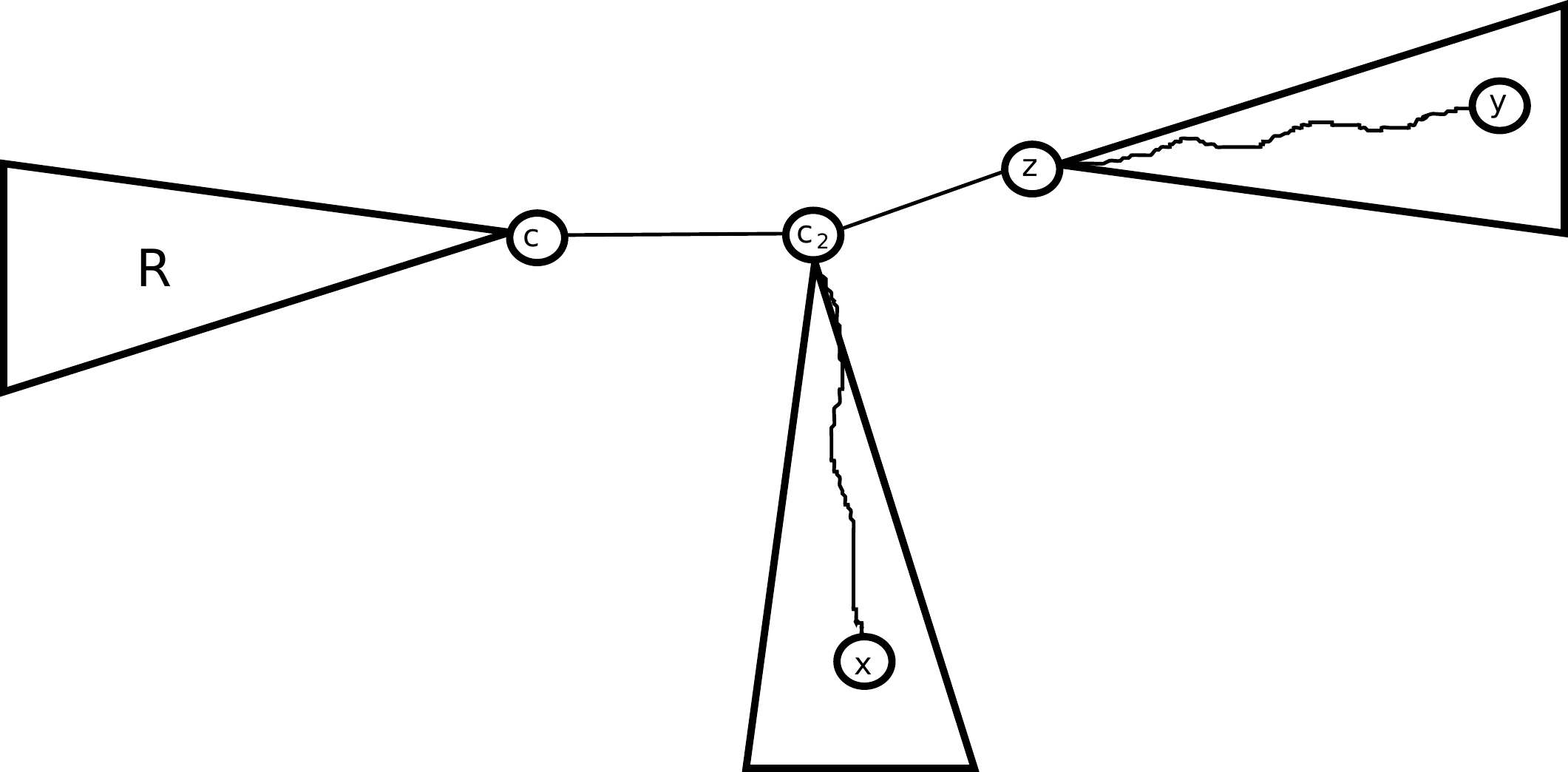}
\caption{To the proof of Lemma~\ref{lem:real-center-bistar}.}
\label{fig:canonical-bistar}
\end{figure}

\begin{myclaim}\label{claim:op-2}
$\tree[2]$ keeps the property of being a well-structured $4$-Steiner root of $G$
\end{myclaim}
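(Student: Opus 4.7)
The key observation is that the transformation producing $\tree[2]$ from $\tree$ is precisely an instance of Operation~\ref{op:move}, applied with $v = c_2$ and destination central node $c$. Because $c_2$ is not assumed to be $\sep$-free in our setting, Claim~\ref{claim:pties-op} does not apply verbatim, so my plan is to adapt its proof while exploiting the specific structural information available here: namely that Claims~\ref{claim:op2-before} and~\ref{claim:op2-before-b} \emph{did not} trigger, which is exactly the condition that led to $c'$ being chosen Steiner. First, I would verify that $\labeledsubtree[2]{}{\sep}$ remains a bistar with centers $\{c, c'\}$ and $Real(\labeledsubtree[2]{}{\sep}) = \sep$, where $c'$ is Steiner and occupies the old position of $c_2$, while $c_2$ itself becomes a leaf adjacent to $c$; it then follows immediately that $\centre{\labeledsubtree[2]{}{\maxk}} = \{c'\}$ (so $c \in \centre{\labeledsubtree[2]{}{\sep}} \setminus \centre{\labeledsubtree[2]{}{\maxk}}$ is preserved) and that $Real(N_{\tree[2]}[c]) = R$. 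For every clique-intersection $\ci' \in \CI{G}$ other than $\sep$, the subtree $\labeledsubtree[2]{}{\ci'}$ either coincides with $\unlabeledsubtree{\ci'}$ or is obtained by a controlled leaf-repositioning of $c_2$ that preserves diameter and center, by a direct adaptation of the subclaims in the proof of Claim~\ref{claim:pties-op}. Well-structuredness of $\tree[2]$ then follows vertex-by-vertex from that of $\tree$.

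The delicate part is to establish the $4$-Steiner root property. Since the transformation only \emph{increases} distances between real nodes, it suffices to check that no edge of $G$ gets stretched to distance greater than four in $\tree[2]$. The critical case is when a real vertex $y$ lies in a subtree $S'$ of $\tree \setminus V(\unlabeledsubtree{\sep})$ attached to $c_2$ with $S' \not\subseteq R_{c_2}$ and $dist_{\tree}(y, c_2) = 3$; such a $y$ would satisfy $yc_2 \in E(G)$ yet $dist_{\tree[2]}(y, c_2) = 5$. The plan is to rule this out by contradiction. Let $\maxk_j$ be any maximal clique containing $\{y, c_2\}$. Combining Lemma~\ref{lem:no-center-intersect} with the bistar structure of $\unlabeledsubtree{\sep}$ (in particular $\centre{\unlabeledsubtree{\maxk}} = \{c_2\}$ and $\centre{\unlabeledsubtree{\maxk'}} = \{c\}$), one shows $\maxk_j \notin \{\maxk, \maxk'\}$, so $\sep \not\subseteq \maxk_j$ and $\sep \cap \maxk_j$ is a proper clique-intersection of $\sep$ containing $c_2 \in R \setminus \{c\}$. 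Since Claim~\ref{claim:op2-before}'s condition did not trigger, $\sep \cap \maxk_j \subseteq R$; tight distance estimates around $c_2$ further force $\sep \cap \maxk_j \subseteq \{c, c_2\}$; and because we are in the sub-case of no heavy part, Claim~\ref{claim:no-middle-edge} precludes $\sep \cap \maxk_j = \{c, c_2\}$. Therefore $\sep \cap \maxk_j = \{c_2\}$, and a clique-tree argument along the $\maxk\maxk_j$ path yields a clique-intersection $\ci_2$ witnessing that $c_2$ is $(\sep, \maxk, \ci_2)$-sandwiched, contradicting the assumption that Claim~\ref{claim:op2-before-b}'s condition also failed.

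The hard part, unsurprisingly, is the last combinatorial step: producing a clique-intersection $\ci_2$ with $\maxk \cap \ci_2 \supsetneq \{c_2\}$ from the fact that $\sep \cap \maxk_j = \{c_2\}$. The approach I would take is to examine the labels of the edges on the $\maxk\maxk_j$ path in a clique-tree of $G$ and combine this with the tight constraint that every real vertex of $\maxk$ lies at distance at most two from $c_2$ in $\tree$, so as to extract the required common vertex in $\maxk$ and a suitable maximal clique close to $\maxk_j$. Everything else—notably well-structuredness of $\tree[2]$ and the handling of clique-intersections not containing $c_2$—should then reduce to the mechanical adaptation of the proof of Claim~\ref{claim:pties-op}.
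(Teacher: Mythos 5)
Your overall plan is the right one: adapt the machinery of Claim~\ref{claim:pties-op} and use the fact that $c'$ was chosen Steiner (neither Claim~\ref{claim:op2-before} nor Claim~\ref{claim:op2-before-b} fired) to derive a contradiction whenever an edge of $G$ would get stretched past four. You also correctly identify that the danger comes from a real node $y$ in a subtree of $\tree\setminus V(\unlabeledsubtree{\sep})$ attached to $c_2$ outside $Q_{c_2}$, at $\tree$-distance $\geq 3$ from $c_2$ (both distance $3$ and distance $4$ need ruling out). But there are two genuine gaps.

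The smaller one: the appeal ``because we are in the sub-case of no heavy part'' is unjustified. Lemma~\ref{lem:real-center-bistar} is a standalone tool that is invoked also when $\sep$ \emph{has} heavy parts ($R$ itself can be one; see its uses inside Lemma~\ref{lem:with-bistar-sep}), so you cannot lean on Claim~\ref{claim:no-middle-edge} this way. The conclusion $\sep\cap\maxk_j=\{c_2\}$ is still reachable, but by a distance estimate: if $dist_{\tree}(c_2,y)=4$ it is automatic, and if $dist_{\tree}(c_2,y)=3$ and $c\in\maxk_j$ then $dist_{\tree}(c,y)=4$, so $\centre{\unlabeledsubtree{\maxk_j}}$ is the single node $m$ one step from $c_2$ toward $y$, and every leaf of $\unlabeledsubtree{\sep}$ adjacent to $c_2$ lies at distance $2$ from $m$, hence is adjacent to all of $\maxk_j$ and must belong to $\maxk_j$ by maximality—contradicting $\sep\cap\maxk_j\subseteq\{c,c_2\}$, since the bistar has at least one such leaf.

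The larger gap is the step you concede is open, and the clique-tree path analysis you sketch is not what works; the paper's trick is purely local. The offending subtree is by definition not in $Q_{c_2}$, so it contains a real node $x'$ with $dist_{\tree}(x',\sep\setminus\{c_2\})\leq 4$, hence $dist_{\tree}(x',c_2)\leq 3$. If $x'$ were not adjacent in $G$ to all of $\sep$, then a maximal clique $\maxk^*$ containing $\{x'\}\cup(N_G(x')\cap\sep)$ produces $\ci:=\maxk^*\cap\sep$, a proper clique-intersection of $\sep$ that contains $c_2$ and all leaves of $\unlabeledsubtree{\sep}$ adjacent to $c_2$ (they lie within $\tree$-distance $4$ of $x'$) while missing the leaves adjacent to $c$; so $\ci\not\subseteq R$ and $\ci\cap(R\setminus\{c\})\neq\emptyset$, and Claim~\ref{claim:op2-before} fires. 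Hence $x'$ is adjacent to all of $\sep$, which forces $dist_{\tree}(x',c_2)\leq 2$, which forces $x'\in\maxk$ (every real vertex of $\maxk$ is within distance $2$ of $c_2$, so $x'$ is within distance $4$ of all of $\maxk$ and lies in $\maxk$ by maximality). A maximal clique $\maxk''$ through $x'$, $c_2$ and the offending vertex then gives $\{x',c_2\}\subseteq\maxk\cap\maxk''$; a short case split on whether the offending vertex is itself within distance $4$ of $\sep\setminus\{c_2\}$ yields a clique-intersection $\ci_2$ with $\sep\cap\ci_2=\{c_2\}\subset\maxk\cap\ci_2$, so $c_2$ is $(\sep,\maxk,\ci_2)$-sandwiched and Claim~\ref{claim:op2-before-b} fires. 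This single witness $x'\in\maxk$, extracted from the non-firing of Claim~\ref{claim:op2-before}, is the idea you were missing. (Stylistically, the paper organizes the whole argument as a case analysis over every $\ci'\in\CI{G}$ containing $c_2$, showing $\labeledsubtree[2]{}{\ci'}=\unlabeledsubtree{\ci'}$ when $\ci'$ and $\sep$ are incomparable, which yields the $4$-Steiner property and well-structuredness simultaneously; your separation of the two is valid but less economical.)
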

Note that proving Claim~\ref{claim:op-2} will prove the lemma.

\begin{proofclaim}
This part of the proof closely follows Claim~\ref{claim:pties-op}.
First as our construction can only increase distances in the Steiner root, we can easily deduce that:
$$\forall uv \notin E, \ dist_{\tree[2]}(u,v) \geq dist_{\tree}(u,v) > 4.$$
However, we also need to check that conversely, $dist_{\tree[2]}(u,v) \leq 4$ for all $uv \in E$.
For that, we prove as an intermediate subclaim that:
$$dist_{\tree}(Real(Q_{c_2} \setminus \{c_2\}), V \setminus Q_{c_2}) > 4.$$
Indeed, suppose by contradiction there exist $x \in Real(Q_{c_2}) \setminus \{c_2\}, y \in V \setminus V(Q_{c_2})$ such that $dist_{\tree}(x,y) \leq 4$.
The unique $xy$-path in $\tree$ goes by $c_{2}$ (see Fig.~\ref{fig:canonical-bistar} for an illustration).
Furthermore by the definition of $Q_{c_2}$ we have $dist_{\tree}(x, \sep \setminus \{c_2\}) > 4$.
But then since $c_2 \in \centre{\unlabeledsubtree{\sep}}$ we have $dist_{\tree}(c_2,\sep \setminus \{c_2\}) =  1$, and so $dist_{\tree}(x,c_2) \geq 4$.
It implies $dist_{\tree}(x,y) > 4$, a contradiction.

This above subclaim implies that if $dist_{\tree[2]}(u,v) > 4$ for some $uv \in E$ then, $c_2 \in \{u,v\}$.
In order to prove that no such a pair $uv$ can exist, and that $\tree[2]$ is well-structured, we consider all the clique-intersections $\ci'$ that contain $c_2$.
There are three cases:
\begin{itemize}
\item Case $\ci' \subset \sep$. Then, either $\ci' = \{c_2\}$ and we are done, or $|\ci'| \geq 2$. Furthermore in the latter subcase we have $\ci' \subseteq R$ (otherwise, this would imply the existence of a $\ci$ as earlier defined, that would contradict Claim~\ref{claim:op2-before}). As a result we have $\unlabeledsubtree{\ci'} = \labeledsubtree[2]{}{\ci'}$.

\item Case $\ci' \supseteq \sep$. We can observe that $\unlabeledsubtree{\ci'} \cap Q_{c_2} = \{c_2\}$ since we proved above that we have $dist_{\tree}(Real(Q_{c_2} \setminus \{c_2\}), V \setminus Q_{c_2}) > 4$.
In particular, $\labeledsubtree[2]{}{\ci'}$ is obtained from $\unlabeledsubtree{\ci'}$ by replacing $c_2$ by a Steiner node (only if it were an internal node of $\unlabeledsubtree{\ci'}$) then, making of $c_2$ a leaf. 
Furthermore, since there is at least one leaf adjacent to each center node in $\unlabeledsubtree{S}$, we can prove as for Subclaim~\ref{subclaim:pties-op-b} that $diam(\labeledsubtree[2]{}{\ci'}) = diam(\unlabeledsubtree{\ci'})$.
As already observed in Claim~\ref{claim:pties-op}, it implies that all properties of Theorem~\ref{thm:x-free} are preserved provided that $c_2$ is not $\ci'$-free. 
So we only need to prove this is always the case. On one hand, $c_2$ is not $\sep$-free because $c_2$ is not a leaf of $\unlabeledsubtree{\sep}$ and we assume $\tree$ is well-structured.
On the other hand, $c_2$ is not $\ci'$-free for any $\ci' \supset \sep$, because $c_2 \in \sep $ and $|\sep| \geq 3$. 

\item Otherwise, in all other cases we prove $\labeledsubtree[2]{}{\ci'} = \unlabeledsubtree{\ci'}$.
To see that, first note that it may not be the case only if $\unlabeledsubtree{\ci'}$ is not fully contained into $Q_{c_2}$ (and so, $\unlabeledsubtree{\ci'} \cap Q_{c_2} = \{c_2\}$).

We prove as a subclaim that if $\labeledsubtree[2]{}{\ci'} \neq \unlabeledsubtree{\ci'}$ then, $\unlabeledsubtree{\ci'}$ must intersect $\unlabeledsubtree{\sep} \setminus \{c_2\}$.
Indeed, otherwise $\unlabeledsubtree{\ci'}$ must intersect some subtree $\tree[sub]$ of $\tree \setminus \unlabeledsubtree{\sep}$ that is not in $Q_{c_2}$.
Let $x \in \ci' \cap \tree[sub]$.
By the definition of $Q_{c_2}$, there exists $x' \in Real(\tree[sub])$ such that $dist_{\tree}(x', \sep \setminus \{c_2\}) \leq 4$ and there exists a $xx'$-path in $\tree$ that does not go by $c_2$.
In particular, $x'$ must be adjacent in $G$ to all the vertices in $\sep$ (otherwise, we could derive the existence of a clique-intersection $\ci$ as defined above, thereby contradicting Claim~\ref{claim:op2-before}).
Note that it implies $dist_{\tree}(x',c_2) \leq 2$ because the unique path in $\tree$ between $x'$ and any leaf adjacent to $c$ must go by $c_2$.
Then, $x' \in \maxk$.
Furthermore since $dist_{\tree}(x,c_2) \leq 4$ we obtain (by considering the median node of the triple $x,x',c_2$) $dist_{\tree}(x,x') \leq 4$.

Let $\maxk''$ be a maximal clique containing all of $x,x',c_2$.
There are two cases.
\begin{itemize}
\item Assume first $dist_{\tree}(x,\sep \setminus \{c_2\}) > 4$.
Then, $\sep \not\subseteq \maxk''$, and more specifically $\maxk'' \cap \sep = \{c_2\}$.
It implies $\maxk'' \neq \maxk$, and so $c_2$ is $(\sep,\maxk,\maxk'')$-sandwiched, that contradicts Claim~\ref{claim:op2-before-b}.
\item Otherwise, $dist_{\tree}(x,\sep \setminus \{c_2\}) \leq 4$ and we may further assume $x=x'$.
However, by the hypothesis $\sep \not\subseteq \ci'$, $\ci' \not\subseteq \sep$ resp., and so $\sep \cap \ci' = \{c_2\}$ (otherwise, $\ci = \sep \cap \ci'$ would falsify Claim~\ref{claim:op2-before}).
It implies $v$ is $(\sep,\maxk,\ci')$-sandwiched, that contradicts Claim~\ref{claim:op2-before-b}.
\end{itemize}
Therefore, we proved our subclaim that $\unlabeledsubtree{\ci'}$ must intersect $\unlabeledsubtree{\sep} \setminus \{c_2\}$.
In fact, our proof shows more generally that $\unlabeledsubtree{\ci'}$ cannot intersect any subtree of $\tree \setminus \unlabeledsubtree{\sep}$ with a node adjacent to $c_2$. 
Then, $\unlabeledsubtree{\ci'}$ must intersect a real node in $N_{\tree}(c_2) \setminus \{c\} = \sep \setminus R$ since otherwise, $\unlabeledsubtree{R} = \labeledsubtree[2]{}{R}$ would imply $\unlabeledsubtree{\ci'} = \labeledsubtree[2]{}{\ci'}$, a contradiction.
However in this situation, $\ci := \sep \cap \ci'$ satisfies $|\ci| \geq 2, c_2 \in \ci \ \text{and} \ \ci \not\subseteq R$, that contradicts Claim~\ref{claim:op2-before}.
\end{itemize}
\end{proofclaim}
The above case analysis ends up proving the claim, and so, the lemma.
\end{proof}

We will also need the following useful result which we keep using throughout most of the remaining proofs in this paper:

\begin{lemma}\label{lem:center-in-star}
Given $G=(V,E)$ and $\tree$ any $4$-Steiner root of $G$, let $\ci \in \CI{G}$ and let $\sep \subset \ci$ be a minimal separator.
If $\unlabeledsubtree{\sep}$ is a non-edge-star then, there exists $c \in N_{\tree}[\centre{\unlabeledsubtree{\ci}}]$ such that $Real(N_{\tree}[c]) = \sep$.
\end{lemma}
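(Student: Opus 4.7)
The plan is to take $c = c_{\sep}$, the unique center of the non-edge star $\unlabeledsubtree{\sep}$, and to verify that this choice satisfies both parts of the conclusion.

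For the location of $c_{\sep}$ relative to $\centre{\unlabeledsubtree{\ci}}$, I pick two leaves $a, b \in \sep$ of the star, both adjacent to $c_{\sep}$ in $\tree$, so that the unique $ab$-path in $\tree$ has length $2$ and passes through $c_{\sep}$. Since $a, b \in \sep \subseteq \ci$, both lie in $V(\unlabeledsubtree{\ci})$; moreover $diam(\unlabeledsubtree{\ci}) \leq 4$ by Theorem~\ref{thm:clique-intersection} (Prop.~\ref{pty-ci:1}), so each of $a, b$ is at distance at most $rad(\unlabeledsubtree{\ci}) \leq 2$ from any $c^* \in \centre{\unlabeledsubtree{\ci}}$ (Prop.~\ref{pty-lem:tree:4}). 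In the tree, the paths from $c^*$ to $a$ and from $c^*$ to $b$ share a common prefix ending at a branching node which must lie on the $ab$-path, i.e.\ in $\{a, c_{\sep}, b\}$. A short case analysis on this branching node then forces $dist_{\tree}(c^*, c_{\sep}) \leq 1$, so $c_{\sep} \in N_{\tree}[\centre{\unlabeledsubtree{\ci}}]$.

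For the real neighbourhood of $c_{\sep}$, the inclusion $\sep \subseteq Real(N_{\tree}[c_{\sep}])$ follows immediately from the star structure of $\unlabeledsubtree{\sep}$. For the reverse inclusion I argue by contradiction: suppose $u \in Real(N_{\tree}[c_{\sep}]) \setminus \sep$. Then $u \neq c_{\sep}$ (otherwise $c_{\sep}$ would be real and hence in $\sep$), so $u$ is adjacent to $c_{\sep}$ in $\tree$ and consequently at tree-distance at most $2$ from every $s \in \sep$; it follows that $u$ is adjacent in $G$ to all of $\sep$ and therefore lies in some full component of $G \setminus \sep$. Now the hypothesis $\sep \subsetneq \ci$ forces $\ci \setminus \sep$ to be a nonempty clique inside some full component $C_a$ of $G \setminus \sep$, and the minimality of $\sep$ yields another full component $C_b$; using the clique-tree structure of $G$ (Theorem~\ref{thm:clique-tree-pties}) I obtain a maximal clique $\maxk_b \supseteq \sep$ with $\maxk_b \setminus \sep \subseteq C_b$, so every $w \in \maxk_b \setminus \sep$ is adjacent in $G$ to all of $\sep$.

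I next select a witness $w^* \in V$ adjacent in $G$ to every $s \in \sep$ but not to $u$: set $w^* = w$ if $u \in C_a$, and otherwise pick any $w^* \in \ci \setminus \sep$. Either way $u w^* \notin E(G)$, so $dist_{\tree}(u, w^*) > 4$ and hence $dist_{\tree}(c_{\sep}, w^*) \geq 4$. For every $s \in \sep$, however, $dist_{\tree}(s, w^*) \leq 4$. If the tree path from $s$ to $w^*$ went through $c_{\sep}$ we would get $dist_{\tree}(s, w^*) \geq 1 + 4 = 5$, a contradiction; hence this path must leave $s$ through a non-$c_{\sep}$ neighbour, which forces the unique tree path from $c_{\sep}$ to $w^*$ to pass through $s$. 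Since this must hold for every $s \in \sep$ but the $c_{\sep}$-to-$w^*$ path uses only one neighbour of $c_{\sep}$, and since $|\sep| \geq 2$ for any non-edge star, we reach the desired contradiction.

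The main obstacle will be identifying the witness $w^*$: this step genuinely uses both ingredients of the hypothesis, namely $\sep \subsetneq \ci$ (to pin some vertex of $\ci \setminus \sep$ inside one full component of $G \setminus \sep$) and the minimality of $\sep$ (to extract a vertex of another full component adjacent to all of $\sep$). Once $w^*$ is in hand, the uniqueness of tree paths closes the argument cleanly.
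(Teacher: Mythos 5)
Your proof is correct, but the second half takes a genuinely different route from the paper. For the location claim ($c_{\sep} \in N_{\tree}[\centre{\unlabeledsubtree{\ci}}]$), your median-node case analysis is essentially the paper's argument spelled out: the paper simply observes that the path from $\centre{\unlabeledsubtree{\ci}}$ to some leaf of the star must traverse $c_{\sep}$, and then $rad(\unlabeledsubtree{\ci}) \leq 2$ gives $dist_{\tree}(c_{\sep}, \centre{\unlabeledsubtree{\ci}}) \leq 1$. The real divergence is in showing $Real(N_{\tree}[c_{\sep}]) = \sep$. The paper dispatches this in one line by invoking Property~\ref{pty-ci:2} of Theorem~\ref{thm:clique-intersection}: the subtree induced by $N_{\tree}[c_{\sep}]$ contains $\unlabeledsubtree{\sep}$ and has the same diameter $2$, so its real vertices must equal $\sep$. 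You instead re-derive this fact from scratch, exploiting that $\sep$ is a \emph{minimal separator}: you locate $u$ in a full component, build a witness $w^*$ in a different full component adjacent to all of $\sep$ (via $\ci \setminus \sep$ or via a maximal clique through $\sep$ from Theorem~\ref{thm:clique-tree-pties}), and then use the non-adjacency $u w^* \notin E$ together with unique tree paths and the fact that the star has at least two leaves to force a contradiction. Your argument is more elementary and self-contained, but longer, and it is strictly narrower: it genuinely needs the two-full-components structure of a minimal separator, whereas Property~\ref{pty-ci:2} gives the conclusion for any clique-intersection with a diameter-matching containing subtree. In a paper that has already paid for Theorem~\ref{thm:clique-intersection}, the citation is the right call; your version is a good sanity check that the specific case can be proved without it.
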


\begin{proof}
Write $\centre{\unlabeledsubtree{\sep}} = \{c\}$.
By Theorem~\ref{thm:clique-intersection}, $Real(N_{\tree}[c]) = \sep$.
Furthermore since by the hypothesis $\unlabeledsubtree{\sep}$ has at least two leaves then, the unique path between at least one such a leaf and $\centre{\unlabeledsubtree{\ci}}$ must pass by $c$.
Since $rad(\unlabeledsubtree{\ci}) \leq 2$, this implies $dist_{\tree}(c, \centre{\unlabeledsubtree{\ci}}) \leq 1$. 
\end{proof}

We now explain how to construct an important subfamily of ${\cal F}_i$:

\begin{lemma}\label{lem:with-bistar-sep}
Let $\maxk_i$ be a maximal clique of $G=(V,E)$ with no $\maxk_i$-free vertex.
In ${\cal O}(|\maxk_i|^6 \cdot n^3m\log{n})$-time, we can compute a family ${\cal B}_i$ with the following special property:
For any well-structured $4$-Steiner root $\tree$ of $G$ where for at least one minimal separator $\sep \subset \maxk_i$, $\unlabeledsubtree{\sep}$ is a bistar, there is a $\tree[2]$ such that $\unlabeledsubtree[2]{\sep_i} \equiv_G \unlabeledsubtree{\sep_i}$, $\unlabeledsubtree[2]{\maxk_i} \in {\cal B}_i$ and $dist_{\tree[2]}(r,V_i \setminus \sep_i) \geq dist_{\tree}(r,V_i \setminus \sep_i)$ for every $r \in V(\unlabeledsubtree{\sep_i})$.
\end{lemma}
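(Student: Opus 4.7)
The plan is to iterate over every minimal separator $\sep \subset \maxk_i$ and every candidate bistar $\labeledsubtree{\sep}{} \in {\cal T}_{\sep}$ produced by Theorem~\ref{thm:minsep}, and canonically extend each pair $(\sep,\labeledsubtree{\sep}{})$ into a diameter-four candidate $\unlabeledsubtree{\maxk_i}$ to be inserted into ${\cal B}_i$. For each such $\labeledsubtree{\sep}{}$ with central nodes $\{c_0,c_1\}$, both choices are considered for which node plays the role of $c$ in Lemma~\ref{lem:real-center-bistar}, and that lemma is invoked to canonically determine the other node $c'$ that must serve as the unique central node of $\unlabeledsubtree{\maxk_i}$. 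Lemma~\ref{lem:bistar-center} ensures that no other placement of $\centre{\unlabeledsubtree{\maxk_i}}$ needs to be considered, since this centre is forced to lie in $\centre{\unlabeledsubtree{\sep}}$.

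Once $c'$ is fixed, the partial subtree $\labeledsubtree{\sep}{}$ already determines one branch of $\unlabeledsubtree{\maxk_i}$ rooted at $c'$, and it remains to place the vertices of $\maxk_i \setminus \sep$. Because by assumption $\maxk_i$ contains no $\maxk_i$-free vertex, Theorem~\ref{thm:x-free} together with Lemma~\ref{lem:center-in-star} confines each such vertex to lie at distance $1$ or $2$ from $c'$, and vertices that share a parent in $\unlabeledsubtree{\maxk_i}$ must span a clique-intersection of $G$. Using the clique arrangement (Theorem~\ref{thm:clique-arrangement}) we enumerate the possible groupings; for each grouping corresponding to a clique-intersection $\ci \subseteq \maxk_i$ we reuse the star/bistar enumeration of Theorem~\ref{thm:minsep} (see Remark~\ref{rk:extension-minsep}), and the remaining simplicial-like vertices are attached by Lemma~\ref{lem:almost-simplicial-placement}.

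Correctness: given any well-structured root $\tree$ in the hypothesis, let $\sep \subset \maxk_i$ be the prescribed minimal separator with $\unlabeledsubtree{\sep}$ a bistar. By Theorem~\ref{thm:minsep}, $\unlabeledsubtree{\sep}$ is Steiner-equivalent to some $\labeledsubtree{\sep}{} \in {\cal T}_{\sep}$, and by repeated application of Lemma~\ref{lem:real-center-bistar} (each application strictly increases the potential $\sum_{u,v} dist(u,v)$ and hence terminates) we obtain a well-structured root $\tree[2]$ whose $\centre{\unlabeledsubtree[2]{\maxk_i}}$ is exactly the canonical $c'$ chosen by the algorithm. The pointwise inequality $dist_{\tree[2]}(u,v)\geq dist_{\tree}(u,v)$ delivered by that lemma implies, through the Steiner-equivalence $\unlabeledsubtree[2]{\sep_i} \equiv_G \unlabeledsubtree{\sep_i}$, the required distance-monotonicity $dist_{\tree[2]}(r,V_i\setminus \sep_i)\geq dist_{\tree}(r,V_i\setminus \sep_i)$ for every $r \in V(\unlabeledsubtree{\sep_i})$. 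Finally, the non-$\sep$ branches of $\unlabeledsubtree[2]{\maxk_i}$ in $\tree[2]$ are exactly one of the shapes enumerated above, so $\unlabeledsubtree[2]{\maxk_i} \in {\cal B}_i$.

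The main obstacle is avoiding the combinatorial explosion one would get by naively enumerating all diameter-four trees on $|\maxk_i|$ vertices: on the bistar-anchor side, canonicity from Lemma~\ref{lem:real-center-bistar} collapses the exponentially many choices for $c'$ into one, while on the remaining branches, Theorem~\ref{thm:x-free} parameterises each branch by a clique-intersection of $G$, leaving only polynomially many shapes to try. Summing over $\sep$ and $\labeledsubtree{\sep}{}$ gives $\sum_{\sep \subset \maxk_i} |{\cal T}_{\sep}| = O(|\maxk_i|^6)$ anchoring configurations, and for each one the canonical completion plus branch enumeration costs $O(n^3 m \log n)$ (dominated by repeated clique-arrangement look-ups and applications of Lemma~\ref{lem:real-center-bistar}), yielding the claimed $O(|\maxk_i|^6 \cdot n^3 m \log n)$ bound.
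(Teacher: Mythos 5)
Your high-level architecture (fix a minimal separator $\sep$ whose image is a bistar, enumerate the bistar from ${\cal T}_{\sep}$, then try to complete it to a full $\unlabeledsubtree{\maxk_i}$) is the right starting point, and it matches the paper's. But there are two genuine gaps.

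First, the role you assign to Lemma~\ref{lem:real-center-bistar} is off. A bistar $\labeledsubtree{\sep}{}\in{\cal T}_{\sep}$ already comes with both of its centre nodes explicitly specified, and by Lemma~\ref{lem:bistar-center} the centre of $\unlabeledsubtree{\maxk_i}$ is simply one of those two. You do not need, and cannot usefully invoke, Lemma~\ref{lem:real-center-bistar} to ``canonically determine'' the other centre of the \emph{anchor} bistar --- it is already given. In the paper that lemma does genuine work, but for a different target: it is applied to each \emph{other} inclusion-wise maximal minimal separator $\sep'\subset\maxk_i$ whose subtree must also be a bistar (because $|\sep\cap\sep'|\geq 3$ or there are non-adjacent $u,v\in\sep\cap\sep'$), with $c=c_i$, $R = Real(N_{\tree}[c_i])$, and $\maxk$ the \emph{other} maximal clique containing $\sep'$. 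This collapses the many ways to complete $\unlabeledsubtree{\sep'}$ into one canonical choice. Your proposal never actually applies the lemma in this setting, which is where the combinatorial savings come from.

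Second --- and this is the larger gap --- you assert that ``Theorem~\ref{thm:x-free} parameterises each branch by a clique-intersection of $G$, leaving only polynomially many shapes to try,'' but no argument is given for this bound. The parameterisation does not immediately yield a polynomial number of completions: a vertex of $\maxk_i\setminus\sep$ can sit in several minimal separators simultaneously, and naively enumerating assignments of vertices to branches is exponential. The paper avoids this by a case analysis on the size and shape of $\sep\cap\sep'$ for each inclusion-wise maximal $\sep'$: Claim~\ref{claim:no-empty-intersection} handles $\sep\cap\sep'=\emptyset$, the non-adjacent-pair case forces a unique bistar by Lemma~\ref{lem:real-center-bistar}, the $|\sep\cap\sep'|=2$ adjacent case and the $|\sep\cap\sep'|=1$ case are further split, and one degenerate subcase (several separators $\sep''$ each of whose subtree could \emph{a priori} be either a star or a bistar) requires an explicit tree surgery (the edge-into-bistar move) together with a potential-function argument to show at most one of them can remain a star. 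None of this appears in your write-up, and without it the bound $O(|\maxk_i|)$ completions per anchor --- hence the overall polynomial size of ${\cal B}_i$ --- is unproven. The distance-monotonicity condition $dist_{\tree[2]}(r,V_i\setminus\sep_i)\geq dist_{\tree}(r,V_i\setminus\sep_i)$ is also tied to this surgery (it is preserved by the edge-into-bistar move by direct construction), so simply citing the inequality from Lemma~\ref{lem:real-center-bistar} does not suffice in that subcase either.
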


Note that we do not capture {\em all} well-structured roots with this above lemma, but only those maximizing certain distances' conditions.

\begin{proof}
Let $\border{\maxk_i} \subseteq \SEP{G}$ contain all the minimal separators in $\maxk_i$.
By Theorem~\ref{thm:minsep}, for any $\sep \in \border{\maxk_i}$ we can construct a family ${\cal T}_{\sep}$ such that, in any $\labeledsubtree{\maxk_i}{} \in {\cal F}_i$, we should have $\labeledsubtree{\maxk_i}{\sep}$ is Steiner-equivalent to some tree in ${\cal T}_{\sep}$.
This takes total time ${\cal O}(|\maxk_i|^6|\border{\maxk_i}|) = {\cal O}(n|\maxk_i|^6)$. 
Fix $\sep \in \border{\maxk_i}$ (there are ${\cal O}(n)$ possibilities) and a bistar $\unlabeledsubtree{\sep} \in {\cal T}_{\sep}$ (by Theorem~\ref{thm:minsep}, there are ${\cal O}(|\sep|^5)$ possibilities, that is in ${\cal O}(|\maxk_i|^5)$). 
Note that in particular for $\sep_i \subseteq \sep$, this will also generate all possibilities for $\unlabeledsubtree{\sep_i}$.

Roughly we show that except in a few particular cases easy to solve, for every $\sep' \in \border{\maxk_i}$ there is only {\em one} canonical solution in ${\cal T}_{\sep'}$ that is compatible with $\unlabeledsubtree{\sep}$; moreover, this canonical solution can be computed in ${\cal O}(nm\log{n})$-time. 
For proving that, let us assume the existence of a $4$-Steiner root $\tree$ of $G$ that contains $\unlabeledsubtree{\sep}$ as a subtree.
We may only consider those $\sep' \in \border{\maxk_i}$ that are {\em not} contained into any other $\sep'' \in \border{\maxk_i}$.
Indeed if $\sep' \subseteq \sep''$ then, trivially $\unlabeledsubtree{\sep'}$ is forced by $\unlabeledsubtree{\sep''}$.
Thus from now on, we assume $\sep'$ is inclusion wise maximal in $\border{\maxk_i}$.

In what follows is a simple observation for the case $\sep \cap \sep' = \emptyset$ (see also Fig.~\ref{fig:bistar-not-intersect} for an illustration).

\begin{myclaim}\label{claim:no-empty-intersection}
If $\sep \cap \sep' = \emptyset$ then, $\unlabeledsubtree{\sep'}$ is a star with a Steiner central node.
Moreover, the central node of $\unlabeledsubtree{\maxk_i}$ must be Steiner.
\end{myclaim}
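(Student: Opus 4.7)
The plan is to derive all three assertions from a uniform recipe: produce a real vertex of $\sep'$ at tree-distance $1$ from $c_i$ and contradict its $G$-non-adjacency to any vertex of $\maxk_j \setminus \maxk_i$. For the setup, Theorem~\ref{thm:clique-intersection} Property~\ref{pty-ci:2} applied to $\sep \subsetneq \maxk_i$ forces $diam(\unlabeledsubtree{\maxk_i}) = 4$, so $\unlabeledsubtree{\maxk_i}$ has a unique centre $c_i$; Lemma~\ref{lem:bistar-center} then gives $c_i \in \centre{\unlabeledsubtree{\sep}}$, and I set $c_i = c_0$ and $c_j = c_1$. The same lemma furnishes a unique second maximal clique $\maxk_j$ with $\sep = \maxk_i \cap \maxk_j$ and $\centre{\unlabeledsubtree{\maxk_j}} = \{c_j\}$, so any $w \in \maxk_j \setminus \maxk_i$ (nonempty since $\maxk_j \neq \maxk_i$) satisfies $dist_T(w, c_j) \leq 2$.

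Here is the recipe. Since $\sep \cap \sep' = \emptyset$, every $e \in \sep'$ lies in $\maxk_i \setminus \sep$, so $w$ and $e$ sit in distinct components of $G \setminus \sep$ by the minimal-separator property of $\sep$, giving $we \notin E(G)$. On the other hand, if such an $e$ were at tree-distance $1$ from $c_i$, the triangle inequality $dist_T(w, e) \leq dist_T(w, c_j) + dist_T(c_j, c_i) + dist_T(c_i, e) \leq 2 + 1 + 1 = 4$ would force $we \in E(G)$, a contradiction.

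Applying the recipe, $\unlabeledsubtree{\sep'}$ cannot have diameter $3$: Lemma~\ref{lem:bistar-center} would then place $c_i$ as a centre of the bistar $\unlabeledsubtree{\sep'}$, and any leaf of it adjacent to $c_i$ supplies the forbidden $e$. Nor can $\unlabeledsubtree{\sep'}$ be an edge $uv$ in $T$, because $u, v \in \sep' \subseteq \maxk_i$ cannot both lie at distance $1$ from $c_i$ (else $T$ contains a triangle), so the closer endpoint is again the forbidden $e$. Hence $\unlabeledsubtree{\sep'}$ is a (non-edge) star, and Lemma~\ref{lem:center-in-star} provides its centre $c' \in N_T[c_i]$ with $Real(N_T[c']) = \sep'$.

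Finally, $c' \neq c_i$: if $c' = c_i$ then $Real(N_T[c_i]) = \sep'$, but in the bistar $\unlabeledsubtree{\sep}$ the centre $c_i$ has at least one real leaf-neighbour $a \in \sep$, forcing $a \in \sep \cap \sep' = \emptyset$, absurd. So $c'$ is adjacent to $c_i$ in $T$; if $c'$ were real then $c' \in \sep'$, and the recipe yields $dist_T(w, c') \leq 4$, contradicting $wc' \notin E(G)$, so $c'$ is Steiner. Likewise if $c_i$ were real then $c_i \in N_T(c') \subseteq N_T[c']$ would place $c_i \in Real(N_T[c']) = \sep'$, contradicting $c_i \in \sep$, so $c_i$ is Steiner. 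The main difficulty I expect is the small-case bookkeeping (bistar versus edge versus non-edge star, and $c' = c_i$ versus $c' \neq c_i$); once these are organised, every case reduces to the same triangle-inequality/minimal-separator contradiction.
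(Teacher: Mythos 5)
Your route is genuinely different from the paper's, and most of it is sound. The paper never leaves $\unlabeledsubtree{\maxk_i}$: it observes $\centre{\unlabeledsubtree{\maxk_i}} \subset \centre{\unlabeledsubtree{\sep}}$, deduces $Real(N_{\tree}[c_i]) \subset \sep$ via Property~\ref{pty-ci:2}, and concludes that all of $\sep'$ must sit among the distance-$2$ leaves of $\unlabeledsubtree{\maxk_i}$, whence the star/Steiner-center conclusion. You instead import the second maximal clique $\maxk_j$ containing $\sep$ (from Lemma~\ref{lem:bistar-center}) plus a witness $w \in \maxk_j \setminus \maxk_i$, and use the separating property of $\sep$ together with the triangle inequality to exclude any $e \in \sep'$ at $\tree$-distance $\leq 1$ from $c_i$. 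Both routes legitimately yield the core distance restriction on $\sep'$.

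There is, however, a genuine gap. After ruling out the bistar and the edge you conclude ``hence $\unlabeledsubtree{\sep'}$ is a (non-edge) star,'' but you never exclude $|\sep'|=1$. In that case $\unlabeledsubtree{\sep'}$ is a single real node, which is neither a non-edge star nor Steiner-centered, so the claim would be false, and your recipe offers no contradiction: the lone vertex of $\sep'$ may perfectly well sit at distance $2$ from $c_i$, so no $e$ at distance $\leq 1$ is produced. The paper kills this case using the standing hypothesis of Lemma~\ref{lem:with-bistar-sep} that $\maxk_i$ has no $\maxk_i$-free vertex: an inclusion-wise maximal singleton separator $\{v\} \subset \maxk_i$ would make $v$ a cut-vertex contained in no other minimal separator of $\maxk_i$, hence $\maxk_i$-free by Lemma~\ref{lem:almost-simplicial-characterization} --- a contradiction. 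This is not a triangle-inequality/separation contradiction at all, so your stated plan (``every case reduces to the same triangle-inequality/minimal-separator contradiction'') does not extend to it; the hypothesis on $\maxk_i$-free vertices must be invoked. A secondary, fixable slip: your justification for the edge case, ``$u,v$ cannot both lie at distance $1$ from $c_i$, else a triangle,'' does not by itself give ``the closer endpoint is the forbidden $e$.'' The argument should be that $c_i$ is the unique center of $\unlabeledsubtree{\maxk_i}$ with $diam=4$, so both $u,v$ are at $\tree$-distance $\leq 2$ from $c_i$; being $\tree$-adjacent their distances differ by exactly $1$, hence the nearer is at distance $\leq 1$. With these two points repaired, the proof goes through.
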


\begin{figure}[h!]
\centering
\includegraphics[width=.3\textwidth]{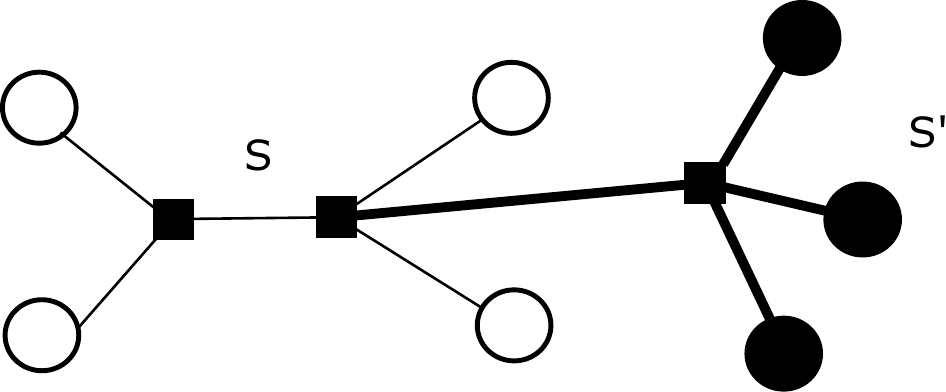}
\caption{An example where $\sep \cap \sep' = \emptyset$.}
\label{fig:bistar-not-intersect}
\end{figure}

\begin{proofclaim}
By Lemma~\ref{lem:bistar-center}, $\centre{\unlabeledsubtree{\maxk_i}} \subset \centre{\unlabeledsubtree{\sep}}$, and so by Theorem~\ref{thm:clique-intersection}, $Real(N_{\tree}[\centre{\unlabeledsubtree{\maxk_i}}]) \subset \sep$. 
But then, $\unlabeledsubtree{\maxk_i} \setminus N_{\tree}[\centre{\unlabeledsubtree{\maxk_i}}]$ is a collection of isolated leaves.
The latter proves either $\sep' = \{v\}$ is a cut-vertex or $\unlabeledsubtree{\sep'}$ is a star with a Steiner central node in $N_{\tree}[\centre{\unlabeledsubtree{\maxk_i}}]$.
In the former case we so conclude that $v$ is $\maxk_i$-free by inclusion wise maximality of $\sep'$ and by Lemma~\ref{lem:almost-simplicial-characterization}.
Since we assume there is no $\maxk_i$-free vertex, this case cannot happen.
Therefore, $\unlabeledsubtree{\sep'}$ is a star with a Steiner central node in $N_{\tree}[\centre{\unlabeledsubtree{\maxk_i}}]$.
Finally, the center of $\unlabeledsubtree{\maxk_i}$ must be also Steiner (otherwise by Lemma~\ref{lem:center-in-star}, this vertex should be in $\sep'$).
\end{proofclaim}
If $\sep \cap \sep' = \emptyset$ then, by combining Claim~\ref{claim:no-empty-intersection} and Lemma~\ref{lem:center-in-star} there is essentially one way to insert $\sep'$ in $\unlabeledsubtree{\maxk_i}$ ({\it i.e.}, we construct a star $\unlabeledsubtree{\sep'}$ with one Steiner central node, then we make this central node adjacent to the Steiner central node of $\unlabeledsubtree{\maxk_i}$).

For the remaining cases, we assume $\sep' \cap \sep \neq \emptyset$ for any inclusion wise maximal $\sep' \in \border{\maxk_i}$.
Several cases may arise: 

\begin{itemize}

\item \underline{Case there exist $u,v \in \sep \cap \sep'$ nonadjacent in $\unlabeledsubtree{\sep}$} (see Fig.~\ref{fig:bistar-not-adjacent} for an illustration). 
We prove $\unlabeledsubtree{\sep'}$ is a bistar.
Indeed, suppose by contradiction $\unlabeledsubtree{\sep'}$ is a star. 
Then, since $u,v \in \sep \cap \sep'$ are non adjacent, the center of $\unlabeledsubtree{\sep'}$ must be in $\centre{\unlabeledsubtree{\sep}}$. This implies $\sep' \subset \sep$, a contradiction. 
Therefore, we proved as claimed $\unlabeledsubtree{\sep'}$ must be a bistar.

In this situation we must have $\sep \cap \sep' = Real(N_{\tree}[\centre{\unlabeledsubtree{\maxk_i}}])$ ({\it e.g.}, see the proof of Lemma~\ref{lem:bistar-center}).
Since there exist $u,v \in \sep \cap \sep'$ nonadjacent in $\unlabeledsubtree{\sep}$, the central node of $\unlabeledsubtree{\maxk_i}$ can be uniquely defined as the central node $c_i \in \centre{\unlabeledsubtree{\sep}}$ such that $u,v \in N_{\tree}[c_i]$. 
Finally, since $Real(N_{\tree}[c_i])$ is fixed by $\unlabeledsubtree{\sep}$, by Lemma~\ref{lem:real-center-bistar} this leaves at most {\em one} canonical possibility for the second central node in $\centre{\unlabeledsubtree{\sep'}}$, and so, at most one possibility for $\unlabeledsubtree{\sep'}$.

\begin{figure}[h!]
\centering
\includegraphics[width=.3\textwidth]{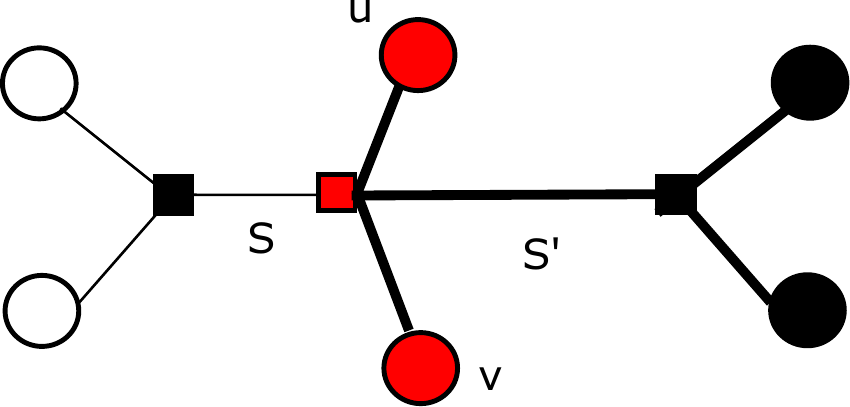}
\caption{Two bistars intersecting.}
\label{fig:bistar-not-adjacent}
\end{figure}

\smallskip
Note that we always fall in this case provided $|\sep \cap \sep'| \geq 3$.
So, we are left to study when $|\sep \cap \sep'| \in \{1,2\}$.

\item \underline{Case $\sep \cap \sep' = \{u,v\}$.}
We further assume $uv \in E(\unlabeledsubtree{\sep})$ (otherwise, we fall in the previous subcase).
Recall that we assume $\sep' \not\subseteq \sep$.
In particular, we must have $\sep \cap \sep' \subseteq N_{\tree}[\centre{\unlabeledsubtree{\maxk_i}}]$.
W.l.o.g., $u \in \centre{\unlabeledsubtree{\sep}}$ (or equivalently, $u$ must be the central node of $\unlabeledsubtree{\maxk_i}$).
Since $\sep' \not\subseteq \sep$ and there is at least one leaf-node of $\unlabeledsubtree{\sep}$ that is a real node adjacent to $u$, we so deduce that $v$ is a leaf of $\unlabeledsubtree{\sep}$ (otherwise, we would have $\sep \cap \sep' \neq \{u,v\}$).
Several situations force $\unlabeledsubtree{\sep'}$ to be a star, for instance if:
\begin{itemize}
\item $\sep'$ is strictly contained into another minimal separator of $G$;
\item $G \setminus \sep'$ has at least three full components\footnote{Equivalently, given any clique-tree $\cliquetree{G}$ of $G$ we have $|\edgeset{G}{\sep'}| \geq 2$.};
\item or $Real(N_{\tree}[u]) \neq \{u,v\}$.
\end{itemize} 
If such a situation occurs then, by Lemma~\ref{lem:center-in-star}, $v$ must be the center of the star $\unlabeledsubtree{\sep'}$, thereby leaving only one possibility for $\unlabeledsubtree{\sep'}$ ({\it i.e.}, see Fig.~\ref{fig:bistar-star}).

\begin{figure}[h!]
\centering
\includegraphics[width=.3\textwidth]{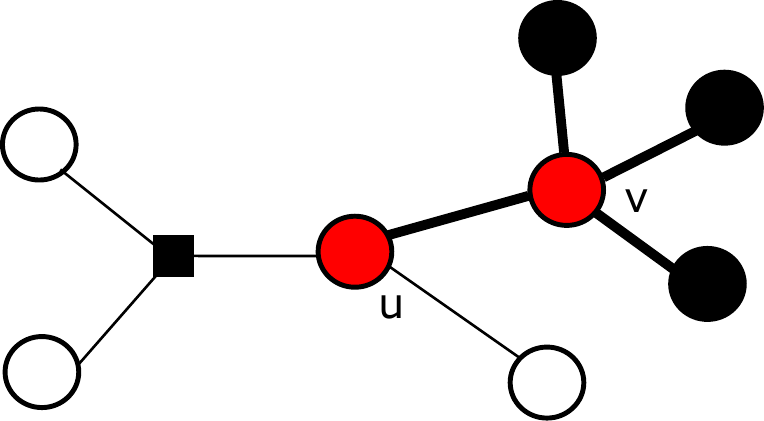}
\caption{Star intersecting a bistar.}
\label{fig:bistar-star}
\end{figure}

\smallskip
From now on assume that no minimal separator strictly contains $\sep'$, $G \setminus \sep'$ has exactly two full components and $Real(N_{\tree}[u]) = \{u,v\}$.
The subtree $\unlabeledsubtree{\sep'}$ is forced to be a bistar if there exists at least one $\sep'' \in \border{\maxk_i}$ inclusion wise maximal such that: $\sep \cap \sep'' = \{v\}$ (otherwise, $\unlabeledsubtree{\sep''}$ should be an edge and, since we assume $|\sep'| \geq 3$ this would imply $\sep'' \subseteq \sep'$ by Lemma~\ref{lem:center-in-star}, a contradiction).
Furthermore as explained in the previous case there is at most one canonical possibility for the bistar $\unlabeledsubtree{\sep'}$.

\smallskip
If no $\sep''$ as above exists then, $\unlabeledsubtree{\sep'}$ may be either a star or a bistar.
We can bipartition all the remaining minimal separators $\sep'' \in \border{\maxk_i}$ that are inclusion wise maximal (including $\sep'$) as follows: those containing $v$, and those that do not.
Note that in the former subcase (which includes $\sep'$) we have $\sep \cap \sep'' = \{u,v\}$, whereas in the latter subcase $\sep \cap \sep'' = \{u\}$.
Furthermore if $\sep \cap \sep'' = \{u\}$ then, $\unlabeledsubtree{\sep''}$ must always be a star with a Steiner central node that is adjacent to $u$ (to see that, recall that $Real(N_{\tree}[u]) = \{u,v\}$, and so, $\unlabeledsubtree{\sep''}$ cannot be a bistar).
In particular, there is only {\em one} possibility for such a $\sep''$.
However, the same as $\sep'$, for all other $\sep''$ such that $\sep \cap \sep'' = \{u,v\}$, $\unlabeledsubtree{\sep''}$ may be either a star or a bistar.
The key observation here is that $\unlabeledsubtree{\sep''}$ can be a star for at most {\em one} such a $\sep''$ (otherwise, by Lemma~\ref{lem:center-in-star} there would be two non-edge stars with the same center node $v$, that contradicts Property~\ref{pty-ci:2} of Theorem~\ref{thm:clique-intersection}).
Summarizing, since all these sets $\sep'' \setminus \sep$ are pairwise disjoint, we are left with ${\cal O}(|\maxk_i|)$ possibilities for the unique such $\sep''$ for which $\unlabeledsubtree{\sep''}$ is a star (if any); this choice fixes the corresponding subtree for all the remaining minimal separators. 
See Fig.~\ref{fig:bistar-two-choices-1}.

\begin{figure}[h!]
\centering
\includegraphics[width=.3\textwidth]{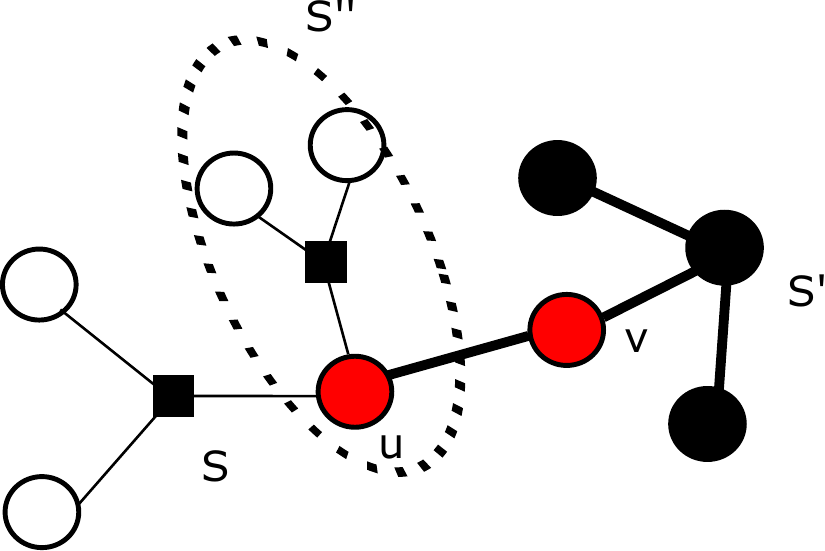}
\caption{A case where there are two possibilities for $\unlabeledsubtree{\sep'}$.}
\label{fig:bistar-two-choices-1}
\end{figure}

\item \underline{Case $\sep \cap \sep' = \{v\}$.}
If $v \in \centre{\unlabeledsubtree{\sep}}$ then, $v \in \centre{\unlabeledsubtree{\maxk_i}}$ because we assume $\sep' \not\subseteq \sep$. In particular, the only possibility for $\unlabeledsubtree{\sep'}$ is a star with a Steiner central node that is adjacent to $v$ (recall that $v$ is adjacent to at least one leaf in $\unlabeledsubtree{\sep}$, and so, $\unlabeledsubtree{\sep'}$ cannot be a bistar).
Assume for the remaining of the case $v$ is a leaf of $\unlabeledsubtree{\sep}$.
As in the previous case, several situations force $\unlabeledsubtree{\sep'}$ to be a star, like if:
\begin{itemize}
\item $\sep'$ is strictly contained into another minimal separator of $G$;
\item $G \setminus \sep'$ has at least three full components;
\item or $Real(N_{\tree}[\centre{\unlabeledsubtree{\maxk_i}}]) \neq \{v\}$.
\end{itemize}
Furthermore if such a situation occurs then, $v$ must be a center node of the star $\unlabeledsubtree{\sep'}$ (possibly, $\unlabeledsubtree{\sep'}$ is an edge), and so, there is only one possibility for $\unlabeledsubtree{\sep'}$.

\smallskip
From now on assume no minimal separator strictly contains $\sep'$, $G \setminus \sep'$ has exactly two full components and $Real(N_{\tree}[\centre{\unlabeledsubtree{\maxk_i}}]) = \{v\}$.
In particular, the unique central node of $\unlabeledsubtree{\maxk_i}$ is some Steiner node $\alpha_i$.
We also consider all the other minimal separators $\sep'' \in \border{\maxk_i}$ in the same situation as $\sep'$.
As we only consider inclusion wise maximal elements $\sep'' \in \border{\maxk_i}$ intersecting $\sep$, we must have $\sep \cap \sep'' = \{v\}$.
However, unlike the previous subcase, in an {\em arbitrary} well-structured $\tree$ there may be several such $\sep''$ for which $\unlabeledsubtree{\sep''}$ is a star.
We now prove that up to local modifications of $\tree$, we can always assume there is at most {\em one} such $\sep''$ for which $\unlabeledsubtree{\sep''}$ is a star.  
Note that by doing so, we can conclude as for the previous subcase about the number of possibilities for $\unlabeledsubtree{\maxk_i}$.

\begin{figure}[h!]
\centering
\includegraphics[width=.6\textwidth]{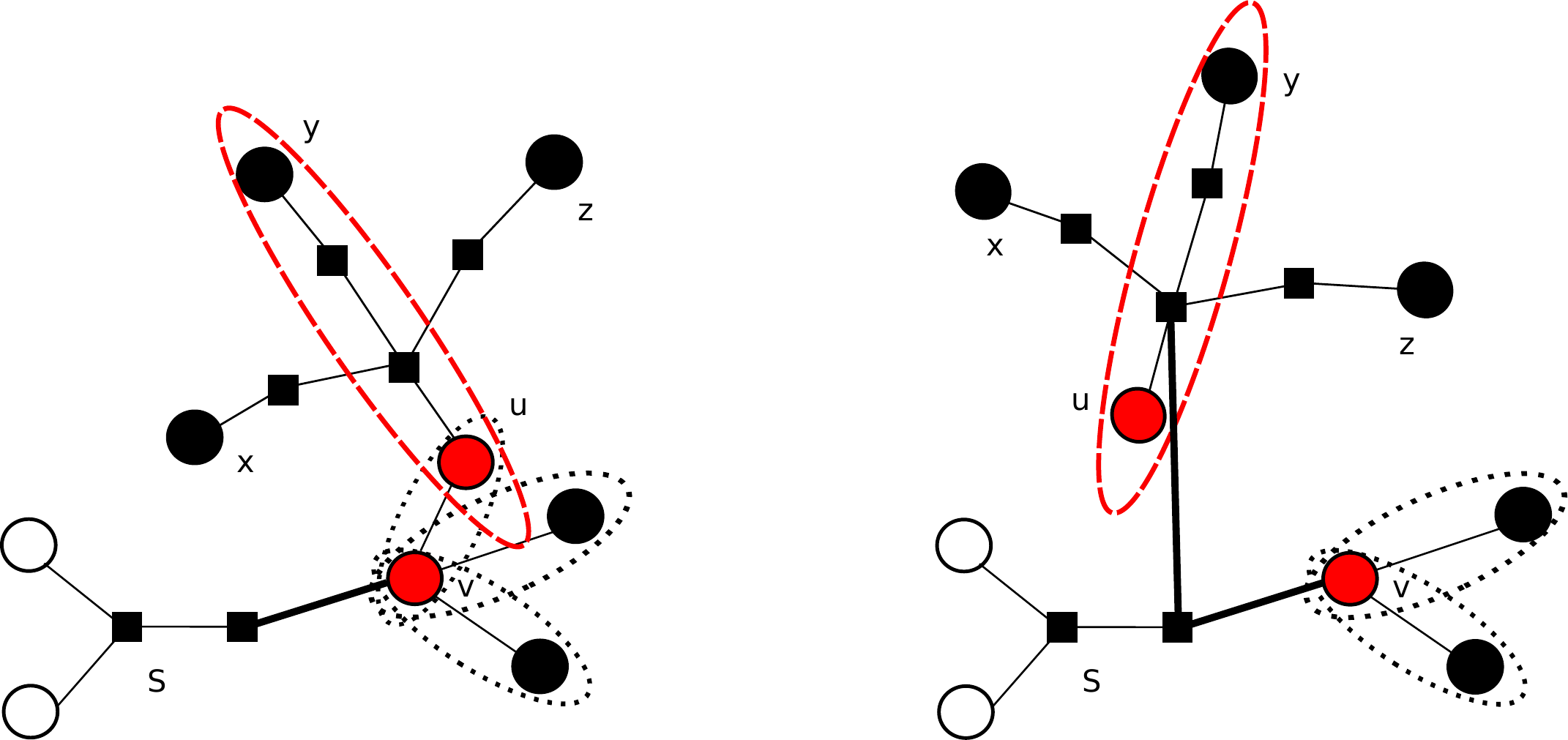}
\caption{The transformation of an edge into a bistar.}
\label{fig:bistar-two-choices-2}
\end{figure}

Assume there exist $\sep_{i_j},\sep_{i_k}$ such that $\unlabeledsubtree{\sep_{i_j}},\unlabeledsubtree{\sep_{i_k}}$ are stars.
Then, $\unlabeledsubtree{\sep_{i_j}},\unlabeledsubtree{\sep_{i_k}}$ must be edges with a common end $v$ (otherwise, one should be a non-edge star and so by Lemma~\ref{lem:center-in-star}, either $\sep_{i_j} \subset \sep_{i_k}$ or $\sep_{i_k} \subset \sep_{i_j}$, a contradiction). 
By inclusion wise maximality of $\sep_{i_j}$ and $\sep_{i_k}$, there is at least one of these two separators whose intersection with $\sep_i$ is either empty or reduced to $\{v\}$.
Assume w.l.o.g. this is the case for $\sep_{i_j}$ and write $\sep_{i_j} = \{u,v\}$.
We first gain more insights on the structure of $\unlabeledsubtree{\maxk_{i_j}}$.
For that, let $W_{i_j} := V_{i_j} \setminus \sep_{i_j}$.
Since $v$ has a neighbour in $V_i \setminus V_{i_j}$ we must have $dist_{\tree}(v,W_{i_j}) = 4$.
This implies $diam(\unlabeledsubtree{\maxk_{i_j}}) = 4$, $dist_{\tree}(u,W_{i_j}) = 3$, and all other real vertices of $\unlabeledsubtree{\maxk_{i_j}}$ must be leaves at distance two from $\centre{\unlabeledsubtree{\maxk_{i_j}}}$.
See Fig.~\ref{fig:bistar-two-choices-2} for an illustration.

We connect the unique node $\alpha_{i_j} \in \centre{\unlabeledsubtree{\maxk_{i_j}}}$ (which is Steiner) to $\alpha_i$ and then, we remove the edge $uv$.
In doing so, we obtain a tree $\tree[2]$ such that $Real(\tree[2]) = V$ and $\unlabeledsubtree[2]{\sep_{i_j}}$ is a bistar.
Note that in particular, $\unlabeledsubtree{\sep_i} = \unlabeledsubtree[2]{\sep_i}$, that follows from $\sep_i \cap \sep_{i_j} \subseteq \{v\}$.
Furthermore we claim that $\tree[2]$ keeps the property of being a $4$-Steiner root of $G$.
There are two cases:
\begin{itemize}
\item Suppose by contradiction $dist_{\tree[2]}(x,y) \leq 4$ for some $xy \notin E$. In particular, the unique $xy$-path in $\tree[2]$ must go by the edge $\alpha_{i_j}\alpha_i$. However since all neighbours of $\alpha_i$ except $v$ and all neighbours of $\alpha_{i_j}$ except $u$ are Steiner nodes, we obtain that $\{u,v\} \cap \{x,y\} \neq \emptyset$.
In particular if $x \in \{u,v\}$ then, $y \in \maxk_i \cup \maxk_{i_j}$, and so $xy \in E(G)$.
A contradiction.
\item Conversely, suppose by contradiction $dist_{\tree[2]}(x,y) > 4$ for some $xy \in E$. In particular, the unique $xy$-path in $\tree$ must go by the edge $uv$. Therefore, there must be a maximal clique $\maxk$ containing all of $x,y,u,v$. We have $\maxk \notin \{\maxk_i,\maxk_{i_j}\}$ since $dist_{\tree[2]}(x,y) > 4$. But then by maximality of $\sep_{i_j}$, $\maxk \cap \maxk_i = \maxk \cap \maxk_{i_j} = \maxk_i \cap \maxk_{i_j} = \sep_{i_j}$ and there are at least three full components in $G \setminus \sep_{i_j}$. A contradiction.
\end{itemize} 
So, we proved as claimed that $\tree[2]$ keeps the property of being a $4$-Steiner root of $G$.
We can prove in the same way that we have $\unlabeledsubtree{\ci} = \unlabeledsubtree[2]{\ci}$ for every clique-intersection $\ci \notin \{\sep_{i_j},\maxk_i,\maxk_{i_j}\}$, that implies $\tree[2]$ is well-structured.
We end up observing $dist_{\tree[2]}(r,W_i) \geq dist_{\tree}(r,W_i)$ for every $r \in \unlabeledsubtree{\sep_i}$ by construction, where $W_i := V_i \setminus \sep_i$.

Then, we obtain the desired property by repeating this above transformation until there is at most one $\sep''$ such that $\unlabeledsubtree{\sep''}$ is a star.
\end{itemize}
Overall, given a fixed $\unlabeledsubtree{\sep}$ we have at most ${\cal O}(|\maxk_i|)$ possibilities for $\unlabeledsubtree{\maxk_i}$.
By Lemma~\ref{lem:real-center-bistar}, every such a possibility can be computed in time ${\cal O}(|\border{\maxk_i}|) \times {\cal O}(nm\log{n})$, that is in ${\cal O}(n^2m\log{n})$.
Overall, the total running-time is in ${\cal O}(n|\maxk_i|^5) \times {\cal O}(|\maxk_i|) \times {\cal O}(n^2m\log{n}) = {\cal O}(|\maxk_i|^6 \cdot n^3m\log{n})$.
\end{proof}

\subsubsection{A degenerate case}\label{sec:degenerate}

If there is no minimal separator $\sep \subset \maxk_i$ such that $\unlabeledsubtree{\sep}$ is a bistar then, we get much less information on the structure of $\unlabeledsubtree{\maxk_i}$.
We identify the following as our main obstruction for bounding the number of possible subtrees:

\begin{definition}\label{def:thin-leg}
Given $G=(V,E)$ and $\tree$ a $4$-Steiner root of $G$, let $\maxk_i \in \MAXK{G}$ and let $\sep \subset \maxk_i$ be a minimal separator of size $|\sep| \geq 2$.
We call $\unlabeledsubtree{\sep}$ a {\em thin branch} of $\unlabeledsubtree{\maxk_i}$ if we have: 
\begin{itemize}
\item $\unlabeledsubtree{\sep} \setminus \centre{\unlabeledsubtree{\maxk_i}}$ is a connected component of $\unlabeledsubtree{\maxk_i} \setminus \centre{\unlabeledsubtree{\maxk_i}}$;
\item and there is no other $\unlabeledsubtree{\sep'}, \sep' \subset \maxk_i$ which both intersects $\centre{\unlabeledsubtree{\maxk_i}}$ and $\unlabeledsubtree{\sep} \setminus \centre{\unlabeledsubtree{\maxk_i}}$.
\end{itemize} 
The {\em head} of a thin branch is the node of $\unlabeledsubtree{\sep}$ that is the closest to $\centre{\unlabeledsubtree{\maxk_i}}$. 
\end{definition}

\begin{figure}[h!]
\centering
\includegraphics[width=.4\textwidth]{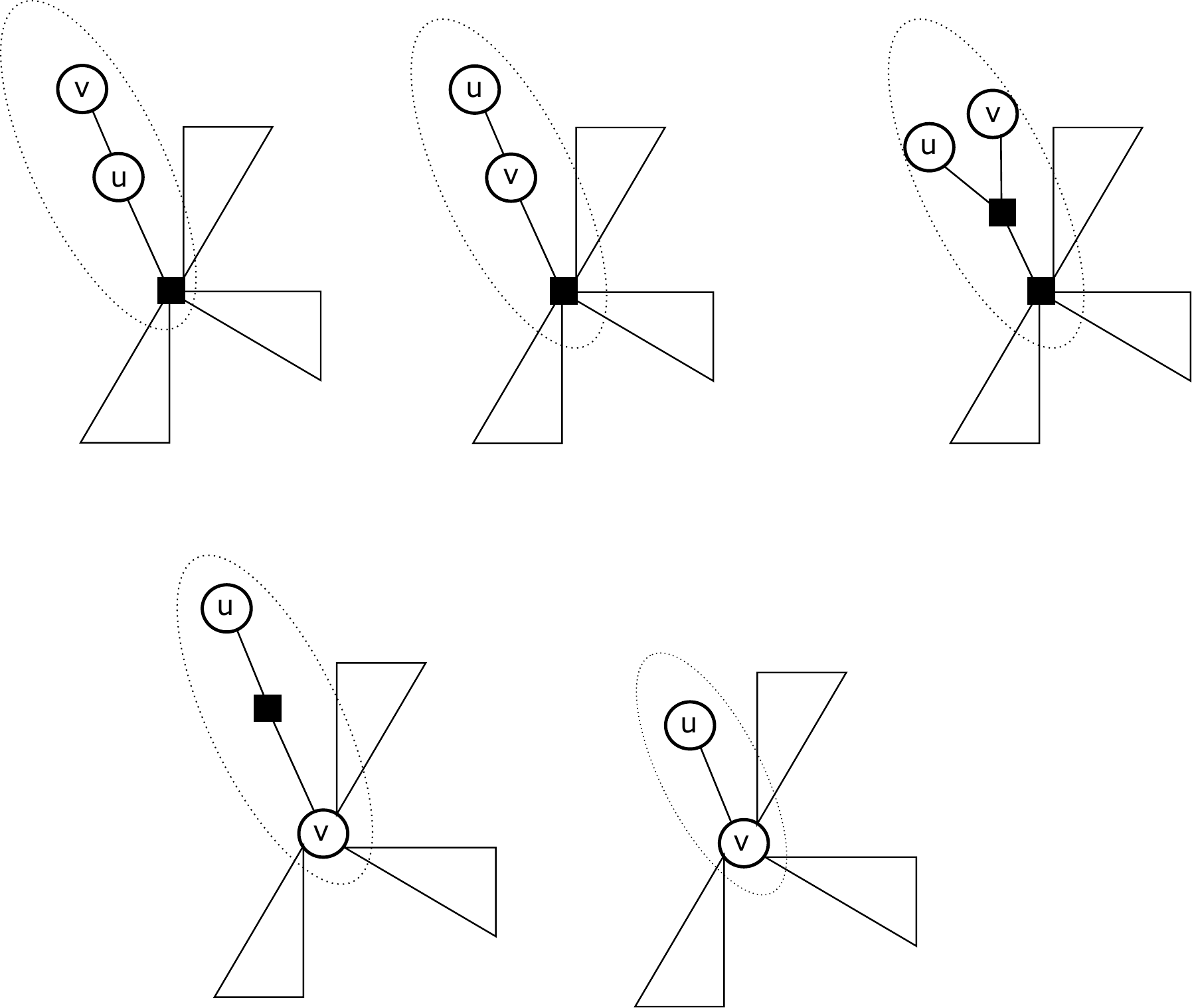}
\caption{Examples of thin branches (represented by a dashed ellipse).}
\label{fig:thin-branches}
\end{figure}

In order to understand the difficulties we met, assume on the way to construct $\unlabeledsubtree{\maxk_i}$ we correctly identified $\centre{\unlabeledsubtree{\maxk_i}}$ and a minimal separator $\sep$ for which $\unlabeledsubtree{\sep}$ must be a thin branch.
We can prove that $\unlabeledsubtree{\sep}$ must be a star (possibly, an edge).
However, without any additional information, there would be at least $|\sep| \geq 2$ possibilities for $\centre{\unlabeledsubtree{\sep}}$ ({\it e.g.}, see Figure~\ref{fig:thin-branches}).
If there are $p$ such minimal separators $\sep^1,\sep^2,\ldots,\sep^p$ for which $\unlabeledsubtree{\sep^j}$ must be a thin branch then, the number of possibilities for $\unlabeledsubtree{\maxk_i}$ goes up to $2^p$ at least.

\medskip
Intuitively, our choice for $\unlabeledsubtree{\sep^j}$ does not really matter as long as this does not violate any distance's constraints in the final solution we get. 
Guided by this intuition, we will sketch in Section~\ref{sec:greedy} a way to process all these $\sep^j$'s -- except maybe one -- independently from each other.
In particular, for now we do not really need to ``guess' what will be exactly $\unlabeledsubtree{\sep^j}$ in our final solution but just to correctly certify it has to be a thin branch.
Specifically, we prove the following result:

\begin{lemma}\label{lem:thin:branch}
Let $\maxk_i$ be a maximal clique of $G=(V,E)$ with no $\maxk_i$-free vertex.
There exists a family ${\cal D}_i$ that can be computed in ${\cal O}(n|\maxk_i|^4)$-time, and such that the following hold for any well-structured $4$-Steiner root $\tree$ of $G$:
\begin{enumerate}
\item
If $diam(\unlabeledsubtree{\maxk_i}) = 4$, and there is no minimal separator $\sep \subset \maxk_i$ such that $\unlabeledsubtree{\sep}$ is a bistar then, we have: $(\unlabeledsubtree[2]{Y_i \cup \centre{\unlabeledsubtree[2]{\maxk_i}}}, \ \centre{\unlabeledsubtree[2]{\maxk_i}}) \in {\cal D}_i$ for some $\tree[2] \equiv_G \tree$ and $Y_i \subseteq X_i$; 
\item
Moreover, $\sep_i \subseteq Y_i$, and for any $v \in \maxk_i \setminus Y_i$ there is a minimal separator $\sep \subseteq (\maxk_i \setminus Y_i) \cup \centre{\unlabeledsubtree{\maxk_i}}$ such that: $v \in \sep$, and $\unlabeledsubtree{\sep}$ is a thin branch.   
\end{enumerate}
\end{lemma}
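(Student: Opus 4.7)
The plan is to pin down a rigid skeleton for $\unlabeledsubtree{\maxk_i}$ from the degeneracy hypothesis, and then enumerate only the ``non-thin'' portion of the subtree. First I would exploit $diam(\unlabeledsubtree{\maxk_i}) = 4$: by Property~\ref{pty-lem:tree:3} of Lemma~\ref{lem:tree}, $\centre{\unlabeledsubtree{\maxk_i}} = \{c_i\}$ is a single node, offering $O(|\maxk_i|)$ candidates (either a vertex of $\maxk_i$ or a fresh Steiner node). For every minimal separator $\sep \subset \maxk_i$, Property~\ref{pty-ci:2} of Theorem~\ref{thm:clique-intersection} gives $diam(\unlabeledsubtree{\sep}) < 4$, and the no-bistar hypothesis then forces $diam(\unlabeledsubtree{\sep}) \leq 2$; so $\unlabeledsubtree{\sep}$ is a (possibly degenerate) star whose center, by Lemma~\ref{lem:center-in-star}, lies in $N_{\tree}[c_i]$.

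Given a choice of $c_i$, I would next invoke Theorem~\ref{thm:minsep} to enumerate the $O(|\sep_i|)$ star candidates for $\unlabeledsubtree{\sep_i}$ that survive the no-bistar restriction, and glue each to $c_i$ at the correct distance to obtain a partial subtree on $\sep_i \cup \{c_i\}$ (this already realises $\sep_i \subseteq Y_i$). Then, for each $v \in \maxk_i \setminus \sep_i$, I would decide via a clique-arrangement lookup whether $v$'s branch of $\unlabeledsubtree{\maxk_i} \setminus c_i$ must be non-thin: by Definition~\ref{def:thin-leg} this is precisely when either $v$'s branch already contains a real node of the placed $\unlabeledsubtree{\sep_i}$, or some other minimal separator $\sep' \subset \maxk_i$ intersects both a neighbour of $c_i$ and $v$'s branch. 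In either case the placement of $v$ relative to $c_i$ is constrained to $O(|\maxk_i|)$ alternatives (which star-center to attach to, or which Steiner intermediary to share), and iterating the ``must-include'' decisions yields $O(|\maxk_i|^2)$ extensions per base configuration. Before adding a tuple to ${\cal D}_i$ I would verify the thin-branch certificate of Condition~2: every $v \in \maxk_i \setminus Y_i$ lies in some minimal separator $\sep \subseteq (\maxk_i \setminus Y_i) \cup \{c_i\}$ whose candidate star from Theorem~\ref{thm:minsep} matches Definition~\ref{def:thin-leg}. Multiplying the $O(|\maxk_i|^4)$ tuples by the $O(n)$ clique-arrangement checks required per tuple meets the claimed $O(n|\maxk_i|^4)$ budget.

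The hard part will be controlling the enumeration of non-thin extensions: a priori there could be exponentially many ways to split $\maxk_i$ into thin and non-thin pieces. The no-bistar hypothesis is precisely what defeats this explosion, since in a diameter-$\leq 2$ star centred near $c_i$ each separator's embedding is pinned by the choice of its star-centre inside $N_{\tree}[c_i]$, reducing the search to polynomially many alternatives. A secondary subtlety is the Steiner-equivalence claim: one must argue that for every well-structured $\tree$ matching the hypotheses of the lemma, some $\tree[2] \equiv_G \tree$ has its $(c_i, Y_i)$ partition among those we enumerate. I would establish this by grouping the thin branches of $\tree$ into the omitted side of $Y_i$, and then applying a canonicalisation of the non-thin portion in the spirit of Lemma~\ref{lem:real-center-bistar} to line it up with one of the enumerated configurations.
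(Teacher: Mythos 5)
Your high-level architecture matches the paper's main case: guess the single central node $c_i$ among ${\cal O}(|\maxk_i|)$ candidates, observe that the no-bistar hypothesis forces every $\unlabeledsubtree{\sep}$, $\sep\subset\maxk_i$, to be an edge or a star whose center sits in $N_{\tree}[c_i]$ (Lemma~\ref{lem:center-in-star}), place the forced separators, omit the thin branches, and certify Condition~2. However, the crux of the lemma is left unproved. You write that each non-thin separator's placement ``is constrained to ${\cal O}(|\maxk_i|)$ alternatives'' and that iterating these decisions yields ${\cal O}(|\maxk_i|^2)$ extensions per base configuration — but if $p$ separators each independently admit even two alternatives, the number of configurations is $2^{p}$, which is exactly the explosion you set out to avoid. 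What the paper actually establishes is a dichotomy: for every inclusion-wise maximal minimal separator $\sep\subset\maxk_i$, given $c_i$ the subtree $\unlabeledsubtree{\sep}$ is either \emph{uniquely} determined (its center is pinned by a two-element clique-intersection containing $c_i$ when $|\sep|\ge 3$; or $\sep$ belongs to a group of size-two separators pairwise meeting in a common vertex $u\neq c_i$, which forces $u$ to be the shared center; or $\centre{\unlabeledsubtree{\sep}}=\{c_i\}$'s companion case where $Real(N_{\tree}[\centre{\unlabeledsubtree{\sep}}])=\sep$ rigidifies everything else), or else $\unlabeledsubtree{\sep}$ \emph{must} be a thin branch and is omitted entirely. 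Proving the ``otherwise it is a thin branch'' half of this dichotomy is a genuine argument (one must rule out any other $\sep'$ meeting both $\centre{\unlabeledsubtree{\maxk_i}}$ and $\unlabeledsubtree{\sep}\setminus\centre{\unlabeledsubtree{\maxk_i}}$), and your ``clique-arrangement lookup'' does not supply it.

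A second, smaller gap concerns the interaction between thin branches and $\sep_i$. You claim $\sep_i\subseteq Y_i$ is ``already realised'' by gluing a candidate $\unlabeledsubtree{\sep_i}$ to $c_i$, but a separator $\sep$ that must be a thin branch may intersect $\sep_i$ in a vertex other than $c_i$; then $\sep\not\subseteq(\maxk_i\setminus Y_i)\cup\centre{\unlabeledsubtree{\maxk_i}}$ and it cannot serve as the Condition-2 certificate for its remaining vertices. Your verification step would merely discard such configurations, losing completeness for the roots $\tree$ they represent. The paper handles this by noting that (by Definition~\ref{def:thin-leg}) at most \emph{one} thin branch can meet $\sep_i\setminus\{c_i\}$, and for that single branch it enumerates all ${\cal O}(|\sep|)$ placements explicitly, folding them into $Y_i$; your proposal needs an analogous patch.
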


\begin{proof}
By the hypothesis we are left to compute the diameter-four subtrees where, for every minimal separator $\sep \subset \maxk_i$, $\unlabeledsubtree{\sep}$ has diameter at most two.
For that, we only need to consider the subset ${\cal S}_i$ of all minimal separators $\sep \subset \maxk_i$ that are not strictly contained into any other minimal separator in $\maxk_i$.
Furthermore, we recall that by the hypothesis there is no $\maxk_i$-free vertex.
In particular, every $\sep \in {\cal S}_i$ has size at least two.
This implies $\unlabeledsubtree{\sep}$ must be either an edge or a star.
We now divide the proof into several cases:

\begin{itemize}
\item \underline{Case there is a $\sep \in {\cal S}_i$ such that $\centre{\unlabeledsubtree{\maxk_i}} = \centre{\unlabeledsubtree{\sep}}$.}
Note that in this case, $\unlabeledsubtree{\sep}$ must be a non-edge star.
Fix $\sep \in {\cal S}_i$ (there are ${\cal O}(n)$ possibilities) and one non-edge star $\unlabeledsubtree{\sep}$ such that $Real(\unlabeledsubtree{\sep}) = \sep$ (there are ${\cal O}(|\sep|) = {\cal O}(|\maxk_i|)$ possibilities).
By Lemma~\ref{lem:center-in-star}, $Real(N_{\tree}[\centre{\unlabeledsubtree{\sep}}]) = \sep$.
Therefore, there is at most one compatible solution for any other $\sep' \in {\cal S}_i$, namely: if $v \in \sep \cap \sep'$ is a leaf of $\unlabeledsubtree{\sep}$ then, $\unlabeledsubtree{\sep'}$ must be a star with $v$ as a central node (possibly, an edge); otherwise, $\unlabeledsubtree{\sep'}$ must be a non-edge star with a Steiner central node $\alpha \in N_{\tree}(\centre{\unlabeledsubtree{\sep}})$.
See Fig.~\ref{fig:ugly-case-2} for an illustration of that case.

\begin{figure}[h!]
\centering
\includegraphics[width=.3\textwidth]{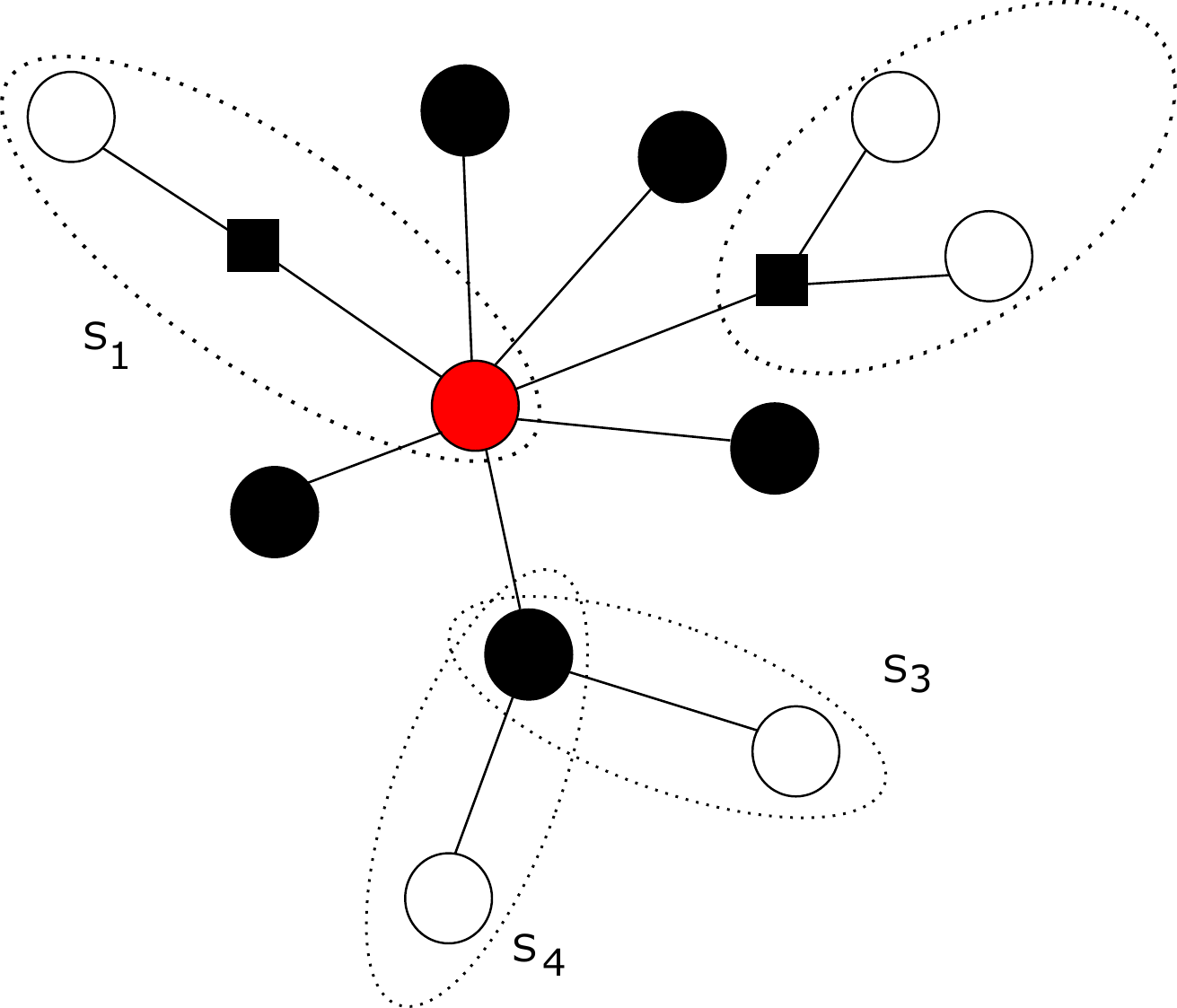}
\caption{Case 1 of Lemma~\ref{lem:thin:branch}.}
\label{fig:ugly-case-2}
\end{figure}

\item \underline{Case there is no $\sep \in {\cal S}_i$ such that $\centre{\unlabeledsubtree{\maxk_i}} = \centre{\unlabeledsubtree{\sep}}$.}
Fix $c_i \in \centre{\unlabeledsubtree{\maxk_i}}$ as being any vertex of $\maxk_i$ or Steiner (this gives ${\cal O}(|\maxk_i|)$ possibilities).
There are several subcases:
\begin{itemize}
\item
For $\sep \in {\cal S}_i$ of size $|\sep| \geq 3$, the only possibility for $\unlabeledsubtree{\sep}$ is to be a non-edge star such that (by Lemma~\ref{lem:center-in-star}) $Real(N_{\tree}[\centre{\unlabeledsubtree{\sep}}]) = \sep$, and the center of $\unlabeledsubtree{\sep}$ must be adjacent to $c_i$.
If in addition, there is a clique-intersection $\ci \subset \sep, |\ci| = 2$ and $c_i \in \ci$ then, the center of $\unlabeledsubtree{\sep}$ must be the unique vertex in $\ci \setminus \{c_i\}$.
Otherwise, we claim that $\unlabeledsubtree{\sep}$ must be a thin branch.
Indeed, suppose by contradiction the existence of a $\sep' \subset \maxk_i$ such that $\unlabeledsubtree{\sep'}$ intersects both $\centre{\unlabeledsubtree{\maxk_i}}$ and $\unlabeledsubtree{\sep} \setminus \centre{\unlabeledsubtree{\maxk_i}}$. 
As we assume $\sep \not\subseteq \sep'$, we should have $\unlabeledsubtree{\sep} \cap \unlabeledsubtree{\sep'}$ that is contained into the edge between $c_i$ and the central node of $\unlabeledsubtree{\sep}$. 
In particular, $\sep' \not\subseteq \sep$ (otherwise, this would contradict the non-existence of a clique-intersection $\ci$ as defined above).
This implies $\unlabeledsubtree{\sep'}$ should be either a bistar or a non-edge star such that $\centre{\unlabeledsubtree{\maxk_i}} = \centre{\unlabeledsubtree{\sep'}}$.
However, in both cases this would contradict our hypothesis that no such $\sep'$ exist.
Therefore, we proved as claimed that $\unlabeledsubtree{\sep}$ must be a thin branch.
\item
Let $\sep^1,\sep^2,\ldots,\sep^q$ be minimal separators of size exactly two that are pairwise intersecting into some vertex $u \neq c_i$.
Then, their union must be a star: where the center is the unique vertex $u$ in $\bigcap \sep^j$, and $Real(N_{\unlabeledsubtree{\maxk_i}}[u]) = \{c_i\} \cup \left( \bigcup_j \sep^j \right)$.
In particular, $\unlabeledsubtree{\sep^j}$ must be an edge for every $j$.
\item
So, the only remaining subcase is a minimal separator $\sep \in {\cal S}_i$ such that: $|\sep| = 2$, and the intersection of $\sep$ with any other minimal separator of ${\cal S}_i$ is either empty or reduced to $c_i$. 
Then, $\unlabeledsubtree{\sep}$ must be a thin branch. 
\end{itemize}
See Fig.~\ref{fig:ugly-case-1} for an illustration of these subcases.

\begin{figure}[h!]
\centering
\includegraphics[width=.3\textwidth]{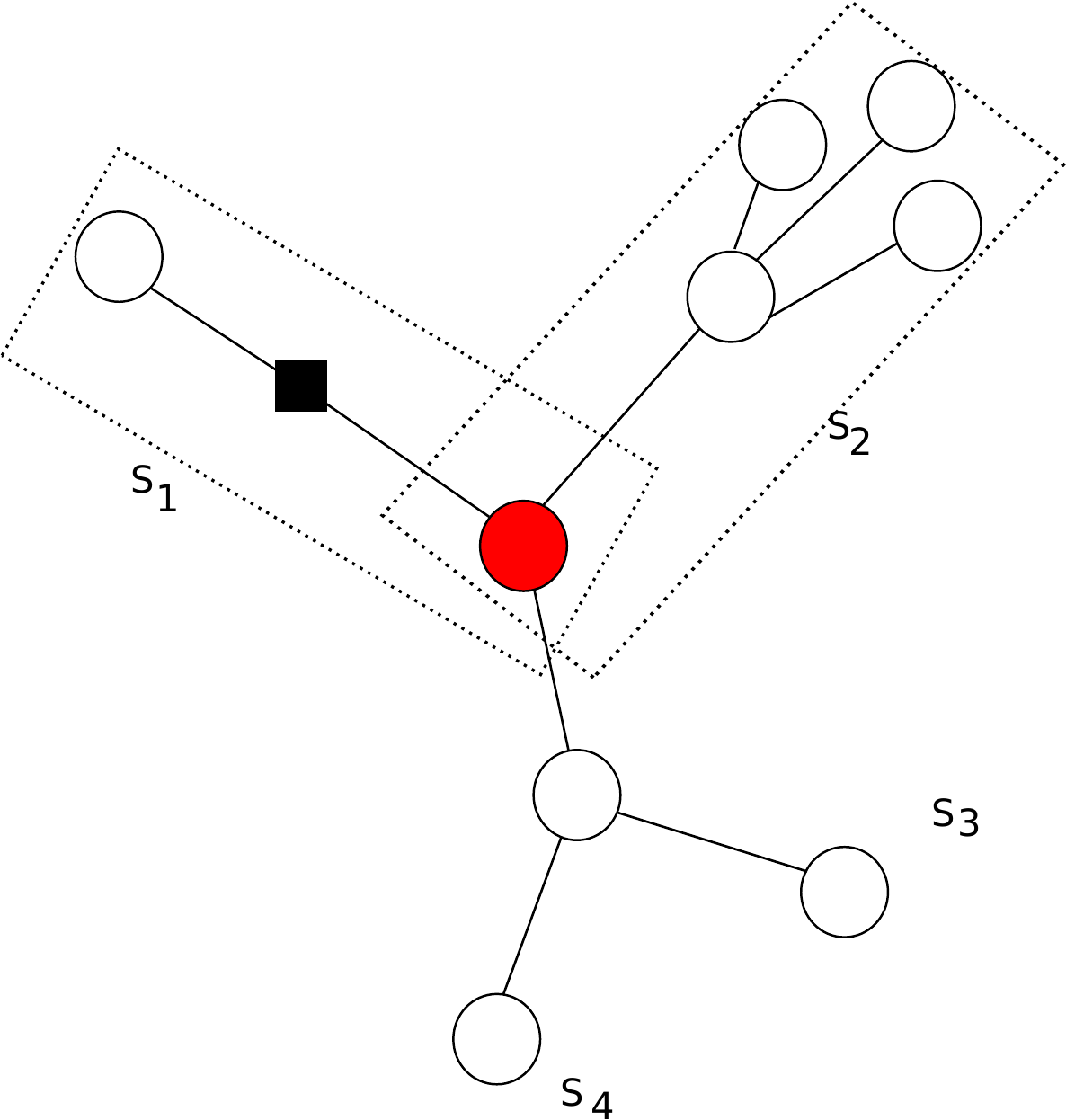}
\caption{Case 2 of Lemma~\ref{lem:thin:branch}. Thin branches are identified by dotted rectangles.}
\label{fig:ugly-case-1}
\end{figure}

\smallskip
Finally, according to Definition~\ref{def:thin-leg}, there may be at most one $\sep$ such that $\sep_i \cap (\sep \setminus \{c_i\}) \neq \emptyset$ and $\unlabeledsubtree{\sep}$ must be a thin branch.
Only for this $\sep$ we generate all possibilities for $\unlabeledsubtree{\sep}$, thereby generating ${\cal O}(|\sep|)$ different pairs $(\labeledsubtree{Y_i}{},c_i)$ to add in the family.
\end{itemize}
\end{proof}

\subsubsection{The polynomial-time computation}\label{sec:compute-fi}

Summarizing this section we get:

\begin{proposition}\label{prop:internal}
Let $\maxk_i$ be a maximal clique of $G=(V,E)$.
In ${\cal O}(|\maxk_i|^7 \cdot n^3m\log{n})$-time, we can compute a family ${\cal F}_i$ with the following special property.
For any well-structured $4$-Steiner root $\tree$ of $G$, there exists a $\tree[2]$ and a (not necessarily maximal) clique $Y_i \subseteq \maxk_i$ such that $(\unlabeledsubtree[2]{Y_i \cup \centre{\unlabeledsubtree[2]{\maxk_i}}}, \ \centre{\unlabeledsubtree[2]{\maxk_i}}) \in {\cal F}_i$, and we have:
\begin{itemize}
\item $\sep_i \subseteq Y_i$ and $\unlabeledsubtree[2]{\sep_i} \equiv_G \unlabeledsubtree{\sep_i}$;
\item $dist_{\tree[2]}(r,V_i \setminus \sep_i) \geq dist_{\tree}(r,V_i \setminus \sep_i)$ for any $r \in V(\unlabeledsubtree{\sep_i})$;
\item For any $v \in \maxk_i \setminus Y_i$ there is a minimal separator $\sep \subseteq (\maxk_i \setminus Y_i) \cup \centre{\unlabeledsubtree{\maxk_i}}$ such that: $v \in \sep$ and $\unlabeledsubtree{\sep}$ is a thin branch. 
\end{itemize}
Moreover, $Y_i = \maxk_i$ if either $diam(\unlabeledsubtree[2]{\maxk_i}) < 4$ or there exists a minimal separator $\sep \subset \maxk_i$ such that $\unlabeledsubtree[2]{\sep}$ is a bistar.
\end{proposition}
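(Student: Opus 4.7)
The plan is to assemble $\mathcal{F}_i$ as the union of three candidate families, one for each possible shape that $\unlabeledsubtree{\maxk_i}$ can have in a well-structured $4$-Steiner root of $G$. As a preliminary reduction, I would first strip off the $\maxk_i$-free vertices of $\maxk_i$: by Theorem~\ref{thm:x-free} they appear in any well-structured root as leaves attached to some $c \in \centre{\unlabeledsubtree{\maxk_i}}$ via a length-two path through a degree-two Steiner node, with all but at most one of them sharing the same central attachment point. Thus, once a candidate has been built for the clique $\maxk_i$ deprived of its free vertices, each such free vertex can be reinserted in $\mathcal{O}(|\maxk_i|)$ ways, contributing at most an overall factor of $\mathcal{O}(|\maxk_i|)$ to both the cardinality of the family and its construction time. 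Observe that reattaching free vertices does not change $\centre{\unlabeledsubtree{\maxk_i}}$ nor $\unlabeledsubtree{\sep_i}$, and does not affect the distances from $V(\unlabeledsubtree{\sep_i})$ to $V_i\setminus\sep_i$; it also cannot modify the set $Y_i$ nor its thin-branch covering, since $\maxk_i$-free vertices are simplicial or cut-vertices not contained in any proper minimal separator of $G_i$ (Lemma~\ref{lem:almost-simplicial-characterization}).

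Assume from now on that $\maxk_i$ has no $\maxk_i$-free vertex. I would then distinguish three exhaustive subcases for $\unlabeledsubtree{\maxk_i}$ in the well-structured root $\tree$:
\begin{enumerate}[label=(\alph{enumi})]
\item $diam(\unlabeledsubtree{\maxk_i}) \leq 3$. Since the absence of $\maxk_i$-free vertices makes the ``sandwiched'' analysis of Theorem~\ref{thm:minsep} vacuous for $\maxk_i$ (Remark~\ref{rk:extension-minsep}), the algorithm of that theorem, applied with $\sep$ replaced by $\maxk_i$, enumerates all candidate subtrees in $\mathcal{O}(|\maxk_i|^6)$-time, and I set $Y_i = \maxk_i$.
\item $diam(\unlabeledsubtree{\maxk_i}) = 4$ and some minimal separator $\sep \subset \maxk_i$ has $\unlabeledsubtree{\sep}$ a bistar. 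Then Lemma~\ref{lem:with-bistar-sep} produces in $\mathcal{O}(|\maxk_i|^6\cdot n^3m\log n)$-time the family $\mathcal{B}_i$, from which a suitable $\tree[2]$ with $\unlabeledsubtree[2]{\sep_i}\equiv_G \unlabeledsubtree{\sep_i}$ and the required distance-monotonicity $dist_{\tree[2]}(r,V_i\setminus \sep_i) \geq dist_{\tree}(r,V_i\setminus \sep_i)$ is guaranteed to exist; again $Y_i = \maxk_i$.
\item $diam(\unlabeledsubtree{\maxk_i}) = 4$ and no minimal separator $\sep \subset \maxk_i$ has $\unlabeledsubtree{\sep}$ a bistar. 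Then Lemma~\ref{lem:thin:branch} yields in $\mathcal{O}(n|\maxk_i|^4)$-time the family $\mathcal{D}_i$, of the desired form $(\unlabeledsubtree[2]{Y_i \cup \centre{\unlabeledsubtree[2]{\maxk_i}}},\ \centre{\unlabeledsubtree[2]{\maxk_i}})$, where $Y_i$ may be strictly smaller than $\maxk_i$ but every missing vertex is covered by a thin branch as required by the third bullet of the statement.
\end{enumerate}

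I would then take $\mathcal{F}_i$ as the union of the three families, each possibly expanded by the free-vertex reattachment step described above. Correctness in cases (a) and (c) follows from taking $\tree[2]\equiv_G \tree$, which trivially satisfies both the Steiner-equivalence of $\unlabeledsubtree{\sep_i}$ and the distance-monotonicity; correctness in case (b) is the content of Lemma~\ref{lem:with-bistar-sep}. The final ``Moreover'' clause of the proposition is built in by construction, since $Y_i$ can only differ from $\maxk_i$ in case (c). The total running time is dominated by case (b), which is $\mathcal{O}(|\maxk_i|^6\cdot n^3m\log n)$, multiplied by the $\mathcal{O}(|\maxk_i|)$ factor coming from free-vertex reinsertion, giving the claimed $\mathcal{O}(|\maxk_i|^7\cdot n^3m\log n)$.

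The main obstacle is not really computational but conceptual: one has to check that the three subcases are genuinely exhaustive for well-structured roots and, crucially, that in each subcase the lemma being invoked already bakes in the Steiner-equivalence on $\unlabeledsubtree{\sep_i}$ and the distance condition along $V(\unlabeledsubtree{\sep_i})$ that the proposition demands. For cases (a) and (c) this is immediate because $\tree[2]\equiv_G\tree$ works; for case (b) it is precisely the role of the ``canonical completion'' produced by Lemma~\ref{lem:real-center-bistar} inside Lemma~\ref{lem:with-bistar-sep} to guarantee the distance inequality, which is the subtle point of the whole construction.
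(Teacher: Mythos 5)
Your proof follows the paper's own structure essentially verbatim: strip the $\maxk_i$-free vertices and absorb an $\mathcal{O}(|\maxk_i|)$ multiplicative factor via Lemma~\ref{lem:almost-simplicial-placement}, then split into the diameter-$\leq 3$ case (Remark~\ref{rk:extension-minsep}), the diameter-$4$ case with a bistar separator (Lemma~\ref{lem:with-bistar-sep}), and the remaining diameter-$4$ case (Lemma~\ref{lem:thin:branch}), taking the union of the resulting families. One caveat: your observation that ``reattaching free vertices does not change $\centre{\unlabeledsubtree{\maxk_i}}$'' is not literally true when the restricted-clique candidate has diameter at most three, since adding the free vertices forces the diameter up to four and the center node then need not be a node of the restricted candidate's center; the paper avoids this by explicitly fixing the future center of $\unlabeledsubtree{\maxk_i}$ (and its distance to the restricted subtree) as part of the construction, as in Theorem~\ref{thm:leaf} — this is exactly where the $\mathcal{O}(|\maxk_i|)$ factor comes from, and it is the reason the per-vertex phrasing of ``each such free vertex can be reinserted in $\mathcal{O}(|\maxk_i|)$ ways'' should really be a single global choice followed by one canonical attachment.
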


\begin{proof}
If there are $\maxk_i$-free vertices then, by using Lemma~\ref{lem:almost-simplicial-placement}, there is essentially one canonical way to add these vertices at the end of the construction.
For that, it suffices to fix the center of $\unlabeledsubtree{\maxk_i}$ which, as explained in the proof of Theorem~\ref{thm:leaf}, can only increase the total runtime by a multiplicative factor in ${\cal O}(|\maxk_i|)$. 
Thus from now on we may assume that $\maxk_i$ has no $\maxk_i$-free vertex.
Furthermore, we may also assume we already computed all the diameter-three subtrees to add in ${\cal F}_i$ ({\it i.e.}, see Remark~\ref{rk:extension-minsep}).
We explain in Lemma~\ref{lem:with-bistar-sep} how to compute the subfamily ${\cal B}_i$ of all diameter-four subtrees where at least one minimal separator $\sep \subset \maxk_i$ has $\unlabeledsubtree{\sep}$ being a bistar.
Finally, Lemma~\ref{lem:thin:branch} completes the construction of the family ${\cal F}_i$.
\end{proof}

\section{Step~\ref{step-3}: Deciding the partial solutions to store}\label{sec:encoding}

The next two Sections are devoted to the main loop of our algorithm proving Theorem~\ref{thm:main-steiner-power}.
Specifically, let $\cliquetree{G}$ be the rooted clique-tree we defined in Section~\ref{sec:clique-tree}.
We consider all the maximal cliques $\maxk_i$ that are {\em internal nodes} in $\cliquetree{G}$, starting from the twigs ({\it a.k.a.}, internal nodes whose all children nodes are leaves).
For every such a $\maxk_i$ we must execute a same procedure: that corresponds to Step~\ref{step-3} of the algorithm, and is the main focus of this section.

During this procedure, we activate each child node $\maxk_{i_j}$ of $\maxk_i$ sequentially, by sending a message.
This activation message contains a series of constraints on the $4$-Steiner roots of $G_{i_j}$ that we want to compute.
Overall, after it has been activated, we complete Step~\ref{step-4} for $\maxk_{i_j}$ (detailed in the next Section~\ref{sec:greedy}) using the series of constraints it has received from its father node $\maxk_i$. 
The output is a set ${\cal T}_{i_j}$ of partial solutions whose size is polynomial in $|\sep_{i_j}|$.
Finally, after we received ${\cal T}_{i_j}$, we can compute and send the activation message for the next child of $\maxk_i$ to be activated (if any).

As a way to formalize the constraints that we need to include in activation messages, we now introduce the following problem:

\begin{center}
	\fbox{
		\begin{minipage}{.95\linewidth}
			\begin{problem}[\textsc{Distance-Constrained Root}]\
				\label{prob:distances-constraints} 
					\begin{description}
					\item[Input:] a graph $G=(V,E)$ and a rooted clique-tree $\cliquetree{G}$, a maximal clique $\maxk_{i_j}$, a tree $\labeledsubtree{\sep_{i_j}}{}$ s.t. $Real(\labeledsubtree{\sep_{i_j}}{}) = \sep_{i_j}$, and a sequence $(d_r)_{r \in V(\labeledsubtree{\sep_{i_j}}{})}$ of positive integers.
					\item[Output:] Either a $4$-Steiner root $\labeledsubtree{i_j}{}$ of $G_{i_j}$ s.t. $\labeledsubtree{\sep_{i_j}}{} \equiv_G \labeledsubtree{i_j}{\sep_{i_j}}$ and, $\forall r \in V(\labeledsubtree{\sep_{i_j}}{})$: $dist_{\labeledsubtree{i_j}{}}(r, V_{i_j} \setminus \sep_{i_j}) \geq d_r$; Or $\bot$ if there is no such a partial solution which can be extended to some well-structured $4$-Steiner root $\tree$ of $G$.
				\end{description}
			\end{problem}     
		\end{minipage}
	}
\end{center}

\begin{theorem}\label{thm:encoding}
Given $G=(V,E)$ chordal and a rooted clique-tree $\cliquetree{G}$ as in Theorem~\ref{thm:final-clique-tree}, let $\maxk_i$ be an internal node with children $\maxk_{i_1},\maxk_{i_2},\ldots,\maxk_{i_p}$.
If we can solve \textsc{Distance-Constrained Root} in time $P(n,|\sep_{i_j}|)$ for some polynomial $P$ then, we can compute in time ${\cal O}(n|\maxk_i|^5P(n,|\maxk_i|))$ a family ${\cal T}_{i_1}, {\cal T}_{i_2}, \ldots, {\cal T}_{i_p}$ of $4$-Steiner roots for $G_{i_1}, G_{i_2}, \ldots, G_{i_p}$, respectively, such that:
\begin{enumerate}
\item For any $j \in \{1,2,\ldots,p\}$, $|{\cal T}_{i_j}| = {\cal O}(|\sep_{i_j}|^5)$;
\item For any well-structured $4$-Steiner root $\tree$ of $G$, there exists a $\tree[2]$ such that: $\unlabeledsubtree{\maxk_i} \equiv_G \unlabeledsubtree[2]{\maxk_i}$, $\unlabeledsubtree[2]{V_{i_j}} \in {\cal T}_{i_j}$ for any $j \in \{1,2,\ldots,p\}$, and (only if $\maxk_i \neq \maxk_0$) $dist_{\tree[2]}(r,V_i \setminus \sep_i) \geq dist_{\tree}(r,V_i \setminus \sep_i)$ for any node $r \in V(\unlabeledsubtree{\sep_i})$.
\end{enumerate}
\end{theorem}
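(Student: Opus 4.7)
The plan is to reduce the construction of the families $({\cal T}_{i_j})_{1 \leq j \leq p}$ to a polynomial number of calls of \DCR, by generating and filtering candidate encodings issued from the structural ingredients of Sections~\ref{sec:restricted-root} and~\ref{sec:ci-subtrees}. First I would apply Proposition~\ref{prop:internal} to obtain the polynomial-size family ${\cal F}_i$ of candidate pairs $(\labeledsubtree{}{Y_i}, c_i)$ describing the possible restrictions of $\unlabeledsubtree{\maxk_i}$ to $Y_i \cup \centre{\unlabeledsubtree{\maxk_i}}$ in a well-structured $4$-Steiner root. For each such pair and each child $\maxk_{i_j}$, I would then collect the sub-family of ${\cal T}_{\sep_{i_j}}$ (given by Theorem~\ref{thm:minsep}) consisting of those subtrees $\labeledsubtree{}{\sep_{i_j}}$ compatible with the partial placement encoded by $(\labeledsubtree{}{Y_i}, c_i)$ --- i.e.\ those that agree on $Y_i \cap \sep_{i_j}$ and can be completed by appropriate thin branches for the vertices in $\sep_{i_j} \setminus Y_i$. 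This gives ${\cal O}(|\sep_{i_j}|^5)$ candidate subtrees per child.

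Second, for every candidate subtree $\labeledsubtree{}{\sep_{i_j}}$, I would compute a required distance profile $(d_r)_{r \in V(\labeledsubtree{}{\sep_{i_j}})}$ as follows. Set $d_r$ to be the smallest integer such that, when combined with the shortest distance in $\tree$ from $r$ to $V \setminus V_{i_j}$ measured across $\labeledsubtree{\maxk_i}{}$, the sum strictly exceeds $4$ whenever the corresponding pair of real vertices is not an edge of $G$. These ``exterior'' distances are computable from: the candidate $\labeledsubtree{}{Y_i}$; the rooted clique-tree context beyond $V_i$ (implicitly reflected by a symbolic upper-bound that we will provide when $\maxk_i$ is itself processed as a child of $\maxk_{p(i)}$); and the partial solutions already stored for the smaller siblings $\maxk_{i_{j+1}},\dots,\maxk_{i_p}$ (remembering that these are processed before $\maxk_{i_j}$). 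Then I would call \DCR~on each resulting encoding $(\labeledsubtree{}{\sep_{i_j}}, (d_r)_r)$ and, in case of success, insert the returned $4$-Steiner root of $G_{i_j}$ into ${\cal T}_{i_j}$.

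The main obstacle will be to show that this whole procedure generates only ${\cal O}(|\sep_{i_j}|^5)$ encodings per child, since the number of distance profiles is a priori exponential in $|\sep_{i_j}|$: this is precisely Obstacle~\ref{issue-2}. Here the cornerstone is the specific rooted clique-tree from Theorem~\ref{thm:final-clique-tree}. Properties~\ref{pty-fct:1} and~\ref{pty-fct:2} imply that the minimal separators of $G_{i_j}$ that are contained into, or that strictly contain, $\sep_{i_j}$ are $\cliquetree{G}$-convergent with very rigid positions in $\cliquetree{G}$; combined with Theorems~\ref{thm:clique-intersection} and~\ref{thm:x-free}, this forces the distance profile to be constant on large ``homogeneous'' subsets of $V(\labeledsubtree{}{\sep_{i_j}})$ that depend only on the clique-arrangement of $G$, not on $\tree$. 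As a consequence, once $\labeledsubtree{}{\sep_{i_j}} \in {\cal T}_{\sep_{i_j}}$ is fixed, there remain only ${\cal O}(1)$ degrees of freedom for $(d_r)_r$, coming from the at most two centers of $\labeledsubtree{}{\sep_{i_j}}$ and at most one ``exceptional'' $\sep_{i_j}$-free leaf (cf.\ Property~\ref{pty-x-free:2}). This yields a polynomial bound on the number of encodings per child, and hence $|{\cal T}_{i_j}| = {\cal O}(|\sep_{i_j}|^5)$.

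Finally, for correctness and complexity: given any well-structured root $\tree$ of $G$, Proposition~\ref{prop:internal} and Theorem~\ref{thm:minsep} guarantee (after applying a Steiner-equivalence $\tree \equiv_G \tree[2]$ and the distance-preserving transformations of Lemma~\ref{lem:real-center-bistar}) that the restriction of $\tree[2]$ to $V_{i_j}$ matches one of the generated encodings, and hence the corresponding \DCR~call must not return $\bot$; this is where the distance condition $dist_{\tree[2]}(r,V_i \setminus \sep_i) \geq dist_{\tree}(r, V_i \setminus \sep_i)$ comes from, since every monotone transformation applied in Sections~\ref{sec:restricted-root}--\ref{sec:ci-subtrees} can only increase such distances. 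The total cost is bounded by $|{\cal F}_i| \cdot \sum_{j} |{\cal T}_{\sep_{i_j}}|$ encodings, each processed with a single call of \DCR; after checking that $|{\cal F}_i| \cdot \sum_{j}|{\cal T}_{\sep_{i_j}}| = {\cal O}(n|\maxk_i|^5)$ and that the per-encoding combinatorial work is dominated by $P(n,|\maxk_i|)$, we obtain the announced ${\cal O}(n|\maxk_i|^5 P(n,|\maxk_i|))$ bound.
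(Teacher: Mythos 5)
Your high-level reading — that the theorem is really about bounding the number of admissible encodings $\bigl(\labeledsubtree{\sep_{i_j}}{},(d_r)_r\bigr)$ to a polynomial, using the specially constructed clique-tree of Theorem~\ref{thm:final-clique-tree} — is correct, and a few of the intuitions (Claim~\ref{claim:not-a-container}-type rigidity from Properties~\ref{pty-fct:1} and~\ref{pty-fct:2}, ``most of the profile is derived once a few anchor distances are fixed'') do appear in the paper. But the concrete mechanism you propose differs from the paper's, and it has two real problems.

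First, the complexity bookkeeping does not close. You propose to iterate over pairs $(\labeledsubtree{}{Y_i},c_i)\in{\cal F}_i$ obtained from Proposition~\ref{prop:internal} and, for each, collect compatible subtrees from ${\cal T}_{\sep_{i_j}}$. But ${\cal F}_i$ is polynomial in $n$, not of constant size, so the total number of \DCR-calls you generate is $|{\cal F}_i|\cdot\sum_j|{\cal T}_{\sep_{i_j}}|$, which is strictly larger than the claimed $\mathcal{O}(n|\maxk_i|^5P(n,|\maxk_i|))$. Moreover it would also destroy the size bound $|{\cal T}_{i_j}|=\mathcal{O}(|\sep_{i_j}|^5)$ unless you deduplicate, which you do not discuss. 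The paper does \emph{not} iterate over ${\cal F}_i$ inside the proof of Theorem~\ref{thm:encoding} at all; ${\cal F}_i$ is deferred to Theorem~\ref{thm:dyn-prog}. The proof of Theorem~\ref{thm:encoding} only loops over $\labeledsubtree{\sep_{i_j}}{}\in{\cal T}_{\sep_{i_j}}$ and a constant number of additional ``center-node guesses'' $c_i,c_{i_j}$ per subtree, which is exactly what keeps the total count at $\mathcal{O}(|\sep_{i_j}|^5)$ per child.

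Second, and more substantively, the heart of the theorem is the claim that one stored solution per encoding suffices, i.e.\ that any two $4$-Steiner roots of $G_{i_j}$ with the same short encoding are interchangeable when gluing into a full well-structured root of $G$. This is not a uniqueness statement about distance profiles but an \emph{exchange} property, proved across the three cases by Claims~\ref{claim:star-1}, \ref{claim:star-2}, and~\ref{claim:bistar-3}, and it crucially relies on two things your proposal glosses over: (i) the distances'\,conditions imposed on ${\cal T}_{i_j}$ are derived from the \emph{already-processed siblings} $\maxk_{i_k}$ (hence the prescribed processing order by decreasing $|\sep_{i_j}|$), via Claims~\ref{claim:cut-vertex} and~\ref{claim:edge}; and (ii) the distance-condition bookkeeping needs the precise combinatorial facts from Theorem~\ref{thm:final-clique-tree} (Claims~\ref{claim:not-a-container}, \ref{claim:bistar-1}, \ref{claim:bistar-2}) that control which vertices of $\sep_{i_j}$ can appear in other minimal separators of $G_{i_j}$ and $\maxk_i$. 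Your proposal replaces all of this with a heuristic ``set $d_r$ to be the smallest integer so that the sum exceeds $4$'' and an appeal to ``homogeneous subsets,'' neither of which supplies a proof of the exchange property or the requirement in item~2 that the \emph{same} $\tree[2]$ works simultaneously for all children $j$. The latter is what forces the careful per-child argument and the ``$dist_{\tree[2]}(r,V_i\setminus\sep_i)\geq dist_{\tree}(r,V_i\setminus\sep_i)$'' monotonicity to be tracked through all of them.
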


We postpone the proof that we can solve {\sc Distance-Constrained Root} in polynomial time to Section~\ref{sec:greedy}.
This above result can be seen as a pre-processing phase for $\maxk_i$, that is crucial in order to bound the runtime of our algorithm by a polynomial.
Note that the technical condition on the nodes in $\unlabeledsubtree{\sep_i}$ is simply there to ensure that when later in the algorithm, we will need to solve \textsc{Distance-Constrained Root} at $\maxk_i$, we cannot miss a solution.

\medskip
The remaining of this subsection is now devoted to the proof of Theorem~\ref{thm:encoding}.
We will use some additional terminology that we define next:
\begin{definition}\label{def:op}
Given $G=(V,E)$, let $A,B,\sep \subset V$ satisfy $A \cup B = V$ and $A \cap B = \sep$.
Two trees $\labeledsubtree{A}{},\labeledsubtree{B}{}$, where $Real(\labeledsubtree{A}{}) = A$ and $Real(\labeledsubtree{B}{}) = B$, are {\em compatible} if $\labeledsubtree{A}{\sep} \equiv_G \labeledsubtree{B}{\sep}$.
Then, $\labeledsubtree{A}{} \odot \labeledsubtree{B}{}$ is the tree obtained from $\labeledsubtree{A}{},\labeledsubtree{B}{}$ by the identification of $\labeledsubtree{A}{\sep}$ with $\labeledsubtree{B}{\sep}$. 

\smallskip
In particular, assume $G$ to be chordal and let $\cliquetree{G}$ be a rooted clique-tree of $G$.
For any $\maxk_i \in \MAXK{G}$, let $\sep_i := \maxk_i \cap \maxk_{p(i)}$, let $V_i := V(G_i)$ and let $W_i := V_i \setminus \sep_i$.
Given $\tree,\tree[2]$ $4$-Steiner roots of $G$, we say that $\tree[2]$ is {\em $i$-congruent} to $\tree$ if $\tree[2] \equiv_G \unlabeledsubtree{V \setminus W_i} \odot \labeledsubtree[2]{i}{}$, for some $4$-Steiner root $\labeledsubtree[2]{i}{}$ of $G_i$.
\end{definition}

Note that in particular, any two Steiner roots of $G$ are trivially $0$-congruent ({\it i.e.}, assuming $\sep_0 = \emptyset$ by convention). 
Finally in what follows we also use $d_{\labeledsubtree{i_j}{}}(r)$ as a shorthand for $dist_{\labeledsubtree{i_j}{}}(r,W_{i_j})$.
We observe that for any $r \in V(\unlabeledsubtree{\sep_{i_j}})$ we have $dist_{\labeledsubtree{i_j}{}}(r,W_{i_j}) \leq dist_{\labeledsubtree{i_j}{}}(r,\maxk_{i_j} \setminus \sep_{i_j}) \leq 4$.

\paragraph{Outline of the proof.}
We process the children nodes $\maxk_{i_j}$ sequentially by increasing size of the minimal separators $\sep_{i_j}$.
For that, we start constructing the family ${\cal T}_{\sep_{i_j}}$ of Theorem~\ref{thm:minsep}, and we consider the subtrees $\labeledsubtree{\sep_{i_j}}{} \in  {\cal T}_{\sep_{i_j}}$ sequentially.
We divide the proof into several cases depending on $|\sep_{i_j}|$ and on $diam(\labeledsubtree{\sep_{i_j}}{})$.
\begin{itemize}
\item If $|\sep_{i_j}| \leq 2$ then, there can only be ${\cal O}(1)$ different possibilities for the pair $\labeledsubtree{\sep_{i_j}}{}, (d_r)_{r \in V(\labeledsubtree{\sep_{i_j}}{})}$.
We can solve \textsc{Distance-Constrained Root} for all these possibilities, thereby obtaining the family ${\cal T}_{i_j}$.
However, for some reasons that will become clearer in Section~\ref{sec:greedy}, we only keep in ${\cal T}_{i_j}$ the solutions which satisfy some local optimality criteria.
See Section~\ref{sec:small}. 

\item The processing of the minimal separators $\sep_{i_j}$ with at least three elements is more intricate (Sections~\ref{sec:star} and~\ref{sec:bistar}).
For a fixed $\labeledsubtree{\sep_{i_j}}{}$ we define an encoding with only $|\sep_{i_j}|^{{\cal O}(1)}$ possibilities, that essentially summarizes at ``guessing'' the central nodes of $\unlabeledsubtree{\maxk_i}$ and $\unlabeledsubtree{\maxk_{i_j}}$.
Then, we show that only one solution per possible encoding needs to be stored in ${\cal T}_{i_j}$.
The correctness of this part crucially depends on some additional distances' constraints that are derived from the smaller separators contained into $\sep_{i_j}$, and on Theorem~\ref{thm:final-clique-tree}.
Indeed, our approach could not work with an arbitrary rooted clique-tree.
\end{itemize}

\subsection{Case $diam(\labeledsubtree{\sep_{i_j}}{}) \leq 1$.}\label{sec:small}

In this situation, $|\sep_{i_j}| \leq 2$ and so, there can only be ${\cal O}(1)$ possibilities for the distances' constraints $(d_r)_{r \in \sep_{i_j}}$.
We can solve \textsc{Distance-Constrained Root} for all possible values, thereby obtaining the family ${\cal T}_{i_j}$.
However, for reasons which will become clearer in the remaining of this Section and in the proofs of Section~\ref{sec:greedy}, storing all these possibilities would increase the runtime of our algorithm.
We confront this issue with a local optimality criterion.
Specifically: 

\begin{myclaim}\label{claim:cut-vertex}
Assume $\sep_{i_j} = \{v\}$ and let $\labeledsubtree[M]{i_j}{} \in {\cal T}_{i_j}$ maximize $d_{\labeledsubtree[M]{i_j}{}}(v)$.
If $\tree$ is a $4$-Steiner root of $G$ and $\unlabeledsubtree{V_{i_j}} \in {\cal T}_{i_j}$ then, $\unlabeledsubtree{V \setminus W_{i_j}} \odot \labeledsubtree[M]{i_j}{}$ is also a $4$-Steiner root of $G$.
\end{myclaim}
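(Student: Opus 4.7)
The plan is to verify, pair by pair, that replacing the $\labeledsubtree{i_j}{}$-part of $\tree$ by $\labeledsubtree[M]{i_j}{}$ preserves the $4$-Steiner root property, exploiting the fact that $v$ is a cut-vertex of $G$. I would start by recording the structural fact that, since $\sep_{i_j} = \{v\}$ separates $W_{i_j}$ from $V \setminus V_{i_j}$ in $G$, there is no edge of $G$ with one end in $W_{i_j}$ and the other in $V \setminus V_{i_j}$. In the candidate tree $\tree' := \unlabeledsubtree{V \setminus W_{i_j}} \odot \labeledsubtree[M]{i_j}{}$, any path between such two vertices must pass through $v$ by construction of the $\odot$-operation, so distances split cleanly into two parts.

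Next I would dispatch the two easy cases: (i) for $u,w \in V \setminus W_{i_j}$ the distance in $\tree'$ equals $dist_{\tree}(u,w)$, so all edges and non-edges of $G$ inside this set are correctly represented; (ii) for $u,w \in W_{i_j} \cup \{v\}$ the distance in $\tree'$ equals $dist_{\labeledsubtree[M]{i_j}{}}(u,w)$, and since $\labeledsubtree[M]{i_j}{} \in {\cal T}_{i_j}$ is by construction a $4$-Steiner root of $G_{i_j}$, these distances represent the induced subgraph $G[V_{i_j}]$ correctly.

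The only case requiring the maximality hypothesis is $u \in W_{i_j}$, $w \in V \setminus V_{i_j}$, where I must show $dist_{\tree'}(u,w) > 4$. Here
\[
dist_{\tree'}(u,w) \;=\; dist_{\labeledsubtree[M]{i_j}{}}(u,v) + dist_{\unlabeledsubtree{V \setminus W_{i_j}}}(v,w),
\]
and by taking the minimum over $u \in W_{i_j}$ the first summand becomes $d_{\labeledsubtree[M]{i_j}{}}(v)$. Since $\unlabeledsubtree{V_{i_j}} \in {\cal T}_{i_j}$ and $\labeledsubtree[M]{i_j}{}$ is chosen to maximize $d_{\cdot}(v)$ over ${\cal T}_{i_j}$, we have $d_{\labeledsubtree[M]{i_j}{}}(v) \geq d_{\unlabeledsubtree{V_{i_j}}}(v)$. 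Substituting the corresponding quantity for $\tree$ itself yields
\[
d_{\labeledsubtree[M]{i_j}{}}(v) + dist_{\unlabeledsubtree{V \setminus W_{i_j}}}(v,w) \;\geq\; d_{\unlabeledsubtree{V_{i_j}}}(v) + dist_{\unlabeledsubtree{V \setminus W_{i_j}}}(v,w) \;=\; \min_{u' \in W_{i_j}} dist_{\tree}(u',w) \;>\; 4,
\]
the last inequality because $\tree$ is a $4$-Steiner root of $G$ and no vertex of $W_{i_j}$ is adjacent to $w$ in $G$.

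I do not expect any real obstacle here: the argument is essentially the standard cut-vertex surgery, and the only subtle point is to realize that one must take the minimum over $u \in W_{i_j}$ before comparing with the original tree, so that the maximality of $d_{\labeledsubtree[M]{i_j}{}}(v)$ among elements of ${\cal T}_{i_j}$ kicks in. The claim that distances {\em within} $V \setminus W_{i_j}$ and {\em within} $V_{i_j}$ are unchanged by the $\odot$-operation follows directly from Definition~\ref{def:op}.
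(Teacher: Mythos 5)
Your argument is correct, and it supplies the proof that the paper omits: Claim~\ref{claim:cut-vertex} is stated in the paper with an immediate $\diamond$, so the cut-vertex surgery is treated there as routine. Your three-way case split (both endpoints in $V \setminus W_{i_j}$, both in $V_{i_j}$, one on each side of $v$) and your identification of where the maximality of $d_{\labeledsubtree[M]{i_j}{}}(v)$ is used are exactly the intended route.

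One sign should be corrected in the final chain. You write
\[
d_{\unlabeledsubtree{V_{i_j}}}(v) + dist_{\unlabeledsubtree{V \setminus W_{i_j}}}(v,w) \;=\; \min_{u' \in W_{i_j}} dist_{\tree}(u',w),
\]
but in general this is only a $\geq$. Equality would require that in the original root $\tree$ every path from $W_{i_j}$ to $w$ passes through $v$, which need not hold: the two subtrees $\unlabeledsubtree{V_{i_j}}$ and $\unlabeledsubtree{V \setminus W_{i_j}}$ inside $\tree$ can share a segment of Steiner nodes beyond $v$, so $\tree$ itself need not decompose as $\unlabeledsubtree{V \setminus W_{i_j}} \odot \unlabeledsubtree{V_{i_j}}$. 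What you do have, since $\unlabeledsubtree{V_{i_j}}$ and $\unlabeledsubtree{V \setminus W_{i_j}}$ are subtrees of $\tree$ and by the triangle inequality in $\tree$, is $dist_{\tree}(v,W_{i_j}) + dist_{\tree}(v,w) \geq \min_{u' \in W_{i_j}} dist_{\tree}(u',w)$, which is the direction you actually need, so the chain $d_{\labeledsubtree[M]{i_j}{}}(v) + dist_{\tree}(v,w) \geq d_{\unlabeledsubtree{V_{i_j}}}(v) + dist_{\tree}(v,w) \geq \min_{u' \in W_{i_j}} dist_{\tree}(u',w) > 4$ closes and the proof stands.
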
\qedclaim

By Claim~\ref{claim:cut-vertex}, we so conclude that if $\sep_{i_j}$ is a cut-vertex then, we can keep exactly one solution in the family ${\cal T}_{i_j}$.

\begin{myclaim}\label{claim:edge}
Assume $\sep_{i_j} = \{u,v\}$.
Let $\tree$ be a $4$-Steiner root of $G$ such that $\unlabeledsubtree{\sep_{i_j}}$ is an edge and $dist_{\tree}(v, W_{i_j}) \geq dist_{\tree}(u, W_{i_j})$.
Then, $\tree[2] := \unlabeledsubtree{V \setminus W_{i_j}} \odot \labeledsubtree[v]{i_j}{}$ is also a $4$-Steiner root of $G$, where $\labeledsubtree[v]{i_j}{} \in {\cal T}_{i_j}$ is, among all solutions in this set such that $\labeledsubtree{i_j}{\sep_{i_j}}$ is an edge and $d_{\labeledsubtree{i_j}{}}(v)$ is maximized, one maximizing $d_{\labeledsubtree{i_j}{}}(u)$.
Moreover, $dist_{\tree[2]}(v, W_{i_j}) \geq dist_{\tree}(v, W_{i_j})$ and $dist_{\tree[2]}(u, W_{i_j}) \geq dist_{\tree}(u, W_{i_j})$.
\end{myclaim}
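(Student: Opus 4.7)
The plan is to mimic the strategy of Claim~\ref{claim:cut-vertex}, treating the edge separator $\{u,v\}$ analogously to the cut-vertex case, but with the added complication that the composed path may transit $u$ alone, $v$ alone, or the whole edge $uv$. First, $\tree[2]$ is well-defined: both $\unlabeledsubtree{\sep_{i_j}}$ and $\labeledsubtree[v]{i_j}{\sep_{i_j}}$ are edges whose real nodes are $\{u,v\}$, hence Steiner-equivalent, so the $\odot$-operation identifies them unambiguously. Distances within $V_{i_j}$ are then correctly represented by $\labeledsubtree[v]{i_j}{}$, and distances within $V \setminus W_{i_j}$ are inherited verbatim from $\tree$.

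The crux is to bound cross-pair distances $(x,y)$ with $x \in W_{i_j}$ and $y \in V \setminus V_{i_j}$. Such pairs are non-adjacent in $G$ (since $\sep_{i_j}$ separates them), so it suffices to prove $dist_{\tree[2]}(x,y) > 4$. The unique $xy$-path in $\tree[2]$ crosses $\unlabeledsubtree{\sep_{i_j}}$, and, depending on which side of the edge $uv$ the vertices $x$ and $y$ lie on, its length decomposes as $dist_{\labeledsubtree[v]{i_j}{}}(x,\cdot) + dist_{\tree}(\cdot,y)$ through either $u$ or $v$, possibly augmented by $+1$ if the path traverses the entire edge $uv$. The hypothesis $d_v := dist_{\tree}(v,W_{i_j}) \geq dist_{\tree}(u,W_{i_j}) =: d_u$ is used to locate a minimizer $x_0 \in W_{i_j}$ of $dist_{\tree}(\cdot,u)$ that lies on the $u$-side of $uv$ in $\tree$: otherwise $d_u = dist_{\tree}(x_0,u) = dist_{\tree}(x_0,v)+1 \geq d_v+1$, contradicting $d_v \geq d_u$. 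From $dist_{\tree}(x_0,y) > 4$ for every $y \in V \setminus V_{i_j}$ one extracts a lower bound on $dist_{\tree}(u,y)$; a symmetric argument using a minimizer $x_1$ of $dist_{\tree}(\cdot,v)$ gives a lower bound on $dist_{\tree}(v,y)$. Combined with $dist_{\labeledsubtree[v]{i_j}{}}(x,u) \geq dist_{\tree[2]}(u,W_{i_j})$ and $dist_{\labeledsubtree[v]{i_j}{}}(x,v) \geq dist_{\tree[2]}(v,W_{i_j})$, this closes the estimate in each sub-case.

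For the distance-monotonicity conclusion $dist_{\tree[2]}(r,W_{i_j}) \geq dist_{\tree}(r,W_{i_j})$ for $r \in \{u,v\}$, I observe that the restriction of $\tree$ to $G_{i_j}$ is a candidate in ${\cal T}_{i_j}$ with $\labeledsubtree{i_j}{\sep_{i_j}}$ an edge and distance profile $(d_v,d_u)$; since $\labeledsubtree[v]{i_j}{}$ lex-maximizes $(d_v,d_u)$, we have $\bigl(dist_{\tree[2]}(v,W_{i_j}),\, dist_{\tree[2]}(u,W_{i_j})\bigr) \geq_{\text{lex}} (d_v,d_u)$. A short dominance argument, exploiting the hypothesis $d_v \geq d_u$ together with the fact that the feasibility region of \DCR~is upward-closed under inserting degree-two Steiner nodes near the separator, upgrades this lexicographic inequality to coordinate-wise dominance.

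The main obstacle is the two-endpoint sub-case of the cross-pair analysis, where $x$ and $y$ lie on opposite sides of the separator edge and the path picks up an extra $+1$ from traversing $uv$. The estimate becomes tight when $d_v = d_u$, and one must carefully use the fact that both minimizers $x_0$ and $x_1$ lie on the expected sides of the edge in $\tree$, together with the integrality of tree distances, to close the bound. Modulo this careful accounting, the remainder of the proof is a routine verification that closely parallels Claim~\ref{claim:cut-vertex}.
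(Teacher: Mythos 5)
Your proof has the right shape in spots but misses the paper's central algebraic trick, and the step where you try to repair that is hand-waved with a mechanism that does not exist.

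The paper's proof is short precisely because it takes the two monotonicity inequalities as the only thing to prove and then observes that the $4$-Steiner-root property follows from them (any cross pair between $W_{i_j}$ and $V\setminus V_{i_j}$ must pass through $u$ or $v$; the side in $\unlabeledsubtree{V\setminus W_{i_j}}$ is unchanged, and the side in $\labeledsubtree[v]{i_j}{}$ is only lengthened). So your plan of doing a cross-pair analysis \emph{first} and monotonicity \emph{second} has the dependency backwards — notice that your own cross-pair estimate ends up invoking $dist_{\labeledsubtree[v]{i_j}{}}(x,u)\geq dist_{\tree[2]}(u,W_{i_j})$ together with a lower bound on $dist_{\tree}(u,y)$, and closing that gap is exactly the monotonicity claim you then go on to ``prove.''

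The genuine gap is in your monotonicity argument. You correctly identify the lex-dominance, and you correctly see that you must upgrade it to coordinate-wise dominance. But your proposed upgrade — ``the feasibility region of \DCR{} is upward-closed under inserting degree-two Steiner nodes near the separator'' — is not an argument that exists anywhere in the paper, is not a property of ${\cal T}_{i_j}$ (which is a finite precomputed list, not a closed region), and would in any case be dubious (lengthening the $uv$-edge changes the separator subtree, and ${\cal T}_{i_j}$ only contains solutions with the \emph{given} $\labeledsubtree{\sep_{i_j}}{}$). The actual upgrade is a two-line case split: since $\labeledsubtree[v]{i_j}{\sep_{i_j}}$ is an edge, $d_{\labeledsubtree[v]{i_j}{}}(v) - d_{\labeledsubtree[v]{i_j}{}}(u) \leq 1$; if $d_{\labeledsubtree[v]{i_j}{}}(v) = dist_{\tree}(v,W_{i_j})$ then the secondary maximization gives $d_{\labeledsubtree[v]{i_j}{}}(u)\geq dist_{\tree}(u,W_{i_j})$ outright, and otherwise $d_{\labeledsubtree[v]{i_j}{}}(v) \geq dist_{\tree}(v,W_{i_j})+1$, whence $d_{\labeledsubtree[v]{i_j}{}}(u) \geq d_{\labeledsubtree[v]{i_j}{}}(v)-1 \geq dist_{\tree}(v,W_{i_j}) \geq dist_{\tree}(u,W_{i_j})$ using the hypothesis. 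This tree-edge Lipschitz bound is the missing ingredient in your write-up; once you have it, the claim closes without any of the minimizer/side-of-the-edge casework you sketch.
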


\begin{proofclaim}
It suffices to prove $dist_{\tree[2]}(u, W_{i_j}) \geq dist_{\tree}(u, W_{i_j})$.
We first observe $d_{\labeledsubtree[v]{i_j}{}}(v) - d_{\labeledsubtree[v]{i_j}{}}(u) \leq 1$ because $\labeledsubtree[v]{i_j}{\sep_{i_j}}$ is an edge.
Therefore, there are two cases. 
Either $dist_{\tree}(v, W_{i_j}) = d_{\labeledsubtree[v]{i_j}{}}(v)$ was already maximized, and so we have $d_{\labeledsubtree[v]{i_j}{}}(u) \geq dist_{\tree}(u, W_{i_j})$. Or $d_{\labeledsubtree[v]{i_j}{}}(v) \geq dist_{\tree}(v, W_{i_j}) + 1$, and so $d_{\labeledsubtree[v]{i_j}{}}(u) \geq d_{\labeledsubtree[v]{i_j}{}}(v) -1 \geq dist_{\tree}(v, W_{i_j}) \geq dist_{\tree}(u, W_{i_j})$.  
\end{proofclaim}

By Claim~\ref{claim:edge}, if $\sep_{i_j} = \{u,v\}$ and $\labeledsubtree{\sep_{i_j}}{}$ is an edge then, we only need to keep two solutions, namely: among all those maximizing $d_{\labeledsubtree{i_j}{}}(v)$ ($d_{\labeledsubtree{i_j}{}}(u)$, resp.) the one maximizing $d_{\labeledsubtree{i_j}{}}(u)$ ($d_{\labeledsubtree{i_j}{}}(v)$, resp.). 

\subsection{Case $\labeledsubtree{\sep_{i_j}}{}$ is a non-edge star.}\label{sec:star}
If $|\sep_{i_j}| = 2$ then, as already observed in Section~\ref{sec:small}, there can only be ${\cal O}(1)$ different possibilities for the constraints.
We can solve \textsc{Distance-Constrained Root} for all possible values, thereby obtaining the family ${\cal T}_{i_j}$.
Thus from now on we assume $|\sep_{i_j}| \geq 3$.

\smallskip
Although there may be exponentially many possible sets of constraints in this case, we show that only a few of the distances' constraints we impose truly need to be considered by our algorithm.
Specifically, write $\centre{\labeledsubtree{\sep_{i_j}}{}} = \{c\}$.
We consider all possible pairs of nodes $c_i,c_{i_j}$ that are either in $\sep_{i_j}$ or Steiner.
For every such a fixed pair, we are interested in the existence of (well-structured) $4$-Steiner roots $\tree$ of $G$ such that: $c_i \in \centre{\unlabeledsubtree{\maxk_i}} \cap N_{\tree}[c]$ and $c_{i_j} \in \centre{\unlabeledsubtree{\maxk_{i_j}}} \cap N_{\tree}[c]$. 
In order for our algorithm to decide whether such Steiner roots exist, we will prove that we only need to consider the distances between ${\cal O}(1)$ nodes in $\labeledsubtree{\sep_{i_j}}{}$ and $W_{i_j}$.
-- In particular, we will prove that we only need to store ${\cal O}(1)$ partial solutions in ${\cal T}_{i_j}$. --
On our way to prove this result, we also use various properties of $4$-Steiner powers in order to impose additional distances' constraints on the solutions in ${\cal T}_{i_j}$ which we prove to be necessary in order to extend such a partial solution to all of $G$.
This second phase is crucial in proving correctness of our approach.

Overall, by Lemma~\ref{lem:center-in-star}, we can always relate {\em any} $4$-Steiner root of $G$ with a pair $c_i,c_{i_j}$ as defined above.
Since there are ${\cal O}(|\sep_{i_j}|)$ possibilities for every of $\labeledsubtree{\sep_{i_j}}{}, c_i,c_{i_j}$, we are left with only ${\cal O}(|\sep_{i_j}|^3)$ different possibilities to store for stars.

\medskip
Recall that $\cliquetree{G}$ is a rooted clique-tree of $G$ as stated in Theorem~\ref{thm:final-clique-tree}.
Before starting our analysis, we need to derive a few properties from $\cliquetree{G}$.
Indeed, our approach could not work with an arbitrary rooted clique-tree.

\begin{myclaim}\label{claim:not-a-container}
If $\sep_{i_j}$ strictly contains a minimal separator of $G_{i_j}$ then, in any $4$-Steiner root $\tree$ of $G$, $\unlabeledsubtree{\sep_{i_j}}$ is a bistar. 
\end{myclaim}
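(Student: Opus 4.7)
The plan is to combine the chain property for nested clique-intersections (Property~\ref{pty-ci:2} of Theorem~\ref{thm:clique-intersection}) with the structural guarantees of Theorem~\ref{thm:final-clique-tree} in order to force $diam(\unlabeledsubtree{\sep_{i_j}})$ to equal exactly three. I expect the bulk of the argument to be a two-sided bounding: a lower bound $\geq 3$ coming from the presence of a smaller minimal separator inside $\sep_{i_j}$, and a matching upper bound $\leq 3$ coming from $\sep_{i_j}$ being strictly contained in the maximal clique $\maxk_{i_j}$.

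First I would fix a minimal separator $\sep$ of $G_{i_j}$ with $\sep \subsetneq \sep_{i_j}$ as in the hypothesis. The crucial preliminary observation is that, thanks to our careful choice of the rooted clique-tree, $\sep$ is in fact a minimal separator of the whole graph $G$. Indeed, Property~\ref{pty-fct:2} of Theorem~\ref{thm:final-clique-tree} (applied with $i := i_j$) asserts that any minimal separator of $G_{i_j}$ contained in $\sep_{i_j}$ is $\cliquetree{G}$-convergent; by Definition~\ref{def:cvgt} this forces $\sep$ to label at least one edge of $\cliquetree{G}$, which by Theorem~\ref{thm:clique-tree-pties} is equivalent to $\sep$ being a minimal separator of $G$. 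The same property also yields $|\sep| \geq 3$, which I would use next for the diameter lower bound.

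Having identified $\sep$ and $\sep_{i_j}$ as two nested clique-intersections of $G$ with $\sep \subsetneq \sep_{i_j}$, I would apply Property~\ref{pty-ci:2} of Theorem~\ref{thm:clique-intersection}: since the two clique-intersections differ (so $Real(\unlabeledsubtree{\sep_{i_j}}) = \sep_{i_j} \neq \sep = Real(\unlabeledsubtree{\sep})$ while $\unlabeledsubtree{\sep} \subsetneq \unlabeledsubtree{\sep_{i_j}}$), we obtain the strict inequality $diam(\unlabeledsubtree{\sep}) < diam(\unlabeledsubtree{\sep_{i_j}})$. Because $|\sep| \geq 3$, the subtree $\unlabeledsubtree{\sep}$ contains at least three leaves that are real nodes, so $diam(\unlabeledsubtree{\sep}) \geq 2$, and hence $diam(\unlabeledsubtree{\sep_{i_j}}) \geq 3$. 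For the matching upper bound, I would apply Property~\ref{pty-ci:2} once more to the strict inclusion $\sep_{i_j} \subsetneq \maxk_{i_j}$, which combined with $diam(\unlabeledsubtree{\maxk_{i_j}}) \leq 4$ (Property~\ref{pty-ci:1}) gives $diam(\unlabeledsubtree{\sep_{i_j}}) \leq 3$.

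Putting these two bounds together yields $diam(\unlabeledsubtree{\sep_{i_j}}) = 3$, so $\unlabeledsubtree{\sep_{i_j}}$ is a bistar as claimed. The only real obstacle is the first step — arguing that $\sep$ (a minimal separator only of the subgraph $G_{i_j}$ a priori) is really a minimal separator of $G$ itself. This is what makes the argument depend on using the specific rooted clique-tree produced by Theorem~\ref{thm:final-clique-tree}; an arbitrary clique-tree would not be enough.
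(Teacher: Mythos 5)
Your argument is essentially the paper's, phrased as two-sided diameter bounds rather than a contradiction: Property~\ref{pty-fct:2} yields $|\sep| \geq 3$ (hence $diam(\unlabeledsubtree{\sep}) \geq 2$), and the strict diameter chain from Property~\ref{pty-ci:2}, applied to $\sep \subsetneq \sep_{i_j} \subsetneq \maxk_{i_j}$, then pins down $diam(\unlabeledsubtree{\sep_{i_j}}) = 3$. One small caveat: your justification that $\sep \in \SEP{G}$ via $\cliquetree{G}$-convergence is circular, since Definition~\ref{def:cvgt} is only stated for minimal separators of $G$, so Property~\ref{pty-fct:2} already presupposes $\sep \in \SEP{G}$; the fact holds for the more direct reason that $\cliquetree{G}^{i_j}$ is a clique-tree of $G_{i_j}$ whose edges and labels are inherited from $\cliquetree{G}$.
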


\begin{proofclaim}
Suppose by contradiction $\unlabeledsubtree{\sep_{i_j}}$ is a star. 
Any minimal separator of $G_{i_j}$ that is strictly contained into $\sep_{i_j}$ should have size at most two. However, Property~\ref{pty-fct:2} of Theorem~\ref{thm:final-clique-tree} ensures that all such minimal separators should have size at least three.
A contradiction.
\end{proofclaim}
By Claim~\ref{claim:not-a-container}, we may assume from now on that $\sep_{i_j}$ does not strictly contain any minimal separator of $G_{i_j}$ (otherwise, $\labeledsubtree{\sep_{i_j}}{}$ cannot be a star, and we are done with this case).
We are now left with two possibilities:

\subsubsection{Subcase no minimal separator of $G_{i_j}$ contains $\sep_{i_j}$.}\label{sec:star-uncontained}
Given any $4$-Steiner root $\labeledsubtree{i_j}{}$ of $G_{i_j}$ where $\labeledsubtree{i_j}{\sep_{i_j}} \equiv_G \labeledsubtree{\sep_{i_j}}{}$, we define an encoding as follows:
$${\tt short-encode}(\labeledsubtree{i_j}{}) :=  \langle c, c_{i_j}, d_{\labeledsubtree{i_j}{}}(c), d_{\labeledsubtree{i_j}{}}(c_{i_j}) \rangle,$$
where $c_{i_j} \in \centre{\labeledsubtree{i_j}{\maxk_{i_j}}} \cap N_{\labeledsubtree{i_j}{}}[c]$ is arbitrary. 
The relationship between short encodings and {\sc Distance-Constrained Root} is discussed at the end of the section.
First we prove the following result:

\begin{myclaim}\label{claim:star-1}
If ${\tt short-encode}(\labeledsubtree{i_j}{}) = {\tt short-encode}(\labeledsubtree[2]{i_j}{})$ and $\tree := \tree[0] \odot \labeledsubtree{i_j}{}$ is a $4$-Steiner root of $G$ then, $\tree[2] := \tree[0] \odot \labeledsubtree[2]{i_j}{}$ is also a $4$-Steiner root of $G$.
\end{myclaim}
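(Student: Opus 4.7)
The plan is to verify that swapping $\labeledsubtree{i_j}{}$ for $\labeledsubtree[2]{i_j}{}$ in the gluing $\tree[0] \odot \cdot$ cannot destroy the $4$-Steiner root property. First I would reduce to a single family of pairs. Since $\labeledsubtree[2]{i_j}{}$ is a $4$-Steiner root of $G_{i_j}$, the distance-edge correspondence is automatic inside $V_{i_j}$; and since $\tree[0]$ is untouched, it is also automatic inside $V \setminus W_{i_j}$. Moreover, $\sep_{i_j}$ separates $W_{i_j}$ from $V \setminus V_{i_j}$ in $G$, so no edge of $G$ has one endpoint in each. Hence the only thing to check is that $dist_{\tree[2]}(u,v) > 4$ for every $u \in W_{i_j}$ and every $v \in V \setminus V_{i_j}$.

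Next, I would write the only useful distance formula. Because every $uv$-path in $\tree[2]$ is forced to cross $V(\unlabeledsubtree{\sep_{i_j}})$, a routine tree computation gives
\begin{equation*}
\min_{\substack{u \in W_{i_j}\\ v \in V\setminus V_{i_j}}} dist_{\tree[2]}(u,v) \;=\; \min_{r \in V(\unlabeledsubtree{\sep_{i_j}})} \bigl[\,d_{\labeledsubtree[2]{i_j}{}}(r) + d_{\tree[0]}(r)\,\bigr],
\end{equation*}
where $d_{\tree[0]}(r) := \min_{v \in V\setminus V_{i_j}} dist_{\tree[0]}(r,v)$. The identical formula holds for $\tree$, and the hypothesis that $\tree$ is a $4$-Steiner root gives $\min_r [\,d_{\labeledsubtree{i_j}{}}(r) + d_{\tree[0]}(r)\,] \geq 5$. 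Since $d_{\tree[0]}(\cdot)$ depends only on $\tree[0]$, my target becomes showing $\min_r [\,d_{\labeledsubtree[2]{i_j}{}}(r) + d_{\tree[0]}(r)\,] \geq 5$. Because $\unlabeledsubtree{\sep_{i_j}}$ is a (non-edge) star with center $c$, we have $V(\unlabeledsubtree{\sep_{i_j}}) = \{c\} \cup \sep_{i_j}$, and equality of short encodings immediately handles the two candidate minimizers $r \in \{c, c_{i_j}\}$ since $d_{\labeledsubtree[2]{i_j}{}}(r) = d_{\labeledsubtree{i_j}{}}(r)$ there.

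The main obstacle, and the heart of the proof, is the case $r \in \sep_{i_j} \setminus \{c, c_{i_j}\}$: a priori, $r$ might have extra branches in $\labeledsubtree[2]{i_j}{}$ that rush towards $W_{i_j}$ without passing through $c_{i_j}$, or extra branches in $\tree[0]$ that short-circuit $c$ and reach $V \setminus V_{i_j}$ directly. Either phenomenon would make $d_{\labeledsubtree[2]{i_j}{}}(r) + d_{\tree[0]}(r)$ small and uncontrolled by the encoding. To rule both out, I would invoke the standing subcase assumption (no minimal separator of $G_{i_j}$ contains $\sep_{i_j}$), which via Property~\ref{pty-fct:2} of Theorem~\ref{thm:final-clique-tree} forces $\maxk_{i_j}$ to be the unique maximal clique of $G_{i_j}$ containing $\sep_{i_j}$. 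Combined with Lemma~\ref{lem:center-in-star} and the radius bound on $\unlabeledsubtree{\maxk_{i_j}}$, this constrains every shortest path from $r$ to $W_{i_j}$ in $\labeledsubtree[2]{i_j}{}$ to exit the star via $c$ and to cross $c_{i_j}$, yielding $d_{\labeledsubtree[2]{i_j}{}}(r) \geq d_{\labeledsubtree[2]{i_j}{}}(c_{i_j}) - dist_{\labeledsubtree[2]{i_j}{}}(r, c_{i_j})$ with the difference tightly controlled by the star geometry. A symmetric argument on the $\tree[0]$ side -- using that by the construction of $\cliquetree{G}$, $\sep_{i_j}$ does not fit inside a separator higher up that would give $r$ a shortcut in $\tree[0]$ -- yields $d_{\tree[0]}(r) \geq d_{\tree[0]}(c) - 1$.

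Putting the two bounds together with the triangle inequality gives $d_{\labeledsubtree[2]{i_j}{}}(r) + d_{\tree[0]}(r) \geq \min\{\,d_{\labeledsubtree[2]{i_j}{}}(c)+d_{\tree[0]}(c),\; d_{\labeledsubtree[2]{i_j}{}}(c_{i_j})+d_{\tree[0]}(c_{i_j})\,\}$, so the minimum over $r \in V(\unlabeledsubtree{\sep_{i_j}})$ is achieved (up to equality) at $c$ or $c_{i_j}$ in both $\tree$ and $\tree[2]$. The encoding equality then transfers the lower bound $\geq 5$ from $\tree$ to $\tree[2]$, completing the verification that $\tree[2]$ is a $4$-Steiner root of $G$.
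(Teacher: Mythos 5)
Your overall plan is reasonable — reduce to checking $dist_{\tree[2]}(W_{i_j}, V\setminus V_{i_j})>4$ and exploit the fact that a path crossing the boundary splits at the star $\unlabeledsubtree{\sep_{i_j}}$ — but the chain of inequalities in the last two paragraphs does not close, and this is exactly where the paper has to work harder than you do.

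Concretely: the two bounds you isolate, $d_{\labeledsubtree[2]{i_j}{}}(r) \ge d_{\labeledsubtree[2]{i_j}{}}(c_{i_j}) - dist_{\labeledsubtree[2]{i_j}{}}(r,c_{i_j})$ and $d_{\tree[0]}(r) \ge d_{\tree[0]}(c)-1$, are both nothing more than the triangle inequality and hold in every tree, without any structural input. Adding them gives, for a leaf $r$ of the star (so $dist(r,c_{i_j})=2$ when $c\neq c_{i_j}$), only $d_{\labeledsubtree[2]{i_j}{}}(r)+d_{\tree[0]}(r) \ge d_{\labeledsubtree[2]{i_j}{}}(c_{i_j}) + d_{\tree[0]}(c) - 3$, which is off by an additive constant from the $\min$ you claim and does not transfer the bound $\ge 5$ from $\tree$ to $\tree[2]$. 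What the structural hypotheses actually buy you is the stronger statement that \emph{every} path from a leaf $r$ to $W_{i_j}$ goes through $c$ and then $c_{i_j}$, i.e.\ the equality $d_{\labeledsubtree[2]{i_j}{}}(r) = dist(r,c) + dist(c,c_{i_j}) + d_{\labeledsubtree[2]{i_j}{}}(c_{i_j})$; used as an equality (not a triangle bound) this pins down $d_{\labeledsubtree[2]{i_j}{}}(r)$ as a function of the encoding, and then no reasoning about $\tree[0]$ is needed at all. That is in fact the paper's approach: it proves $d_{\labeledsubtree{i_j}{}}(v) = d_{\labeledsubtree[2]{i_j}{}}(v)$ for \emph{every} $v\in\sep_{i_j}$, a pointwise equality of distance profiles, from which the claim is immediate. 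Your formulation in terms of a global minimum over $r$ is both weaker as a target and harder to reach by the estimates you wrote.

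A second gap: your argument only discusses the subcase $c\neq c_{i_j}$ (you say ``the two candidate minimizers $r\in\{c,c_{i_j}\}$'' and treat the rest of $\sep_{i_j}$ uniformly). The paper splits off the case $c=c_{i_j}$ because there the picture is different: vertices of $\sep_{i_j}\setminus\{c\}$ lying in some other minimal separator $\sep$ of $G_{i_j}$ need not route to $W_{i_j}$ through $c$; instead the paper shows (using that $\sep,\sep_{i_j}$ overlap, Lemma~\ref{lem:bistar-center}, and Lemma~\ref{lem:center-in-star}) that any such vertex is adjacent in $\tree$ to a vertex of $W_{i_j}$, so its distance to $W_{i_j}$ is $1$ regardless of the choice of partial solution — again an exact value, not a bound. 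Your proposal as written has no mechanism to handle these vertices, so even after fixing the inequality issue you would still be missing this branch.
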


\begin{proofclaim}
It suffices to prove $d_{\labeledsubtree{i_j}{}}(v) = d_{\labeledsubtree[2]{i_j}{}}(v)$ for every $v \in \sep_{i_j}$.
For that, we need to analyze the possible intersections between $\sep_{i_j}$ and the minimal separators in $\maxk_{i_j}$.
Recall that $\sep_{i_j}$ is not a minimal separator of $G_{i_j}$ by the hypothesis.
\begin{itemize}
\item Moreover, assume $c_{i_j} \neq c$.
By Lemma~\ref{lem:center-in-star} we have $Real(N_{\tree}[c]) = \sep_{i_j}$.
Combined with the fact that a minimal separator of $G_{i_j}$ can neither contain $\sep_{i_j}$ nor be strictly contained into $\sep_{i_j}$ (Claim~\ref{claim:not-a-container}), this implies all the paths between $\sep_{i_j}$ and $W_{i_j}$ must pass by $c,c_{i_j}$ (see Fig.~\ref{fig:star-encoding-1} for an illustration).
In this situation, our partial encoding already contains all the distances' information we need.

\begin{figure}[h!]
\centering
\includegraphics[width=.25\textwidth]{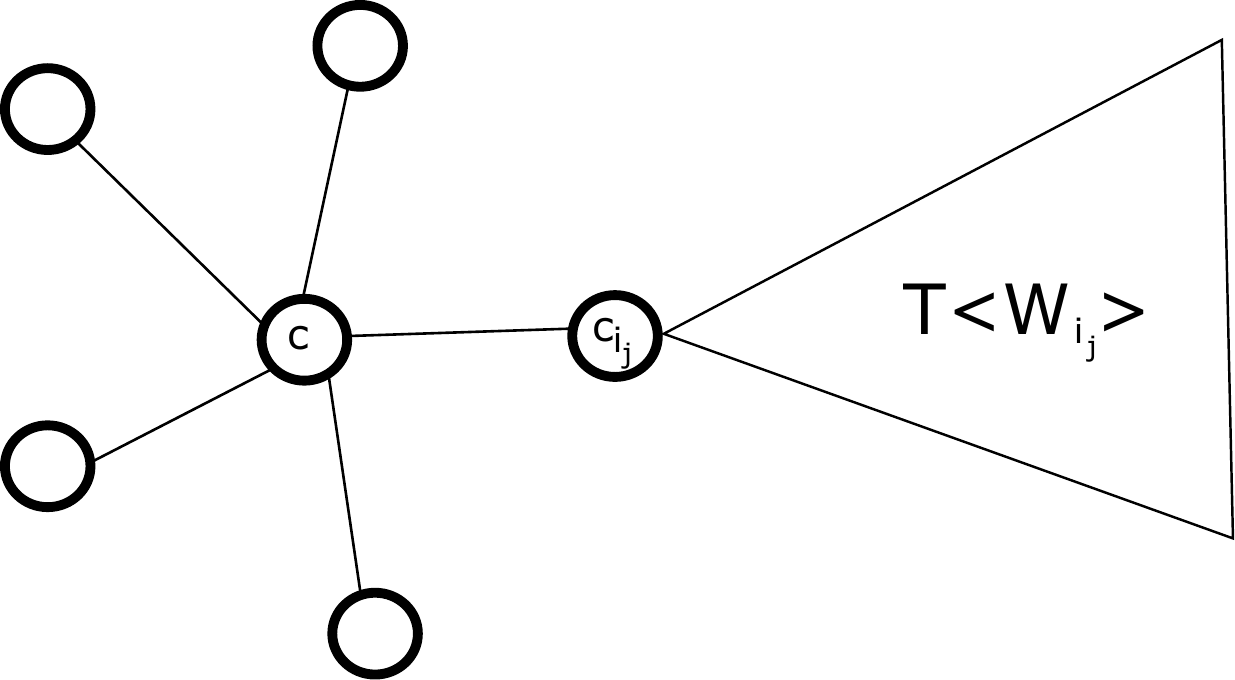}
\caption{A schematic view of $\unlabeledsubtree{V_{i_j}}$.}
\label{fig:star-encoding-1}
\end{figure}

\item Otherwise, $c_{i_j} = c$.
A simple transformation of the construction proposed in Lemma~\ref{lem:almost-simplicial-placement} shows that we can always assume the simplicial vertices among $\sep_{i_j} \setminus \{c\}$ (in $G_{i_j}$) to be leaves adjacent to $c$ in $T$.
Namely, we can make all these vertices leaves of $\unlabeledsubtree{\maxk_{i_j}}$ in such a way that they are connected to $c$ via a path with one Steiner node.
We complete this construction by contracting each such simplicial vertex with its Steiner neighbour. 

\begin{figure}[h!]
\centering
\includegraphics[width=.25\textwidth]{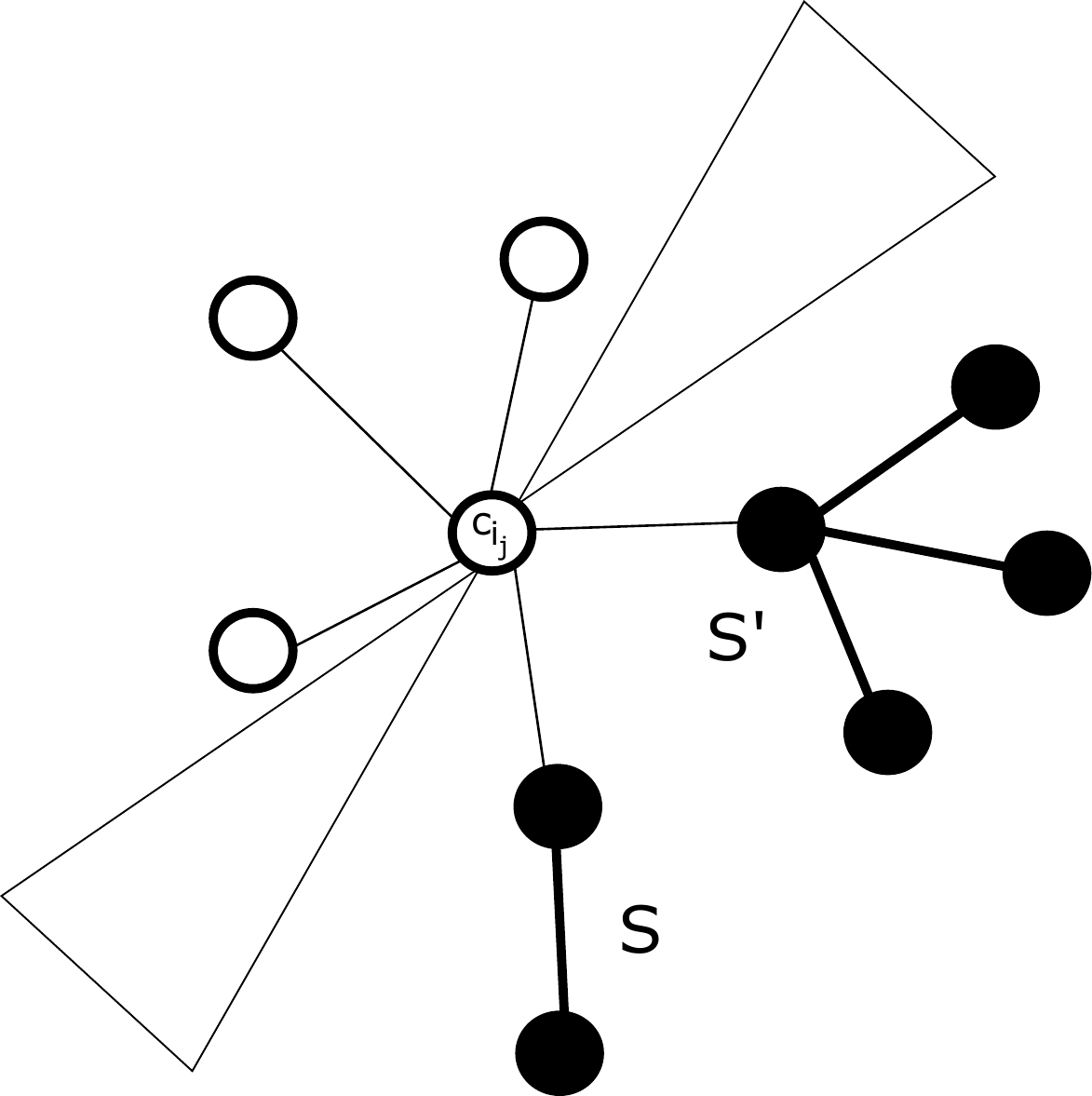}
\caption{The case $c = c_{i_j}$. Two minimal separators $\sep,\sep'$ overlapping $\sep_{i_j}$ are drawn in bold.}
\label{fig:star-encoding-2}
\end{figure}

\smallskip
We end up showing that {\em all} the vertices in $\sep_{i_j} \setminus \{c\}$ that are contained into another minimal separator $\sep$ of $G_{i_j}$ are adjacent in $\tree$ to some vertex in $W_{i_j}$ (hence, their distance to this set is known implicitly and does not need to be stored in the encoding).
Indeed, since $\sep$ and $\sep_{i_j}$ overlap, we cannot have $\unlabeledsubtree{\sep}$ is a bistar (otherwise, $\sep_{i_j} \subseteq \sep$ by Lemma~\ref{lem:bistar-center}).
In particular: either $\unlabeledsubtree{\sep}$ is an edge with exactly one end in $\sep_{i_j}$; or $\unlabeledsubtree{\sep}$ is a non edge star and, by Lemma~\ref{lem:center-in-star}, the unique vertex in $(\sep \cap \sep_{i_j}) \setminus \{c\}$ is its center.
See Fig.~\ref{fig:star-encoding-2} for an illustration.
\end{itemize}
\end{proofclaim}

Finally given ${\tt short-encode}(\labeledsubtree{i_j}{})$, we can transform such a short encoding into the constraints $(d_r)_{r \in V(\labeledsubtree{i_j}{\sep_{i_j}})}$ where:
\begin{itemize}
\item $d_c = d_{\labeledsubtree{i_j}{}}(c)$
\item If $c_{i_j}$ is a real node that is different than $c$ then, $d_{c_{i_j}} = d_{\labeledsubtree{i_j}{}}(c_{i_j})$;
\item For all other nodes $r \in \sep_{i_j}$: $$d_r = \begin{cases}
d_c + 1 \ \text{if} \ c \neq c_{i_j} \ \text{or} \ r \ \text{is simplicial in} \ G_{i_j} \\
1 \ \text{otherwise.}
\end{cases}.$$
\end{itemize}
Note that in doing so, $d_c \in \{2,3\} \ \text{and when it is defined} \ d_{c_{i_j}} \in \{1,2\}$.
Overall, there are at most $2^2 = 4$ possibilities for a fixed pair $\labeledsubtree{\sep_{i_j}}{}, c_{i_j}$.
Furthermore, this above transformation is not injective, and we can so obtain the same constraints for different short encodings (thereby further reducing the size of ${\cal T}_{i_j}$).
The reason why this does not matter is that assuming we made a correct guess for ${\tt short-encode}(\labeledsubtree{i_j}{})$, we proved in Claim~\ref{claim:star-1} that we have $d_{\labeledsubtree{i_j}{}}(r) = d_r$ for any $r \in V(\labeledsubtree{\sep_{i_j}}{})$.
In particular, if $\labeledsubtree{i_j}{}$ can be extended to a $4$-Steiner root of $G$ then, so could be any partial solution $\labeledsubtree[2]{i_j}{}$ that would satisfy these above constraints as we would have $d_{\labeledsubtree[2]{i_j}{}}(r) \geq d_{\labeledsubtree{i_j}{}}(r)$ for any $r \in V(\labeledsubtree{\sep_{i_j}}{})$.

\subsubsection{Subcase a minimal separator of $G_{i_j}$ contains $\sep_{i_j}$.}\label{sec:star-contained}

As for the previous subcase, we start introducing a short encoding then, we explain its relationship with {\sc Distance-Constrained Root} at the end of this section.
The novelty here is that we need to complete our encoding with additional distances' conditions, that we will also use in order to define our distances' constraints in the input.
Specifically, let us fix a pair $c_i,c_{i_j}$ and let us only consider the partial solutions $\labeledsubtree{i_j}{}$ where $\labeledsubtree{i_j}{\sep_{i_j}} \equiv_G \labeledsubtree{\sep_{i_j}}{}$ and $c_{i_j} \in \centre{\labeledsubtree{i_j}{\maxk_{i_j}}} \cap N_{\labeledsubtree{i_j}{}}[c]$.
We set:
$${\tt short-encode-2}(\labeledsubtree{i_j}{}) = \left[  \langle c, c_{i}, d_{\labeledsubtree{i_j}{}}(c), d_{\labeledsubtree{i_j}{}}(c_{i}) \rangle \,\middle\vert\, d_{\labeledsubtree{i_j}{}}(v), \ \forall v \in \sep_i \cap \sep_{i_j} \right].$$

In order to bound the number of possible such encodings, we prove that: 

\begin{myclaim}
$|\sep_i \cap \sep_{i_j}| = {\cal O}(1)$.
\end{myclaim}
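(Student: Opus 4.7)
I plan to establish the stronger bound $|\sep_i \cap \sep_{i_j}| \leq 2$. The core idea is to first show $\sep_{i_j} \not\subseteq \sep_i$; once this is in hand, the clique-intersection $\sep_i \cap \sep_{i_j} = \maxk_{p(i)} \cap \maxk_i \cap \maxk_{i_j}$ is a proper subset of $Real(\unlabeledsubtree{\sep_{i_j}}) = \sep_{i_j}$, so applying Property~\ref{pty-ci:2} of Theorem~\ref{thm:clique-intersection} with $\labeledsubtree[2]{\ci}{} := \unlabeledsubtree{\sep_{i_j}}$ yields $diam(\unlabeledsubtree{\sep_i \cap \sep_{i_j}}) < diam(\unlabeledsubtree{\sep_{i_j}}) = 2$. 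Combined with Property~\ref{pty-ci:1}, this forces $|\sep_i \cap \sep_{i_j}| \leq 2$, as desired.

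The main work is therefore to rule out $\sep_{i_j} \subseteq \sep_i$, which I will do by contradiction and split into two subcases. First, if $\sep_{i_j} = \sep_i$, then both $\maxk_i\maxk_{p(i)}$ and $\maxk_i\maxk_{i_j}$ lie in $\edgeset{G}{\sep_{i_j}}$. Since $\sep_{i_j}$ is a minimal separator of $G_i$ contained in $\sep_i$, Property~\ref{pty-fct:2} of Theorem~\ref{thm:final-clique-tree} forces $\sep_{i_j}$ to be $\cliquetree{G}$-convergent, so all of $\superedgeset{G}{\sep_{i_j}}$ shares a common endpoint $\maxk_{\sep_{i_j}}$. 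The two edges above then force $\maxk_{\sep_{i_j}} = \maxk_i$, but the subcase hypothesis furnishes an edge of $\cliquetree{G}^{i_j}$ labeled by $\sep^* \supsetneq \sep_{i_j}$, whose endpoints both lie in $V(\cliquetree{G}^{i_j})$---neither being $\maxk_i$---contradicting convergence.

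In the second subcase, $\sep_{i_j} \subsetneq \sep_i$, and Property~\ref{pty-ci:2} gives $diam(\unlabeledsubtree{\sep_i}) > 2$, so $\unlabeledsubtree{\sep_i}$ is a bistar. The center $c$ of $\unlabeledsubtree{\sep_{i_j}}$ has at least three neighbours in $V(\unlabeledsubtree{\sep_i})$ (the leaves of the smaller star), so $c \in \centre{\unlabeledsubtree{\sep_i}}$. By Lemma~\ref{lem:bistar-center}, $\sep_i$ is contained in exactly the two maximal cliques $\maxk_i$ and $\maxk_{p(i)}$, and---up to relabeling---$\centre{\unlabeledsubtree{\maxk_i}} = \{c\}$. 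The same reasoning applied to $\sep^* \supsetneq \sep_{i_j}$ shows that $\unlabeledsubtree{\sep^*}$ is a bistar contained in exactly two maximal cliques $\maxk_a, \maxk_b$, both lying in $\cliquetree{G}^{i_j}$ (since $\sep^*$ is a minimal separator of $G_{i_j}$), and WLOG $\centre{\unlabeledsubtree{\maxk_a}} = \{c\}$. Lemma~\ref{lem:no-center-intersect} then forces $\maxk_a = \maxk_i$, contradicting $\maxk_a \in V(\cliquetree{G}^{i_j})$.

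The main obstacle will be the second subcase: I need to verify carefully that the center of $\unlabeledsubtree{\sep_{i_j}}$ correctly lifts to a central node of each of the two enveloping bistars $\unlabeledsubtree{\sep_i}$ and $\unlabeledsubtree{\sep^*}$, and that both maximal cliques hosting $\sep^*$ really do lie in $\cliquetree{G}^{i_j}$, so that Lemma~\ref{lem:no-center-intersect} delivers the contradiction. The first subcase, by contrast, reduces essentially to a direct invocation of Property~\ref{pty-fct:2} of the rooted clique-tree constructed in Section~\ref{sec:clique-tree}.
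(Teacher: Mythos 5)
Your overall strategy --- first rule out $\sep_{i_j} \subseteq \sep_i$, then apply Property~\ref{pty-ci:2} to the clique-intersection $\sep_i \cap \sep_{i_j} \subsetneq \sep_{i_j}$ sitting inside the star $\unlabeledsubtree{\sep_{i_j}}$ --- matches the paper's, and your last step merely spells out what the paper leaves implicit. Where you diverge is in how you exclude $\sep_{i_j} \subseteq \sep_i$. The paper does this in one uniform stroke, with no case split: since $\unlabeledsubtree{\sep_{i_j}}$ is a non-edge star, Lemma~\ref{lem:star-center} makes $\sep_{i_j}$ weakly $\cliquetree{G}$-convergent; Property~\ref{pty-fct:1} upgrades this to convergence because $|\sep_{i_j}| \geq 3$; the edge of $\cliquetree{G}^{i_j}$ labelled by the containing separator, together with the edge $\maxk_i\maxk_{i_j}$, pins the convergence point at $\maxk_{i_j}$; and $\sep_{i_j} \subseteq \sep_i$ would then place $\maxk_i\maxk_{p(i)}$ in $\superedgeset{G}{\sep_{i_j}}$ without it being incident to $\maxk_{i_j}$. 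Your Case 1 is essentially this argument with Property~\ref{pty-fct:2} substituted for Lemma~\ref{lem:star-center} plus Property~\ref{pty-fct:1}, and it is sound. Your Case 2 is a genuinely different, geometric argument via Lemmata~\ref{lem:bistar-center} and~\ref{lem:no-center-intersect}: it costs more verification (lifting the star's center into both enveloping bistars) but buys independence from the convergence machinery for that case.

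There is one genuine gap, and it sits exactly in the case you flagged as the main obstacle. In Case 2 you take the separator $\sep^*$ of $G_{i_j}$ furnished by the subcase hypothesis and assert $\sep^* \supsetneq \sep_{i_j}$; your argument really needs strictness, since if $\sep^* = \sep_{i_j}$ then $\unlabeledsubtree{\sep^*}$ is the star itself, Property~\ref{pty-ci:2} no longer forces it to be a bistar, Lemma~\ref{lem:bistar-center} does not apply, and the two maximal cliques $\maxk_a,\maxk_b$ of $\cliquetree{G}^{i_j}$ whose centers realize $\centre{\unlabeledsubtree{\sep^*}}$ do not exist. The subcase hypothesis only guarantees containment, possibly with equality (compare the complementary subcase, where the paper infers that $\sep_{i_j}$ is \emph{not} a minimal separator of $G_{i_j}$). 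The fix is one line: if the only containing separator were $\sep_{i_j}$ itself, then $\sep_{i_j}$ would be a minimal separator of $G_{i_j}$ contained into $\sep_{i_j}$, and Property~\ref{pty-fct:2} applied at $\maxk_{i_j}$ produces a minimal separator of $G_{i_j}$ strictly containing it. With that line added, your proof goes through.
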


\begin{proofclaim}
Recall that by Lemma~\ref{lem:star-center}, $\sep_{i_j}$ must be weakly $\cliquetree{G}$-convergent.
Since we assume $|\sep_{i_j}| \geq 3$, by Property~\ref{pty-fct:1} of Theorem~\ref{thm:final-clique-tree}, $\sep_{i_j}$ is $\cliquetree{G}$-convergent.
Moreover since there is a minimal separator of $G_{i_j}$ that contains $\sep_{i_j}$, the maximal clique incident to all edges in $\superedgeset{G}{\sep_{i_j}}$ must be $\maxk_{i_j}$.
This implies that $\sep_i := \maxk_i \cap \maxk_{p(i)}$ cannot contain $\sep_{i_j}$.
In particular, $|\sep_i \cap \sep_{i_j}| \leq 2$.
\end{proofclaim}

This new encoding above may not be informative enough in some cases.
We complete it with additional {\em distances' conditions}.
Specifically, 
we consider all the other minimal separators $\sep_{i_k} := \maxk_i \cap \maxk_{i_k}$ between $\maxk_i$ and one of its children nodes such that $\sep_{i_k} \subset \sep_{i_j}$.
Note that since we assume $\labeledsubtree{\sep_{i_j}}{}$ to be a star, we must have $|\sep_{i_k}| \leq 2$.
There are two possibilities:
\begin{itemize}
\item If $\sep_{i_k} = \{v_{i_k}\}$ then, by Claim~\ref{claim:cut-vertex} there is only one solution left in ${\cal T}_{i_k}$.
Specifically, this solution $\labeledsubtree{i_k}{} \in {\cal T}_{i_k}$ maximizes $d_{i_k} := d_{\labeledsubtree{i_k}{}}(v_{i_k})$.
We are left ensuring $d_{\labeledsubtree{i_j}{}}(v_{i_k}) > 4 - d_{i_k}$. 
\item Otherwise, $\sep_{i_k} = \{u_{i_k},v_{i_k}\}$.
Then, $\labeledsubtree{i_j}{\sep_{i_k}}$ must be an edge and we may assume w.l.o.g. $c_{i_j} = u_{i_k}$.
We are left to ensure that $d_{\labeledsubtree{i_j}{}}(v_{i_k}) \geq 2$.
\end{itemize}

\begin{myclaim}\label{claim:star-2}
For any $\labeledsubtree{i_j}{}$ that satisfies the above distances' conditions, one of the following properties must be true:
\begin{enumerate}
\item $\labeledsubtree{i_j}{}$ can be extended to a $4$-Steiner root of $G$;
\item For any $4$-Steiner root $\labeledsubtree[2]{i_j}{}$ of $G_{i_j}$ such that {\small${\tt short-encode-2}(\labeledsubtree{i_j}{}) = {\tt short-encode-2}(\labeledsubtree[2]{i_j}{})$}, we cannot extend $\labeledsubtree[2]{i_j}{}$ to a {\em well-structured} $4$-Steiner root of $G$.
\end{enumerate}
\end{myclaim}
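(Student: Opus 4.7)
The plan is to prove the contrapositive: if there exists $\labeledsubtree[2]{i_j}{}$ with the same short encoding as $\labeledsubtree{i_j}{}$ (and satisfying the same additional conditions) that extends to a well-structured $4$-Steiner root $\tree[2]$ of $G$, then $\labeledsubtree{i_j}{}$ also extends to some $4$-Steiner root of $G$. Write $\tree[2] = \tree[0] \odot \labeledsubtree[2]{i_j}{}$ where $\tree[0] := \unlabeledsubtree[2]{V \setminus W_{i_j}}$. The natural candidate is $\tree := \tree[0] \odot \labeledsubtree{i_j}{}$, using $\labeledsubtree{i_j}{\sep_{i_j}} \equiv_G \labeledsubtree[2]{i_j}{\sep_{i_j}} \equiv_G \labeledsubtree{\sep_{i_j}}{}$ to perform the identification.

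To verify that $\tree$ is a $4$-Steiner root of $G$, I would first observe that distances between pairs in $V_{i_j}$ are controlled by $\labeledsubtree{i_j}{}$ (a $4$-Steiner root of $G_{i_j}$), and distances between pairs in $V \setminus W_{i_j}$ are controlled by $\tree[0]$ (shared with $\tree[2]$). Since $\sep_{i_j}$ separates $W_{i_j}$ from $V \setminus V_{i_j}$ in $G$, there are no edges of $G$ between these two sides, so the only condition left to check is the \emph{non-edge preservation}, i.e. $dist_{\tree}(u,v) > 4$ for every $u \in W_{i_j}$ and $v \in V \setminus V_{i_j}$. Using the tree structure, $dist_{\tree}(u,v) = \min_{r \in V(\labeledsubtree{\sep_{i_j}}{})}\bigl(dist_{\labeledsubtree{i_j}{}}(u,r) + dist_{\tree[0]}(r,v)\bigr)$, and after minimizing over $u \in W_{i_j}$ this reduces to showing $d_{\labeledsubtree{i_j}{}}(r) + dist_{\tree[0]}(r,v) > 4$ for every $r \in V(\labeledsubtree{\sep_{i_j}}{})$ and every $v \in V \setminus V_{i_j}$.

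I would then verify this inequality by case analysis on $r$. When $r \in \{c, c_i\} \cup (\sep_i \cap \sep_{i_j})$, the short encoding gives $d_{\labeledsubtree{i_j}{}}(r) = d_{\labeledsubtree[2]{i_j}{}}(r)$, so the inequality is inherited from the analogous inequality in $\tree[2]$. When $r = v_{i_k}$ for some sibling separator $\sep_{i_k} \subset \sep_{i_j}$ (either the cut-vertex case or one end of an edge separator), the additional distance conditions $d_{\labeledsubtree{i_j}{}}(v_{i_k}) > 4 - d_{i_k}$ (resp.\ $d_{\labeledsubtree{i_j}{}}(v_{i_k}) \geq 2$), combined with the analogous inequality from Claim~\ref{claim:cut-vertex} or~\ref{claim:edge} applied to $\labeledsubtree{i_k}{}$, rule out paths of length $\leq 4$ crossing $\sep_{i_j}$ through $v_{i_k}$ into $W_{i_k}$ or beyond.

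The main obstacle is handling the remaining $r$: real leaves of the star $\labeledsubtree{\sep_{i_j}}{}$ that lie neither in $\sep_i$ nor in any sibling separator $\sep_{i_k} \subset \sep_{i_j}$, and the Steiner center $c$ when $c$ is a Steiner node. For such a real leaf $r$, I would use that $r$ attaches to $c$ in $\labeledsubtree{\sep_{i_j}}{}$, hence the unique path from $r$ to any $v \in V \setminus V_{i_j}$ in $\tree[0]$ traverses $c$, giving $dist_{\tree[0]}(r,v) = 1 + dist_{\tree[0]}(c,v)$. Combined with the well-structured placement from Theorem~\ref{thm:x-free} (such an $r$ behaves like a $\sep_{i_j}$-free or nearly-free vertex, forcing $d_{\labeledsubtree{i_j}{}}(r) \geq 1 + d_c$ up to Steiner equivalence), the inequality for $r$ reduces to the case $r = c$. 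For $c$ itself, the encoding pins $d_c$, so the inequality holds. The $\cliquetree{G}$-convergence of $\sep_{i_j}$ from Theorem~\ref{thm:final-clique-tree} and the exclusion of containment chains by Claim~\ref{claim:not-a-container} are what guarantee that no unexpected clique-intersection forces such a leaf $r$ to be closer to $W_{i_j}$ than $c$ is; if such a forcing configuration existed, I expect it to fall into the very cases already captured by the encoding or the additional conditions, so alternative~(2) of the claim would apply.
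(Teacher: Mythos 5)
Your overall scaffolding (prove the contrapositive, form $\tree := \tree[0] \odot \labeledsubtree{i_j}{}$, reduce to verifying $dist_{\tree}(V \setminus V_{i_j}, W_{i_j}) > 4$) matches the paper's. However, the argument has a genuine gap where you handle the ``remaining'' real leaves $r$ of the star.

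The paper's proof hinges on first establishing $c \neq c_i$: since we are in the subcase where some minimal separator of $G_{i_j}$ contains $\sep_{i_j}$, Property~\ref{pty-fct:2} of Theorem~\ref{thm:final-clique-tree} gives a minimal separator $\sep' \supset \sep_{i_j}$ of $G_{i_j}$; then $\unlabeledsubtree{\sep'}$ is a bistar (Theorem~\ref{thm:clique-intersection}), its two center nodes are the centers of the two maximal cliques of $G_{i_j}$ containing $\sep'$ (Lemma~\ref{lem:bistar-center}), $c$ is forced to be one of those centers (Lemma~\ref{lem:center-in-star}), and Lemma~\ref{lem:no-center-intersect} prevents either from equalling $c_i \in \centre{\unlabeledsubtree{\maxk_i}}$. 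This $c \neq c_i$ is what makes your intermediate step $dist_{\tree[0]}(r,v) = 1 + dist_{\tree[0]}(c,v)$ valid for those leaves, and what (via Lemma~\ref{lem:center-in-star}) forces every sibling separator $\sep_{i_k}$ with $\sep_{i_k} \not\subset \sep_{i_j}$ to satisfy $\sep_{i_k} \cap \sep_{i_j} \subseteq \{c, c_i\}$, a case you do not isolate. You never establish $c \neq c_i$, and without it the star's leaves $r$ could be directly on short paths towards $\maxk_i \setminus \sep_{i_j}$ in $\tree[0]$, so your formula for $dist_{\tree[0]}(r,v)$ would fail.

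Second, the bound you invoke, $d_{\labeledsubtree{i_j}{}}(r) \geq 1 + d_c$, is both in the wrong direction and false in general (e.g.\ when $r$ lies in a minimal separator of $G_{i_j}$ overlapping $\sep_{i_j}$, $d_{\labeledsubtree{i_j}{}}(r)$ can be $1$). What is actually needed, given $dist_{\tree[0]}(r,v) = 1 + dist_{\tree[0]}(c,v)$ and the encoding's control of $d_c$, is only $d_{\labeledsubtree{i_j}{}}(r) \geq d_c - 1$, which holds trivially because $r$ and $c$ are adjacent in $\labeledsubtree{i_j}{}$; no appeal to Theorem~\ref{thm:x-free} is required. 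That appeal is in any case misplaced: $\labeledsubtree{i_j}{}$ is an arbitrary tree satisfying the constraints, not assumed to extend to a well-structured root, so you cannot import structural properties of well-structured roots onto it. Finally, your handling of the sibling-edge case $\sep_{i_k} = \{u_{i_k}, v_{i_k}\}$ is hand-waved; the paper needs the extra step that $dist_{\tree}(c, V \setminus W_{i_k}) = 1$ is minimized (because $c$ is adjacent in $\unlabeledsubtree{\sep_{i_j}}$ to a leaf other than $v_{i_k}$), forcing $dist_{\tree[2]}(v_{i_k}, W_{i_k}) = 3$, which together with the imposed condition $d_{\labeledsubtree{i_j}{}}(v_{i_k}) \geq 2$ and Claim~\ref{claim:edge} closes the gap.
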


\begin{figure}
\centering
\includegraphics[width=.4\textwidth]{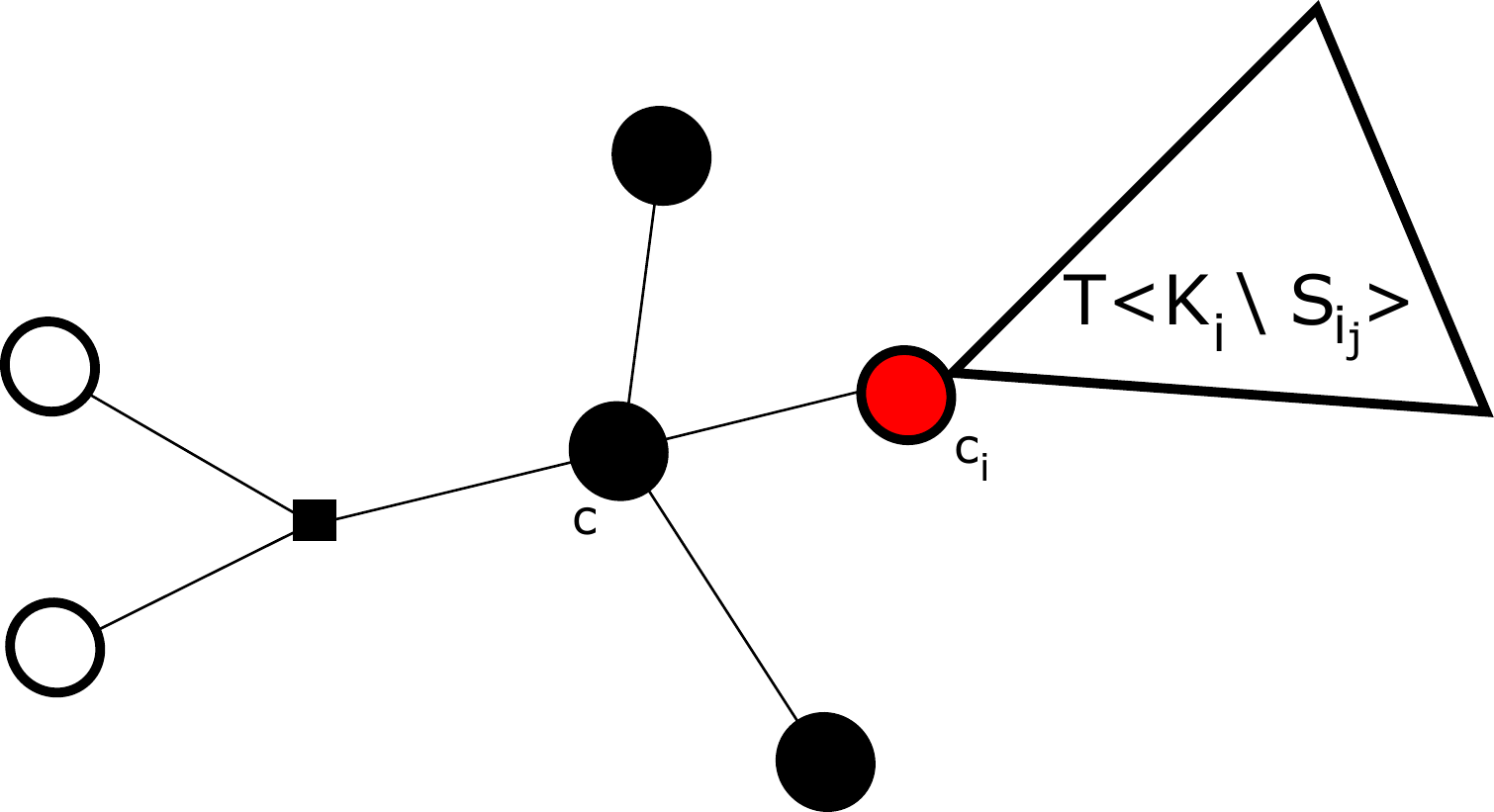}
\caption{To the proof of Claim~\ref{claim:star-2}.}
\label{fig:star-encoding-3}
\end{figure}

\begin{proofclaim}
See Fig.~\ref{fig:star-encoding-3} for an illustration.
Assume that there is a $\labeledsubtree[2]{i_j}{}$ as stated in the claim that can be extended to a well-structured $4$-Steiner root $\tree[2]$ of $G$.
In order to prove the claim, it suffices to prove that $\tree := \unlabeledsubtree[2]{V \setminus W_{i_j}} \odot \labeledsubtree{i_j}{}$ (or a slight modification of it) is also a $4$-Steiner root of $G$.
For that, since $E(G \setminus W_{i_j}) \cup E(G_{i_j}) = E(G)$ (all edges are covered), it suffices to prove $dist_{\tree}(V \setminus V_{i_j},W_{i_j}) > 4$.
We start observing that according to Property~\ref{pty-fct:2} of Theorem~\ref{thm:final-clique-tree}, there is a minimal separator $\sep'$ of $G_{i_j}$ that strictly contains $\sep_{i_j}$.
Such a minimal separator $\sep'$ must induce a bistar in $\tree[2]$ and $\tree$.
Then, as $c$ must be the center of one of the two maximal cliques containing $\sep'$ ({\it i.e.}, see Lemmata~\ref{lem:bistar-center} and~\ref{lem:star-center}), we so obtain $c \neq c_i$.
In particular, all the paths between $\sep_{i_j}$ and $\maxk_i \setminus \sep_{i_j}$ will need to pass by $c_i$, and so: 
$$dist_{\tree}(\maxk_i \setminus \sep_{i_j},W_{i_j}) = dist_{\tree[2]}(\maxk_i \setminus \sep_{i_j},W_{i_j}) > 4.$$
We now prove as a subclaim that we also have $dist_{\tree}(V \setminus V_i,W_{i_j}) = dist_{\tree[2]}(V \setminus V_i,W_{i_j}) > 4$.
Indeed, in order to have $dist_{\tree}(u,v) \leq 4$ for some $u \notin V_i, v \in W_{i_j}$, we need an intermediate node $r \in \unlabeledsubtree{\sep_i} \cap \unlabeledsubtree{\sep_{i_j}}$ onto the unique $uv$-path in $\tree$.
However, such a node $r$ should be either $c$ or in $\sep_i \cap \sep_{i_j}$.
By definition of our short encoding, this would imply $dist_{\tree[2]}(u,v) \leq 4$, a contradiction.

We finally consider the other minimal separators $\sep_{i_k} := \maxk_i \cap \maxk_{i_k}$ between $\maxk_i$ and one of its children nodes that intersect $\maxk_{i_j}$.
We have $\sep_{i_j} \not\subseteq \sep_{i_k}$ (otherwise, $\sep_{i_j}$ could not be $\cliquetree{G}$-convergent, thereby contradicting Property~\ref{pty-fct:1} of Theorem~\ref{thm:final-clique-tree}).
Thus, $|\sep_{i_k} \cap \sep_{i_j}| \leq 2$.
Furthermore since we have $c \neq c_i$, Lemma~\ref{lem:center-in-star} implies that either $\sep_{i_k} \subset \sep_{i_j}$ or $\sep_{i_k} \cap \sep_{i_j} \subseteq \{c_i,c\}$.
In the latter case: $$dist_{\tree}(W_{i_k},W_{i_j}) = dist_{\tree[2]}(W_{i_k},W_{i_j}) > 4.$$
Thus from now on assume $\sep_{i_k} \subset \sep_{i_j}$ (and so, $|\sep_{i_k}| \leq 2$).

If $\sep_{i_k}$ is a cut-vertex then, it follows from Claim~\ref{claim:cut-vertex} and the distances' conditions satisfied by $\labeledsubtree{i_j}{}$ that we can always assume $d_{\labeledsubtree{i_j}{}}(\sep_{i_k}) > 4 - dist_{\tree[2]}(\sep_{i_k},W_{i_k})$.
In particular, $dist_{\tree}(W_{i_k},W_{i_j}) > 4$.
We end up with the case $\sep_{i_k} = \{u_{i_k},v_{i_k}\}$.
Then, $\unlabeledsubtree{\sep_{i_k}}$ must be an edge and we may assume w.l.o.g. $c = u_{i_k}$.
Since we assume $\unlabeledsubtree{\sep_{i_j}}$ is a non-edge star, $c$ is adjacent to some other leaf than $v_{i_k}$.
In other words, $dist_{\tree}(c,V \setminus W_{i_k}) = 1$ is minimized.
Then, since we have $dist_{\tree[2]}(W_{i_k},W_{i_j}) > 4$ we must have $dist_{\tree[2]}(c,W_{i_k}) = 4$ and so, $dist_{\tree[2]}(v_{i_k},W_{i_k}) = 3$.
It follows from Claim~\ref{claim:edge} and the distances' conditions satisfied by $\labeledsubtree{i_j}{}$ that we can always assume $dist_{\tree}(W_{i_k},W_{i_j}) = dist_{\tree[2]}(v_{i_k},W_{i_k}) + d_{\labeledsubtree{i_j}{}}(v_{i_k}) \geq 3 + 2 > 4$. 
\end{proofclaim}

Finally given ${\tt short-encode-2}(\labeledsubtree{i_j}{})$, we can transform such a short encoding into the constraints $(d_r)_{r \in V(\labeledsubtree{\sep_{i_j}}{})}$ where:
\begin{itemize}
\item $d_c = d_{\labeledsubtree{i_j}{}}(c)$
\item For any $v \in \sep_{i_j} \cap \sep_i$, $d_v = d_{\labeledsubtree{i_j}{}}(v)$.
\item If $c_i$ is a real node that is not in $\sep_i \cup \{c\}$ then, $d_{c_i} = d_{\labeledsubtree{i_j}{}}(c_i)$;
\item If $v_{i_k} \in \sep_{i_k} \subset \sep_{i_j}$ has a distance-condition then, $d_{v_{i_k}}$ is set to the largest such condition.
\item Finally, for all other nodes $r \in V(\labeledsubtree{\sep_{i_j}}{})$: $d_r = 1$ (trivial constraint).
\end{itemize}

For any fixed $\labeledsubtree{\sep_{i_j}}{}$ the mapping $\varphi : {\tt short-encode-2}(\labeledsubtree{i_j}{}) \to \langle c_i, (d_r)_{r \in V(\labeledsubtree{\sep_{i_j}}{})} \rangle$ is injective.
Moreover we proved in Claim~\ref{claim:star-2} that, in any $4$-Steiner root $\tree$ of $G$, all the paths between $\unlabeledsubtree{\sep_{i}}$ and $W_{i_j}$ must go by $\{c,c_i\} \cup (\sep_i \cap \sep_{i_j})$.
Therefore, our short encodings always preserve a yes-instance of {\sc Distance-Constrained Root} at $\maxk_i$ provided one exists.

%
%
%

\subsection{Case $\labeledsubtree{\sep_{i_j}}{}$ is a bistar}\label{sec:bistar}
We follow the same approach as in Section~\ref{sec:star}.
In fact, the proof is a bit simpler in this case.
For instance by Lemma~\ref{lem:bistar-center}, in {\em any} $4$-Steiner root $\tree$ of $G$ such that $\unlabeledsubtree{\sep_{i_j}} \equiv_G \labeledsubtree{\sep_{i_j}}{}$, we must have $\centre{\labeledsubtree{\sep_{i_j}}{}} = \{c_i,c_{i_j}\}$ where $c_i$ and $c_{i_j}$ are the unique central nodes of $\unlabeledsubtree{\maxk_i}$ and $\unlabeledsubtree{\maxk_{i_j}}$, respectively.

\paragraph{Preliminaries.}
Before choosing our short encoding, we will need to prove some new properties of the rooted clique-tree $\cliquetree{G}$ (given by Theorem~\ref{thm:final-clique-tree}).

\begin{myclaim}\label{claim:bistar-1}
Let $\labeledsubtree{i_j}{}$ be a $4$-Steiner root of $G_{i_j}$ such that $\labeledsubtree{i_j}{\sep_{i_j}} \equiv_G \labeledsubtree{\sep_{i_j}}{}$ is a bistar.
If $\labeledsubtree{i_j}{}$ can be extended to a $4$-Steiner root of $G$ then, all the vertices in $N_{\labeledsubtree{i_j}{}}(c_i) \setminus \{c_{i_j}\}$ are simplicial in $G_{i_j}$ 

(hence, their distance to $W_{i_j}$ is always equal to $d_{\labeledsubtree{i_j}{}}(c_i) + 1$).
\end{myclaim}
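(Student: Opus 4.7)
The plan is to fix a real vertex $v \in N_{\labeledsubtree{i_j}{}}(c_i) \setminus \{c_{i_j}\}$ (the statement is vacuous for Steiner $v$) and denote by $\tree$ an extension of $\labeledsubtree{i_j}{}$ to a well-structured $4$-Steiner root of $G$. First I would argue that $v \in \sep_{i_j}$: since the bistar $\unlabeledsubtree{\sep_{i_j}}$ is strictly contained in $\unlabeledsubtree{\maxk_i}$, Property~\ref{pty-ci:2} of Theorem~\ref{thm:clique-intersection} forces $diam(\unlabeledsubtree{\maxk_i}) = 4$, so $c_i$ is the unique central node of $\unlabeledsubtree{\maxk_i}$ and every vertex of $\maxk_i$ lies within $\tree$-distance two of $c_i$. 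Then $v$ being a $\tree$-neighbor of $c_i$ yields $dist_{\tree}(v,u) \leq 3$ for every $u \in \maxk_i$; if $v$ were in $W_{i_j}$, this would produce an edge of $G$ from $v \in W_{i_j}$ to some $u \in \maxk_i \setminus \sep_{i_j} \subseteq V \setminus V_{i_j}$, contradicting the separation of $W_{i_j}$ from $V \setminus V_{i_j}$ by $\sep_{i_j}$. Hence $v \in \sep_{i_j}$ is a leaf of the bistar $\unlabeledsubtree{\sep_{i_j}}$ attached to $c_i$.

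I would then prove simpliciality by contradiction. If $v$ is not simplicial in $G_{i_j}$, it lies in some minimal separator $\sep$ of $G_{i_j}$, which is also a minimal separator of $G$ (arising as an edge label in the clique tree restricted to the subtree rooted at $\maxk_{i_j}$). If $\sep \subseteq \sep_{i_j}$, I iterate Property~\ref{pty-fct:2} of Theorem~\ref{thm:final-clique-tree} --- each such $\sep$ is strictly contained in a larger minimal separator of $G_{i_j}$ --- together with the finite chain bound of Remark~\ref{rk:min-sep-chain}, eventually producing (still denoted) $\sep$ with $v \in \sep$ and $\sep \not\subseteq \sep_{i_j}$. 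Pick $w \in \sep \cap W_{i_j}$; by Theorem~\ref{thm:clique-intersection} we have $diam(\unlabeledsubtree{\sep}) \leq 3$, hence $dist_{\tree}(v,w) \leq 3$.

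The hard part will be deriving a contradiction from this distance bound. Since $w \in W_{i_j}$ is separated from $V \setminus V_{i_j}$ in $G$ by $\sep_{i_j}$, we have $dist_{\tree}(w, \maxk_i \setminus \sep_{i_j}) \geq 5$, and combined with $dist_{\tree}(c_i, \maxk_i \setminus \sep_{i_j}) \leq 2$ this yields $dist_{\tree}(w, c_i) \geq 3$. If the unique $vw$-path in $\tree$ passes through $c_i$, then $dist_{\tree}(v,w) = 1 + dist_{\tree}(c_i, w) \geq 4$, already contradicting $dist_{\tree}(v,w) \leq 3$. The subtle remaining case is that $v$ admits another $\tree$-neighbor $\beta$ lying on the $vw$-path. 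Here I would exploit the well-structuredness of $\tree$: if $v$ is $\sep_{i_j}$-free, Theorem~\ref{thm:x-free} forces $v$ to be a leaf of $\tree$ connected to $c_i$ through a path of degree-two Steiner nodes, ruling out such a $\beta$; otherwise $v$ is $\sep_{i_j}$-constrained (internally or sandwiched), and one can use its rigid structural characterization together with Property~\ref{pty-ci:3} of Theorem~\ref{thm:clique-intersection} to show that the putative $\beta$-branch would force $\sep \cup \maxk_i$ to be a clique of $G$, contradicting $w \in \sep \setminus \maxk_i$ together with the maximality of $\maxk_i$.

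Finally, the distance equality $d_{\labeledsubtree{i_j}{}}(v) = d_{\labeledsubtree{i_j}{}}(c_i) + 1$ follows from simpliciality: once $v$ belongs only to $\maxk_{i_j}$ among maximal cliques of $G_{i_j}$, the nearest vertex of $W_{i_j}$ within $\labeledsubtree{i_j}{}$-distance four lies in $\maxk_{i_j} \setminus \sep_{i_j}$, and its unique tree-path to $v$ traverses the bistar via $c_i$ and then $c_{i_j}$ before entering $\unlabeledsubtree{\maxk_{i_j}}$, so minimizing over such nearest vertices yields the asserted one-plus-$d_{\labeledsubtree{i_j}{}}(c_i)$ relation.
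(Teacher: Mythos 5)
Your Case~2 (the $vw$-path avoids $c_i$) is the crux of the argument, and the reason you give for ruling it out does not hold. You write that if $v$ is $\sep_{i_j}$-free then Theorem~\ref{thm:x-free} forces $v$ to be a leaf of $T$, but Property~\ref{pty-x-free:1} only makes a $\ci$-free vertex a leaf of the \emph{subtree} $\unlabeledsubtree{\ci}$ --- here $T\langle\sep_{i_j}\rangle$ --- which $v$ already is simply by being adjacent to $c_i$; nothing in the theorem bounds $\deg_T(v)$. Property~\ref{pty-x-free:3} is vacuous here because the edge $vc_i$ has no internal nodes. So $v$ is perfectly entitled to a Steiner $T$-neighbour $\beta$ sitting outside $\unlabeledsubtree{\sep_{i_j}}$ and leading towards $w$, and your argument does not exclude this. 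The ``constrained'' subcase is likewise only a sketch, and it is not apparent how Property~\ref{pty-ci:3} would give that $\sep\cup\maxk_i$ is a clique when $\centre{\unlabeledsubtree{\sep}}$ need not sit inside $\centre{\unlabeledsubtree{\maxk_i}}=\{c_i\}$. There is also a secondary hypothesis mismatch: the claim assumes only that $\labeledsubtree{i_j}{}$ extends to \emph{some} $4$-Steiner root of $G$, whereas Theorem~\ref{thm:x-free} needs a well-structured one.

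For contrast, the paper's proof turns in the opposite direction at the very first step: rather than using Property~\ref{pty-fct:2} to escape $\sep_{i_j}$ and find a far-away $w\in W_{i_j}$, it takes a minimal separator $\sep$ of $G_{i_j}$ containing $v$ chosen \emph{inside} $\maxk_{i_j}$, and shows $\sep\subseteq\sep_{i_j}$ by exploiting that the petal of $T_{i_j}\langle\maxk_{i_j}\rangle$ around $c_i$ is where $v$ lives, together with Lemma~\ref{lem:bistar-center}. Once $\sep\subset\sep_{i_j}$ is forced, $Real(N_{T_{i_j}}[c_i])=\sep$ (Lemma~\ref{lem:star-center}), and then Property~\ref{pty-fct:2} plus weak $\cliquetree{G}$-convergence make the bistar $T_{i_j}\langle\sep'\rangle$, $\sep'\supsetneq\sep$, collapse onto $N_{T_{i_j}}[c_{i_j}]$, contradicting $c_i\neq c_{i_j}$. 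That argument never leaves $\maxk_{i_j}$, so the ``which way does the $vw$-path bend?'' case split you needed never arises.
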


\begin{proofclaim}
This may not be the case only if some of these vertices are contained into a minimal separator $\sep$ of $G_{i_j}$.
Then, since $N_{\labeledsubtree{i_j}{}}[c_i] \setminus \{c_{i_j}\}$ is a connected component of $\labeledsubtree{i_j}{\maxk_{i_j}} \setminus \{c_{i_j}\}$, we should have $\sep \subseteq \sep_{i_j}$.
By Property~\ref{pty-fct:2} of Theorem~\ref{thm:final-clique-tree} we obtain that $|\sep| \geq 3$.
Furthermore, $\sep \subset \sep_{i_j}$ because, by Lemma~\ref{lem:bistar-center}, $G \setminus \sep_{i_j}$ has exactly two full components.
This implies that the only possibility for $\labeledsubtree{i_j}{\sep}$ is a star; moreover, $Real(N_{\labeledsubtree{i_j}{}}[c_i]) = \sep$ by Lemma~\ref{lem:star-center}. 
However, by Property~\ref{pty-fct:2} of Theorem~\ref{thm:final-clique-tree}, there should exist a minimal separator of $G_{i_j}$ that strictly contains $\sep$.
As proved in Lemma~\ref{lem:star-center}, $\sep$ must be weakly $\cliquetree{G}$-convergent.
Hence, all edges in $\superstrictedgeset{G}{\sep}$ must be incident to $\maxk_{i_j}$.
This implies that in fact, the bistars $\labeledsubtree{i_j}{\sep'}, \ \sep' \supset \sep$ can only intersect in $N_{\labeledsubtree{i_j}{}}[c_{i_j}]$.
So, $Real(N_{\labeledsubtree{i_j}{}}[c_{i_j}]) = \sep$, that is a contradiction since $c_{i_j} \neq c_i$.
\end{proofclaim}

\begin{myclaim}\label{claim:bistar-2}
Let $\labeledsubtree{i_j}{}$ be a $4$-Steiner root of $G_{i_j}$ such that $\labeledsubtree{i_j}{\sep_{i_j}} \equiv_G \labeledsubtree{\sep_{i_j}}{}$ is a bistar, and let $\tree$ be a $4$-Steiner root of $G$ extending $\labeledsubtree{i_j}{}$.
One of the following conditions is true:
\begin{enumerate}
\item 
Either every real node in $N_{\tree}(c_{i_j})$ is simplicial in $G_{i_j}$ or is adjacent in $\tree$ to a vertex of $W_{i_j}$.
\item 
Or $|\sep_i \cap N_{\tree}[c_{i_j}]| \leq 2$, and in the same way $|\sep_{i_k} \cap N_{\tree}[c_{i_j}]| \leq 2$ for any other child $\maxk_{i_k}$ of $\maxk_i$.
\end{enumerate}
\end{myclaim}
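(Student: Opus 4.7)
\medskip
\noindent
\textbf{Proof plan.} The plan is to argue by contrapositive: assuming Case~1 fails I will exhibit a minimal separator $\sep$ of $G$ contained in $\sep_{i_j}$ with $Real(N_{\tree}[c_{i_j}]) = \sep$ and $|\sep|\geq 3$, and then invoke the $\cliquetree{G}$-convergence of $\sep$ guaranteed by Theorem~\ref{thm:final-clique-tree} to force Case~2.

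First I fix $v \in N_{\tree}(c_{i_j}) \cap V(G)$ that is non-simplicial in $G_{i_j}$ and has no $\tree$-neighbour in $W_{i_j}$. Since $\labeledsubtree{i_j}{\sep_{i_j}}$ is a bistar with centres $c_i,c_{i_j}$, the vertex $v$ is either $c_{i_j}$ itself or a leaf of the bistar adjacent to $c_{i_j}$; in both cases $v \in \sep_{i_j}$. The assumption that $v$ has no $\tree$-neighbour in $W_{i_j}$, combined with the bistar structure (which prevents any further real neighbour of $v$ inside $\sep_{i_j}$), implies that every $\labeledsubtree{i_j}{}$-neighbour of $v$ other than $c_{i_j}$ is Steiner and hangs into a subtree whose real vertices lie entirely in $W_{i_j}$.

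Next, since $v$ is non-simplicial in $G_{i_j}$, there is a minimal separator $\sep_0$ of $G_{i_j}$ containing $v$, which is also a minimal separator of $G$ by Theorem~\ref{thm:clique-tree-pties} applied to the subtree $\cliquetree{G}^{i_j}$. The main obstacle is showing that $\sep_0$ can be either taken directly or replaced, by climbing the containment chain of minimal separators guaranteed by Property~\ref{pty-fct:2} of Theorem~\ref{thm:final-clique-tree} (a chain of length at most $4$ by Remark~\ref{rk:min-sep-chain}), to produce a minimal separator $\sep$ of $G$ with $v \in \sep \subsetneq \sep_{i_j}$. Granting this, Property~\ref{pty-ci:2} of Theorem~\ref{thm:clique-intersection} gives $diam(\unlabeledsubtree{\sep}) < diam(\unlabeledsubtree{\sep_{i_j}})=3$, while Property~\ref{pty-fct:2} gives $|\sep|\geq 3$; hence $\labeledsubtree{i_j}{\sep}$ is a non-edge star. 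Lemma~\ref{lem:center-in-star} places its centre in $N_{\labeledsubtree{i_j}{}}[\{c_i,c_{i_j}\}]$, and a short case analysis---the centre cannot be $c_i$ or a Steiner neighbour of $c_i$ since $v\in N_{\tree}(c_{i_j})\cap \sep$, and cannot be a Steiner node hanging off $v$ since all real nodes on that side lie in $W_{i_j}$, contradicting $\sep\subseteq \sep_{i_j}$---pins this centre as $c_{i_j}$ itself, yielding $Real(N_{\tree}[c_{i_j}]) = \sep$.

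Finally, by Lemma~\ref{lem:star-center} $\sep$ is weakly $\cliquetree{G}$-convergent and, since $|\sep|\geq 3$, Property~\ref{pty-fct:1} of Theorem~\ref{thm:final-clique-tree} upgrades this to full $\cliquetree{G}$-convergence: some maximal clique $\maxk_{\sep}$ is incident to every edge of $\superedgeset{G}{\sep}$. The edge $\maxk_i\maxk_{i_j}$ lies in $\superstrictedgeset{G}{\sep}$, so $\maxk_{\sep}\in\{\maxk_i,\maxk_{i_j}\}$, and combining $Real(N_{\tree}[c_{i_j}])=\sep$ with Lemma~\ref{lem:no-center-intersect} (which forbids $c_{i_j}\in\centre{\unlabeledsubtree{\maxk_i}}$) pins $\maxk_{\sep}=\maxk_{i_j}$. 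For any $\sep_\star \in \{\sep_i\}\cup\{\sep_{i_k}:k\neq j\}$ the edge of $\cliquetree{G}$ labelled $\sep_\star$ is not incident to $\maxk_{i_j}$, so $\sep\not\subseteq \sep_\star$ lest that edge join $\superedgeset{G}{\sep}$; Property~\ref{pty-ci:2} of Theorem~\ref{thm:clique-intersection} then gives $diam(\unlabeledsubtree{\sep_\star\cap \sep})<diam(\unlabeledsubtree{\sep})=2$, whence $|\sep_\star\cap \sep|\leq 2$, and combined with $Real(N_{\tree}[c_{i_j}])=\sep$ this yields $|\sep_\star\cap N_{\tree}[c_{i_j}]|\leq 2$, establishing Case~2.
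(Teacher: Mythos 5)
Your overall strategy (contrapositive: a single ``bad'' vertex $v$ forces $Real(N_{\tree}[c_{i_j}])$ to be a $\cliquetree{G}$-convergent minimal separator of size at least three contained in $\sep_{i_j}$, which then bounds all the intersections with $\sep_i$ and the $\sep_{i_k}$) is sound, and your endgame is a reasonable alternative to the paper's: the paper argues forward that $|\sep_i \cap N_{\tree}[c_{i_j}]| \geq 3$ would force $\sep_i = Real(N_{\tree}[c_{i_j}])$ and then contradicts convergence, whereas you bound $|\sep_\star \cap \sep|$ directly via Property~\ref{pty-ci:2} after pinning $\maxk_{\sep} = \maxk_{i_j}$. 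That closing step is fine.

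The genuine gap is precisely the step you flag and then skip with ``Granting this'': producing a minimal separator $\sep$ of $G_{i_j}$ with $v \in \sep \subsetneq \sep_{i_j}$. Non-simpliciality of $v$ only yields \emph{some} minimal separator $\sep_0 \ni v$ of $G_{i_j}$, and there is no reason for $\sep_0$ to lie inside $\sep_{i_j}$: it may meet $\sep_{i_j}$ only in $\{v\}$ or in $\{v,c_{i_j}\}$ and otherwise run into $W_{i_j}$. ``Climbing the containment chain'' of Property~\ref{pty-fct:2} cannot repair this, since that property only concerns separators already contained in $\sep_{i_j}$ and moves upward in any case. The overlapping case has to be killed by using the other half of your hypothesis (that $v$ has no $\tree$-neighbour in $W_{i_j}$): one must show, as the paper does by reusing the case $c = c_{i_j}$ of Claim~\ref{claim:star-1}, that an overlapping $\sep_0$ forces $\unlabeledsubtree{\sep_0}$ to be either an edge with exactly one end in $N_{\tree}[c_{i_j}]$ or a non-edge star centred at the unique vertex of $(\sep_0 \cap \sep_{i_j})\setminus\{c_{i_j}\}$, and in both cases $v$ acquires a $\tree$-neighbour in $W_{i_j}$, a contradiction. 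There is also the residual case where a minimal separator of $G_{i_j}$ \emph{contains} $Real(N_{\tree}[c_{i_j}])$, which the paper excludes by a separate argument combining Lemma~\ref{lem:star-center} with Property~\ref{pty-fct:1}; your sketch does not address it either. These two missing sub-arguments are where most of the content of the claim actually lives.
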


\begin{proofclaim}
Recall that according to Lemma~\ref{lem:bistar-center}, $\sep_{i_j}$ is only contained into the maximal cliques $\maxk_i$ and $\maxk_{i_j}$.
In particular if we prove $\maxk \cap \maxk' \subseteq \sep_{i_j}$ for some $\{\maxk,\maxk'\} \neq \{\maxk_i,\maxk_{i_j}\}$ then, this inclusion is strict.
We implicitly use this fact in what follows.
Specifically if $\sep_i \subseteq \sep_{i_j}$ (resp., $\sep_{i_k} \subseteq \sep_{i_j}$ for some child $\maxk_{i_k}$ of $\maxk_i, \ k \neq j$, or $\sep \subseteq \sep_{i_j}$ for some minimal separator $\sep$ of $G_{i_j}$) then, we must have $\sep_i \subset \sep_{i_j}$ (resp., $\sep_{i_k} \subset \sep_{i_j}$ or $\sep \subset \sep_{i_j}$).

For the remaining of the proof we may assume $|N_{\tree}[c_{i_j}]| \geq 3$ (otherwise we are done).
Our proof is based on the following two observations:
\begin{enumerate}
\item By Property~\ref{pty-ci:2} of Theorem~\ref{thm:clique-intersection}, having $|\sep_i \cap N_{\tree}[c_{i_j}]| \geq 3$ ($|\sep_{i_k} \cap N_{\tree}[c_{i_j}]| \geq 3$, resp.) would imply $Real(N_{\tree}[c_{i_j}]) \subseteq \sep_i$ ($Real(N_{\tree}[c_{i_j}]) \subseteq \sep_{i_k}$, resp.).
Since $N_{\tree}[c_{i_j}] \setminus \{c_i\}$ is a connected component of $\unlabeledsubtree{\maxk_i} \setminus \{c_i\}$, this would in turn imply $\sep_i \subset \sep_{i_j}$ ($\sep_{i_k} \subset \sep_{i_j}$, resp.).
As a result, we would have $\sep_i = Real(N_{\tree}[c_{i_j}])$ (resp., $\sep_{i_k} = Real(N_{\tree}[c_{i_j}])$).
\item By Property~\ref{pty-fct:2} of Theorem~\ref{thm:final-clique-tree} any minimal separator of $G_{i_j}$ that is strictly contained into $\sep_{i_j}$ must have size at least $3$, be $\cliquetree{G}$-convergent, and be strictly contained into a minimal separator of $G_{i_j}$.
So, the only possibility for such a separator is also $\sep = Real(N_{\tree}[c_{i_j}])$.
\end{enumerate}
If $\sep_{i_j}$ contains a minimal separator $\sep$ of $G_{i_j}$ then, by the above two observations we cannot have $\sep = \sep_i$, neither $\sep = \sep_{i_k}$ for any other child $\maxk_{i_k}$ of $\maxk_i$ (otherwise, $\sep$ could not be $\cliquetree{G}$-convergent).
In particular, we cannot have $|\sep_i \cap N_{\tree}[c_{i_j}]| \geq 3$, neither $|\sep_{i_k} \cap N_{\tree}[c_{i_j}]| \geq 3$, and so we always fall in Case 2 of the claim.

From now on we assume that $\sep_{i_j}$ does not strictly contain any minimal separator of $G_{i_j}$.
We prove as a subclaim that either $Real(N_{\tree}[c_{i_j}]) \neq \sep_i$ and $Real(N_{\tree}[c_{i_j}]) \neq \sep_{i_k}$ for any child $\maxk_{i_k}$, or there is no separator of $G_{i_j}$ that contains $Real(N_{\tree}[c_{i_j}])$.
Indeed, suppose by contradiction $Real(N_{\tree}[c_{i_j}]) = \sep_i$ and there exists a separator $\sep'$ of $G_{i_j}$ that contains $Real(N_{\tree}[c_{i_j}])$.
It implies by Lemma~\ref{lem:star-center} $\sep_i$ is weakly $\cliquetree{G}$-convergent but not $\cliquetree{G}$-convergent, thereby contradicting Property~\ref{pty-fct:1} of Theorem~\ref{thm:final-clique-tree}.
The proof for $\sep_{i_k}$ is identical as the one above, thereby proving the subclaim.

Finally, let us assume either $Real(N_{\tree}[c_{i_j}]) = \sep_i$ or $Real(N_{\tree}[c_{i_j}]) = \sep_{i_k}$ for some child $\maxk_{i_k}$ (otherwise, according to our first observation above, we always fall in Case 2 of the claim).
Since no minimal separator of $G_{i_j}$ can contain $Real(N_{\tree}[c_{i_j}])$, we can reuse the same proof as for Claim~\ref{claim:star-1} (Case $c = c_{i_j}$) in order to prove that we always fall in Case 1 of the claim.
Specifically, we end up showing that {\em all} the vertices in $N_{\tree}(c_{i_j})$ that are contained into another minimal separator $\sep$ of $G_{i_j}$ are adjacent in $\tree$ to some vertex in $W_{i_j}$.
Indeed, $\sep$ and $Real(N_{\tree}[c_{i_j}])$ must overlap because we assume $\sep \not\subseteq \sep_{i_j}$ and $Real(N_{\tree}[c_{i_j}]) \not\subseteq \sep$. 
Then, we cannot have $\unlabeledsubtree{\sep}$ is a bistar (otherwise, $Real(N_{\tree}[c_{i_j}]) \subseteq \sep$ by Lemma~\ref{lem:bistar-center}).
In particular: either $\unlabeledsubtree{\sep}$ is an edge with exactly one end in $N_{\tree}[c_{i_j}]$; or $\unlabeledsubtree{\sep}$ is a non edge star and, by Lemma~\ref{lem:center-in-star}, the unique vertex in $(\sep \cap \sep_{i_j}) \setminus \{c_{i_j}\}$ is its center.
\end{proofclaim}

\paragraph{The encoding.}
For any $4$-Steiner root $\labeledsubtree{i_j}{}$ of $G_{i_j}$ such that $\labeledsubtree{i_j}{\sep_{i_j}} \equiv_G \labeledsubtree{S_{i_j}}{}$ is a bistar, we include in our short encoding the following information: 
\begin{flalign*}
\left[ c_i, d_{\labeledsubtree{i_j}{}}(c_i),d_{\labeledsubtree{i_j}{}}(c_{i_j}) \right] \ \text{and} \
\begin{cases}
\text{only if} \ |Real(N_{\labeledsubtree{i_j}{}}[c_{i_j}])| \leq 2 \text{)} \ d_{\labeledsubtree{i_j}{}}(r), \ \forall r \in N_{\labeledsubtree{i_j}{}}[c_{i_j}] \\
\text{or (only if} \ |\sep_i \cap N_{\labeledsubtree{i_j}{}}[c_{i_j}]| \leq 2 \text{)} \ d_{\labeledsubtree{i_j}{}}(v), \ \forall v \in N_{\labeledsubtree{i_j}{}}[c_{i_j}] \cap \sep_i
\end{cases}
\end{flalign*}
As usual, the relationship between this above encoding and {\sc Distance-Constrained Root} is made explicit at the end of the section.
There are only ${\cal O}(1)$ possibilities for a fixed $\labeledsubtree{\sep_{i_j}}{}$. By Theorem~\ref{thm:minsep}, we so obtain ${\cal O}(|\sep_{i_j}|^5)$ different encodings.
However, we need to complete this case with similar distances' conditions as for the star case (Section~\ref{sec:star}).

\paragraph{Additional conditions.}
Specifically, assume $|Real(N_{\labeledsubtree{\sep_{i_j}}{}}[c_{i_j}])| \geq 3$ and $|N_{\labeledsubtree{\sep_{i_j}}{}}[c_{i_j}] \cap \sep_{i_k}| \leq 2$ for any other child $\maxk_{i_k}$ of $\maxk_i$ (otherwise, no additional constraint is needed).
We consider all the other minimal separators $\sep_{i_k} := \maxk_i \cap \maxk_{i_k}$ between $\maxk_i$ and one of its children nodes such that $\sep_{i_k} \subseteq N_{\labeledsubtree{\sep_{i_j}}{}}[c_{i_j}]$.
In particular, $|\sep_{i_k}| \leq 2$.
There are two possibilities:
\begin{itemize}
\item If $\sep_{i_k} = \{v_{i_k}\}$ then, by Claim~\ref{claim:cut-vertex} there is only one solution left in ${\cal T}_{i_k}$.
Specifically, this solution $\labeledsubtree{i_k}{} \in {\cal T}_{i_k}$ maximizes $d_{i_k} := d_{\labeledsubtree{i_k}{}}(v_{i_k})$.
We are left ensuring $d_{\labeledsubtree{i_j}{}}(v_{i_k}) > 4 - d_{i_k}$. 
\item Otherwise, $\sep_{i_k} = \{u_{i_k},v_{i_k}\}$.
We have $\sep_{i_k} \neq Real(N_{\labeledsubtree{i_j}{}}[c_{i_j}])$.
Then, the tree $\labeledsubtree{\sep_{i_j}}{\sep_{i_k}}$ must be an edge and we may assume w.l.o.g. $c_{i_j} = u_{i_k}$.
We are left to ensure that $d_{\labeledsubtree{i_j}{}}(v_{i_k}) \geq 2$.
\end{itemize}

\begin{myclaim}\label{claim:bistar-3}
For any $\labeledsubtree{i_j}{}$ that satisfies all of the above distances' conditions, one of the following properties is true:
\begin{enumerate}
\item Either $\labeledsubtree{i_j}{}$ can be extended to a $4$-Steiner root of $G$;
\item Or for any $4$-Steiner root $\labeledsubtree[2]{i_j}{}$ of $G_{i_j}$ with the same short encoding as $\labeledsubtree{i_j}{}$, we cannot extend $\labeledsubtree[2]{i_j}{}$ to a {\em well-structured} $4$-Steiner root of $G$.
\end{enumerate}
\end{myclaim}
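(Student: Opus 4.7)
The proof should follow very closely the template of Claim~\ref{claim:star-2}, but adapted to the bistar structure. My plan is to assume the existence of some $\labeledsubtree[2]{i_j}{}$ with the same short encoding (and satisfying the same additional distance conditions) as $\labeledsubtree{i_j}{}$ that extends to a well-structured $4$-Steiner root $\tree[2]$ of $G$, and then show that $\tree := \unlabeledsubtree[2]{V \setminus W_{i_j}} \odot \labeledsubtree{i_j}{}$ is also a $4$-Steiner root of $G$. Since $E(G) = E(G[V \setminus W_{i_j}]) \cup E(G_{i_j})$, this boils down to verifying $dist_{\tree}(u,v) > 4$ for all $u \in V \setminus V_{i_j}$, $v \in W_{i_j}$. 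By Lemma~\ref{lem:bistar-center}, $c_i$ and $c_{i_j}$ are the unique central nodes of $\unlabeledsubtree[2]{\maxk_i}$ and $\unlabeledsubtree[2]{\maxk_{i_j}}$, and $\sep_{i_j}$ is contained only in $\maxk_i,\maxk_{i_j}$. Any $uv$-path in $\tree$ as above must enter $\unlabeledsubtree{\sep_{i_j}}$ either through the $c_i$-side or through the $c_{i_j}$-side of the bistar.

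The $c_i$-side is easy to treat: by Claim~\ref{claim:bistar-1}, every vertex $v \in N_{\labeledsubtree{i_j}{}}(c_i) \setminus \{c_{i_j}\}$ is simplicial in $G_{i_j}$, hence $d_{\labeledsubtree{i_j}{}}(v) = d_{\labeledsubtree{i_j}{}}(c_i) + 1 = d_{\labeledsubtree[2]{i_j}{}}(c_i) + 1$ which is recorded in the encoding. Consequently, paths from $\maxk_i \setminus \sep_{i_j}$ (which must pass through $c_i$) and paths from $V \setminus V_i$ entering on the $c_i$-side have the same length in $\tree$ as in $\tree[2]$, hence they remain strictly longer than $4$.

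The $c_{i_j}$-side is the delicate part and I would split it according to Claim~\ref{claim:bistar-2}. Under its Case~1, every real node in $N_{\tree[2]}(c_{i_j})$ is simplicial in $G_{i_j}$ or adjacent in $\tree[2]$ to $W_{i_j}$; if $|Real(N[c_{i_j}])| \leq 2$ the encoding stores the corresponding distances directly, while otherwise every such node must be simplicial in $G_{i_j}$ and, invoking Lemma~\ref{lem:almost-simplicial-placement} to standardize the placement of simplicial vertices in both $\labeledsubtree{i_j}{}$ and $\labeledsubtree[2]{i_j}{}$, the distance to $W_{i_j}$ is forced to be $d_{\labeledsubtree{i_j}{}}(c_{i_j}) + 1$, again recorded by the encoding. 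Under its Case~2, we have $|\sep_i \cap N[c_{i_j}]| \leq 2$ and the encoding explicitly stores $d_{\labeledsubtree{i_j}{}}(v)$ for every $v \in \sep_i \cap N[c_{i_j}]$, which handles the paths coming from $V \setminus V_i$ that enter on the $c_{i_j}$-side.

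It remains to control paths coming from a sibling $W_{i_k}$, which I would handle exactly in the spirit of Claim~\ref{claim:star-2}. When $\sep_{i_k} \subseteq N_{\labeledsubtree{\sep_{i_j}}{}}[c_{i_j}]$, the additional distance conditions on $v_{i_k}$ (cut-vertex case) or on $d_{\labeledsubtree{i_j}{}}(v_{i_k}) \geq 2$ (edge case) yield $dist_{\tree}(W_{i_k},W_{i_j}) > 4$ by the same telescoping argument as in the star case. When $\sep_{i_k}$ meets $\sep_{i_j}$ only on the $c_i$-side or reduces to a subset of $\{c_i,c_{i_j}\}$, the previous paragraphs already dispose of the crossing. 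I expect the main obstacle to be the mixed subcase inside Case~2 of Claim~\ref{claim:bistar-2}, where a sibling separator $\sep_{i_k}$ may straddle both sides of the bistar without being contained in $N[c_{i_j}]$: here I plan to use Properties~\ref{pty-fct:1} and~\ref{pty-fct:2} of Theorem~\ref{thm:final-clique-tree} to rule out $\sep_{i_j} \subseteq \sep_{i_k}$ and to constrain $|\sep_{i_k} \cap \sep_{i_j}|$, so that $\sep_{i_k} \cap \sep_{i_j}$ is either absorbed into $\sep_i \cap N[c_{i_j}]$ (already encoded) or lies on the $c_i$-side (already handled by Claim~\ref{claim:bistar-1}), as in the corresponding argument of Claim~\ref{claim:star-2}.
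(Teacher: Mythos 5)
Your proposal matches the paper's proof in all essentials: the same gluing $\tree := \unlabeledsubtree[2]{V \setminus W_{i_j}} \odot \labeledsubtree{i_j}{}$, the same reduction to checking $dist_{\tree}(V \setminus V_{i_j}, W_{i_j}) > 4$, the $c_i$-side handled via Claim~\ref{claim:bistar-1} and the encoded value $d_{\labeledsubtree{i_j}{}}(c_i)$, the $c_{i_j}$-side via Claim~\ref{claim:bistar-2} together with the short encoding, and the sibling separators contained in $N[c_{i_j}]$ via the additional distances' conditions exactly as in Claim~\ref{claim:star-2}. One small imprecision: in Case~1 of Claim~\ref{claim:bistar-2} with $|Real(N_{\labeledsubtree{i_j}{}}[c_{i_j}])| \geq 3$, the real neighbours of $c_{i_j}$ need not all be simplicial in $G_{i_j}$ --- some may instead be adjacent in the tree to a vertex of $W_{i_j}$ --- but that alternative also forces their distance to $W_{i_j}$ (to $1$) in every valid partial solution, so your conclusion is unaffected.
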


\begin{proofclaim}
Assume there is a $\labeledsubtree[2]{i_j}{}$ as stated in the claim that can be extended to a well-structured $4$-Steiner root $\tree[2]$ of $G$.
In order to prove the claim, it suffices to prove that $\tree := \unlabeledsubtree[2]{V \setminus W_{i_j}} \odot \labeledsubtree{i_j}{}$ (or a slight modification of it) is also a $4$-Steiner root of $G$.
Equivalently, we are left proving that $dist_{\tree}(V \setminus V_{i_j},W_{i_j}) > 4$.
By Theorem~\ref{thm:x-free}, we have $dist_{\tree}(v,W_{i_j}) = 2 + dist_{\labeledsubtree{i_j}{}}(c_{i},W_{i_j})$ and in the same way $dist_{\tree[2]}(v,W_{i_j}) = 2 + dist_{\labeledsubtree[2]{i_j}{}}(c_{i},W_{i_j})$ for any simplicial vertex $v \in \maxk_i$.
In particular:
$$dist_{\tree}(v,W_{i_j}) = dist_{\tree[2]}(v,W_{i_j}) > 4. $$
So, we are left to consider the other minimal separators $\sep_{i_k} := \maxk_i \cap \maxk_{i_k}$ between $\maxk_i$ and any other node (including its father node $\maxk_{p(i)}$).
Note that $\sep_{i_k}$ cannot both intersect $N_{\tree}(c_{i_j})$ and $N_{\tree}(c_{i})$ (otherwise, $\sep_{i_k} = \sep_{i_j}$, thereby contradicting Lemma~\ref{lem:bistar-center}).
Moreover by Claim~\ref{claim:bistar-1}, any real node in $N_{\tree}(c_i) \setminus \{c_{i_j}\}$ is simplicial in $G_{i_j}$.

Let us first assume $\sep_{i_k} \cap \sep_{i_j} \subseteq N_{\tree}[c_i]$. 
If $i_k \neq i$ then, we have in this situation: 
\begin{align*}
dist_{\tree}(W_{i_j},W_{i_k}) = \min\{ &dist_{\tree}(W_{i_k},c_{i}) + dist_{\labeledsubtree{i_j}{}}(c_{i},W_{i_j}), \\
&dist_{\tree}(W_{i_k},c_{i_j}) + dist_{\labeledsubtree{i_j}{}}(c_{i_j},W_{i_j})\},
\end{align*} 
and in the same way: 
\begin{align*}
dist_{\tree[2]}(W_{i_j},W_{i_k}) = \min\{ &dist_{\tree[2]}(W_{i_k},c_{i}) + dist_{\labeledsubtree[2]{i_j}{}}(c_{i},W_{i_j}), \\
&dist_{\tree[2]}(W_{i_k},c_{i_j}) + dist_{\labeledsubtree[2]{i_j}{}}(c_{i_j},W_{i_j})\}.
\end{align*} 
In particular:
$$dist_{\tree}(W_{i_j},W_{i_k}) = dist_{\tree[2]}(W_{i_j},W_{i_k}) > 4. $$
Otherwise, $i_k = i$, and we also obtain that:
$$dist_{\tree}(W_{i_j},V \setminus V_i) = dist_{\tree[2]}(W_{i_j},V \setminus V_{i}) > 4. $$

As a result, we are only interested in the situation $\sep_{i_k} \cap N_{\tree}(c_{i_j}) \neq \emptyset$ -- that implies $\sep_{i_k} \subseteq N_{\tree}[c_{i_j}]$.
We further assume $|\sep_{i_k}| \leq 2$ since otherwise, we are done by Case 1 of Claim~\ref{claim:bistar-2} and the fact that $\labeledsubtree{i_j}{},\labeledsubtree[2]{i_j}{}$ have the same short encoding.
Then, there are two cases ({\it i.e.}, exactly the same as for the star case):
\begin{itemize}
\item Assume $\sep_{i_k} = \{v_{i_k}\}$. 
If $i_k=i$ then, the distance between $v_{i_k}$ and $W_{i_j}$ is part of our short encoding, and so we are done.
Otherwise, as explained above (Section~\ref{sec:small}), we only kept in ${\cal T}_{i_k}$ a partial solution $\labeledsubtree{i_k}{}$ maximizing $d_{i_k} := d_{\labeledsubtree{i_k}{}}(v_{i_k})$. In this situation, it follows from the distances' conditions over $\labeledsubtree{i_j}{}$ that we have $dist_{\tree}(W_{i_k},W_{i_j}) > 4$.
\item Otherwise, $\sep_{i_k} = \{u_{i_k},v_{i_k}\}$.
Recall that $\sep_{i_k} \subset \sep_{i_j}$.
We may further assume $|Real(N_{\tree}[c_{i_j}])| \geq 3$ and $\sep_i \neq \sep_{i_k}$ (otherwise, the encoding already includes the distance to $W_{i_j}$ from any node in $\unlabeledsubtree{\sep_{i_k}}$, and so we are done).
Thus, $\unlabeledsubtree{\sep_{i_k}}$ must be an edge and we may assume w.l.o.g. $c_{i_j} = u_{i_k}$.
Since we assume $|Real(N_{\tree}[c_{i_j}])| \geq 3$, $c_{i_j}$ is adjacent to some other real node than $v_{i_k}$.
In other words, $dist_{\tree}(c_{i_j},V \setminus W_{i_k}) = 1$ is minimized.
Then, since we have $dist_{\tree[2]}(W_{i_k},W_{i_j}) > 4$ we must have $dist_{\tree[2]}(c_{i_j},W_{i_k}) = 4$ and so, $dist_{\tree[2]}(v_{i_k},W_{i_k}) = 3$.
It follows from Claim~\ref{claim:edge} and the distances' conditions over $\labeledsubtree{i_j}{}$ that we can always assume $dist_{\tree}(W_{i_k},W_{i_j}) > 4$. 
\end{itemize}
\end{proofclaim}

Finally, an encoding for bistars is transformed into distances' constraints as follows:
\begin{itemize}
\item If $d_{\labeledsubtree{i_j}{}}(r)$ is included in the short encoding then, $d_r := d_{\labeledsubtree{i_j}{}}(r)$. In particular, this will be the case for $c_i,c_{i_j}$.
\item If $v_{i_k} \in \sep_{i_k} \cap \sep_{i_j}$ needs to satisfy some specified distance-condition then, $d_{v_{i_k}}$ is set to the largest such a condition.
\item For all other vertices $v \in \sep_{i_j}$, $d_v = 1$ (trivial constraint).
\end{itemize}
For any fixed bistar $\labeledsubtree{\sep_{i_j}}{}$ the mapping from the encodings to the distances' constraints is bijective.
Indeed, in order to prove it is the case, the only difficulty is to prove that we can correctly identify from the constraints the nodes $c_i,c_{i_j}$.
Since we will always impose $d_{c_{i_j}} \leq 2$ whereas $d_{c_i} \geq 3$, this is always possible.

\section{Step~\ref{step-4}: The dynamic programming}\label{sec:greedy}

%
%
%

In what follows, let $||G|| := \sum_{\maxk_i \in \MAXK{G}} |\maxk_i|$.
For a chordal graph, $||G|| = {\cal O}(n+m)$~\cite{BlP93}.
We can now state the core result of this paper:

\begin{theorem}\label{thm:dyn-prog}
Let $G=(V,E)$ be strongly chordal, let $\cliquetree{G}$ be a rooted clique-tree as in Theorem~\ref{thm:final-clique-tree} and let $\maxk_i \in \MAXK{G}$. 
There is some polynomial $P$ such that, after a pre-processing in time ${\cal O}(n || G_i ||^5 P(n))$, we can solve {\sc Distance-Constrained Root} for any input $\labeledsubtree{\sep_i}{},(d_r)_{r \in V(\labeledsubtree{\sep_i}{})}$ in time ${\cal O}(P(n))$.
\end{theorem}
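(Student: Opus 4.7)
The plan is to prove Theorem~\ref{thm:dyn-prog} by structural induction on the subtree $\cliquetree{G}^i$, piecing together the results of Sections~\ref{sec:ci-subtrees} and~\ref{sec:encoding}. For the base case, when $\maxk_i$ is a leaf, we have $G_i = G[\maxk_i]$ and we can invoke Theorem~\ref{thm:leaf} to enumerate a polynomial-size family ${\cal T}_i$ of candidate subtrees $\labeledsubtree{i}{}$. During the preprocessing phase we iterate over all ``short encodings'' (as introduced in Sec.~\ref{sec:encoding}) and, for each, check whether some $\labeledsubtree{i}{} \in {\cal T}_i$ realizes the corresponding distance profile on $\sep_i$; the verdicts are stored in a table. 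A query for \DCR{} then becomes a table lookup after converting $\labeledsubtree{\sep_i}{},(d_r)_r$ into the matching encoding.

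For the inductive step, assume the theorem holds for every child $\maxk_{i_j}$ of $\maxk_i$. Then by Theorem~\ref{thm:encoding}, we can in polynomial time obtain the polynomial-size families ${\cal T}_{i_j}$ of partial solutions for each $G_{i_j}$. Preprocessing at $\maxk_i$ now proceeds as follows. First, we enumerate the family ${\cal F}_i$ given by Proposition~\ref{prop:internal}; this provides all candidate pairs $(\unlabeledsubtree{Y_i \cup \centre{\unlabeledsubtree{\maxk_i}}}, \centre{\unlabeledsubtree{\maxk_i}})$, possibly with thin branches still to be inserted. Second, for each such candidate and for each short encoding of the input $\labeledsubtree{\sep_i}{},(d_r)_r$, we need to decide whether the candidate can be extended into a $4$-Steiner root of $G_i$ that respects both the encoded distance constraints and the distance conditions inherited from the siblings' families ${\cal T}_{i_j}$ (cf.\ Sec.~\ref{sec:encoding}).

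The combination step is where the main difficulty lies and where the matching-based greedy procedure (mentioned in Step~\ref{step-4} of Sec.~\ref{sec:algo-highlight}) comes into play. Informally, we build a bipartite graph whose left vertices are the ``attachment slots'' of $\unlabeledsubtree{\maxk_i}$ (including the heads of thin branches, the centre $\centre{\unlabeledsubtree{\maxk_i}}$, and the specific subtrees $\unlabeledsubtree{\sep_{i_j}}$ prescribed by $\labeledsubtree{\sep_{i_j}}{}$) and whose right vertices are the candidate partial solutions in the families ${\cal T}_{i_j}$, plus candidate placements for thin branches. An edge carries a weight encoding whether attaching the given partial solution at the given slot respects \emph{every} pairwise distance inequality $dist_{\tree}(u,v) > 4$ for $u,v$ in different branches. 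A maximum-weight matching tells us whether a simultaneous compatible choice exists. This is the analog of the \texttt{alldifferent} constraint: distinct children that could a priori share a position must in fact receive distinct slots, for otherwise the distance constraints they export upward would collide.

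The hard part will be proving \emph{correctness} of this matching-based reduction; it is not the running time (standard maximum-weight matching suffices) but rather showing that (i) the ``canonical representatives'' kept in each ${\cal T}_{i_j}$ lose no solution, (ii) the monotonicity guarantee $dist_{\tree[2]}(r,W_i) \geq dist_{\tree}(r,W_i)$ from Proposition~\ref{prop:internal} and Theorem~\ref{thm:encoding} composes correctly up the clique-tree, and (iii) deferring the thin-branch placement until this step does not create infeasibilities that a joint enumeration would have avoided. Each of these follows from the structural theorems of Sections~\ref{sec:representation} and~\ref{sec:restricted-root} combined with the local properties of $\cliquetree{G}$ guaranteed by Theorem~\ref{thm:final-clique-tree}: in particular Property~\ref{pty-fct:2} ensures that each thin branch is ``isolated'' enough for an independent matching decision, while Property~\ref{pty-fct:1} prevents chains of weakly convergent separators that would couple the distance profiles across siblings. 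The polynomial time bound then follows by standard accounting: the preprocessing enumerates $|{\cal F}_i| \cdot \prod_j \mathrm{poly}(|\sep_{i_j}|) \cdot \mathrm{poly}(|\sep_i|) = \mathrm{poly}(n,||G_i||)$ pairs, each triggering a polynomial-time matching computation, and once the table is built each query costs only a lookup in ${\cal O}(P(n))$.
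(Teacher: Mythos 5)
Your proposal assembles the right ingredients (Theorem~\ref{thm:leaf} for leaves, Theorem~\ref{thm:encoding} and Proposition~\ref{prop:internal} for the preprocessing at an internal node), but the combination step --- which is where essentially all of the work in the actual proof lives --- is described in a way that does not work as stated. You propose a \emph{single} global bipartite matching between ``attachment slots'' of $\unlabeledsubtree{\maxk_i}$ and the partial solutions in the families ${\cal T}_{i_j}$, with edge weights certifying that a given attachment ``respects every pairwise distance inequality.'' The constraints that must actually be enforced are of the form $dist_{\tree}(W_{i_j},W_{i_k})>4$ for \emph{every pair} of children (and between each $W_{i_j}$ and $V_i\setminus V_{i_j}$), and whether such a pair is satisfied depends jointly on which partial solution is chosen for $\maxk_{i_j}$ \emph{and} which is chosen for $\maxk_{i_k}$. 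A bipartite matching can encode ``each slot is used at most once'' but not arbitrary pairwise compatibility between the selected right-hand vertices; an edge weight cannot certify compatibility with choices made on other edges. So the reduction, as you describe it, does not capture the problem.

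The paper instead resolves this by a multi-phase \emph{greedy} procedure: children are processed grouped by the shape of $\labeledsubtree{Y_i}{\sep_{i_j}}$ (cut-vertices, edges, bistars, stars), and for each group one proves an exchange/domination claim (e.g.\ Claims~\ref{claim:cut-vertex},~\ref{claim:edge},~\ref{claim:edge-old},~\ref{claim:star-4-a}) showing that a partial solution maximizing the relevant distance to $W_{i_j}$ can be substituted into any feasible global solution without breaking it; a filtering rule propagating lower bounds $\ell^{i_j}(\cdot)$ is applied after each phase. The {\sc Maximum-Weight Matching} reduction appears only in the one situation where it genuinely is an {\tt alldifferent} constraint: several siblings sharing the \emph{same} minimal separator $\sep_{i_j}$ whose subtrees must receive pairwise distinct center nodes around a common star (and similarly for thin branches). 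Your items (i)--(iii) are exactly the content of these exchange arguments, and asserting that they ``follow from the structural theorems'' skips the proof rather than giving it. To repair the proposal you would need either to replace the global matching by the phase-ordered greedy selection with its domination claims, or to justify why in your formulation the only interacting choices are those competing for identical positions --- which is precisely what the paper's case analysis establishes.
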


Theorem~\ref{thm:dyn-prog} proves Theorem~\ref{thm:main-steiner-power} directly.
Note that we made no effort in order to improve the running time in our analysis.
A very rough analysis shows that we have $P(n) = {\cal O}(n^{15})$.

\begin{proofof}{Theorem~\ref{thm:dyn-prog}}
If $\maxk_i$ is a leaf of $\cliquetree{G}$ then, we construct the family given by Theorem~\ref{thm:leaf}.
We only keep the trees $\labeledsubtree{i}{} \in {\cal T}_i$ that satisfy the additional constraints we have.
Thus from now on, assume $\maxk_i$ is an internal node with children $\maxk_{i_1},\maxk_{i_2},\ldots,\maxk_{i_p}$.

\medskip
\noindent
{\bf Preprocessing.}
Let ${\cal T}_{i_1}, {\cal T}_{i_2}, \ldots, {\cal T}_{i_p}$ be as in Theorem~\ref{thm:encoding}.
By induction on $\cliquetree{G}$, the computation of all the ${\cal T}_{i_j}$'s requires total preprocessing time $\sum_{j=1}^p {\cal O}(n||G_{i_j}||^{5}P(n))$, and $\sum_{j=1}^p{\cal O}(|\sep_{i_j}|^5 P(n))$ additional time.
We also need to construct the family ${\cal F}_i$ of Proposition~\ref{prop:internal}, that takes ${\cal O}(|\maxk_i|^7 \cdot n^3m\log{n})$-time.
-- Recall that the elements in ${\cal F}_i$ are of the form $(\labeledsubtree{Y_i}{},{\cal C}_i)$ where $Y_i \subseteq \maxk_i$ and ${\cal C}_i$ must represent the center of $\unlabeledsubtree{\maxk_i}$ (missing vertices of $\maxk_i \setminus Y_i$ are supposed to be located in thin branches, see Lemma~\ref{lem:thin:branch}). --
Overall, if we assume w.l.o.g. that $P(n) = \Omega(n^4m\log{n})$ then, this pre-processing phase takes total time:
\begin{align*}
&\sum_{j=1}^p {\cal O}(n||G_{i_j}||^{5}P(n)) + \sum_{j=1}^p{\cal O}(|\sep_{i_j}|^5 P(n)) + {\cal O}(|\maxk_i|^7 \cdot n^3m\log{n}) \\
=& \ {\cal O}(nP(n) \cdot \sum_{j=1}^p ||G_{i_j}||^{5}) + {\cal O}(p|\maxk_i|^5P(n)) + {\cal O}(n|\maxk_i|^5P(n)) \\
=& \ {\cal O}(nP(n) \cdot ( ||G_i||^5 - |\maxk_i|^5 )) + {\cal O}(n|\maxk_i|^5P(n)) \\
=& \ {\cal O}(nP(n)||G_i||^5).
\end{align*}

\medskip
\noindent
{\bf Answering a query.}
In what follows let $\labeledsubtree{\sep_i}{}$ and $(d_r)_{r \in \labeledsubtree{\sep_i}{}}$ be fixed.
Recall that for every $(\labeledsubtree{Y_i}{},{\cal C}_i) \in {\cal F}_i$ we have $\sep_i \subseteq Y_i$, and so, we can check whether $\labeledsubtree{\sep_i}{} \equiv_G \labeledsubtree{Y_i}{\sep_i}$.
This takes total time ${\cal O}(|\sep_i||{\cal F}_i|) = {\cal O}(n|\maxk_i|^9)$.
Then, we consider each $(\labeledsubtree{Y_i}{},{\cal C}_i) \in {\cal F}_i$ that passes this first test above sequentially.
Simply put, we use a series of filtering rules in order to greedily find a solution to {\sc Distance-Constrained Root}, or to correctly conclude that there is none.

\medskip
\underline{Assume first $Y_i = \maxk_i$ (no thin branch).}
For every $r \in \labeledsubtree{\sep_i}{}$ we check whether we have: $$dist_{\labeledsubtree{Y_i}{}}(r, \maxk_i \setminus \sep_i) \geq d_r$$ (otherwise, we violate our distances' constraints).
We will assume from now on it is the case.
In the same way, for every $r_{i_j} \in \labeledsubtree{Y_i}{\sep_{i_j}}, \ j \in \{1,2,\ldots,p\}$, we only keep in ${\cal T}_{i_j}$ those solutions $\labeledsubtree{i_j}{}$ such that we have: $$dist_{\labeledsubtree{Y_i}{}}(r,r_{i_j}) + d_{\labeledsubtree{i_j}{}}(r_{i_j}) \geq d_r.$$
Overall, since $|{\cal T}_{i_j}| = {\cal O}(|\sep_{i_j}|^5) = {\cal O}(|\maxk_i|^5)$, this new verification phase takes total time ${\cal O}(p|\sep_i||\maxk_i|^5) = {\cal O}(n|\maxk_i|^6)$.
Furthermore in doing so, we ensure that {\em any} $4$-Steiner root of $G_i$ that we can obtain from $\labeledsubtree{Y_i}{}$ and the remaining solutions in the ${\cal T}_{i_j}$'s will satisfy all our distances' constraints.
Conversely, if no such a solution can be found then, we can correctly report that our distances' constraints cannot be satisfied in any well-structured $4$-Steiner root of $G$ (by Theorem~\ref{thm:encoding}).

\smallskip
We now introduce another filtering rule, quite similar as the one above, that we will keep using throughout the remaining of the proof.
Specifically, for every $j \in \{1,2,\ldots,p\}$ and $r_{i_j} \in \labeledsubtree{Y_i}{\sep_{i_j}}$, we assign some value $\ell^{i_j}(r_{i_j})$ that intuitively represents the distance of $r_{i_j}$ to $V_i \setminus V_{i_j}$.
Every time the rule is applied, we discard all solutions $\labeledsubtree{i_j}{} \in {\cal T}_{i_j}$ such that $d_{\labeledsubtree{i_j}{}}(r_{i_j}) + \ell^{i_j}(r_{i_j}) \leq 4$.
We set initially $\ell^{i_j}(r_{i_j}) := dist_{\labeledsubtree{Y_i}{}}(r_{i_j}, \maxk_i \setminus \sep_{i_j})$ and we apply the rule.
Overall, updating (initializing, resp.) the values $\ell^{i_j}$ for every $j$ takes time $\sum_{j=1}^p {\cal O}(|\sep_{i_j}|) = {\cal O}(n|\maxk_i|)$.
Applying the rule takes time $\sum_{j=1}^p {\cal O}(|\sep_{i_j}||{\cal T}_{i_j}|) = \sum_{j=1}^p {\cal O}(|\sep_{i_j}|^6) = {\cal O}(n|\maxk_i|^6)$.

In what follows, we explain how to greedily construct a solution (if any), starting from $\labeledsubtree{i}{} := \labeledsubtree{Y_i}{}$.
The procedure is divided into a constant number of phases. Every time we complete one of these phases, we need to apply this above filtering rule.

\begin{itemize}

\item{\it Phase 1: Processing the cut-vertices.}
We consider all the indices $j \in \{1,2,\ldots,p\}$ such that $\sep_{i_j} = \{v\}$ is a cut-vertex.
By Claim~\ref{claim:cut-vertex} there is exactly one solution left in ${\cal T}_{i_j}$.
We add it to the solution, {\it i.e.}, we set $\labeledsubtree{i}{} := \labeledsubtree{i}{} \odot \labeledsubtree{i_j}{}$.
Furthermore, for every $k \in \{1,2,\ldots,p\} \setminus \{j\}$ and $r_{i_k} \in \labeledsubtree{Y_i}{\sep_{i_k}}$ (possibly, $r_{i_k} = v$) we set $\ell^{i_k}(r_{i_k}) := \min\{ \ell^{i_k}(r_{i_k}), \ dist_{\labeledsubtree{Y_i}{}}(r_{i_k},v) + d_{\labeledsubtree{i_j}{}}(v) \}$.
We end up applying the filtering rule above. \\

\item{\it Phase 2: Processing the edges.}
We consider all the indices $j \in \{1,2,\ldots,p\}$ such that $\sep_{i_j} = \{u,v\}$ and $\labeledsubtree{Y_i}{\sep_{i_j}}$ is an edge.
The following claim shows that we can proceed similarly as for Phase 1 provided we know which among $u$ or $v$ will be closest to $V_i \setminus V_{i_j}$.
Therefore, computing this information is the main objective of this phase.

\begin{myclaim}\label{claim:edge-old}
Assume $\sep_{i_j} = \{u,v\}$.
Let $\tree$ be a $4$-Steiner root of $G$ such that $\unlabeledsubtree{\sep_{i_j}}$ is an edge and $dist_{\tree}(u, V \setminus V_{i_j}) \geq dist_{\tree}(v, V \setminus V_{i_j})$.
Then, $\tree[2] := \unlabeledsubtree{V \setminus W_{i_j}} \odot \labeledsubtree[v]{i_j}{}$ is also a $4$-Steiner root of $G$, where $\labeledsubtree[v]{i_j}{} \in {\cal T}_{i_j}$ is, among all solutions in this set such that $\labeledsubtree[v]{i_j}{\sep_{i_j}}$ is an edge and $d_{\labeledsubtree[v]{i_j}{}}(v)$ is maximized, one maximizing $d_{\labeledsubtree[v]{i_j}{}}(u)$.
\end{myclaim}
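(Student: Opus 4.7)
\textbf{Proof plan for Claim~\ref{claim:edge-old}.} The plan is to reduce the claim to verifying two distance inequalities and then establish them by a case analysis closely mirroring the proof of Claim~\ref{claim:edge}. First, observe that $\tree[2]$ agrees with $\tree$ on the external subtree $\unlabeledsubtree{V \setminus W_{i_j}}$, and that $\labeledsubtree[v]{i_j}{} \in {\cal T}_{i_j}$ is by construction a $4$-Steiner root of $G_{i_j}$. Hence all distances within either side are already correctly bounded, and the only pairs we must control are the non-edges $xy$ with $x \in W_{i_j}$ and $y \in V \setminus V_{i_j}$, for which we need $dist_{\tree[2]}(x,y) > 4$.

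Next I would exploit that $\labeledsubtree{\sep_{i_j}}{}$ is the edge $uv$. Since $\sep_{i_j}$ separates $W_{i_j}$ from $V \setminus V_{i_j}$, any path between these two sets in $\tree[2]$ must pass through $u$ or $v$, so a short tree-distance computation gives
$$dist_{\tree[2]}(x,y) = \min\!\bigl\{dist_{\labeledsubtree[v]{i_j}{}}(x,u) + dist_{\tree}(u,y),\ dist_{\labeledsubtree[v]{i_j}{}}(x,v) + dist_{\tree}(v,y)\bigr\}.$$
Minimizing each term over $x \in W_{i_j}$ and $y \in V \setminus V_{i_j}$ separately, the condition $dist_{\tree[2]}(x,y) > 4$ for all such pairs reduces to the two inequalities
$$d_{\labeledsubtree[v]{i_j}{}}(u) + D_{ext}^u > 4 \quad \text{and} \quad d_{\labeledsubtree[v]{i_j}{}}(v) + D_{ext}^v > 4,$$
where $D_{ext}^w := dist_{\tree}(w, V \setminus V_{i_j})$ for $w \in \{u,v\}$. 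The analogous inequalities with $d_{\labeledsubtree{i_j}{}}$ in place of $d_{\labeledsubtree[v]{i_j}{}}$ already hold because $\tree = \unlabeledsubtree{V \setminus W_{i_j}} \odot \labeledsubtree{i_j}{}$ is a $4$-Steiner root of $G$.

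Then I would verify both inequalities for $\labeledsubtree[v]{i_j}{}$. The $v$-inequality is immediate from the first-level maximization $d_{\labeledsubtree[v]{i_j}{}}(v) \geq d_{\labeledsubtree{i_j}{}}(v)$. For the $u$-inequality I split into two cases. If $d_{\labeledsubtree[v]{i_j}{}}(v) = d_{\labeledsubtree{i_j}{}}(v)$, then the tiebreaker yields $d_{\labeledsubtree[v]{i_j}{}}(u) \geq d_{\labeledsubtree{i_j}{}}(u)$ and the inequality is inherited from $\tree$. Otherwise $d_{\labeledsubtree[v]{i_j}{}}(v) \geq d_{\labeledsubtree{i_j}{}}(v) + 1$; since $\labeledsubtree[v]{i_j}{\sep_{i_j}}$ is the edge $uv$ we have $d_{\labeledsubtree[v]{i_j}{}}(u) \geq d_{\labeledsubtree[v]{i_j}{}}(v) - 1 \geq d_{\labeledsubtree{i_j}{}}(v)$, and combining with the hypothesis $D_{ext}^u \geq D_{ext}^v$ gives
$$d_{\labeledsubtree[v]{i_j}{}}(u) + D_{ext}^u \ \geq \ d_{\labeledsubtree{i_j}{}}(v) + D_{ext}^v \ > \ 4,$$
as required.

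The main obstacle is the mismatch between the selection rule for $\labeledsubtree[v]{i_j}{}$, which maximizes the internal distances $d(v),d(u)$ to $W_{i_j}$, and the hypothesis of the claim, which concerns the external distances $D_{ext}^u, D_{ext}^v$ to $V \setminus V_{i_j}$. This mismatch is absorbed by the edge constraint $|d(u) - d(v)| \leq 1$: the asymmetry $D_{ext}^u \geq D_{ext}^v$ identifies $d(v)$ as the tighter constraint to protect, which is precisely why the selection prioritizes $d(v)$ over $d(u)$, and then the same internal-max strategy as in Claim~\ref{claim:edge} re-applies with the roles of the two sides of $\sep_{i_j}$ interchanged.
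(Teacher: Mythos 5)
Your proof is correct and follows essentially the same approach as the paper: reduce to the two boundary inequalities $d_{\labeledsubtree[v]{i_j}{}}(u) + dist_{\tree}(u, V\setminus V_{i_j}) > 4$ and $d_{\labeledsubtree[v]{i_j}{}}(v) + dist_{\tree}(v, V\setminus V_{i_j}) > 4$, then derive them from the edge constraint $d_{\labeledsubtree[v]{i_j}{}}(v) - d_{\labeledsubtree[v]{i_j}{}}(u) \leq 1$ together with the two-level maximization and the hypothesis $dist_{\tree}(u, V\setminus V_{i_j}) \geq dist_{\tree}(v, V\setminus V_{i_j})$. The only difference is cosmetic: the paper packages the argument as a proof by contradiction, whereas you run a direct two-case split on whether the primary maximum $d_{\labeledsubtree[v]{i_j}{}}(v)$ equals or strictly exceeds the value from $\tree$.
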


\begin{proofclaim}
By maximality of $d_{\labeledsubtree[v]{i_j}{}}(v)$ the resulting $\tree[2]$ would not be a $4$-Steiner root of $G$ only if $dist_{\tree}(u, V \setminus V_{i_j}) + d_{\labeledsubtree[v]{i_j}{}}(u) \leq 4$.
But then, since $d_{\labeledsubtree[v]{i_j}{}}(v) - d_{\labeledsubtree[v]{i_j}{}}(u) \leq 1$ (because $\unlabeledsubtree{\sep_{i_j}}$ is an edge), one would obtain $dist_{\tree}(u,V \setminus V_{i_j}) = dist_{\tree}(v,V \setminus V_{i_j})$ and $dist_{\tree}(v,V \setminus V_{i_j}) + d_{\labeledsubtree[v]{i_j}{}}(v) = 5$.
In particular, we should have in the original Steiner root $\tree$: $$\min\{ dist_{\tree}(u,W_{i_j}), dist_{\tree}(v,W_{i_j})\} \geq d_{\labeledsubtree[v]{i_j}{}}(v).$$
As $\labeledsubtree[v]{i_j}{}$ maximizes $d_{\labeledsubtree[v]{i_j}{}}(v)$ and, under this latter condition, $d_{\labeledsubtree[v]{i_j}{}}(u)$ is maximized, we obtain that $d_{\labeledsubtree[v]{i_j}{}}(u) \geq dist_{\tree}(u,W_{i_j}) \geq d_{\labeledsubtree[v]{i_j}{}}(v)$.
\end{proofclaim}

By Claim~\ref{claim:edge-old} we are left to decide which amongst $u$ or $v$ will minimize its distance to $V_i \setminus V_{i_j}$ in the final solution.
If either $u$ or $v$ has a real neighbour in $\labeledsubtree{i}{} \setminus \sep_{i_j}$ then, we are done.
Thus from now on we assume this is not the case.

There may be several other indices $k$ such that $\sep_{i_k} = \sep_{i_j}$.
As an intermediate step, we explain how to merge the solutions in ${\cal T}_{i_j}$ and in ${\cal T}_{i_k}$ into a new set ${\cal T}_{i_j}'$ when this happens.
For that, we consider all the $\labeledsubtree{i_j}{},\labeledsubtree{i_k}{}$ sequentially. 
We put $\labeledsubtree{i_j}{} \odot \labeledsubtree{i_k}{}$ into ${\cal T}_{i_j}'$ if and only if we have $\min\{ d_{\labeledsubtree{i_j}{}}(v) + d_{\labeledsubtree{i_k}{}}(v), d_{\labeledsubtree{i_j}{}}(u) + d_{\labeledsubtree{i_k}{}}(u) \} > 4$.
If so then, $d_{\labeledsubtree{i_j}{} \odot \labeledsubtree{i_k}{}}(u) = \min\{d_{\labeledsubtree{i_j}{}}(u), d_{\labeledsubtree{i_k}{}}(u)\}$, and in the same way $d_{\labeledsubtree{i_j}{} \odot \labeledsubtree{i_k}{}}(v) = \min\{d_{\labeledsubtree{i_j}{}}(v), d_{\labeledsubtree{i_k}{}}(v)\}$.
Overall, since there are at most two solutions stored in each of ${\cal T}_{i_j}$ and ${\cal T}_{i_k}$, this takes constant-time.
We end up applying Claim~\ref{claim:edge} in order to replace ${\cal T}_{i_j}$ by the at most two best solutions in ${\cal T}_{i_j}'$.
By repeating this above procedure, we can assume w.l.o.g. that there is no other index $k$ such that $\sep_{i_k} = \sep_{i_j}$.

We may further assume that there is no index $k$ such that $\sep_{i_k} = \{u\}$ ($\sep_{i_k} = \{v\}$, resp.) for otherwise we already ensured at the last step $d_{\labeledsubtree{i_k}{}}(u) = 4$ ($d_{\labeledsubtree{i_k}{}}(v) = 4$, resp.).
Then, let us assume $dist_{\labeledsubtree{i}{}}(u, {\cal C}_i) \leq dist_{\labeledsubtree{i}{}}(v, {\cal C}_i)$ ($u$ is closer than $v$ to the center of $\labeledsubtree{Y_i}{}$).
In most cases, $u$ will be the closest to $V_i \setminus V_{i_j}$.
Indeed, as we assume $v$ has no real neighbour in $\labeledsubtree{i}{} \setminus \sep_{i_j}$, it is a leaf in $\labeledsubtree{Y_i}{}$.
Therefore, a necessary condition for having $v$ closer than $u$ to $V_i \setminus V_{i_j}$ is that there exists another minimal separator $\sep_{i_k}$ containing $v$.
Let us assume from now on that such a separator $\sep_{i_k}$ exists (otherwise we are done).
Since $v$ is a leaf, this implies $\sep_{i_j} \subset \sep_{i_k}$.
In particular, as we also assume $u$ and $v$ have no real neighbour in $\labeledsubtree{i}{} \setminus \sep_{i_j}$, $\labeledsubtree{Y_i}{\sep_{i_k}}$ must be a bistar. 
We divide our analysis in several subcases:
\begin{itemize}
\item Subcase $\sep_{i_j} = \centre{\labeledsubtree{Y_i}{\sep_{i_k}}}$. 
Both $u$ and $v$ should have a real neighbour in $\labeledsubtree{i}{} \setminus \sep_{i_j}$.
A contradiction.
\item Subcase ${\cal C}_i = \{u\}$. By Claim~\ref{claim:bistar-1}, $v$ is simplicial in $G_{i_k}$. This proves $u$ will be closest than $v$ to $V_i \setminus V_{i_j}$ in this subcase.

\item Otherwise, as $u$ minimizes its distance to the center we must have $\centre{\labeledsubtree{i_k}{\maxk_{i_k} }} = \{u\}$ (this can only be true for at most one index $i_k$).
Note that $v$ is the only leaf of $\labeledsubtree{Y_i}{\sep_{i_k}}$ that is adjacent to $u$.
Therefore, by Lemma~\ref{lem:bistar-center}, $v$ is the only real neighbour of $u$ in any $\labeledsubtree{i_k}{} \in {\cal T}_{i_k}$ {\em and} $v$ must stay the only real neighbour of $u$ in $\labeledsubtree{i}{}$.
This implies that we always have $d_{\labeledsubtree{i_k}{}}(u) = 2$.
We must ensure that the solution $\labeledsubtree{i_j}{} \in {\cal T}_{i_j}$ that we will choose satisfies $d_{\labeledsubtree{i_j}{}}(u) \geq 3$.
Conversely, among all the partial solutions in ${\cal T}_{i_j}$ that satisfies this necessary condition, we can always choose the one $\labeledsubtree{i_j}{}$ maximizing $d_{\labeledsubtree{i_j}{}}(v)$.
\end{itemize}

\item{\it Phase 3: Processing the bistars.}
We consider all the indices $j \in \{1,2,\ldots,p\}$ such that $\labeledsubtree{Y_i}{\sep_{i_j}}$ is a bistar.
Let ${\cal C}_i = \{c_i\}$ and let $\centre{\labeledsubtree{Y_i}{\sep_{i_j}}} = \{c_i,c_{i_j}\}$.
We keep only the solutions $\labeledsubtree{i_j}{} \in {\cal T}_{i_j}$ such that $\centre{\labeledsubtree{i_j}{\maxk_{i_j}}} = \{c_{i_j}\}$.
Then, we have by Claim~\ref{claim:bistar-1} $d_{\labeledsubtree{i_j}{}}(v) = d_{\labeledsubtree{i_j}{}}(c_i) + 1$ for any $v \in N_{\labeledsubtree{i}{}}(c_i) \setminus \{c_{i_j}\}$.
So, we would like to pick $\labeledsubtree{i_j}{} \in {\cal T}_{i_j}$ that maximizes $d_{\labeledsubtree{i_j}{}}(c_i)$.
We prove next that except for one case easy to solve, we can always choose greedily an {\em arbitrary} partial solution $\labeledsubtree{i_j}{}$ which maximizes $d_{\labeledsubtree{i_j}{}}(c_i)$.
Indeed, the only case where we cannot do that w.l.o.g. is when there exists another minimal separator $\sep_{i_k}$ such that $\labeledsubtree{Y_i}{\sep_{i_k}} \cap N_{\labeledsubtree{Y_i}{}}[c_{i_j}] \setminus \{c_i\} \neq \emptyset$.
In fact, as by Lemma~\ref{lem:bistar-center} we will always have $d_{\labeledsubtree{i_j}{}}(c_{i_j}) = 2$ (for any choice for $\labeledsubtree{i_j}{}$), we are only interested in the case when $\sep_{i_k} \cap \left( N_{\labeledsubtree{i}{}}(c_{i_j}) \setminus \{c_i\} \right) \neq \emptyset$.
Then, $\sep_{i_k} \subseteq N_{\labeledsubtree{i}{}}[c_{i_j}]$.
We may further assume $diam(\labeledsubtree{Y_i}{\sep_{i_k}}) \geq 2$ (otherwise, due to Phases 1 and 2, this was already taken into account).
There are two cases:
\begin{itemize}
\item Assume $|\sep_{i_k}| \geq 3$.
By Claim~\ref{claim:bistar-2} and its proof, this implies that there is only one possibility for $d_{\labeledsubtree{i_j}{}}(r)$, for every $r \in \labeledsubtree{Y_i}{\sep_{i_j}}$.
Specifically (Case 1 of the claim), $d_{\labeledsubtree{i_j}{}}(c_{i_j}) = 2$, and for every $u \in N_{\labeledsubtree{Y_i}{\sep_{i_j}}}(c_{i_j})$ either $d_{\labeledsubtree{i_j}{}}(u) = 3$ or (if and only if $u$ belongs to a minimal separator of $G_{i_j}$) $d_{\labeledsubtree{i_j}{}}(u) = 1$.
So, in this situation, there is only one solution stored in ${\cal T}_{i_j}$, and we need to pick this one.
\item Otherwise, $|\sep_{i_k}| = 2$. Then, we impose different properties on the partial solution $\labeledsubtree{i_j}{}$ to choose. Specifically, we will show that we can always choose greedily any solution $\labeledsubtree{i_j}{}$ which maximizes $\sum_{r \in N_{\labeledsubtree{i}{}}[c_{i_j}]} d_{\labeledsubtree{i_j}{}}(r)$.
For that, we start deriving some necessary conditions on the valid partial solutions $\labeledsubtree{i_k}{}$ which we may choose later during the algorithm.

\smallskip
We recall that $diam(\labeledsubtree{Y_i}{\sep_{i_k}}) \geq 2$, and so $\labeledsubtree{Y_i}{\sep_{i_k}}$ is a non-edge star with central node $c_{i_j}$.
By Prop.~\ref{pty-ci:2} of Theorem~\ref{thm:clique-intersection}, it implies that $c_{i_j}$ is Steiner and $Real(N_{\labeledsubtree{i}{}}[c_{i_j}]) = \sep_{i_k}$.
Furthermore, by Lemma~\ref{lem:star-center}, $\sep_{i_k}$ is weakly $\cliquetree{G}$-convergent.
Since $\sep_{i_k} \supset \sep_{i_j}$, this implies that no minimal separator of $G_{i_k}$ can strictly contain $\sep_{i_k}$.
By Prop.~\ref{pty-fct:2} of Theorem~\ref{thm:final-clique-tree}, we so obtain the stronger property that no minimal separator of $G_{i_k}$ can contain $\sep_{i_k}$ ({\it i.e.}, we fall in the situation described in Sec.~\ref{sec:star-uncontained}).
By Lemma~\ref{lem:no-center-intersect}, we must also impose $c_{i_j} \notin \centre{\labeledsubtree{i_k}{\maxk_{i_k}}}$.  
Therefore by Lemma~\ref{lem:center-in-star}, there must exist a node $c_{i_k} \in N_{\labeledsubtree{i_k}{}}(c_{i_j}) \cap \centre{\labeledsubtree{i_k}{\maxk_{i_k}}}$ (possibly, $c_{i_k} \in \sep_{i_k}$).
Altogether combined, we fall in Case 1 of Claim~\ref{claim:star-1}, namely: $d_{\labeledsubtree{i_k}{}}(c_{i_j}) = d_{\labeledsubtree{i_k}{}}(c_{i_k}) + 1$; and for every $v \in \sep_{i_k} \setminus \{c_{i_k}\}$, $d_{\labeledsubtree{i_k}{}}(v) = d_{\labeledsubtree{i_k}{}}(c_{i_k}) + 2$. 
In particular, we must have $d_{\labeledsubtree{i_k}{}}(c_{i_k}) = 2$ (otherwise, we would get $d_{\labeledsubtree{i_k}{}}(c_{i_j}) + d_{\labeledsubtree{i_j}{}}(c_{i_j}) = 4$, regardless of our exact choice for $\labeledsubtree{i_j}{}$).
Then, $d_{\labeledsubtree{i_k}{}}(c_{i_j}) = 3$, and for every $v \in \sep_{i_k} \setminus \{c_{i_k}\}$, $d_{\labeledsubtree{i_k}{}}(v) = 4$.

\smallskip
If we knew $c_{i_k}$ in advance then, we could choose any partial solution $\labeledsubtree{i_j}{}$ such that $d_{\labeledsubtree{i_j}{}}(c_{i_k}) \geq 3$.
Since we do not have this information, we prove the existence of a partial solution $\labeledsubtree{i_j}{}$ such that, for any $v \in \sep_{i_k}$, we have $d_{\labeledsubtree{i_j}{}}(v) < 3$ {\em if and only if} $d_{\labeledsubtree[2]{i_j}{}}(v) < 3$ for any partial solution $\labeledsubtree[2]{i_j}{}$.
Specifically, we prove that any $\labeledsubtree{i_j}{}$ which maximizes $\sum_{r \in N_{\labeledsubtree{i}{}}[c_{i_j}]} d_{\labeledsubtree{i_j}{}}(r)$ has this property.
For that, let us fix such a solution $\labeledsubtree{i_j}{}$ and let $v \in \sep_{i_k}$ be such that $d_{\labeledsubtree{i_j}{}}(v) < 3$ (if any).
Then, there must be a minimal separator $\sep \subset \maxk_{i_j}$ such that $v \in \sep$.
Note that we cannot have $\sep \subseteq \sep_{i_j}$ (otherwise, by Prop.~\ref{pty-fct:2} of Theorem~\ref{thm:final-clique-tree} we have $|\sep| \geq 3$, but then we also have $\sep \subseteq Real(N_{\labeledsubtree{i}{}}[c_{i_j}]) = \sep_{i_k}$ and $|\sep_{i_k}| = 2$, a contradiction).
So, there are only two possibilities:
\begin{itemize}
\item
If $\sep_{i_k} \not\subseteq \sep$ then, in any $\labeledsubtree[2]{i_j}{}$, the only possibility for $\labeledsubtree{i_j}{\sep}$ is to be a star of which $v$ is a central node (possibly, $\labeledsubtree{i_j}{\sep}$ is an edge).
In particular, we have that $d_{\labeledsubtree[2]{i_j}{}}(v) \leq dist_{\labeledsubtree[2]{i_j}{}}(v, \sep \setminus \sep_{i_j}) = 1 < 3$.
\item
Otherwise, $\sep_{i_k} \subseteq \sep$.
Since $\sep \not\subseteq \sep_{i_j}$ we have in fact $\sep_{i_k} \subset \sep$, and so, in any $\labeledsubtree[2]{i_j}{}$, we must have $\labeledsubtree[2]{i_j}{\sep}$ is a bistar. 
By the hypothesis we have $d_{\labeledsubtree{i_j}{}}(v) < 3$, that is possible only if $\centre{\labeledsubtree{i_j}{\sep}} = \{c_{i_j},v\}$.
Then, by maximality of $\sum_{r \in N_{\labeledsubtree{i}{}}[c_{i_j}]} d_{\labeledsubtree{i_j}{}}(r)$, $v$ must be the vertex outputted by the canonical completion method of Lemma~\ref{lem:real-center-bistar} (applied to $\sep, \maxk, R = \sep_{i_k}$ and $c_{i_j}$, where $\sep = \maxk \cap \maxk_{i_j}$).
Conversely, by the intermediate Claims~\ref{claim:op2-before} and~\ref{claim:op2-before-b} we obtain that in any $\labeledsubtree[2]{i_j}{}$, we also have $\centre{\labeledsubtree[2]{i_j}{\sep}} = \{c_{i_j},v\}$.
As a result, $d_{\labeledsubtree[2]{i_j}{}}(v) < 3$.
\end{itemize}
\end{itemize}
Finally, as in the two previous phases, for every $k \in \{1,2,\ldots,p\} \setminus \{j\}$ and $r_{i_k} \in \labeledsubtree{Y_i}{\sep_{i_k}}$ we update $\ell^{i_k}(r_{i_k})$ and then, we end up applying the filtering rule above. \\

\item{\it Phase 4: Processing the stars.}
We finally consider all the indices $j \in \{1,2,\ldots,p\}$ such that $\labeledsubtree{Y_i}{\sep_{i_j}}$ is a star.
%
Let $\centre{\labeledsubtree{Y_i}{\sep_{i_j}}} = \{c\}$.
We divide the analysis in two subphases: 

\begin{itemize}
\item {\it Subphase 4.a: Processing a star when $c \in {\cal C}_i$.}
As a guidance towards our next choices, we start analyzing the possibilities we still have among ${\cal T}_{i_j}$:

\begin{myclaim}\label{claim:star-4-a}
The following properties are true for any $\labeledsubtree{i_j}{} \in {\cal T}_{i_j}$:
\begin{enumerate}
\item $diam(\labeledsubtree{i_j}{\maxk_{i_j}}) = 4$;
\item and the unique center node $v_j \in \centre{\labeledsubtree{i_j}{\maxk_{i_j}}}$ is either in $\sep_{i_j} \setminus {\cal C}_i$, or it is a Steiner node.
Moreover:
\begin{enumerate}
\item if $v_j \in \sep_{i_j}$ then, $v_j$ is a leaf of $\labeledsubtree{Y_i}{}$;
\item and for every $r \in \labeledsubtree{Y_i}{\sep_{i_j}}$, $d_{\labeledsubtree{i_j}{}}(r) = dist_{\labeledsubtree{i_j}{}}(r,v_j) + d_{\labeledsubtree{i_j}{}}(v_j)$. 
\end{enumerate}
\end{enumerate}
\end{myclaim}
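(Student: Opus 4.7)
The plan is to establish the four assertions in order, each leveraging Theorem~\ref{thm:clique-intersection}, Lemma~\ref{lem:no-center-intersect}, and the well-structuredness of the underlying $4$-Steiner root $\tree$ inherited from our preprocessing (Proposition~\ref{prop:internal} together with Theorem~\ref{thm:encoding}). The hardest part is Property~2.a, since it requires ruling out extra Steiner branches at $v_j$ inside $\labeledsubtree{Y_i}{}$; the remaining parts are tight topological consequences of $\labeledsubtree{Y_i}{\sep_{i_j}}$ being a non-edge star centered at $c \in {\cal C}_i$.

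For Property~1, note that $\labeledsubtree{i_j}{\sep_{i_j}} \equiv_G \labeledsubtree{Y_i}{\sep_{i_j}}$ is a non-edge star and that $\sep_{i_j} \subsetneq \maxk_{i_j}$, so Property~\ref{pty-ci:2} yields $diam(\labeledsubtree{i_j}{\maxk_{i_j}}) > diam(\labeledsubtree{i_j}{\sep_{i_j}}) = 2$. Combined with the bound $\leq 4$ from Property~\ref{pty-ci:1}, only $diam \in \{3,4\}$ remain. To rule out $3$, suppose for contradiction that $\labeledsubtree{i_j}{\maxk_{i_j}}$ is a bistar with center edge $\alpha\beta$. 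By Lemma~\ref{lem:no-center-intersect} applied to the distinct maximal cliques $\maxk_i$ and $\maxk_{i_j}$, $c \notin \{\alpha,\beta\}$, so $c$ would be a leaf of this bistar with exactly one neighbor in $\labeledsubtree{i_j}{\maxk_{i_j}}$; but, being the center of a non-edge star, $c$ has at least two neighbors in $\labeledsubtree{i_j}{\sep_{i_j}} \subseteq \labeledsubtree{i_j}{\maxk_{i_j}}$, a contradiction. Hence $diam(\labeledsubtree{i_j}{\maxk_{i_j}}) = 4$ and $\centre{\labeledsubtree{i_j}{\maxk_{i_j}}} = \{v_j\}$ is a single node by Property~\ref{pty-lem:tree:3}.

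For Property~2, $v_j \neq c$ again by Lemma~\ref{lem:no-center-intersect}. Since $rad(\labeledsubtree{i_j}{\maxk_{i_j}}) = 2$ (Property~\ref{pty-lem:tree:4}) and each leaf of $\labeledsubtree{i_j}{\sep_{i_j}}$ sits at distance one from $c$, the center $v_j$ must be adjacent to $c$ in $\labeledsubtree{i_j}{}$ (otherwise a leaf of $\sep_{i_j}$ would lie at distance $\geq 3$ from $v_j$, exceeding the radius). Thus $v_j$ is either Steiner or a real node in $\sep_{i_j} \setminus \{c\}$; since $c \in {\cal C}_i$, the real case falls in $\sep_{i_j} \setminus {\cal C}_i$.

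For Property~2.a, assume $v_j \in \sep_{i_j}$. By Property~\ref{pty-ci:1} the closed neighbourhood $N_{\tree}[v_j]$ sits inside a single maximal clique, which must be $\maxk_{i_j}$ (its unique center is $v_j$, and no other maximal clique can share this center by Lemma~\ref{lem:no-center-intersect}). As $Y_i \subseteq \maxk_i$ and $\maxk_i \cap \maxk_{i_j} = \sep_{i_j}$, the only real node of $Y_i \cup {\cal C}_i$ adjacent to $v_j$ in $\tree$ is $c$; invoking Theorem~\ref{thm:x-free} applied to $\sep_{i_j}$ pins down the star structure around $c$ and forces every branch of $\labeledsubtree{Y_i}{}$ leaving $\labeledsubtree{Y_i}{\sep_{i_j}}$ to be routed through $c$ rather than through a leaf of the star, so $v_j$ has exactly one neighbor in $\labeledsubtree{Y_i}{}$, namely $c$. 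Finally, Property~2.b follows by unimodality (Property~\ref{pty-lem:tree:1}) within $\labeledsubtree{i_j}{\maxk_{i_j}}$: the star $\labeledsubtree{i_j}{\sep_{i_j}}$ attaches to $v_j$ through $c$ while $W_{i_j}$ sits on the opposite side of $v_j$ (inside $\maxk_{i_j} \setminus \sep_{i_j}$ and deeper in $G_{i_j}$), so every shortest $rw$-path in $\labeledsubtree{i_j}{}$ passes through $v_j$, giving $dist_{\labeledsubtree{i_j}{}}(r,w) = dist_{\labeledsubtree{i_j}{}}(r,v_j) + dist_{\labeledsubtree{i_j}{}}(v_j,w)$; minimizing over $w \in W_{i_j}$ yields the stated identity.
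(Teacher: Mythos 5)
Your arguments for Property~1 and the first part of Property~2 are essentially the paper's, and they are fine. But there are two genuine gaps.

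First, a smaller one: to get $v_j \in \sep_{i_j}\setminus{\cal C}_i$ rather than merely $v_j\in\sep_{i_j}\setminus\{c\}$, you need to invoke Lemma~\ref{lem:no-center-intersect} a second time (if $v_j\in{\cal C}_i$ then $\centre{\unlabeledsubtree{\maxk_i}}\cap\centre{\unlabeledsubtree{\maxk_{i_j}}}\neq\emptyset$). Your phrase ``since $c\in{\cal C}_i$, the real case falls in $\sep_{i_j}\setminus{\cal C}_i$'' does not justify this: when $\labeledsubtree{Y_i}{}$ has diameter three, ${\cal C}_i$ is an edge and $v_j$ could a~priori be the other center node. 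Similarly, your Property~2.a invocation of Theorem~\ref{thm:x-free} is off: that theorem is about $\ci$-free vertices being leaves, not about routing branches. The paper's argument here is much more direct: any real $\labeledsubtree{Y_i}{}$-neighbour of $v_j$ must lie in $\maxk_{i_j}\cap\maxk_i=\sep_{i_j}$, and within the non-edge star the only neighbour of $v_j$ is $c$, so $v_j$ is a leaf.

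The serious gap is Property~2.b. You write that ``$W_{i_j}$ sits on the opposite side of $v_j$'' and that therefore ``every shortest $rw$-path passes through $v_j$.'' That topological separation is precisely what has to be proved, and it is false in general for an arbitrary partial solution: a leaf $u\in\sep_{i_j}$ of the star may lie in a minimal separator of $G_{i_j}$ and have a subtree of $\labeledsubtree{i_j}{}$ hanging off it that reaches $W_{i_j}$ without ever touching $v_j$. The paper proves the identity by contradiction: if it fails, there must be a minimal separator $\sep$ of $G_{i_j}$ intersecting $\labeledsubtree{Y_i}{\sep_{i_j}}\setminus\{v_j\}$, and then Claim~\ref{claim:not-a-container}, Property~\ref{pty-fct:2} of Theorem~\ref{thm:final-clique-tree}, and Lemma~\ref{lem:star-center} together force one of two subcases ($\sep_{i_j}\subseteq\sep$ or $\sep_{i_j}\not\subseteq\sep$), each of which is ruled out by a nontrivial argument. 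This is the technical heart of the claim; it relies essentially on the special rooted clique-tree from Section~\ref{sec:clique-tree} and cannot be replaced by a bare appeal to unimodality. Without it, your proof does not establish the statement.
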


\begin{proofclaim}
We show that assuming any of these above properties does not hold, some distances' constraints would be violated w.r.t. our previous choices in the other Phases, and so, we should have discarded $\labeledsubtree{i_j}{}$ when we applied the filtering rule. 
Suppose by contradiction $diam(\labeledsubtree{i_j}{\maxk_{i_j}}) < 4$.
Then, the only possibility is $diam(\labeledsubtree{i_j}{\maxk_{i_j}}) = 3$, and so, $c \in \centre{\labeledsubtree{i_j}{\maxk_{i_j}}}$. 
However, the latter would contradict Lemma~\ref{lem:no-center-intersect} as we already assume $c \in {\cal C}_i$.
Therefore, $diam(\labeledsubtree{i_j}{\maxk_{i_j}}) = 4$, thereby implying $\centre{\labeledsubtree{i_j}{\maxk_{i_j}}} = \{v_j\}$ for some $v_j$.

By Lemma~\ref{lem:center-in-star}, $Real(N_{\labeledsubtree{i_j}{}}[c]) = \sep_{i_j}$ and $c \in N_{\labeledsubtree{i_j}{}}[v_j]$.
Thus, either $v_j$ is Steiner, or $v_j \in \sep_{i_j}$.
Furthermore if $v_j \in \sep_{i_j}$ then, $v_j \in \sep_{i_j} \setminus {\cal C}_i$ (otherwise, this would contradict Lemma~\ref{lem:no-center-intersect}).
As every real node adjacent to $v_j$ in $\labeledsubtree{Y_i}{}$ should be in $\maxk_{i_j}$, and we have $\maxk_i \cap \maxk_{i_j} = \sep_{i_j}$, we so obtain that $v_j \in \sep_{i_j} \Longrightarrow v_j \ \text{is a leaf-node of} \ \labeledsubtree{Y_i}{}$.

Finally, we prove that for every $r \in \labeledsubtree{Y_i}{\sep_{i_j}}$, $d_{\labeledsubtree{i_j}{}}(r) = dist_{\labeledsubtree{i_j}{}}(r,v_j) + d_{\labeledsubtree{i_j}{}}(v_j)$. 
Suppose for the sake of contradiction that it is not the case.
We recall that by Lemma~\ref{lem:center-in-star} we have $Real(N_{\labeledsubtree{i_j}{}}[c]) = \sep_{i_j}$. This implies that $ \labeledsubtree{Y_i}{\sep_{i_j}} \setminus \{v_j\}$ is a connected component of $\labeledsubtree{i_j}{\maxk_{i_j}} \setminus \{v_j\}$.
In particular, there must exist a minimal separator $\sep$ of $G_{i_j}$ that intersects $ \labeledsubtree{Y_i}{\sep_{i_j}} \setminus \{v_j\}$ (otherwise, the equality $d_{\labeledsubtree{i_j}{}}(r) = dist_{\labeledsubtree{i_j}{}}(r,v_j) + d_{\labeledsubtree{i_j}{}}(v_j)$ would hold for any node $r \in \labeledsubtree{Y_i}{\sep_{i_j}}$).
By Claim~\ref{claim:not-a-container}, $\sep \not\subset \sep_{i_j}$.
So, we are only left with two subcases:
\begin{itemize}
\item Subcase $\sep_{i_j} \subseteq \sep$. By Property~\ref{pty-fct:2} of Theorem~\ref{thm:final-clique-tree} we can assume w.l.o.g. $\sep$ strictly contains $\sep_{i_j}$.
However, this would imply $\labeledsubtree{i_j}{\sep}$ is a bistar, and so, by Lemma~\ref{lem:star-center}, $c$ would also be in $\centre{\labeledsubtree{i_j}{\maxk}}$ for some maximal clique $\maxk$ in $G_{i_j}$.
A contradiction.
\item Subcase $\sep_{i_j} \not\subseteq \sep$. Then, $\sep_{i_j} \cap \sep \subseteq \{c,v_j\}$. 
Note that in particular, all vertices in $\sep_{i_j} \setminus \{c,v_j\}$ are simplicial in $G_{i_j}$. 
For the remaining of this subcase we assume $c \in \sep$ (otherwise we are done). 
Let $C$ be the connected component of $G_{i_j} \setminus \sep$ containing $\sep_{i_j} \setminus \sep$.
We upper bound the distances in $\labeledsubtree{i_j}{}$ as follows:
\begin{itemize}
\item We prove as a subclaim that all the paths between $C \setminus \sep_{i_j}$ and $\sep_{i_j} \setminus \sep$ of length at most four must go by $v_j$. Indeed, suppose by contradiction this is not the case.
In particular, there exist some vertex $u \in C \setminus \sep_{i_j}$ that is adjacent to a vertex of $\sep_{i_j} \setminus \sep$ in $G$ and a $u\sep_{i_j}$-path that does not go by $v_j$.
Then, as all vertices in $\sep_{i_j} \setminus \sep$ are simplicial in $G_{i_j}$, we must have $u \in \maxk_{i_j}$, and the path between $u$ and $\sep_{i_j} \setminus \sep$ goes by $c$.
However, we have $u \in \maxk_{i_j} \Longrightarrow dist_{\labeledsubtree{i_j}{}}(u,v_j) \leq 2$.
Since we suppose that the path between $u$ and $\sep_{i_j}$ does not go by $v_j$, we obtain $dist_{\labeledsubtree{i_j}{}}(u,c) = 1$, and so $u \in \sep_{i_j}$. A contradiction.
\item Finally, we must have $dist_{\labeledsubtree{i_j}{}}(c,V_{i_j} \setminus N_{G_{i_j}}[\sep_{i_j} \setminus \sep]) \geq 4$ (otherwise we would get $dist_{\labeledsubtree{i_j}{}}(\sep_{i_j} \setminus \sep,V_{i_j} \setminus N_{G_{i_j}}[\sep_{i_j} \setminus \sep]) \leq 4$, a contradiction). We are done in this subcase as we always have $d_{\labeledsubtree{i_j}{}}(c) \leq 1 + d_{\labeledsubtree{i_j}{}}(v_j) \leq 4$.    
\end{itemize}
\end{itemize}
Overall, since we reached a contradiction in both subcases, we proved as claimed that for every $r \in \labeledsubtree{Y_i}{\sep_{i_j}}$, $d_{\labeledsubtree{i_j}{}}(r) = dist_{\labeledsubtree{i_j}{}}(r,v_j) + d_{\labeledsubtree{i_j}{}}(v_j)$. 
\end{proofclaim}

For any $\labeledsubtree{i_j}{} \in {\cal T}_{i_j}$, let $v_j$ be as defined in Claim~\ref{claim:star-4-a}.
Let $\sep_{i_k} \neq \sep_{i_j}$ be unprocessed (in particular, $\labeledsubtree{Y_i}{\sep_{i_k}}$ is a non-edge star).
Since either $v_j$ is Steiner or $v_j$ is a leaf of $\labeledsubtree{Y_i}{}$, we have $v_j \notin \labeledsubtree{Y_i}{\sep_{i_k}}$.
Furthermore for every $r \in \labeledsubtree{Y_i}{\sep_{i_j}}$, $d_{\labeledsubtree{i_j}{}}(r) = dist_{\labeledsubtree{i_j}{}}(r,v_j) + d_{\labeledsubtree{i_j}{}}(v_j)$. 
This implies that w.r.t. $\sep_{i_k}$, any solution $\labeledsubtree{i_j}{}$ that maximizes $d_{\labeledsubtree{i_j}{}}(\centre{\labeledsubtree{i_j}{\maxk_{i_j}}})$ would be a best possible choice -- {\it i.e.}, regardless of our exact choice for $v_j$.
However, we also need to account for the other indices $k$ such that $\sep_{i_k} = \sep_{i_j}$.

\begin{figure}[h!]
\centering
\includegraphics[width=.3\textwidth]{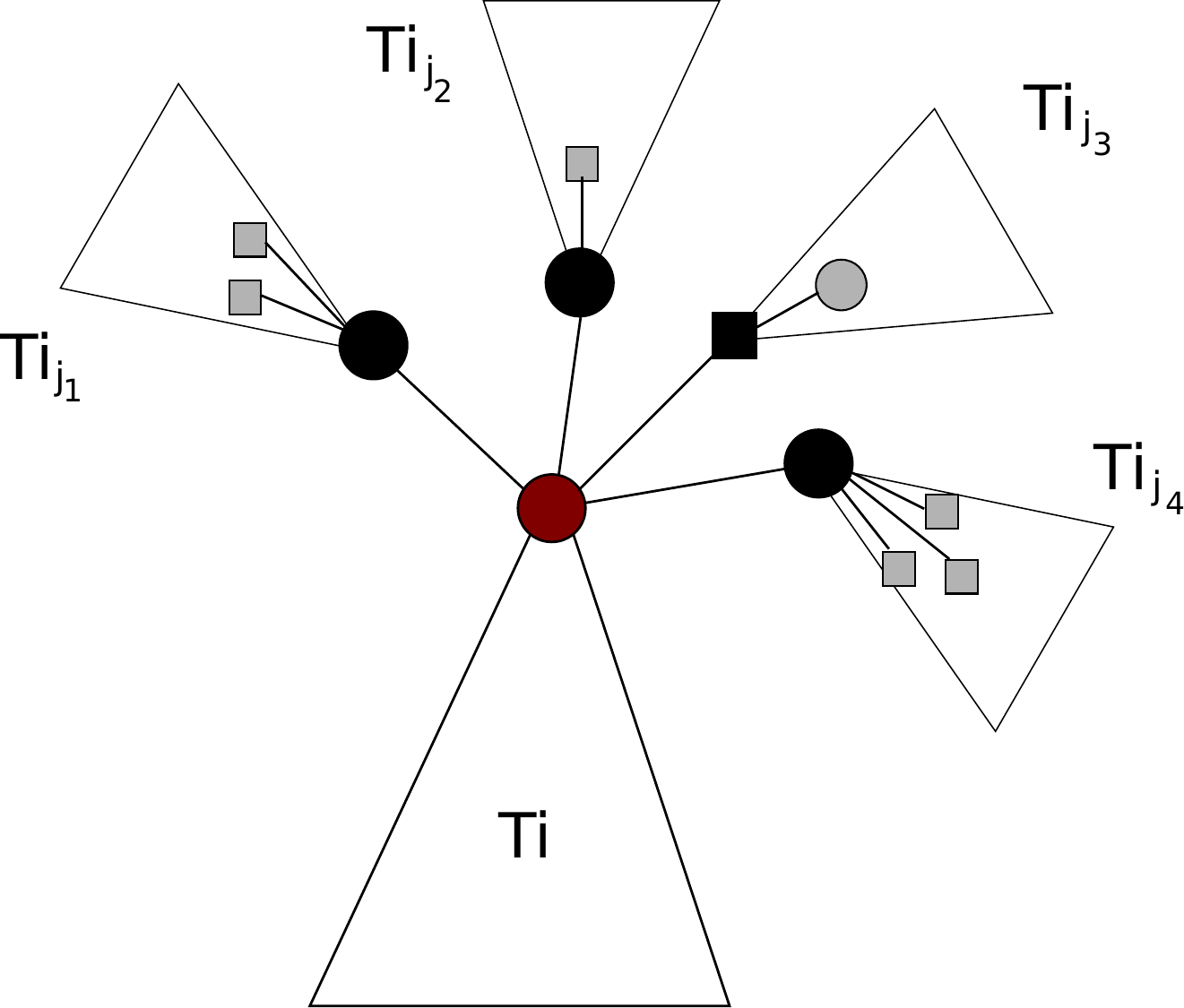}
\caption{An illustration of Phase 4.}
\label{fig:star-matching}
\end{figure}

\smallskip
Let $J = \{ j' \mid \sep_{i_{j'}} = \sep_{i_j} \}$.
One should ensure that in the solutions $\labeledsubtree{i_{j'}}{}, j' \in J$ that we will choose, the center nodes $v_{j'}$ in $\labeledsubtree{i_{j'}}{\maxk_{i_{j'}}}$ will be pairwise different.
Furthermore, since all the $v_{j'}$'s are pairwise at distance two, there can be at most one $j_{\min} \in J$ such that $d_{\labeledsubtree{i_{j_{\min}}}{}}(v_{j_{\min}}) = 1$.
See Fig.~\ref{fig:star-matching} for an illustration.
In order to satisfy all these constraints, while ensuring that such a $j_{\min}$ does not exist if it is possible, we make a reduction to {\sc Maximum-Weight Matching}~\cite{DPS18}. 

\begin{enumerate}
\item Specifically, let $Steiner[J] := \{ \alpha_{j'} \mid j' \in J \}$ be a set of Steiner nodes.
We construct a bipartite graph $Bip(\sep_{i_j})$ with respective sides $J$ and $(\sep_{i_j} \setminus {\cal C}_i) \cup Steiner[J]$.
\item For every $j' \in J$ and $v \in \sep_{i_j} \setminus {\cal C}_i$, there is an edge $j'v$ if there exists a $\labeledsubtree{i_{j'}}{} \in {\cal T}_{i_{j'}}$ such that $\centre{\labeledsubtree{i_{j'}}{\maxk_{i_{j'}}}} = \{v\}$. Furthermore if such a $\labeledsubtree{i_{j'}}{}$ exists then, we choose one maximizing $d_{\labeledsubtree{i_{j'}}{}}(v)$ and we assign the weight $d_{\labeledsubtree{i_{j'}}{}}(v)$ to the edge $j'v$ (this can either be $1$ or $2$).

\smallskip
In the same way, there is an edge $j' \alpha_{j'}$ if there exists a $\labeledsubtree{i_{j'}}{} \in {\cal T}_{i_{j'}}$ such that the unique node in $\centre{\labeledsubtree{i_{j'}}{\maxk_{i_{j'}}}}$ is Steiner. Furthermore if such a $\labeledsubtree{i_{j'}}{}$ exists then, we choose one maximizing $d_{\labeledsubtree{i_{j'}}{}}(\centre{\labeledsubtree{i_{j'}}{\maxk_{i_{j'}}}})$ and we assign the weight $d_{\labeledsubtree{i_{j'}}{}}(\centre{\labeledsubtree{i_{j'}}{\maxk_{i_{j'}}}})$ to the edge $j'\alpha_{j'}$.

\item We compute a matching in $Bip(\sep_{i_j})$ of maximum total weight.
This takes ${\cal O}(n^{5/2})$-time~\cite{DPS18}.
By construction, such a matching should contain an edge incident to every $j' \in J$, and its total weight should be either $2|J|-1$ (if $j_{\min}$ exists) or $2|J|$.
\end{enumerate}
For every $j' \in J$, we pick a solution $\labeledsubtree{i_{j'}}{}$ corresponding to the edge incident to $j'$ in the matching.
Then, as in all previous phases, we end up applying our filtering rule above. \\

\item {\it Subphase 4.b: Processing a star when $c \notin {\cal C}_i$.} 
This situation can happen only if $diam(\labeledsubtree{Y_i}{}) = 4$.
Then, the unique node $c_i \in {\cal C}_i$ is a neighbour of $c$ in $\labeledsubtree{i}{}$ (by Lemma~\ref{lem:center-in-star}).
We may further assume that, if $Real(N_{\labeledsubtree{i}{}}[c_i])$ is a minimal separator $\sep$ then, we already handled with $\sep$ during the previous subphase.
Similarly, we already handled with any minimal separator strictly contained into $\sep_{i_j}$, strictly containing $\sep_{i_j}$ resp., during the previous phases.
Hence, the unique path in $\labeledsubtree{i}{}$ between $\labeledsubtree{Y_i}{\sep_{i_j}}$ and any other $\labeledsubtree{Y_i}{\sep_{i_k}}$ that we did not process yet goes by $c_i$.
We are left with finding a solution $\labeledsubtree{i_j}{} \in {\cal T}_{i_j}$ maximizing $d_{\labeledsubtree{i_j}{}}(c_i)$.
However, as in the previous subphase we also need to account for the other indices $k$ such that $\sep_{i_k} = \sep_{i_j}$. 

\smallskip
Let $J = \{ j' \mid \sep_{i_{j'}} = \sep_{i_j} \}$.
We may assume $|J| \geq 2$ since otherwise, we are done by taking any solution $\labeledsubtree{i_j}{} \in {\cal T}_{i_j}$ that maximizes $d_{\labeledsubtree{i_j}{}}(c_i)$ ({\it i.e.}, as explained above).
If furthermore $|\sep_{i_j}| = 2$ then, we can select the partial solutions $\labeledsubtree{i_{j'}}{}, \ j' \in J$ by using a similar merging process as the one discussed in Phase 2 for the edges.
Thus, let us assume from now on $|\sep_{i_j}| \geq 3$.
Since $\sep_{i_j}$ must be weakly $\cliquetree{G}$-convergent (Lemma~\ref{lem:star-center}), and so, $\cliquetree{G}$-convergent (Property~\ref{pty-fct:1} of Theorem~\ref{thm:final-clique-tree}), it implies that, for any $j' \in J$, there can be no minimal separator of $G_{i_{j'}}$ that contains $\sep_{i_j}$.
However, an additional difficulty compared to the previous subphase is that now the center $c$ of the star can also be in $\centre{\labeledsubtree{i_{j'}}{\maxk_{i_{j'}}}}$.
So, we need to modify our approach in the previous subphase as follows:

\begin{enumerate}
\item We first choose the unique $j_0 \in J$ such that $c \in \centre{\labeledsubtree{i_{j_0}}{\maxk_{i_{j_0}}}}$ (if any).
Then, we choose a corresponding solution in ${\cal T}_{i_{j_0}}$ among ${\cal O}(|\sep_{i_{j_0}}|^5) = {\cal O}(|\maxk_i|^5)$ possibilities.
Overall, there are ${\cal O}(n|\maxk_{i}|^5)$ possibilities.
We test each such a possibility sequentially (including the case where no such a $j_0$ exists). 

\item Assume for this step that we fixed a value for $j_0$ (if we test for the case where no such a $j_0$ exists then, we can go directly to the next step). By Claim~\ref{claim:star-1}, the following property holds for any $v \in \sep_{i_j} \setminus \{c\}$: either $v$ is simplicial in $G_{i_{j_0}}$ (and so, $d_{\labeledsubtree{i_{j_0}}{}}(v) = 3$), or $d_{\labeledsubtree{i_{j_0}}{}}(v) = 1$.
In the former case we discard all solutions $\labeledsubtree{i_{j'}}{} \in {\cal T}_{i_{j'}}, \ j' \in J \setminus \{j_0\}$ such that $d_{\labeledsubtree{i_{j'}}{}}(v) < 2$ whereas in the latter case, we discard all solutions $\labeledsubtree{i_{j'}}{} \in {\cal T}_{i_{j'}}, \ j' \in J \setminus \{j_0\}$ such that $d_{\labeledsubtree{i_{j'}}{}}(v) < 4$.

\item Finally, we apply our reduction to {\sc Maximum-Weight Matching} (from the previous subphase) in order to pick the solutions $\labeledsubtree{i_{j'}}{} \in {\cal T}_{i_{j'}}$ for every $j' \in J \setminus \{j_0\}$.
Indeed, we observe that for every $j' \in J \setminus \{j_0\}$, we will always obtain $d_{\labeledsubtree{i_{j'}}{}}(c_i) = 2 + d_{\labeledsubtree{i_{j'}}{}}(\centre{\labeledsubtree{i_{j'}}{\maxk_{i_{j'}}}}) \in \{3,4\}$.
Therefore, our choice for the partial solutions $\labeledsubtree{i_{j'}}{}$ will ensure that $d_{\labeledsubtree{i_{j'}}{}}(c_i)$ is maximized (w.r.t. our choice for $j_0$ and $\labeledsubtree{i_{j_0}}{}$).

\item Overall, among all the valid solutions computed (for any possible choice of $j_0$ and $\labeledsubtree{i_{j_0}}{}$), we keep the one maximizing $\min_{j' \in J} d_{\labeledsubtree{i_{j'}}{}}(c_i)$.
\end{enumerate}
\end{itemize}
\end{itemize}
This last phase concludes the algorithm.

\medskip
In order to complete the proof, \underline{let us finally assume $Y_i \neq \maxk_i$ (there are thin branches).}
Then, ${\cal C}_i = \{c_i\}$.
We consider all the minimal separators $\sep_{j_1}, \sep_{j_2}, \ldots, \sep_{j_q} \subseteq (\maxk_i \setminus Y_i) \cup \{c_i\}$ sequentially.
For every $\ell \in \{1,2,\ldots,q\}$ we must have $\unlabeledsubtree{\sep_{j_{\ell}}}$ is a thin branch, and so, a star.
We so have ${\cal O}(|\sep_{j_{\ell}}|) = {\cal O}(|\maxk_i|)$ possibilities.
Furthermore, since according to Definition~\ref{def:thin-leg} there can be no minimal separator $\sep_{i_k}$ which intersects both $\sep_{j_{\ell}}$ and $\maxk_i \setminus \sep_{j_{\ell}}$, any solution $\labeledsubtree{j_{\ell}}{} \in {\cal T}_{j_{\ell}}$ that maximizes $d_{\labeledsubtree{j_{\ell}}{}}(c_i)$ would be a best possible choice.
This latter case ressembles to the situation we met in Subphase 4.b.
We can solve it by using the same tools as for this subphase.
Specifically:

\begin{enumerate}
\item We consider each possibility for the star $\unlabeledsubtree{\sep_{j_{\ell}}}$ sequentially;

\item Given a fixed $\unlabeledsubtree{\sep_{j_{\ell}}}$, every minimal separator $\sep_{i_k} \subset \sep_{j_{\ell}}$ must be either a cut-vertex or induce an edge (otherwise, we can discard this possibility for $\unlabeledsubtree{\sep_{j_{\ell}}}$). Then, we can process such minimal separators $\sep_{i_k}$ as in Phases 1 and 2 above. 

\item We end up applying the same procedure as for Subphase 4.b. in order to select the partial solutions $\labeledsubtree{i_{j'}}{}$ such that $\sep_{i_{j'}} = \sep_{i_j}$. Namely, this procedure combines a brute-force enumeration with our reduction to {\sc Maximum-weight Matching}.

\item Overall, among all the valid solutions computed, we keep the one maximizing $d(c_i)$. Then, we can apply our filtering rule above.
\end{enumerate}
Once we applied this above procedure to all the thin branches, we can reuse our previous four-phase algorithm in order to process all the other minimal separators. 
\end{proofof}

\section{Conclusion}\label{sec:ccl}

There are essentially two dominant approaches in order to solve \KLP[k]~and \KSP[k]~in the literature.
The first one, and by far the most elegant, is based on structural characterization of the corresponding graph classes~\cite{BaBV06,BLS08}.
Unfortunately such characterizations -- mostly based on forbidden induced subgraphs -- look challenging to derive for larger values of $k$.
Furthermore, some recent work suggests that even a nice characterization of $k$-leaf powers ($k$-Steiner powers, resp.) by forbidden induced subgraphs might not be enough in order to obtain a polynomial-time recognition algorithm~\cite{Laf17}.

The second approach consists in a clever use of dynamic programming.
Although this approach is much less satisfying on the graph-theoretic side, it may be more promising than the first one.
For instance, the only known algorithms so far for recognizing $5$-leaf powers and $3$-Steiner powers are based on this approach~\cite{ChK07}.
Unfortunately, standard dynamic programming techniques are challenging to apply as the value of $k$ increases, which is probably why no improvement has been obtained for this problem for over a decade -- until this paper.

We propose several new avenues for research on dynamic programming algorithms for $k$-leaf powers and $k$-Steiner powers.
In particular, we hope that our structural analysis of these roots -- based on a renewed interest for clique-intersections -- can be helpful in order to generalize our algorithmic framework to larger values of $k$.
Some of our side contributions, especially the design of a problem-specific clique-tree and our greedy procedures in order to select partial solutions, can also be of independent interest for future research on this topic.

\section*{Acknowledgements}
The author would like to thank the Research Institute of the University of Bucharest (ICUB) that founded the postdoctoral project of which this paper is the outcome.

\bibliographystyle{alpha}
\bibliography{Steiner-biblio}

\end{document}